%% file: tpca_glauber_main.tex
\documentclass[11pt]{article}

\input{preamble}
\title{Averaged Annealed Markov Chains Achieve the Algorithmic Threshold for Tensor PCA, Part I: Glauber Dynamics}
\author{%
  Aniket Das \qquad Tselil Schramm\\[0.5em]
  {\normalsize Stanford University}\\
  {\normalsize\texttt{\{aniketd,tselil\}@stanford.edu}}
}
\date{}

\begin{document}

\hypersetup{pageanchor=false}
\pagenumbering{gobble}
\maketitle
\input{abstract}
\newpage
\tableofcontents
\vfill
\paragraph{Statement on AI Usage.}
All ideas and technical contributions in this paper, as well as the forthcoming companion paper, were biologically generated. Claude Code Opus 5 (and later, Opus 5.5) was used for language editing, proofreading, and transcribing a hand-drawn version of \cref{fig:phase-diagram} into TikZ code. The authors bear sole responsibility for the correctness of this manuscript.
\label{par:ai-disclosure}
\newpage
\pagenumbering{arabic}
\hypersetup{pageanchor=true}

\input{intro}
\input{proof_sketch}
\input{preliminaries}
\input{averaged_differentials}
\input{ergodic_averaging}
\input{strong_recovery_odd}

\input{strong_recovery_every_order}

\printbibliography

\clearpage
\appendix
\input{appendix}

\end{document}

%% file: preamble.tex
\usepackage{tpca_glauber}
\usepackage{macros}

\showmeetnotestrue

\newif\ifparagraphindent
\paragraphindentfalse
\ifparagraphindent\else
\fi

%% file: abstract.tex
\begin{abstract}
\normalsize

In the tensor principal components analysis (tensor PCA) problem, we are asked to recover an unknown vector in $\{\pm1\}^N$ from a noisy observation of its $k$-th tensor power, for a fixed $k\ge3$. 
Tensor PCA exhibits a well-studied information-computation gap: global algorithms such as spectral methods and sum-of-squares work above an algorithmic threshold, below which we know matching lower bounds (against these and some other restricted models of computation). 
In contrast, local algorithms such as Markov chains were only known to succeed from a random initialization at a polynomially larger signal-to-noise ratio, and were believed to be inferior to global algorithms for tensor PCA. 
We show that Glauber dynamics, when supplemented with annealing and ergodic averaging, achieves polynomial-time strong recovery above the algorithmic threshold from a uniformly random initialization, demonstrating that local Markov chains can succeed all the way down to the algorithmic threshold from an uninformative cold start.

\end{abstract}

%% file: intro.tex
\section{Introduction}
\label{sec:introduction}

In this work, we study local Markov chains for the tensor principal components analysis (tensor PCA) problem in the ``spiked tensor'' model, which involves recovering an unknown vector from a noisy observation of its tensor power. Originally introduced by \citet{richard2014statistical} as a generalization of classical spiked matrix models in statistics and machine learning~\citep{johnstone2001distribution,baik2005phase}, tensor PCA is a canonical planted problem in average-case complexity theory \citep{hopkins2015tensor,hopkins2016fast,hopkins2017power,bhattiprolu2017certificates,brennan2020reducibility,brennan2021queries,dudeja2021statistical,kunisky2022notes,potechin2022subexponential,kunisky2024cumulants,wein2025kikuchi,kothari2026smooth,bresler2026average}. It is also widely studied in high-dimensional statistics~\citep{lesieur2017statistical,perry2020statistical,jagannath2020statistical} and in the theory of spin glasses in statistical physics \citep{auffinger2013random,benarous2019landscape,ros2019complex,biroli2020iron,arous2020algorithmic}.

This paper is the first in a two-part work and focuses on tensor PCA over the signed Boolean hypercube $\Cube\doteq\{\pm1\}^N$. Our forthcoming companion paper~\citep{das2026averaged} addresses the closely related spherical tensor PCA problem. For a fixed integer $k \geq 3$ and an unknown planted vector $\xopt\in\Cube$, we observe the symmetric order-$k$ tensor
\begin{align}
    Y=\lambda\sqrt{N}\,(\plant)^{\otimes k}+G, \qquad \plant\doteq\frac{\xopt}{\sqrt{N}}
    \label{eq:tensor-pca-observation}
\end{align}
where $\lambda\ge0$ denotes the signal-to-noise ratio and $G \in (\R^N)^{\otimes k}$ denotes the standard symmetrized Gaussian tensor, i.e., the symmetrization of a $k$-tensor with i.i.d.\ $\cN(0,1)$ entries.
The goal is to recover $\xopt$ (modulo sign for even $k$) from a \emph{single observation} of the noisy tensor $Y$.
Two notions of recovery are standard in the literature. A (possibly randomized) estimator $\widehat x=\widehat x(Y)\in\Cube$ for $\xopt$ is said to achieve \emph{weak} (resp. \emph{strong}) \emph{recovery} if its correlation $N^{-1}\dotp{\widehat x}{\xopt}$ is strictly bounded away from zero (resp. $=1-o_N(1)$). For even $k$, the correlation is instead defined as $N^{-1}\abs{\dotp{\widehat x}{\xopt}}$.

The statistical limits of recovery are well understood. Focusing on weak recovery, there exists a sharp threshold $\lambda_{\mathrm{IT}}=\Theta_k(1)$ such that weak recovery is information-theoretically impossible for ${\lambda<\lambda_{\mathrm{IT}}}$ and achievable for $\lambda>\lambda_{\mathrm{IT}}$~\citep{lesieur2017statistical,perry2020statistical,jagannath2020statistical} (strong recovery requires $\lambda = \omega_N(1)$ at any diverging rate ~\citep{lesieur2017statistical,jagannath2020statistical,richard2014statistical}).
By contrast, the current \emph{algorithmic threshold} is polynomially larger, at ${\lambda_{\mathrm{ALG}}\asymp N^{\tfrac{k-2}{4}}}$. For ${\lambda\gtrsim\lambda_{\mathrm{ALG}}}$, strong recovery is known to be achievable in polynomial time by a variety of algorithms, ranging from tensor unfolding~\citep{richard2014statistical,hopkins2015tensor} and partial traces~\citep{hopkins2016fast,anandkumar2017homotopy,lovig2025almost} to the sum-of-squares (SoS) hierarchy~\citep{hopkins2015tensor} and other related spectral methods~\citep{wein2025kikuchi,kothari2026smooth}. On the other hand, for $\lambda\ll\lambda_{\mathrm{ALG}}$, no polynomial-time algorithm is believed to exist, even for weak recovery.\footnote{A natural candidate for recovery is the maximum likelihood estimator, namely the maximizer of $\dotp{Y}{x^{\otimes k}}$ over $ x\in \Cube$. Maximizing such polynomials, even approximately, is $\NP$-hard in the worst case~\citep{hastad2001some,hillar2013most,barak2012hypercontractivity}, but this has little bearing on structured average-case tensors of the form in \cref{eq:tensor-pca-observation}.}
Consequently, tensor PCA is conjectured to exhibit a \emph{statistical-computational gap} of order $N^{\tfrac{k-2}{4}}$ (see \cref{fig:phase-diagram}). While this longstanding conjecture remains open, it is supported by lower bounds against many classes of algorithms, such as the SoS hierarchy~\citep{hopkins2015tensor,hopkins2017power,potechin2022subexponential}, low-degree polynomials~\citep{kunisky2022notes,kunisky2024cumulants}, statistical query algorithms~\citep{dudeja2021statistical,brennan2021queries}, and average-case reductions from other conjecturally hard planted problems~\citep{brennan2020reducibility,bresler2026average}.

\begin{figure}[t]
\centering
\input{phase_diagram}
\caption{Phase diagram for the recovery problem in tensor PCA in terms of $\lambda$.}
\label{fig:phase-diagram}
\end{figure}

Beyond this classical statistical-computational gap, tensor PCA is also believed to exhibit a finer-grained gap between \emph{global} algorithms, such as the SoS hierarchy, tensor unfolding, and other spectral methods discussed above, and \emph{local} Markov chains, such as Glauber and Langevin dynamics. While global algorithms achieve strong recovery in polynomial time for ${\lambda\gtrsim\lambda_{\mathrm{ALG}}}$, local Markov chains initialized from an uninformative cold start (e.g., a uniformly random configuration) are only known to succeed for ${\lambda\gg N^{\tfrac{k-2}{2}}}$~\citep{arous2020algorithmic}, a polynomial factor of $N^{\tfrac{k-2}{4}}$ above the algorithmic threshold. 
Indeed, this gap has been recognized since the seminal work of \citet{richard2014statistical}, who predicted that local iterative algorithms, such as tensor power iteration and approximate message passing (AMP), when initialized at random, fail to recover the planted signal whenever ${\lambda\ll N^{\tfrac{k-2}{2}}}$~\citep[see also][]{huang2022power,wu2024sharp,lesieur2017statistical}. To our knowledge, the only local iterative algorithms that provably succeed above the algorithmic threshold (modulo polylogarithmic factors) are those warm-started by a global method~\citep{richard2014statistical,anandkumar2017homotopy,lovig2025almost}.

Our work focuses on Glauber dynamics, a canonical local Markov chain that is widely used for Bayesian inference in statistical physics, theoretical computer science, and statistics~\citep{zdeborova2016statistical,liu2024locally,geman1984stochastic}. Given an inverse temperature $\beta$, Glauber dynamics (\cref{alg:continuous-time-glauber}) asymptotically converges to the Gibbs measure ${\mu_{Y,\beta}(x)\propto\exp(\beta N^{-\tfrac{k-1}{2}}\dotp{Y}{x^{\otimes k}})}$~\citep{levin2017markov}. This property motivates its use in inference problems, since at $\beta=\lambda$ this Gibbs measure is precisely the posterior distribution of $\xopt$ under a uniform prior, a sample from which attains the information-theoretic limits for both weak and strong recovery~\citep{nishimori2001statistical,lesieur2017statistical}. However, proving polynomial-time convergence of Glauber dynamics to $\mu_{Y,\lambda}$ appears out of reach of current techniques. Nevertheless, this does not necessarily rule out polynomial-time recovery by Glauber dynamics, since recovery only requires the dynamics to reach configurations correlated with $\xopt$, a much weaker requirement than convergence to the posterior. This motivates the central question of our work.

\begin{center}
\emph{``Can polynomial-time algorithms based on canonical local Markov chains, such as Glauber dynamics initialized from a cold start, achieve recovery down to the algorithmic threshold?''}
\end{center}
For local Markov chains run at a fixed temperature, prior works conjecture a negative answer. Specifically, \citet[Section~1]{arous2020algorithmic} and \citet[Section~2(b)]{bandeira2023free} conjecture that such chains initialized uniformly at random remain trapped near zero correlation over polynomial timescales whenever ${\lambda\ll N^{\tfrac{k-2}{2}}}$. However, currently available rigorous lower bounds are restricted to disorder-dependent Gibbs-type initializations, which trap local Markov chains in a band of near-zero correlation for subexponential timescales via free-energy barriers~\citep{arous2020algorithmic,bandeira2023free}. Such initializations are, in some sense, arguably closer to the worst case than to typical uninformative cold starts such as a uniformly random configuration, for which the conjecture has remained open~\citep[Remark~2.10]{bandeira2023free}.

\subsection{Our Results}
\label{subsec:our-results}
\begin{algorithm}[t]
\caption{Annealed Glauber Dynamics with Ergodic Averaging}
\label[algorithm]{alg:averaged-annealed-glauber}
\textbf{Input:} Hamiltonian $H:\Cube\to\R$, initial configuration $\tilX_0\in\Cube$, inverse temperatures $\beta_0,\beta>0$, and simulation times $T_{\mathrm{av}},T>0$.
\makeatletter
\providecommand*{\theHALG@line}{}
\renewcommand*{\theHALG@line}{\thealgorithm.\arabic{ALG@line}}
\makeatother
\begin{algorithmic}[1]
    \State Run continuous time Glauber dynamics (\cref{alg:continuous-time-glauber}) with Hamiltonian $H$, inverse temperature $\beta_0$, initial configuration $\tilX_0$, and time $T_{\mathrm{av}}$ to obtain $(\tilX_t)_{0\le t\le T_{\mathrm{av}}}$.
    \State Compute the ergodic average:
    \begin{align*}
        \Xbar_{T_{\mathrm{av}}}\doteq\frac{1}{T_{\mathrm{av}}}\int_0^{T_{\mathrm{av}}}\tilX_t\,dt.
    \end{align*}
    \State Run continuous time Glauber dynamics with Hamiltonian $H$, inverse temperature $\beta$, initial configuration $X_0=\sgn(\Xbar_{T_{\mathrm{av}}})$, and time $T$ to obtain $(X_t)_{0\le t\le T}$.
\end{algorithmic}
\textbf{Output:} $X_T$.
\end{algorithm}

Our main result is a simple, fully local algorithm for tensor PCA, achieving polynomial-time strong recovery for all $\lambda \gtrsim \lambda_{\mathrm{ALG}}$.
This is perhaps surprising, given the so-far observed gap between local and non-local algorithms (and in light of \cite{richard2014statistical,arous2020algorithmic}). 
Our algorithm is continuous-time Glauber dynamics from a uniform initialization, with a simple twist.
\mbox{\Cref{alg:averaged-annealed-glauber}} first runs high temperature Glauber dynamics and computes the ergodic average of its trajectory. It then rounds this average to the hypercube and uses the resulting configuration as a warm start for Glauber dynamics at a much lower, constant temperature. As described in the following theorem, we prove that this algorithm succeeds at polynomial-time strong recovery above the algorithmic threshold whenever the tensor order $k$ is odd.

\begin{theorem}[Informal]
\label{thm:glauber-odd-k-informal}
Let $k\ge3$ be odd, and let $H(x)\doteq N^{-\tfrac{k-1}{2}}\dotp{Y}{x^{\otimes k}}$ denote the tensor PCA Hamiltonian with $\lambda\gtrsim\lambda_{\mathrm{ALG}}$. Then, with a uniformly random initialization and suitable inverse temperatures, \cref{alg:averaged-annealed-glauber} run with $T_{\mathrm{av}}\gtrsim N^{2k+o(1)}$ and $\log N\lesssim T\lesssim\poly(N)$ strongly recovers the planted signal with probability at least $1-\exp(-N^{1/2-o(1)})$.
\end{theorem}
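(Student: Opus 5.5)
The proof splits along the two phases of \cref{alg:averaged-annealed-glauber}: a high-temperature averaging phase that produces a weakly correlated warm start, followed by a low-temperature phase that boosts this to strong recovery.

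\textbf{Phase 1: the ergodic average is aligned with a noisy one-step estimator.} Take $\beta_0$ small, of order $N^{-\frac{k-2}{4}}$ up to polylogarithmic factors, so that $\beta_0 H$ is a weak tilt of the uniform measure. In this regime we expect $\mu_{Y,\beta_0}$ to satisfy a log-Sobolev or spectral gap inequality with constant $O(1)$, which we would establish via a Bakry--\'Emery/Dobrushin-type criterion using the operator-norm bound on the interaction matrix $\beta_0 \nabla^2 H$. Given this, the time average $\Xbar_{T_{\mathrm{av}}}$ concentrates around the Gibbs mean $m=\E_{\mu_{Y,\beta_0}}[x]$. The fluctuation bound comes from the Kipnis--Varadhan/Poisson-equation variance bound, which gives error $O(\sqrt{N/T_{\mathrm{av}}})$ per coordinate; the generous choice $T_{\mathrm{av}}\gtrsim N^{2k}$ leaves room for union bounds and crude constants.

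Next, expand $m$ in $\beta_0$. To first order, $m_i\approx\beta_0\,\partial_i H$ averaged over uniform $x$. Because $k$ is odd, the cavity field $\partial_i H$ averaged over the high-temperature measure has a leading planted term. The key step is a cumulant/cluster expansion showing that the coordinates of $m$ (or of $\tanh^{-1}$ of them) equal $c\,\beta_0^{k-1}\lambda N^{-\frac{k-2}{2}}\,\xopt_i$ plus a noise term. This planted signal arises from the self-consistent contribution of the $(k-1)$ other spins being biased. The noise term must be shown to be of order $\beta_0$ in $\ell_\infty$ on average. The condition $\lambda\gtrsim N^{\frac{k-2}{4}}$ is exactly what makes the signal exceed the noise, analogous to the partial-trace estimator. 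We would make this expansion rigorous by bounding higher-order terms with tensor-norm estimates on $G$, using known injective-norm and unfolding bounds.

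\textbf{Phase 2: boosting at low temperature.} Once $\sgn(\Xbar)$ has correlation $\ge\epsilon$ with $\xopt$, set $\beta$ to a large constant. The drift of the overlap $\langle X_t,\xopt\rangle/N$ under Glauber dynamics is then dominated by $\lambda\sqrt N\,\beta\, q^{k-1}$ from the signal, versus noise fields of size $O(\sqrt N\,\mathrm{polylog})$ coming from $\|G\|_{\mathrm{inj}}$. For $q\ge\epsilon$ and $\lambda\gtrsim N^{\frac{k-2}{4}}$, every spin's local field points toward $\xopt$ except on a vanishing fraction of coordinates. A supermartingale argument then shows that $q_t$ reaches $1-o(1)$ within time $O(\log N)$ and stays there for $\poly(N)$ time. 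The oddness of $k$ ensures the rounding yields positive correlation, so there is no sign ambiguity. The failure probabilities $\exp(-N^{1/2-o(1)})$ follow from Gaussian concentration of the tensor norms together with Azuma-type martingale bounds.

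\textbf{Main obstacle.} The hard step is Phase 1: proving that the high-temperature Gibbs magnetization carries a signal at the $N^{\frac{k-2}{4}}$ scale rather than the naive $N^{\frac{k-2}{2}}$ scale. This requires a careful expansion that tracks the noise tensor's contribution coordinatewise, likely via differentiating the magnetization along an interpolation in $\beta_0$. I would attempt it by induction over derivative orders, controlling each term with moment bounds on contracted Gaussian tensors.
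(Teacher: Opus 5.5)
Your Phase 2 has a genuine gap. It starts from the assumption that $\sgn(\Xbar)$ has constant correlation $\epsilon$ with $\xopt$, but at $\lambda\asymp\lambda_{\mathrm{ALG}}$ the high-temperature average does not deliver this. Its leading term is $\beta_0\E_\pi[\dgrad H]=\beta_0(\lambda\sigma_N^2\xopt+\zeta)$, with $\zeta\sim\cN(0,\sigma_N^2I)$ and $\sigma_N^2\asymp N^{-\frac{k-1}{2}}$. So the per-coordinate signal-to-noise ratio is $\lambda\sigma_N\asymp N^{-1/4}$, and $m(\sgn(\Xbar))\asymp\min\{1,\lambda N^{-\frac{k-1}{4}}\}\asymp N^{-1/4}$. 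The threshold $\lambda\gtrsim N^{\frac{k-2}{4}}$ is what makes this weak correlation enough to escape the equator ($\lambda q^{k-2}\gtrsim1$); it does not make individual coordinates reliable. At overlap $q\asymp N^{-1/4}$, the signal field $k\lambda q^{k-1}\asymp N^{-1/4}$ is far below the typical $O(1)$ disorder field $\partial_iH_0$. Your claim that all but a vanishing fraction of local fields point toward $\xopt$ is therefore false there. A worst-case supermartingale argument only works once the overlap exceeds a large multiple of the critical correlation $\lambda^{-1/(k-1)}$, which covers only $\lambda\gtrsim N^{\frac{k-2}{4}+\frac{1}{4k}}$. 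Below that, the disorder drift $\frac{\beta}{N}\dotp{\xopt}{\dgrad H_0(X_t)}$ is $O(1)$ in the worst case, against a signal drift of order $\lambda q^{k-1}$. The typical-point bound $O(N^{-1/2})$ is unavailable, because the initialization, and hence the whole trajectory, depends on $H_0$.

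The missing idea is a proof that this dependence is benign. The paper argues as follows.
\begin{itemize}
\item The warm start sees $H_0$ only through its first Fourier level $\zeta$. By Gaussian conditioning, $\zeta$ is independent of $H_0^\perp=H_0-\dotp{\zeta}{\cdot}$.
\item A gauge transformation sends $\sgn(u_H)$ to $\mathbf{1}$. The dynamics is then compared with a null law, namely the same procedure at $\lambda=0$. Under that law the planted sign vector is independent of the trajectory, so the null drift is a Rademacher sum of size $O(N^{-1/2})$ along the entire path.
\item A Girsanov-type adaptive contiguity bound transfers this estimate to the true dynamics, up to the first time the overlap reaches $C\lambda^{-1/(k-1)}$. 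The exponential cost is controlled by the signal accumulated along the trajectory before that time.
\end{itemize}
Only after this stopping time does your worst-case boosting apply.

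Your Phase 1 picture is also off. The signal already appears at first order in $\beta_0$, so no order-$\beta_0^{k-1}$ self-consistent term is needed. The per-coordinate noise is of order $\beta_0\sigma_N$, not $\beta_0$; noise of order $\beta_0$ would leave nothing recoverable. Finally, the paper needs no log-Sobolev or spectral-gap input. It compares Glauber dynamics directly with the simple random walk, using its Green operator and the identity $\dgrad(P_tf)=e^{-t}P_t(\dgrad f)$.
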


When the tensor order $k$ is even, the ergodic average no longer provides an informative warm start. However, this is easily remedied by using ergodic second moments instead. To this end, \cref{alg:annealed-glauber-ergodic-second-moments}, which arises as a simple modification of \cref{alg:averaged-annealed-glauber}, initializes low temperature Glauber dynamics using the top eigenvector of the ergodic second moment of ${F(x)=(x,1)\in\R^{N+1}}$ along the trajectory of high temperature Glauber dynamics. We prove that the resulting algorithm achieves polynomial-time strong recovery above the algorithmic threshold for every tensor order $k\ge3$, irrespective of its parity.

\begin{algorithm}[t]
\caption{Annealed Glauber Dynamics with Augmented Ergodic Second Moment }
\label[algorithm]{alg:annealed-glauber-ergodic-second-moments}
\textbf{Input:} Hamiltonian $H:\Cube\to\R$, initial configuration $\tilX_0\in\Cube$, inverse temperatures $\beta_0,\beta>0$, and simulation times $T_{\mathrm{av}},T>0$.
\makeatletter
\providecommand*{\theHALG@line}{}
\renewcommand*{\theHALG@line}{\thealgorithm.\arabic{ALG@line}}
\makeatother
\begin{algorithmic}[1]
    \State Run continuous time Glauber dynamics (\cref{alg:continuous-time-glauber}) with Hamiltonian $H$, inverse temperature $\beta_0$, initial configuration $\tilX_0$, and time $T_{\mathrm{av}}$ to obtain $(\tilX_t)_{0\le t\le T_{\mathrm{av}}}$.
    \State Compute the augmented ergodic second moment:
    \begin{align*}
        \bar M_{T_{\mathrm{av}}}\doteq\frac{1}{T_{\mathrm{av}}}\int_0^{T_{\mathrm{av}}}F(\tilX_t)^{\otimes2}\,dt,
        \qquad
        F(x)\doteq(x,1)\in\R^{N+1}.
    \end{align*}
    \State Compute a top unit eigenvector $(v,a)\in\R^N\times\R$ of $\bar M_{T_{\mathrm{av}}}$, oriented so that $a\ge0$.
    \State Run continuous time Glauber dynamics with Hamiltonian $H$, inverse temperature $\beta$, initial configuration $X_0=\sgn(v)$, and time $T$ to obtain $(X_t)_{0\le t\le T}$.
\end{algorithmic}
\textbf{Output:} $X_T$.
\end{algorithm}

\begin{theorem}[Informal]
\label{thm:glauber-every-k-informal}
Let $k\ge3$, and let $H(x)\doteq N^{-\tfrac{k-1}{2}}\dotp{Y}{x^{\otimes k}}$ denote the tensor PCA Hamiltonian with $\lambda\gtrsim\lambda_{\mathrm{ALG}}$. Then, with a uniformly random initialization and suitable inverse temperatures, \cref{alg:annealed-glauber-ergodic-second-moments} run with $T_{\mathrm{av}}\gtrsim N^{2k+2+o(1)}$ and $\log N\lesssim T\lesssim\poly(N)$ strongly recovers the planted signal with probability at least $1-\exp(-N^{1/2-o(1)})$.
\end{theorem}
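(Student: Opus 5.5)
The plan mirrors the odd-$k$ argument and splits into the same two phases: a high-temperature averaging phase that produces a warm start with nontrivial correlation, followed by a low-temperature phase that boosts it to strong recovery. The only new ingredient is that the ergodic first moment is replaced by the augmented ergodic second moment, which removes the parity obstruction.

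\textbf{Step 1: a perturbative expansion of the stationary second moment.} Take $\beta_0$ small, of order $N^{-(k-2)/4}$ up to polylogarithmic factors, so that $\beta_0\lambda\asymp 1$ at the threshold. I would use the averaged-differential machinery to expand the stationary averages under the high-temperature chain, to leading order in $\beta_0$:
\[
\E_{\mu_{\beta_0}}[x]\approx \beta_0\nabla H(0),\qquad \E_{\mu_{\beta_0}}[x x^\top]\approx I+\beta_0 \nabla^2 H(0)+\tfrac{\beta_0^2}{2}\,(\cdots).
\]
For odd $k$, the relevant signal term is $\lambda\beta_0\, N^{-(k-1)/2}\sqrt N\,\langle \plant, x\rangle^{k}$, whose low-order expansion in $\beta_0$ produces a linear term in $x$. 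For even $k$, the same expansion instead produces the quadratic term $\beta_0^2\lambda^2\,\plant\plant^\top$ (times $N$-dependent factors). In the augmented matrix the off-diagonal block is the first moment and the corner entry is $1$. The point of the augmentation is that the top eigenvector of
\[
\begin{pmatrix}\E[xx^\top] & \E[x]\\ \E[x]^\top & 1\end{pmatrix}
\]
captures the signal in either case. The identity part contributes a bulk at about $1$, and the planted direction contributes an outlier of size $\Theta(1)$ above the bulk once $\lambda\gtrsim\lambda_{\mathrm{ALG}}$ (when $k$ is odd this happens via the Schur complement).

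\textbf{Step 2: controlling the noise.} The noise contributions are $\beta_0\nabla^2 H_G(0)$ and the higher-order terms. For these I would bound the operator norm of the Gaussian contractions by roughly $\beta_0 N^{-(k-2)/2}\cdot\sqrt N\,\mathrm{polylog}$, which is $o(1)$ precisely in this regime. A Davis--Kahan argument then gives $\langle v,\plant\rangle^2\ge c$. To get correlation after the rounding step $\sgn(v)$, I would argue delocalization of the residual, or use the standard fact that if $\|v-\plant\|\le\epsilon$ then $\sgn(v)$ agrees with $\xopt$ on a $1-O(\epsilon^2)$ fraction of coordinates; this gives weak correlation bounded below.

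\textbf{Step 3: ergodic concentration.} Next, replace stationary expectations by the time average $\bar M_{T_{\mathrm{av}}}$. I would invoke the ergodic averaging results (mixing at high temperature via spectral gap or log-Sobolev) together with a matrix Bernstein or Chebyshev bound entrywise and then a union bound. The extra $N^2$ in $T_{\mathrm{av}}$ relative to the first-moment case pays for the $N^2$ entries and the operator-norm error target. Failure probability $\exp(-N^{1/2-o(1)})$ would come from concentration of $H$ under the disorder.

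\textbf{Step 4: boosting.} Finally, apply the strong-recovery-from-warm-start result for low-temperature Glauber dynamics, proved for the odd case, which only requires correlation bounded away from zero (or above a polylog-small threshold). For even $k$ the sign ambiguity is harmless, since correlation is measured in absolute value.

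The main obstacle I expect is Step 1 for even $k$. There the signal appears only at second order in $\beta_0$, competing with second-order noise terms of the form $\beta_0^2(\nabla^2 H_G)^2$. I would first try to show that these noise terms are approximately isotropic plus a small-norm remainder, using a moment method on the Gaussian contractions.
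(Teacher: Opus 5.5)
Your proposal has a genuine gap in Steps 2 and 4 for odd $k$, which is where the paper does most of its work. You assume that the warm start has correlation bounded away from zero, and that a boost from ``constant or polylog-small'' correlation is available. Neither holds.

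For odd $k$, $A_H=0$, and the top eigenvector of $\cA_H$ is $\tfrac{1}{\sqrt2}(u_H/\norm{u_H},1)$. Here $u_H=\lambda\sigma_N^2\xopt+\zeta$ with $\zeta\sim\cN(0,\sigma_N^2I)$ and $\sigma_N^2\asymp N^{-(k-1)/2}$, so the noise dominates. Even with exact access to $\cA_H$ you get $\dotp{v}{\plant}\asymp\lambda\sigma_N$ and $m(\sgn(v))\asymp\lambda\sigma_N\asymp N^{-1/4}$ at $\lambda\asymp\lambda_{\mathrm{ALG}}$. There is no $\Theta(1)$ outlier, via the Schur complement or otherwise.

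At that correlation the signal drift $\lambda m^{k-1}$ is of order $N^{-1/4}$. The disorder drift, by contrast, is of order one in the worst case, and the warm start depends on $H_0$. The worst-case boost only applies once $m\ge C\lambda^{-1/(k-1)}$, and the warm start never reaches that level in the window $\lambda_{\mathrm{ALG}}\lesssim\lambda\lesssim\lambda_{\mathrm{ALG}}N^{1/(4k)}$. Two ingredients are missing:
\begin{enumerate}
\item A proof that $\sgn(v)$ lies within normalized Hamming distance $\lambda^2\sigma_N^2N^{-\vartheta/2}$ of $\sgn(u_H)$ itself. The paper gets this from a coordinatewise margin argument, Davis--Kahan at accuracy $N^{-1/2-\vartheta}$, and the eigenvector sign fixed by $a\ge0$. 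Correlation with $\xopt$ alone is useless, because the next step relies on the specific structure of $\sgn(u_H)$.
\item The subcritical-to-critical argument. It exploits the fact that $u_H$ sees $H_0$ only through its first Fourier level $\zeta$, which is independent of $H_0-\dotp{\zeta}{\cdot}$. It then gauges $\sgn(u_H)$ to $\mathbf{1}$ and compares with a null law under which the planted signs are independent of the trajectory. Under that law the null drift is a Rademacher sum of size $N^{-1/2}$. This estimate is transferred to the true dynamics by a Girsanov-type adaptive contiguity bound up to a stopping time. A stopped-martingale and power-law Gr\"onwall argument then reaches $C\lambda^{-1/(k-1)}$ in $O(1)$ time.
\end{enumerate}
This is also where the $\exp(-N^{1/2-o(1)})$ rate comes from, via $N\lambda^2\sigma_N^2\asymp\sqrt N$ and the sign-rounding margin, rather than from concentration of $H$.

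For even $k$ you misplace the signal: it does not first appear at order $\beta_0^2$. On $\Cube$ the relevant objects are the Fourier levels of $H$; for the homogeneous polynomial $\dotp{Y}{x^{\otimes k}}$, the Euclidean $\nabla H(0)$ and $\nabla^2H(0)$ simply vanish. Because $x_i^2=1$, the signal $N\lambda m^k$ has a nonzero second level. So already at first order in $\beta_0$,
\[
A_H=\E_\pi[(xx^\top-I)H]=\lambda\gamma_N\bigl(\plant(\plant)^\top-I/N\bigr)+W_N,
\]
with $\gamma_N\asymp N^{-(k-2)/2}$ and $W_N$ a Wigner matrix with entry variance $\gamma_N/N$. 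This is a spiked Wigner model with signal-to-noise ratio $\lambda\sqrt{\gamma_N}\asymp\lambda N^{-(k-2)/4}$, which is exactly the threshold. A spike entering only at order $\beta_0^2$ would be swamped by the first-order noise $\beta_0W_N$ whenever $\beta_0\lambda\lesssim1$. So the obstacle you single out is not the real one, and that route would not reach $\lambda_{\mathrm{ALG}}$. The spike is also only a polynomially small perturbation of $I$. Davis--Kahan is applied to $\beta_0^{-1}(\bar M_{T_{\mathrm{av}}}-I)$, so no $\Theta(1)$ gap is needed.

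For Step 3, the paper never proves mixing. It compares directly with the simple random walk, using three tools: the Poisson equation for the walk's Green operator, the identity $\dgrad(P_tf)=e^{-t}P_t(\dgrad f)$, and Dynkin-martingale Bernstein bounds over an $\eps$-net. Its remainder terms $\beta_0^2\lambda^2N+\beta_0^3\lambda^3N^2$ lead it to take $\beta_0\asymp\lambda^{-1}N^{-(k+1)/2-\vartheta}$, not $\beta_0\lambda\asymp1$. Through $T_{\mathrm{av}}\gtrsim\beta_0^{-2}\lambda^{-2}N^{k+1+2\vartheta}$, that choice is what produces $N^{2k+2}$, not a count of $N^2$ entries. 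With your choice of $\beta_0$ you would need both a mixing proof and a quantitative high-temperature expansion, and the proposal supplies neither.
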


\begin{remark}
\label{rem:second-moment-algorithm-locality}
We note that the warm start in \cref{alg:annealed-glauber-ergodic-second-moments} is computed via an ergodic statistic of the underlying local Markov chain itself. This differs from the prior warm-started local algorithms of~\citep{richard2014statistical,anandkumar2017homotopy,lovig2025almost}, which obtain their initializations from the observed tensor $Y$ via a global algorithm. Furthermore, because this ergodic statistic is accumulated online, the eigenvector computation of Step~3 in \cref{alg:annealed-glauber-ergodic-second-moments} can itself be carried out locally, by evolving Oja's flow~\citep{oja1982simplified,kumar2023streaming} alongside the high temperature trajectory,
\begin{align*}
    \dot z_t=\eta_t\dotp{F(\tilX_t)}{z_t}F(\tilX_t),
    \qquad
    F(x)=(x,1)\in\R^{N+1},
    \qquad
    z_0\sim\Unif(\bS^N),
\end{align*}
with step sizes $\eta_t>0$, and replacing $(v,a)$ with $z_{T_{\mathrm{av}}}$ normalized to unit norm and oriented so that its last coordinate is nonnegative. Since the trajectory of continuous-time Glauber dynamics is piecewise constant (see \cref{subsec:cts-glauber} and~\citep[Chapter~20]{levin2017markov}), this ODE can be solved in closed form between consecutive updates of the chain, and over each such interval it amounts to a rank-one update of $z_t$ determined by the current configuration. The resulting algorithm maintains only the current configuration and a single auxiliary vector, much like the running average in \cref{alg:averaged-annealed-glauber}. For simplicity, we retain the explicit eigenvector computation, since analyzing this variant would divert attention from our main ideas.
\end{remark}

We highlight that \cref{alg:averaged-annealed-glauber,alg:annealed-glauber-ergodic-second-moments} are agnostic to the Hamiltonian, i.e., they are defined for any arbitrary $H:\Cube\to\R$ and make no use of the specific form of the tensor PCA Hamiltonian. Furthermore, our proofs rely on only a few structural properties of the problem. To this end, we expect our approach to extend to other planted problems as well as domains beyond the hypercube. In fact, our forthcoming companion paper~\citep{das2026averaged} carries these ideas over to spherical tensor PCA, and proves that spherical Langevin dynamics, when supplemented with annealing and ergodic averaging, succeeds at polynomial-time strong recovery above the algorithmic threshold.

\subsection{Prior Work}
\label{subsec:prior-work}

\paragraph{The algorithmic threshold for Tensor PCA.} The spiked tensor formulation of tensor PCA was introduced by \citet{richard2014statistical}, who in the same work showed that tensor unfolding attains the algorithmic threshold for even $k$, and conjectured that it does so for odd $k$ as well. Beyond this, they predicted that local iterative algorithms such as tensor power iteration and AMP fall short of the algorithmic threshold, succeeding only above ${\lambda\gtrsim N^{\tfrac{k-2}{2}}}$ when initialized at random. Subsequently, \citet{hopkins2015tensor} used the SoS hierarchy to develop a variant of tensor unfolding that attains the algorithmic threshold for odd $k$, and proved that degree-$4$ SoS fails below this threshold (modulo logarithmic factors) for $k=3$. Building on these SoS proofs, \citet{hopkins2016fast} developed efficient spectral algorithms based on partial traces for $k=3$. Below the algorithmic threshold, the concurrent works of \citet{bhattiprolu2017certificates} and \citet{raghavendra2017strongly} showed that degree-$D$ SoS, which runs in time $N^{O(D)}$, succeeds whenever ${\lambda\gtrsim(N/D)^{\tfrac{k-2}{4}}}$, modulo logarithmic factors. This tradeoff was later recovered for even $k$ by \citet{wein2025kikuchi} via the Kikuchi hierarchy, and subsequently sharpened for every order by \citet{kothari2026smooth}~\citep[see also][]{hastings2020classical,li2025smooth}. These results are complemented by lower bounds in restricted models of computation. Specifically, \citet{hopkins2017power} and \citet{potechin2022subexponential} showed that degree-$N^{\Omega(\eps)}$ SoS fails whenever ${\lambda\le N^{\tfrac{k-2}{4}-\eps}}$. \citet{kunisky2022notes} proved that degree-$D$ polynomials fail to distinguish the spiked tensor $Y$ from the pure noise tensor $G$ whenever ${\lambda\ll(N/D)^{\tfrac{k-2}{4}}}$, matching the tradeoff above. For odd $k$, \citet{kunisky2024cumulants} extended this lower bound to weak recovery. \citet{dudeja2021statistical} and \citet{brennan2021queries} concurrently showed that statistical query algorithms fail below the algorithmic threshold in a multi-sample formulation of tensor PCA. Beyond such lower bounds for restricted classes of algorithms, \citet{brennan2020reducibility} established conditional hardness of tensor PCA below the algorithmic threshold, assuming a hypergraph version of the planted clique conjecture. Recently, \citet{bresler2026average} gave average-case reductions from planted $k$-XOR to tensor PCA.

\paragraph{Local algorithms.} Turning to local algorithms, \citet{huang2022power}, and subsequently \citet{wu2024sharp}, showed that power iteration from a random initialization succeeds when ${\lambda\gtrsim N^{\tfrac{k-2}{2}}}$, confirming the prediction of \citet{richard2014statistical}. \citet{arous2020algorithmic} proved that gradient flow and Langevin dynamics at any fixed temperature likewise succeed from a random initialization once ${\lambda\gg N^{\tfrac{k-2}{2}}}$. Conversely, the same authors~\citep[Section~1]{arous2020algorithmic}, and later \citet[Section~2(b)]{bandeira2023free}, conjectured that below this scale, local Markov chains initialized uniformly at random fail to recover the signal in polynomial time, and proved that such chains remain trapped near zero correlation for subexponential timescales when started from disorder-dependent Gibbs-type initializations. Their conjecture is supported by the annealed \mbox{Kac--Rice} computations of \citet{benarous2019landscape}, who argued heuristically that local ascent methods started at random remain stuck in uninformative local maxima over polynomial timescales unless ${\lambda\gtrsim N^{\tfrac{k-2}{2}}}$. A parallel line of work studies local algorithms warm-started by a global method. This line of inquiry dates back to \citet{richard2014statistical}, who showed that power iteration succeeds above the algorithmic threshold when initialized from a sufficiently correlated warm start, such as the output of tensor unfolding. \citet{anandkumar2017homotopy} showed that for $k=3$, a partial trace of $Y$ also suffices as such a warm start. Subsequently, \citet{lovig2025almost} designed a randomized greedy algorithm that succeeds above the algorithmic threshold for any odd $k\ge3$ when initialized from a sign-rounded partial trace. Beyond this, \citet{biroli2020iron} proposed averaged gradient descent, an iterative method which follows the gradient averaged over many random replicas at a fixed overlap with the current iterate, an averaging they describe as global (for even $k$, an eigenvector of an averaged Hessian replaces the gradient). They conjectured that this method succeeds above the algorithmic threshold, and provided evidence through \mbox{Kac--Rice} heuristics and numerical simulations for $k=3$. In contrast, our algorithm averages the trajectory of a single high temperature Markov chain, and comes with rigorous guarantees.

\paragraph{Markov Chains beyond mixing.}
Broadly, our work fits into the growing line of research that analyzes local Markov chains for statistical estimation, even when standard tools for proving rapid mixing do not apply. \citet{liu2024locally} introduced locally stationary distributions and used them to show that high temperature Glauber dynamics achieves weak recovery in the spiked Wigner and stochastic block models. In a follow-up work, \citet{rajaraman2025markov} refined the spiked Wigner guarantee via a connection to AMP, and conjectured that annealed Glauber dynamics samples from the posterior in polynomial time~\citep[Conjecture~3.5]{rajaraman2025markov}. Surprisingly, \citet{gheissari2026mixing} showed that for spiked Wigner matrices, Langevin dynamics initialized uniformly at random mixes in logarithmic time at large constant signal-to-noise ratios, despite its exponential worst-case mixing time, and thereby achieves weak recovery. For planted clique, \citet{chen2023almost} proved that the Metropolis process of \citet{jerrum1992large} fails to recover even almost-linear planted cliques, whereas \citet{gheissari2025finding} constructed a specialized low temperature Markov chain on all vertex subsets that recovers planted cliques of size of order $\sqrt{n}$ in $n$-vertex graphs from a carefully chosen initialization, the full vertex set. For sparse tensor PCA, \citet{chen2024low} showed that low temperature Metropolis succeeds only when the signal strength is significantly above the associated algorithmic threshold, and established a matching bottleneck for low temperature local Markov chains based on the overlap gap property. \citet{bandeira2022franz} proved that below the low-degree threshold, free-energy barriers confine local Markov chains started at suitable low-correlation states in many Gaussian additive models~\citep[see also][]{benarous2023free}, and asked whether an appropriately designed Markov chain can nevertheless reach this threshold~\citep[full version, Section~2.2]{bandeira2022franz}. Our results answer this question in the affirmative for tensor PCA, once Glauber dynamics is supplemented with annealing and ergodic averaging.

\subsection{Discussion and Open Problems}
\label{subsec:discussion-open-problems}

Markov chain Monte Carlo (MCMC) methods are ubiquitous in statistical inference, yet rigorous guarantees for them have largely hinged on proofs of rapid mixing. Generic criteria for rapid mixing apply to tensor PCA only at sufficiently high temperature~\citep[e.g.,][]{eldan2022spectral,chen2022localization,anari2024trickle}, and to our knowledge, no available technique establishes polynomial-time mixing of Glauber dynamics for tensor PCA at the range of low temperatures relevant to recovery. Even so, our algorithms based on local Markov chains achieve polynomial-time strong recovery from a cold start all the way down to the algorithmic threshold, and our analysis sidesteps the question of mixing altogether. In the same spirit, \citet{liu2024locally} proved that Glauber dynamics achieves weak recovery in the spiked Wigner model without bounding its mixing time. Since the spiked Wigner model exhibits no statistical-computational gap~\citep{baik2005phase,deshpande2017asymptotic}, they asked whether Glauber dynamics, possibly with annealing, achieves computationally optimal recovery in planted problems that exhibit an information-computation tradeoff, such as the stochastic block model with many communities and planted constraint satisfaction problems (CSPs); see~\citep[Problem~1.22]{liu2024locally}. Given the close connection between tensor PCA and planted $k$-XOR~\citep{bresler2026average}, our results may be regarded as a modest step toward an affirmative answer; our use of ergodic averaging adds an interesting twist. Our forthcoming companion paper~\citep{das2026averaged} provides further evidence in this direction, as it establishes the analogous guarantee for Langevin dynamics in spherical tensor PCA. In this sense, the two works together redeem local Markov chains, perhaps the most natural algorithms that statistical physics offers for planted problems, by demonstrating that standard MCMC heuristics such as annealing and ergodic averaging~\citep{kirkpatrick1983optimization,metropolis1953equation,geyer1992practical} suffice for these chains to provably reach the algorithmic threshold.

Our results leave open several natural questions, the first of which is whether averaged annealed Glauber dynamics attains the algorithmic threshold in other planted problems of interest. Since \cref{alg:averaged-annealed-glauber,alg:annealed-glauber-ergodic-second-moments} are agnostic to the Hamiltonian, they can be run on such problems over the hypercube, possibly after replacing the ergodic statistics in their first phase with functionals better suited to the problem at hand. An affirmative answer for planted CSPs would bear more directly on the question raised in~\citep[Problem~1.22]{liu2024locally}. We conjecture that this is indeed the case for planted $k$-XOR, given its close connection to tensor PCA~\citep{bresler2026average}.

A related question is whether ergodic averaging is necessary at all, or whether annealing alone, namely Glauber dynamics run under a suitable temperature schedule from a uniformly random initialization, can reach the algorithmic threshold. We believe that averaging is indeed necessary whenever ${\lambda\ll N^{\tfrac{k-2}{2}}}$. At the high temperatures at which the ergodic average encodes an informative warm start, the state of Glauber dynamics at any given time remains essentially uncorrelated with $\xopt$. This is because the signal manifests merely as a minute bias in each coordinate, which only becomes visible after averaging over the trajectory. On the other hand, at lower temperatures, local Markov chains initialized from such nearly uniform states are conjectured to remain trapped near zero correlation in this regime~\citep{arous2020algorithmic,bandeira2023free}. Proving that annealing alone fails would establish this conjecture, which appears to be beyond the reach of existing lower-bound arguments based on free-energy barriers~\citep{arous2020algorithmic,bandeira2023free}.

A compelling direction of research is to understand the fine-grained tradeoffs between runtime and signal strength achievable by local Markov chains near the algorithmic threshold. For global algorithms, \citet{kothari2026smooth} and \citet{li2025smooth} showed that for every fixed constant $c>0$, a running time of $N^{O_{c,k}(1)}$ suffices at ${\lambda=c\lambda_{\mathrm{ALG}}}$~\citep[cf.][]{bhattiprolu2017certificates,raghavendra2017strongly}. In contrast, our algorithms are only guaranteed to succeed when ${\lambda\ge C\lambda_{\mathrm{ALG}}}$ for a sufficiently large constant $C$. Establishing analogous fine-grained tradeoffs for averaged annealed Glauber dynamics remains an interesting open problem.

The analysis of the second phase of our algorithms is restricted to a constant inverse temperature $\beta\in(0,\infty)$, whereas the posterior ${\mu_{Y,\lambda}(x)\propto\exp(\lambda N^{-\tfrac{k-1}{2}}\dotp{Y}{x^{\otimes k}})}$ corresponds to $\beta=\lambda$, which diverges with $N$. It would be interesting to determine the recovery guarantees achievable by Glauber dynamics run at such diverging inverse temperatures (possibly supplemented with annealing and ergodic averaging) and more ambitiously, whether it can sample from the posterior in polynomial time whenever ${\lambda\gtrsim\lambda_{\mathrm{ALG}}}$. In some sense, this would be a natural extension of the recent work of~\citet{gheissari2026mixing}, which showed that Langevin dynamics for the spiked Wigner model (corresponding to $k=2$) mixes in $O(\log N)$ time from a uniformly random initialization when the signal-to-noise ratio is a sufficiently large constant.

Lastly, while the second phase of our algorithms runs for time $O(\log N)$, translating to $O(N^k\log N)$ operations, their first phase runs for time $T_{\mathrm{av}}=N^{2k+o(1)}$, which amounts to $N^{3k+o(1)}$ operations. This is slower than the global spectral algorithms for tensor PCA~\citep[e.g.,][]{hopkins2015tensor,hopkins2016fast,wein2025kikuchi}, which run in time $O(N^k)$. We believe that this gap is an artifact of our analysis, which requires a polynomially high temperature in the first phase, and hence a polynomially long averaging time. To this end, we ask whether averaged annealed Glauber dynamics can achieve a near-linear running time, matching that of the global spectral algorithms.

\subsection{Acknowledgements}
\label{subsec:ack}
We thank Brice Huang for helpful discussions and feedback.
The authors acknowledge the support of the NSF AI Institute for Foundations of Machine Learning (IFML) and of T.S.'s NSF CAREER award \# 2143246.

\subsection*{Organization}

We sketch the key ideas behind our proofs in \cref{sec:proof-sketch}, and review the necessary notation and preliminaries in \cref{sec:preliminaries}. In \cref{sec:averaged-discrete-differentials}, we carry out a signal-noise decomposition of the average discrete gradient and Hessian of the tensor PCA Hamiltonian, and \cref{sec:ergodic-averaging} shows that ergodic averages of high temperature Glauber dynamics recover these averaged differentials. We then analyze the second phase of \cref{alg:averaged-annealed-glauber} in \cref{sec:strong-recovery-odd}, proving a formal version of \cref{thm:glauber-odd-k-informal}. Finally, we extend this analysis to \cref{alg:annealed-glauber-ergodic-second-moments} in \cref{sec:strong-recovery-every-order}, proving a formal version of \cref{thm:glauber-every-k-informal}.

%% file: phase_diagram.tex
\definecolor{phaseImpossible}{HTML}{D66B63}
\definecolor{phaseHard}{HTML}{E39A3B}
\definecolor{phaseGlobal}{HTML}{8C7BC0}
\definecolor{phaseLocal}{HTML}{4F8FCB}
\definecolor{phaseOurs}{HTML}{4E9A4A}
\begin{tikzpicture}[
    x=0.01\linewidth, y=1mm,
    bandtext/.style={font=\small, inner sep=0pt, align=center},
    ticktext/.style={font=\small, anchor=base, inner sep=0pt},
    guide/.style={gray!55, densely dashed},
  ]
  \def\xIT{14}
  \def\xALG{38}
  \def\xLOC{60}
  \def\xFade{88}
  \def\xEnd{100}
  \def\closedbar#1#2#3#4#5{%
    \filldraw[draw=#1!80, fill=#1!22] (#2,#4) rectangle (#3,#5);}
  \def\openbar#1#2#3#4#5{%
    \fill[#1!#2] (#3,#4) rectangle (\xFade,#5);
    \shade[left color=#1!#2, right color=white] (\xFade,#4) rectangle (\xEnd,#5);
    \draw[#1!80] (\xFade,#4) -- (#3,#4) -- (#3,#5) -- (\xFade,#5);
    \foreach \i in {1,...,20} {
      \pgfmathsetmacro\fadeop{1-\i/20}
      \draw[#1!80, draw opacity=\fadeop]
        ({\xFade+(\i-1)*(\xEnd-\xFade)/20},#4) -- ({\xFade+\i*(\xEnd-\xFade)/20},#4)
        ({\xFade+(\i-1)*(\xEnd-\xFade)/20},#5) -- ({\xFade+\i*(\xEnd-\xFade)/20},#5);
    }}
  \draw[guide] (\xALG,0) -- (\xALG,28.5);
  \closedbar{phaseImpossible}{0}{\xIT}{0}{9.5}
  \closedbar{phaseHard}{\xIT}{\xALG}{0}{9.5}
  \openbar{phaseGlobal}{22}{\xALG}{0}{9.5}
  \draw[guide] (\xLOC,0) -- (\xLOC,19);
  \openbar{phaseOurs}{32}{\xALG}{9.5}{19}
  \openbar{phaseLocal}{22}{\xLOC}{19}{28.5}
  \node[bandtext] at ({\xIT/2},4.75) {Statistically\\Impossible};
  \node[bandtext] at ({(\xIT+\xALG)/2},4.75) {Computationally Hard\\(Conjectured)};
  \node[bandtext, anchor=west] at ({\xALG+1.2},4.75) {Global Algorithms};
  \node[bandtext, anchor=west, align=left] at ({\xALG+1.2},14.25)
    {Cold-Started Local Markov Chains \\ with Averaging and Annealing (Our Works)};
  \node[bandtext, anchor=west, align=left] at ({\xLOC+1.2},23.75)
    {Cold-Started Local Algorithms\\(Prior Work)};
  \draw[->] (0,0) -- (98,0);
  \node[anchor=north, inner sep=0pt] at (97.6,-1.6) {$\lambda$};
  \foreach \x in {\xIT,\xALG,\xLOC} \draw (\x,0) -- (\x,-1);
  \node[ticktext] at (\xIT,-7) {$\lambda_{\mathrm{IT}}=\Theta_k(1)$};
  \node[ticktext] at (\xALG,-7) {$\lambda_{\mathrm{ALG}}\asymp N^{\tfrac{k-2}{4}}$};
  \node[ticktext] at (\xLOC,-7) {$N^{\tfrac{k-2}{2}}$};
\end{tikzpicture}

%% file: proof_sketch.tex
\section{Proof Sketch}
\label{sec:proof-sketch}

Before presenting our key ideas, we first describe the tensor PCA Hamiltonian in more detail and fix some notation. A natural approach to recovery is to solve the Maximum Likelihood Estimation (MLE) problem, defined as follows:
\begin{align}
    \widehat{x}_{\mathrm{MLE}}
    \in \argmax_{x\in\Cube}\dotp{Y}{x^{\otimes k}},
    \label{eq:tensor-pca-mle}
\end{align}
We approach the objective in \Cref{eq:tensor-pca-mle} indirectly, by running Glauber Dynamics with the corresponding Hamiltonian\footnote{Contrary to the usual convention in statistical physics, we define the Hamiltonian as the objective to be maximized.} $H:\Cube\to\R$:
\begin{align}
    H(x)&\doteq N^{-\tfrac{k-1}{2}}\dotp{Y}{x^{\otimes k}}
        =N\lambda m(x)^k+H_0(x),\label{eq:tensor-pca-hamiltonian}\\
    m(x)&\doteq\frac{1}{N}\dotp{\xopt}{x}
        =\frac{1}{\sqrt{N}}\dotp{\plant}{x},
        \qquad H_0(x)\doteq N^{-\tfrac{k-1}{2}}\dotp{G}{x^{\otimes k}},
        \label{eq:tensor-pca-overlap-noise}
\end{align}
where $m:\Cube\to[-1,1]$ is the normalized correlation with the planted vector and $H_0$ is the Ising pure $k$-spin Hamiltonian~\citep{panchenko2013sherrington,talagrand2011meanfield2}, a centered isotropic Gaussian process on $\Cube$ with covariance
\begin{align}
    \label{eq:kspin-covariance}
    \E[H_0(x)H_0(y)]
    =N\left(\frac{\dotp{x}{y}}{N}\right)^k,
    \qquad x,y\in\Cube.
\end{align}
We refer to $N\lambda m(x)^k$ and $H_0$ as the signal and the disorder, respectively, and let $\pi$ denote the uniform measure on $\Cube$.

We now sketch the main ideas behind our results. We discuss our analysis of \cref{alg:averaged-annealed-glauber} for odd $k$ in \cref{subsec:sketch-curvature-entropy,subsec:sketch-averaging-disorder,subsec:sketch-ergodic-averaging,subsec:sketch-reverse-planting}, and the ideas behind our analysis of \cref{alg:annealed-glauber-ergodic-second-moments} in \cref{subsec:sketch-any-order}.

\subsection{Escaping the Equator}
\label{subsec:sketch-curvature-entropy}

Let $(X_t)_{t\ge0}$ denote the trajectory of Glauber dynamics for $H$ at a constant inverse temperature $\beta>0$, and let $\partial_iH(x)\doteq\frac12\bigl[H(x_{\neg i},x_i=+1)-H(x_{\neg i},x_i=-1)\bigr]$ denote the local field at site $i$, i.e., the discrete partial derivative of $H$. As we shall describe in \cref{subsec:cts-glauber}, each time step of Glauber dynamics selects a uniformly random site $i\in[N]$ and replaces the spin $x_i$ with a fresh spin of mean $\tanh(\beta\partial_iH(x))$. Using this description, one can show that, modulo stochastic fluctuations, the correlation $m_t\doteq m(X_t)$ evolves according to the ODE
\begin{align}
    \frac{dm_t}{dt}
    \approx\frac{1}{N}\sum_{i=1}^N\xopt_i\tanh\bigl(\beta\partial_iH(X_t)\bigr)-m_t.
    \label{eq:sketch-overlap-drift}
\end{align}
Let $\dgrad H\doteq(\partial_1H,\ldots,\partial_NH)$ denote the discrete gradient of $H$. Since $\dgrad(N\lambda m^k)(x)\approx k\lambda m(x)^{k-1}\xopt$, we obtain the following by further approximating $\tanh(z) \approx z$.
\begin{align}
    \frac{dm_t}{dt}
    \approx\underbrace{\beta k\lambda\,m_t^{k-1}}_{\substack{\text{signal}\\\text{drift}}}
    -\underbrace{m_t}_{\substack{\text{entropic}\\\text{drift}}}
    +\underbrace{\frac{\beta}{N}\dotp{\xopt}{\dgrad H_0(X_t)}}_{\substack{\text{disorder}\\\text{drift}}}.
    \label{eq:sketch-overlap-ode}
\end{align}
The disorder drift is a highly intricate random function of the trajectory, since $X_t$ itself depends on $H_0$ through the entire history of the dynamics. Controlling this term is perhaps the most challenging obstacle in our analysis, and we shall return to it later in this section.

Setting aside the disorder drift for now, the evolution of $m_t$ is essentially a competition between the entropic and signal drifts. The entropic drift, which arises because each time step discards the current value of the resampled spin, drags $m_t$ towards the equator $\{x\in\Cube:\abs{m(x)}\lesssim N^{-1/2}\}$, where the uniform measure on $\Cube$ concentrates. The signal drift is induced by the local fields of the signal, which tries to align each spin $x_i$ with $\xopt_i$. The ratio of this signal drift to the entropic drift is $\beta k\lambda m_t^{k-2}$, which increases with $m_t$. We therefore expect the correlation to grow at an accelerating rate if $\beta k\lambda m_0^{k-2}>1$, and to be dragged back to the equator otherwise. In other words, Glauber dynamics initialized at correlation $m_0$ should escape the equator and recover the planted signal efficiently if and only if $\lambda m_0^{k-2}\gtrsim1$, or equivalently, $m_0\gtrsim\lambda^{-1/(k-2)}$. A similar heuristic picture was proposed for spherical tensor PCA~\citep{arous2020algorithmic}, and it is also supported by landscape complexity analyses~\citep{deoliveira1997landscape,benarous2019landscape,ros2019complex}, which show that spurious local maxima are confined to the band $\abs{m}\lesssim\lambda^{-1/(k-2)}$ around the equator.

A uniformly random initialization has correlation of order $N^{-1/2}$, which would require $\lambda\gtrsim N^{\frac{k-2}{2}}$ to escape the equator. This matches the cold start threshold for local algorithms discussed in \mbox{\cref{sec:introduction}}. At $\lambda\asymp\lambda_{\mathrm{ALG}}$, however, Glauber dynamics requires an initial correlation of order at least ${\lambda_{\mathrm{ALG}}^{-1/(k-2)}\asymp N^{-1/4}}$, even in this idealized picture that ignores the disorder. Our first obstacle is therefore to find a \emph{natural} warm start, which (i) achieves correlation $\Omega(N^{-1/4})$ when $\lambda\asymp\lambda_{\mathrm{ALG}}$, and (ii) can be approximated from the trajectory of Glauber dynamics itself. As we shall now discuss, averaging the discrete gradient of $H$ over the hypercube yields a warm start that satisfies both desiderata.

\subsection{Warm Start via the Average Discrete Gradient}
\label{subsec:sketch-averaging-disorder}

Recall that our goal is to construct a warm start from the trajectory of Glauber dynamics itself. Since Glauber dynamics depends on $H$ only through its local fields, so should any such warm start computed from its trajectory. This merits a closer analysis of the local fields, which decompose as $\dgrad H(x)\approx k\lambda m(x)^{k-1}\xopt+\dgrad H_0(x)$, and therefore form a noisy observation of $\xopt$, since $m(x)^{k-1}\ge0$ for odd $k$. However, at a uniformly random configuration, the signal contribution is of order $\lambda N^{-\frac{k-1}{2}}$ in each coordinate, and is thus dwarfed by the disorder contribution, whose coordinates are typically of order one by Gaussian concentration.

Fortunately, although $H_0$ has a rugged landscape with exponentially many local maxima~\citep{deoliveira1997landscape,auffinger2013random}, its values at typical pairs of configurations are nearly independent, since its covariance only depends on the overlap $\dotp{x}{y}/N$. Indeed, for independent $x,y\sim\pi$, the central limit theorem suggests that $\dotp{x}{y}/\sqrt N$ is asymptotically distributed as $\cN(0,1)$, and hence
\begin{align*}
    \E\bigl[H_0(x)H_0(y)\bigr]
    =N\Bigl(\frac{\dotp{x}{y}}{N}\Bigr)^{k}
    =N^{1-\frac{k}{2}}\Bigl(\frac{\dotp{x}{y}}{\sqrt N}\Bigr)^{k}
    \lesssim N^{1-\frac{k}{2}}
    \le N^{-\frac12},
\end{align*}
whereas $\E[H_0(x)^2]=N$. Averaging the local fields over $\pi$ should therefore suppress the disorder contribution, while the signal contribution survives, since it points along $\xopt$ at every configuration. This motivates the \emph{average discrete gradient} $u_H\doteq\E_\pi[\dgrad H]$.

By Stein's lemma for the uniform measure on $\Cube$, which we establish in \cref{subsec:boolean-hypercube}, $u_H=\E_\pi[xH(x)]$ is linear in the Gaussian tensor $G$. In \cref{sec:averaged-discrete-differentials}, we use the symmetries of the hypercube and the rotation invariance of the Gaussian process $H_0$ to show that $u_H$ is a noisy observation of $\xopt$,
\begin{align}
    u_H=\lambda\sigma_N^2\,\xopt+\zeta,
    \qquad
    \zeta\sim\cN(0,\sigma_N^2I),
    \qquad
    \sigma_N^2=N\,\E_{x,x'\,\iidsim\pi}\Bigl[\Bigl(\frac{\dotp{x}{x'}}{N}\Bigr)^{k+1}\Bigr]\asymp_kN^{-\frac{k-1}{2}}.
    \label{eq:sketch-average-gradient}
\end{align}
Each coordinate of $u_H$ thus has signal-to-noise ratio $\lambda\sigma_N\asymp_k\lambda N^{-\frac{k-1}{4}}$, which is a factor of $N^{\frac{k-1}{4}}$ larger than that of the local field at a single uniformly random configuration. Consequently, with high probability, $m(\sgn(u_H))\asymp_k\min\{1,\lambda N^{-\frac{k-1}{4}}\}$, so that $\lambda\,m(\sgn(u_H))^{k-2}\asymp(\lambda N^{-\frac{k-2}{4}})^{k-1}$, and $\sgn(u_H)$ is a sufficiently good warm start for escaping the equator exactly when $\lambda\gtrsim\lambda_{\mathrm{ALG}}$.

This suggests initializing Glauber dynamics at a constant temperature from $\sgn(u_H)$. However, since $u_H$ aggregates the Hamiltonian over the entire configuration space, it is therefore a global statistic. This raises the question of whether it can nevertheless be approximated from the trajectory of Glauber dynamics, as required by the second desideratum.

\subsection{Approximating the Average Gradient via High Temperature Glauber Dynamics}
\label{subsec:sketch-ergodic-averaging}

Perhaps surprisingly, we show that $u_H$ can be approximated by the ergodic average $\Xbar_T\doteq\frac{1}{T}\int_0^T\tilX_t\,dt$ of high temperature Glauber dynamics $(\tilX_t)_{t\ge0}$ for $H$, run at a small inverse temperature $\beta_0>0$. Indeed, since this chain is irreducible with stationary distribution $\mu_{H,\beta_0}\propto\pi e^{\beta_0H}$, the ergodic theorem~\citep[cf.][Theorem~C.1]{levin2017markov} implies that $\Xbar_T$ converges to the mean of $\mu_{H,\beta_0}$ as $T\to\infty$. At high temperature, $\mu_{H,\beta_0}$ is a slight tilt of the uniform measure, and expanding to first order in $\beta_0$, we infer from $\E_\pi[x]=0$ and Stein's lemma that
\begin{align*}
    \lim_{T\to\infty}\Xbar_T
    =\frac{\E_\pi\bigl[x\,e^{\beta_0H(x)}\bigr]}{\E_\pi\bigl[e^{\beta_0H(x)}\bigr]}
    =\beta_0\E_\pi\bigl[xH(x)\bigr]+O(\beta_0^2)
    =\beta_0u_H+O(\beta_0^2).
\end{align*}
In other words, the mean of the Gibbs measure at high temperature is the linear response of the uniform measure to the field $H$, which Stein's lemma identifies with $u_H$.

Since $\beta_0u_H$ vanishes as $\beta_0\to0$, making the above heuristic rigorous requires quantitative versions of both the long time limit and the high temperature expansion, which we derive in \cref{sec:ergodic-averaging}. Notably, our proofs circumvent the analysis of the stationary distribution or the mixing time of high temperature Glauber dynamics, and instead proceed by comparing the Markov chain directly with the simple random walk on $\Cube$ (which corresponds to Glauber dynamics at $\beta=0$). Since each time step of high temperature Glauber dynamics replaces a uniformly random spin $x_i$ with a fresh spin of mean $\tanh(\beta_0\partial_iH(x))\approx\beta_0\partial_iH(x)$, the ergodic average $\Xbar_T$ closely tracks that of the rescaled local fields $\beta_0\dgrad H(\tilX_t)$. In turn, the latter concentrates around $\beta_0\E_\pi[\dgrad H]=\beta_0u_H$, since high temperature Glauber dynamics is a small perturbation of the simple random walk, whose stationary distribution is $\pi$. To quantify this concentration non-asymptotically, we establish a discrete version of the Bakry--\'Emery criterion~\citep{bakry2013analysis} on the hypercube using elementary Fourier analysis, and use it in conjunction with the Green operator of the simple random walk, a classical tool for proving central limit theorems for ergodic averages of Markov chains~\citep{komorowski2012fluctuations,eberle2025markov}. For tensor PCA, the resulting bounds establish that $\beta_0=\lambda^{-1}N^{-\frac{k}{2}-o(1)}$ and $T_{\mathrm{av}}=N^{2k+o(1)}$ suffice for $\sgn(\Xbar_{T_{\mathrm{av}}})$ to agree with $\sgn(u_H)$ on all but a vanishing fraction of coordinates with high probability, so that this warm start meets both desiderata. This motivates the construction of \cref{alg:averaged-annealed-glauber}, which first runs high temperature Glauber dynamics and uses the sign rounding of its ergodic average to initialize Glauber dynamics at a constant temperature.

Our non-asymptotic analysis of the ergodic average is established for generic Hamiltonians satisfying mild regularity conditions, and could be of independent interest. We believe that our proof techniques extend to diffusions on arbitrary manifolds with positive Ricci curvature, and more generally, to Markov processes with positive Bakry--\'Emery curvature. In fact, our forthcoming companion work~\citep{das2026averaged} proves such a guarantee for spherical Langevin dynamics with smooth Hamiltonians.

Computing the warm start and climbing towards the planted signal require different temperatures, which is why \cref{alg:averaged-annealed-glauber} is annealed. At high temperature, the signal drift in \cref{eq:sketch-overlap-ode} is negligible, so the first phase cannot climb towards the planted signal, even though its ergodic average encodes $u_H$. In contrast, the second phase, run at a constant temperature from $\sgn(\Xbar_{T_{\mathrm{av}}})$, can exploit the signal drift to escape the equator. Proving that it indeed does so requires us to finally confront the disorder drift in \cref{eq:sketch-overlap-ode}. However, controlling this term now becomes all the more delicate since this warm start itself depends on the disorder.

\subsection{Controlling the Disorder via Reverse Planting}
\label{subsec:sketch-reverse-planting}

To illustrate our ideas, we make two simplifications throughout this subsection. First, we ignore the estimation error of the ergodic average in the first phase, and assume that the second phase of \cref{alg:averaged-annealed-glauber} is initialized exactly at $\sgn(u_H)$. Second, we assume that $\xopt$ is uniformly random on $\Cube$. Our proofs in \cref{sec:strong-recovery-odd} dispense with both these simplifications, and apply simultaneously to all initializations in a small Hamming ball around $\sgn(u_H)$, and to any fixed $\xopt\in\Cube$ via a carefully constructed gauge transformation.

By Gaussian concentration, the disorder drift in \cref{eq:sketch-overlap-ode} is of order one uniformly over the hypercube, irrespective of how the trajectory depends on $H_0$. Consequently, once the correlation $m_t$ exceeds a large constant multiple of the \emph{critical correlation} $r_N\doteq\lambda^{-1/(k-1)}$, the signal drift satisfies $\beta k\lambda m_t^{k-1}\gg1$, and thus dominates both the entropic and disorder drifts, even in the worst case. As we show in \cref{sec:strong-recovery-odd}, the correlation then converges to one exponentially fast, such that Glauber dynamics achieves strong recovery within $O(\log N)$ additional time, and retains it over polynomial time horizons. Recall from \cref{subsec:sketch-averaging-disorder} that the warm start has correlation $m_0\doteq m(\sgn(u_H))$ of order $\min\{1,\lambda N^{-\frac{k-1}{4}}\}$, which itself becomes critical (i.e., $m_0\ge Cr_N$ for a large constant $C$) whenever ${\lambda\gtrsim N^{\frac{k-2}{4}+\frac{1}{4k}}}$. In this regime, strong recovery thus follows immediately, and it remains to treat the window $\lambda_{\mathrm{ALG}}\lesssim\lambda\lesssim\lambda_{\mathrm{ALG}}N^{\frac{1}{4k}}$, in which the warm start satisfies the escape condition $\lambda m_0^{k-2}\gtrsim1$ from \cref{subsec:sketch-curvature-entropy}, but is subcritical, i.e., $m_0\ll r_N$. In this window, our key technical challenge is to prove that the correlation nevertheless reaches $Cr_N$ within $O(1)$ time, before the disorder drift can overpower that of the signal.

However, neither worst-case nor typical bounds on the disorder drift suffice to justify this. Below the critical correlation, the signal drift can be much smaller than the worst-case magnitude of the disorder drift, which could thus overwhelm it. Moreover, although the disorder drift is $O(N^{-1/2})$ at configurations chosen independently of $H_0$, both the trajectory and its initialization depend on $H_0$. One can imagine a particularly dangerous scenario in which the warm start lies close to one of the exponentially many local maxima of $H_0$~\citep{deoliveira1997landscape,auffinger2013random}, around which the disorder drift could dominate the signal drift and trap the dynamics at polynomial timescales.

We rule out such adversarial scenarios by showing that the dependence of the warm start on the disorder is benign. Recall from \cref{eq:sketch-average-gradient} that $u_H=\lambda\sigma_N^2\xopt+\zeta$, where, by Stein's lemma, $\zeta=\E_\pi[\dgrad H_0]=\E_\pi[xH_0(x)]$ is the vector of first order Fourier coefficients of $H_0$. In other words, the warm start depends on $H_0$ only through its first Fourier level. Since $H_0$ is an isotropic Gaussian process, Gaussian conditioning shows that $\zeta$, and hence the warm start, is independent of the remaining Fourier levels $H_0(x)-\dotp{\zeta}{x}$, which comprise the bulk of the disorder.
This suggests comparing the second phase of \cref{alg:averaged-annealed-glauber} with the same procedure run on the pure disorder $H_0$. Precisely, we consider Glauber dynamics initialized at the sign rounding of the average discrete gradient of its Hamiltonian, and call its law the \emph{planted law} when the Hamiltonian is $H$, and the \emph{null law} when the Hamiltonian is $H_0$. In other words, the planted law is the true law of the second phase, and the null law is simply the planted law with $\lambda=0$. The dynamics in the second phase is thus initialized at $\sgn(u_H)$ under the planted law, and at $\sgn(\zeta)$ under the null law. By \cref{eq:sketch-overlap-drift}, the correlation under the null law evolves according to
\begin{align}
    \frac{dm_t}{dt}
    \approx\frac{1}{N}\sum_{i=1}^N\xopt_i\bigl[\tanh\bigl(\beta\partial_iH_0(X_t)\bigr)-X_{t,i}\bigr].
    \label{eq:sketch-null-drift}
\end{align}
We call the right-hand side of \cref{eq:sketch-null-drift} the \emph{null drift}, since it is precisely the net contribution of the entropic and disorder drifts. Since $\xopt$ is uniformly random and the null dynamics never sees it, $\xopt$ is independent of the null trajectory. Conditional on the trajectory, the null drift is therefore a Rademacher average with bounded weights, and Hoeffding's inequality shows that it remains of order $N^{-1/2}$ along the entire null trajectory.

Thus, to control the disorder drift in the true dynamics, it remains to transfer this estimate to the planted law via a continuous-time change of measure argument, which we call the \emph{reverse planting trick}. As the name suggests, it runs opposite to the classical planting trick~\citep{achlioptas2008algorithmic}, in which the null law is the true law of interest, and estimates established under the planted law are transferred to it via contiguity. However, executing this argument is quite challenging since the planted and null laws are far from contiguous. Indeed, their Hamiltonians differ by the signal $N\lambda m(x)^k$, which biases both the warm start and every subsequent update of the planted dynamics. The cost of transferring an estimate therefore grows with the signal accumulated along the trajectory. Fortunately, since the worst-case argument above takes over once the correlation reaches $Cr_N$, we only need to transfer estimates up to this time. To this end, we prove an \emph{adaptive contiguity} bound, which controls the cost of transferring an event up to any stopping time in terms of the signal accumulated before it.

We apply this adaptive contiguity bound up to the \emph{failure time}, i.e., the first time before the correlation reaches $Cr_N$ at which the null drift overpowers the signal drift. Until the failure time, the signal drift dominates, so the correlation grows at an accelerating rate, as in \cref{subsec:sketch-curvature-entropy}, and reaches $Cr_N$ within $O(1)$ time unless a failure occurs first. With a delicate stopped martingale analysis, we show that with high probability, no failure occurs, so that the dynamics reaches $Cr_N$ within $O(1)$ time and then achieves strong recovery. This completes our proof sketch of \cref{thm:glauber-odd-k-informal}.

\subsection{Strong Recovery for Any Tensor Order}
\label{subsec:sketch-any-order}

We now turn our attention to \cref{alg:annealed-glauber-ergodic-second-moments}, and first focus on the case when $k$ is even. In this case, $H(-x)=H(x)$, so by symmetry, the average discrete gradient $u_H$ vanishes identically, and carries no information about the planted signal. However, our previous intuition about suppressing the disorder via averaging still applies to the discrete Hessian $\dgrad^2H\doteq(\partial_i\partial_jH)_{i,j\in[N]}$, which decomposes as $\dgrad^2H(x)\approx\frac{k(k-1)\lambda}{N}m(x)^{k-2}\xopt(\xopt)^\top+\dgrad^2H_0(x)$, and therefore forms a noisy observation of $\xopt(\xopt)^\top$, since $m(x)^{k-2}\ge0$ for even $k$. Specifically, in \cref{sec:averaged-discrete-differentials}, we show that the \emph{average discrete Hessian} has a spiked Wigner type structure with a planted spike in the direction of $\xopt$,
\begin{align}
    A_H\doteq\E_\pi\bigl[\dgrad^2H\bigr]
    =\frac{\lambda\gamma_N}{N}\bigl(\xopt(\xopt)^\top-I\bigr)+W,
    \qquad
    \gamma_N\asymp_kN^{-\frac{k-2}{2}},
    \label{eq:sketch-average-hessian}
\end{align}
where $W$ is a symmetric matrix with zero diagonal and i.i.d.\ $\cN(0,\gamma_N/N)$ entries above the diagonal. The spike in \cref{eq:sketch-average-hessian} has eigenvalue of order $\lambda\gamma_N$, whereas $\norm{W}_{\mathrm{op}}\asymp\sqrt{\gamma_N}$ with high probability. Hence, the spike dominates the noise exactly when $\lambda\sqrt{\gamma_N}\asymp\lambda N^{-\frac{k-2}{4}}\gtrsim1$, i.e., when $\lambda\gtrsim\lambda_{\mathrm{ALG}}$, in which case, by the Davis--Kahan theorem, the top eigenvector of $A_H$ (as well as its sign rounding) achieves constant correlation with $\xopt$.\footnote{As $\xopt$ and $-\xopt$ are indistinguishable for even $k$, we assume for simplicity that the top eigenvector is aligned with $\xopt$.} The sign-rounded top eigenvector is then a warm start with correlation far above the critical correlation $r_N$. Consequently, by the worst-case argument in \cref{subsec:sketch-reverse-planting}, Glauber dynamics achieves strong recovery within $O(\log N)$ time.

As before, despite being a global statistic, $A_H$ can be approximated from the trajectory of high temperature Glauber dynamics, now via the ergodic second moment $\Sigmabar_T\doteq\frac{1}{T}\int_0^T\tilX_t\tilX_t^\top\,dt$. Indeed, by the ergodic theorem, $\Sigmabar_T$ converges to the second moment of $\mu_{H,\beta_0}$ as $T\to\infty$, and expanding to first order in $\beta_0$ as in \cref{subsec:sketch-ergodic-averaging}, we infer the following from $\E_\pi[xx^\top]=I$ and Stein's lemma.
\begin{align*}
    \lim_{T\to\infty}\Sigmabar_T
    =\frac{\E_\pi\bigl[xx^\top e^{\beta_0H(x)}\bigr]}{\E_\pi\bigl[e^{\beta_0H(x)}\bigr]}
    =I+\beta_0\E_\pi\bigl[(xx^\top-I)H(x)\bigr]+O(\beta_0^2)
    =I+\beta_0A_H+O(\beta_0^2).
\end{align*}
Since shifting by the identity does not change eigenvectors, the top eigenvector of $\Sigmabar_T$ thus approximates that of $A_H$, as we show in \cref{sec:ergodic-averaging}.

However, by symmetry, $A_H$ vanishes identically when $k$ is odd, just as $u_H$ does when $k$ is even. To obtain an algorithm that is agnostic to the parity of $k$, \cref{alg:annealed-glauber-ergodic-second-moments} instead uses the ergodic second moment of $F(x)\doteq(x,1)\in\R^{N+1}$, whose blocks comprise both $\Sigmabar_T$ and the ergodic average $\Xbar_T$. Combining both high temperature expansions, we thus expect that for large $T$ and small $\beta_0$,
\begin{align*}
    \bar M_T\doteq\frac{1}{T}\int_0^TF(\tilX_t)^{\otimes2}\,dt
    =\begin{pmatrix}
        \Sigmabar_T & \Xbar_T\\
        \Xbar_T^\top & 1
    \end{pmatrix}
    \approx I_{N+1}+\beta_0\cA_H,
    \qquad
    \cA_H\doteq\begin{pmatrix}
        A_H & u_H\\
        u_H^\top & 0
    \end{pmatrix}.
\end{align*}
When $k$ is odd, $A_H=0$, so the top eigenvector of $\cA_H$, when oriented such that its last coordinate is nonnegative, is $\frac{1}{\sqrt2}(u_H/\norm{u_H},1)$. The sign rounding of its first $N$ coordinates is then $\sgn(u_H)$, i.e., the warm start of \cref{alg:averaged-annealed-glauber}, to which the analysis in \cref{subsec:sketch-reverse-planting} applies. When $k$ is even, $u_H=0$, so $\cA_H$ is block diagonal, and its top eigenvector is obtained by appending a zero to that of $A_H$, which recovers the warm start for even $k$. Thus, the same top eigenvector computation yields a sufficiently good warm start for either parity, and \cref{alg:annealed-glauber-ergodic-second-moments} is agnostic to the parity of $k$. This completes our proof sketch of \cref{thm:glauber-every-k-informal}.

%% file: preliminaries.tex
\section{Notation and Preliminaries}
\label{sec:preliminaries}
For $d\geq2$, we write $M_d(\R)$ for the space of real $d\times d$ matrices and $\Sym_d\subseteq M_d(\R)$ for the subspace of real symmetric matrices. For $M\in M_d(\R)$, we write $\diag(M)$ for the diagonal matrix with the same diagonal entries as $M$, and define $\operatorname{off}(M)=M-\diag(M)$ as the off-diagonal matrix of $M$. For a vector $a\in\R^d$, we also use $\diag(a)$ to denote the diagonal matrix with entries $a_1,\ldots,a_d$. For a symmetric matrix $A\in\Sym_d$, we denote its eigenvalues in nonincreasing order by $\lambda_1(A)\geq\lambda_2(A)\geq\cdots\geq\lambda_d(A)$.

For $\ell\geq1$, we use $(\R^d)^{\otimes\ell}$ to denote the space of order-$\ell$ tensors, viewed as multilinear forms on $\R^d$, and $(\R^d)^{\odot\ell}$ to denote the subspace of symmetric tensors.

Unless specified otherwise, $\norm{\cdot}$ denotes the Euclidean norm, and $\bS^{d-1}$ denotes the unit sphere in $\R^d$. For any matrix $A\in M_d(\R)$, its operator norm is defined by $\norm{A}_{\mathrm{op}}=\sup_{x\in\bS^{d-1}}\norm{Ax}$. This definition extends naturally to tensors, wherein, for $T\in(\R^d)^{\otimes\ell}$, we define $\norm{T}_{\mathrm{op}}=\sup_{x_1,\ldots,x_\ell\in\bS^{d-1}}|T(x_1,\ldots,x_\ell)|$. For a vector $a\in\R^d$, we define its sign rounding $\sgn(a)\in\{\pm1\}^d$ coordinatewise by $\sgn(a)_i=+1$ if $a_i\ge0$ and $\sgn(a)_i=-1$ otherwise.

We use $c,C>0$ to denote universal constants and $c_k,C_k>0$ to denote constants depending only on $k$, whose values may change from line to line unless fixed explicitly. For nonnegative quantities $a$ and $b$, we write $a\lesssim b$ or $a=O(b)$ if $a\le Cb$, and $a\gtrsim b$ or $a=\Omega(b)$ if $a\ge cb$. We write $a\asymp b$ or $a=\Theta(b)$ if both hold, i.e., if $cb\le a\le Cb$. We write $a\ll b$ or $a=o(b)$ if $a/b\to0$ as $N\to\infty$, and $a\gg b$ if $b/a\to0$. Wherever appropriate, we use subscripts to indicate the parameters on which the constants depend, e.g., $a\lesssim_kb$ means $a\le C_kb$.

\subsection{Analysis on the Boolean Hypercube}
\label{subsec:boolean-hypercube}
We present a brief introduction to analysis on the Boolean hypercube, and refer the readers to \citet{ryan2014aobf} for a complete treatment. Let $\Cube = \{\pm 1\}^N$ denote the signed Boolean hypercube, and let $\pi$ denote the uniform measure on $\Cube$. For $x,y\in\Cube$, we define their Hamming distance by $d_H(x,y)=|\{i\in[N]:x_i\ne y_i\}|$. Equivalently, $d_H(x,y)=\nicefrac{1}{4}\norm{x-y}^2=\nicefrac{1}{2}(N-\dotp{x}{y})$. We use $\cB_H(x,r)$ to denote the normalized Hamming ball of radius $r$ centered at $x\in\Cube$, i.e.,
\begin{align*}
    \cB_H(x,r)\doteq\{y\in\Cube:N^{-1}d_H(x,y)\le r\}.
\end{align*}

For each $S\subseteq[N]$, we define the Fourier character $\chi_S(x)=\prod_{i\in S}x_i$, where $\chi_\varnothing=1$. These characters form an orthonormal basis of $L^2(\pi)$ under the inner product $\dotp{f}{g}_{L^2(\pi)}=\E_\pi[f(x)g(x)]$. Consequently, every $f:\Cube\to\R$ admits the Fourier--Walsh expansion
\begin{align*}
    f(x)=\sum_{S\subseteq[N]}\widehat f(S)\chi_S(x),
    \qquad
    \widehat f(S)=\E_\pi[f(x)\chi_S(x)].
\end{align*}

For $x\in\Cube$ and $i\in[N]$, let $x^{\sim i}$ denote $x$ with its $i$th coordinate flipped, i.e., $x^{\sim i}_j=(-1)^{\ind{j=i}}x_j$ and for $f:\Cube\to\R$, we define its discrete partial derivative as follows:
\begin{align*}
    \partial_i f(x)
    =\frac{1}{2}\left[f(x_{\neg i},x_i=+1)-f(x_{\neg i},x_i=-1)\right], 
\end{align*}
where $x_{\neg i}$ denotes all coordinates other than $i$. For vector, matrix and tensor valued functions, the partial derivatives are naturally defined elementwise. Moreover, for any $f : \R^N \to \R$, the discrete gradient $\dgrad f(x) \in \R^N$ is defined as $\dgrad f(x)_i \doteq \partial_i f(x)$. More generally, the discrete derivative tensor of order $\ell \ge 1$ is defined as follows:
\begin{align*}
    [\dgrad^\ell f(x)]_{i_1,\ldots,i_\ell}
    =\partial_{i_1}\cdots\partial_{i_\ell}f(x).
\end{align*}
In particular, $\dgrad^2 f$ denotes the discrete Hessian. The definitions naturally extend to tensor-valued functions as per the standard conventions used for Euclidean differential operators (e.g., for $f : \Cube \to R^m$, $\dgrad f \in \R^{N \times m}$ with $(\dgrad f)_{ij} = \partial_i f_j$, and so on) 
 
The discrete partial derivatives act on Fourier characters in the following fashion:
\begin{align}
\label{eq:partial-fourier-character}
    \partial_i\chi_S(x)
    =\begin{cases}
        \chi_{S\setminus\{i\}}(x), & i\in S,\\
        0, & i\notin S.
    \end{cases}
\end{align}

By linearity and the Fourier-Walsh expansion, we conclude that the discrete partial derivatives always commute and $\partial_i^2 f=0$. Hence, for every $\ell\ge1$, the tensor $\dgrad^\ell f$ is symmetric and its entries vanish whenever two indices coincide.

\begin{lemma}[Discrete product rule]
\label{lem:discrete-product-rule}
For any $f,g:\Cube\to\R$, $i\in[N]$, and $x\in\Cube$,
\begin{align*}
    \partial_i(fg)(x)
    =f(x)\partial_i g(x)+g(x)\partial_i f(x)
    -2x_i\partial_i f(x)\partial_i g(x).
\end{align*}
\end{lemma}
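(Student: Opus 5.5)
The identity is local in the $i$th coordinate, so I will prove it by fixing $x_{\neg i}$ and verifying it as an identity between functions of the single bit $x_i$. Write $f_\pm\doteq f(x_{\neg i},x_i=\pm1)$ and $g_\pm\doteq g(x_{\neg i},x_i=\pm1)$. Every function of one bit is affine in that bit, so
\begin{align*}
    f(x)=a_f+x_i\,\partial_i f(x),
    \qquad
    a_f\doteq\tfrac12(f_++f_-),
    \qquad
    \partial_i f(x)=\tfrac12(f_+-f_-),
\end{align*}
and similarly $g(x)=a_g+x_i\,\partial_i g(x)$. Here $a_f$, $a_g$, $\partial_i f(x)$ and $\partial_i g(x)$ do not depend on $x_i$; this is the Fourier-level statement behind \cref{eq:partial-fourier-character}.

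Directly from the definition of the discrete partial derivative,
\[
\partial_i(fg)(x)=\tfrac12(f_+g_+-f_-g_-)=a_f\,\partial_i g(x)+a_g\,\partial_i f(x).
\]
To see the second equality, expand $f_\pm=a_f\pm\partial_i f$ and $g_\pm=a_g\pm\partial_i g$. The products $a_fa_g$ and $\partial_i f\,\partial_i g$ appear with the same sign in $f_+g_+$ and $f_-g_-$, so they cancel in the difference. Only the two cross terms survive.

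Next I substitute $a_f=f(x)-x_i\partial_i f(x)$ and $a_g=g(x)-x_i\partial_i g(x)$. This gives
\[
f(x)\partial_i g(x)+g(x)\partial_i f(x)-2x_i\,\partial_i f(x)\,\partial_i g(x),
\]
which is the claimed formula. I do not expect any step to be an obstacle. The only point needing care is that $a_f$ and $\partial_i f$ are independent of $x_i$. This is what makes the rewriting valid at both values $x_i=\pm1$ simultaneously, and it is where the sign factor $x_i$ in the correction term comes from.
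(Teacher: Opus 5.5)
Your proof is correct and is essentially the paper's argument: both are direct algebraic checks that use only the fact that $f$ and $g$ are affine in the single bit $x_i$. The paper phrases this through the flip identity $f(x^{\sim i})=f(x)-2x_i\partial_i f(x)$ applied to $(fg)(x)-(fg)(x^{\sim i})=2x_i\partial_i(fg)(x)$, whereas you expand around the midpoint $a_f=\tfrac12(f_++f_-)$, which along the way gives the exact Leibniz form $\partial_i(fg)=a_f\,\partial_i g+a_g\,\partial_i f$.
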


\begin{proof}
Apply the coordinate flip identity to $f$, $g$, and $fg$ to obtain
\begin{align*}
    2x_i\partial_i(fg)(x)
    &=f(x)g(x)-f(x^{\sim i})g(x^{\sim i})\\
    &=f(x)g(x)
    -\bigl(f(x)-2x_i\partial_i f(x)\bigr)
     \bigl(g(x)-2x_i\partial_i g(x)\bigr)\\
    &=2x_i\bigl(f(x)\partial_i g(x)+g(x)\partial_i f(x)\bigr)
      -4\partial_i f(x)\partial_i g(x).
\end{align*}

Dividing by $2x_i$ and using $x_i^2=1$, we obtain the desired identity.
\end{proof}

The following lemma is a Boolean hypercube analogue of the classical Stein's lemma for Gaussian random variables \citep{chatterjee2014survey} and will be used often in our proofs.
\begin{lemma}[Stein's Lemma for the hypercube]
\label{lem:hypercube-stein}
For any $f:\Cube\to\R$,
\begin{align*}
    \E_\pi[\dgrad f(x)]&=\E_\pi[xf(x)],\\
    \E_\pi[\dgrad^2 f(x)]
    &=\frac{1}{2}\E_\pi\left[x\dgrad f(x)^\top+\dgrad f(x)x^\top\right]
    =\E_\pi\left[(xx^\top-I)f(x)\right].
\end{align*}
\end{lemma}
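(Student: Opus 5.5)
The plan is to prove both identities through the Fourier--Walsh expansion, using the action of discrete derivatives on characters recorded in \cref{eq:partial-fourier-character}. The only other inputs will be that $\E_\pi[\chi_S]=\ind{S=\varnothing}$ and that characters are orthonormal.

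For the first identity, fix $i\in[N]$. By \cref{eq:partial-fourier-character} and linearity,
\begin{align*}
\partial_i f=\sum_{S\ni i}\widehat f(S)\chi_{S\setminus\{i\}}.
\end{align*}
Taking $\E_\pi$ keeps only the term with $S\setminus\{i\}=\varnothing$, that is, $S=\{i\}$. This gives $\E_\pi[\partial_i f]=\widehat f(\{i\})=\E_\pi[x_if(x)]$, which is the $i$-th coordinate of the first claim.

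For the Hessian, take $i\ne j$. Applying \cref{eq:partial-fourier-character} twice gives
\begin{align*}
\E_\pi[\partial_i\partial_j f]=\widehat f(\{i,j\})=\E_\pi[x_ix_jf(x)].
\end{align*}
The diagonal entries vanish, because $\partial_i^2f=0$ and $(x_i^2-1)f=0$ on $\Cube$. Together these show $\E_\pi[\dgrad^2f]=\E_\pi[(xx^\top-I)f]$.

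The middle expression follows by applying the first identity to the function $\partial_j f$, which gives $\E_\pi[x_i\partial_jf]=\E_\pi[\partial_i\partial_jf]$. Symmetrizing in $(i,j)$ yields $\frac12\E_\pi[x\dgrad f^\top+\dgrad f\,x^\top]$. On the diagonal this requires $\E_\pi[x_i\partial_if]=0$, which holds because $\partial_if$ does not depend on $x_i$.

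No step poses a real obstacle. The only care needed is bookkeeping of the diagonal entries, where the $-I$ correction and the vanishing of $\partial_i^2$ must be matched.
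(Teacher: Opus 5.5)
Your proof is correct, but for the two core identities it takes a slightly different route from the paper's. The paper never expands $f$ in characters. It proves $\E_\pi[\partial_i f]=\E_\pi[x_if]$ by conditioning on $x_{\neg i}$, using only that $\pi$ is a product measure. For $i\ne j$, it then gets $\E_\pi[\partial_i\partial_j f]=\E_\pi[x_ix_jf]$ by integrating by parts twice: it applies the first identity to $\partial_jf$, uses $\partial_j(x_if)=x_i\partial_jf$, and applies the first identity again to $x_if$.

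You instead read both identities directly off the Fourier--Walsh expansion, using the action of $\partial_i$ on characters. This identifies $\E_\pi[\dgrad f]$ and the off-diagonal part of $\E_\pi[\dgrad^2f]$ as the level-one and level-two Fourier coefficients of $f$. Your derivation of the middle, symmetrized expression is the same as the paper's.

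The paper's integration-by-parts argument is self-contained and directly mirrors the Gaussian Stein's lemma. Yours is shorter, and it gives $\E_\pi[\partial_{i_1}\cdots\partial_{i_\ell}f]=\widehat f(\{i_1,\ldots,i_\ell\})$ for distinct indices immediately. It also makes explicit the Fourier-level interpretation the paper relies on later, for instance when it calls $\langle\zeta,x\rangle$ the first Fourier level of $H_0$.

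Your separate justification of $\E_\pi[x_i\partial_if]=0$ on the diagonal is fine. It also follows from the first identity applied to $\partial_if$, since $\partial_i^2f=0$.
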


\begin{proof}
Since $\pi$ is a product measure we obtain the following by conditioning on $x_{\neg i}$ for some $i \in [N]$
\begin{align*}
    \E_\pi[x_i f(x)]
    =\E_\pi\left[\nicefrac{1}{2}f(x_{\neg i},x_i=+1)
                  -\nicefrac{1}{2}f(x_{\neg i},x_i=-1)\right]
    =\E_\pi[\partial_i f(x)].
\end{align*}

which proves the first identity elementwise. Applying it to $\partial_j f$, we obtain $\E_\pi[\partial_i\partial_j f]=\E_\pi[x_i\partial_j f]$. Since discrete partial derivatives commute, we infer the following by exchanging $i$ and $j$ and averaging:
\begin{align*}
    \E_\pi[\partial_i\partial_j f(x)]
    =\frac{1}{2}\E_\pi[x_i\partial_j f(x)+x_j\partial_i f(x)].
\end{align*}
For distinct $i,j \in [N]$, we have $\partial_j(x_i f)=x_i\partial_j f$. Applying the first identity once more, we obtain
\begin{align*}
    \E_\pi[\partial_i\partial_j f(x)]
    =\E_\pi[\partial_j(x_i f)(x)]
    =\E_\pi[x_i x_j f(x)].
\end{align*}

For $i=j$, both $\partial_i^2 f$ and $(x_i^2-1)f$ vanish, and $\E_\pi[x_i\partial_i f]=0$ because $\partial_i f$ does not depend on $x_i$. Combining these identities entrywise proves the second identity.
\end{proof}

We use $O(N,\Z)$ to denote the matrix group of signed permutations, i.e., the orthogonal matrices with integer entries:
\begin{align*}
    O(N,\Z)=\left\{DP:\ D=\diag(e),\ e\in\Cube,\
    P\text{ is a permutation matrix}\right\}.
\end{align*}
We observe that $O(N,\Z)$ is a subgroup of $O(N)$ which acts bijectively on $\Cube$. In fact, $O(N,\Z)$ is the isometry group of $(\Cube,d_H)$. Consequently, $\pi$ is invariant under the action of $O(N,\Z)$. The following lemma characterizes the structure of matrices that commute with $O(N,\Z)$.
\begin{lemma}[Commutants of $O(N,\Z)$]
\label{lem:commutants-bn}
Any matrix that commutes with every element of $O(N,\Z)$ must be a scalar multiple of the identity.
\end{lemma}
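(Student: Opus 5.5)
The plan is to test the commutation relation $AM=MA$ against the two kinds of generators of $O(N,\Z)$ separately: first the diagonal sign matrices, which will kill the off-diagonal part of $A$, and then the permutation matrices, which will force the diagonal to be constant. Let $A\in M_N(\R)$ commute with every element of $O(N,\Z)$; note this group contains both $\diag(e)$ for every $e\in\Cube$ and every permutation matrix $P$.

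\emph{Step 1: $A$ is diagonal.} For $i\in[N]$, take $D_i=\diag(e)$, where $e$ has $e_i=-1$ and all other entries $+1$. Compare entries of $D_iA=AD_i$:
\begin{align*}
(D_iA)_{jl}=e_jA_{jl},
\qquad
(AD_i)_{jl}=A_{jl}e_l.
\end{align*}
So $(e_j-e_l)A_{jl}=0$. For $j=i\neq l$ we have $e_j-e_l=-2$, so $A_{il}=0$. Symmetrically, for $l=i\neq j$ we get $A_{ji}=0$. Since $i$ is arbitrary, $A$ is diagonal, say $A=\diag(a)$.

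\emph{Step 2: the diagonal is constant.} Take $P$ to be the permutation matrix of the transposition $(i\,j)$. Then
\begin{align*}
P\,\diag(a)\,P^\top=\diag(a\circ\sigma),
\end{align*}
which just swaps $a_i$ and $a_j$. Commutation gives $P\,\diag(a)\,P^\top=\diag(a)$, hence $a_i=a_j$ for all $i,j$, and so $A=a_1I$.

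None of these steps is a real obstacle; the argument is elementary. The only point needing care is making sure the chosen matrices actually lie in $O(N,\Z)$: the sign matrix is $DP$ with $P=I$, and the transposition is $DP$ with $D=I$. Alternatively, Step 2 could be phrased via the representation-theoretic fact that the standard representation of the hyperoctahedral group is irreducible, combined with Schur's lemma. The direct entrywise computation is simpler, so that is the route I would write up.
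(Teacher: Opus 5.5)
Your proposal is correct and follows essentially the same route as the paper: diagonal sign matrices force the off-diagonal entries of $A$ to vanish, and transposition matrices then force the diagonal to be constant. The only difference is cosmetic: you use a single $-1$ at position $i$, whereas the paper uses a sign matrix with $D_{ii}=+1$ and $D_{jj}=-1$ for each pair $i\ne j$.
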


\begin{proof}
Let $A\in\R^{N\times N}$ satisfy $AM=MA$ for every $M\in O(N,\Z)$. Consider distinct $i,j\in[N]$, and let $D$ be a diagonal sign matrix with $D_{ii}=+1$ and $D_{jj}=-1$. Since $D\in O(N,\Z)$, we have $AD=DA$, and hence
\begin{align*}
    (AD)_{ij}=A_{ij}D_{jj}=-A_{ij},
    \qquad
    (DA)_{ij}=D_{ii}A_{ij}=A_{ij}.
\end{align*}
It follows that $A_{ij}=0$ for $i\ne j$, so $A=\diag(a)$ for some $a\in\R^N$. Now let $P$ be the permutation matrix that exchanges coordinates $i$ and $j$. Since $P\in O(N,\Z)$, we have $A=PAP^\top$, and therefore $a_i=a_j$. Thus $a_1=\cdots=a_N$ and $A=cI$ for some $c\in\R$.
\end{proof}
Lastly, we shall make use of the following result on the higher moments of $\pi$
\begin{lemma}[Moments of Rademacher sums]
\label{lem:rademacher-moments}
Let $N\ge2$, $x\sim\pi$, and $S_N=\sum_{i=1}^N x_i$. For every fixed integer $\ell\ge1$, we have
\begin{align*}
    \E_\pi[S_N^\ell]
    =\begin{cases}
        0, & \text{if $\ell$ is odd},\\
        (\ell-1)!!\,N^{\nicefrac{\ell}{2}}+O_\ell(N^{\nicefrac{\ell}{2}-1}), & \text{if $\ell$ is even}.
    \end{cases}
\end{align*}
Moreover, for any distinct $i,j\in[N]$,
\begin{align*}
    \E_\pi[x_ix_jS_N^\ell]
    =\begin{cases}
        0, & \text{if $\ell$ is odd},\\
        \ell(\ell-1)!!\,N^{\nicefrac{\ell}{2}-1}+O_\ell(N^{\nicefrac{\ell}{2}-2}), & \text{if $\ell$ is even}.
    \end{cases}
\end{align*}
\end{lemma}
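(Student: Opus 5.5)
The plan is to expand $S_N^\ell$ multinomially and count index tuples by their pattern of repeated indices, using only that the $x_i$ are independent Rademacher signs with $x_i^2=1$. Writing
\begin{align*}
    \E_\pi[S_N^\ell]=\sum_{i_1,\ldots,i_\ell\in[N]}\E_\pi[x_{i_1}\cdots x_{i_\ell}],
\end{align*}
each term equals $1$ if every index occurs an even number of times and $0$ otherwise. For odd $\ell$ no tuple has all multiplicities even, so the moment vanishes; equivalently, $\pi$ is invariant under $x\mapsto -x$, which sends $S_N\mapsto -S_N$.

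For even $\ell$, I will group surviving tuples by the set partition of $[\ell]$ they induce, where positions are in the same block if they carry the same index. Only partitions with all blocks of even size contribute. A partition with $b$ blocks contributes $N(N-1)\cdots(N-b+1)=N^b+O(N^{b-1})$ tuples. Hence the leading order $N^{\ell/2}$ comes exactly from perfect matchings, and there are $(\ell-1)!!$ of them. Every other admissible partition has $b\le \ell/2-1$ blocks. Since the number of partitions depends only on $\ell$, these contribute $O_\ell(N^{\ell/2-1})$, and so does the correction from the falling factorial.

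For the second identity, by symmetry I take $i\ne j$ fixed and expand $\E_\pi[x_ix_jS_N^\ell]$ the same way. A tuple $(i_1,\ldots,i_\ell)$ contributes $1$ exactly when, in the multiset $\{i,j,i_1,\ldots,i_\ell\}$, every index has even multiplicity, so $i$ and $j$ must each appear an odd number of times among the $i_r$. Parity forces $\ell$ even; for odd $\ell$ the expectation vanishes by the same sign-flip symmetry.

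For even $\ell$, I will count the configurations that maximize the number of free indices. Two choices fix the positions carrying $i$ and $j$: a position $p$ with $i_p=i$ and a position $q\neq p$ with $i_q=j$, giving $\ell(\ell-1)$ ordered choices. The remaining $\ell-2$ positions must then be perfectly matched among free indices distinct from $i,j$ and from each other. This gives $(\ell-3)!!$ matchings, each contributing $N^{(\ell-2)/2}+O(N^{\ell/2-2})$ tuples. Since $\ell(\ell-1)(\ell-3)!!=\ell(\ell-1)!!$, the main term is $\ell(\ell-1)!!N^{\ell/2-1}$. Any other admissible configuration has strictly fewer free blocks: either $i$ or $j$ occurs three or more times, or a free block has size at least $4$. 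Such configurations contribute $O_\ell(N^{\ell/2-2})$.

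The main obstacle is only bookkeeping. I need to check that no configuration other than the two cases above achieves $\ell/2-1$ free blocks, and that the falling-factorial corrections, including the exclusion of the values $i,j$ from the free indices, are lower order. Both follow because the number of block structures is a constant depending only on $\ell$.
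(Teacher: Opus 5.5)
Your proposal is correct, but it takes a different route from the paper.

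\textbf{The paper's route.} It obtains the first identity by writing $S_N\disteq 2(B_N-N/2)$ with $B_N\sim\bin(N,1/2)$ and citing the standard central moment formulas for binomial variables. It then derives the second identity from the first rather than counting again. Expanding $S_N^2=N+2\sum_{i<j}x_ix_j$ and using exchangeability gives $\E_\pi[S_N^{\ell+2}]=N\E_\pi[S_N^\ell]+N(N-1)\E_\pi[x_ix_jS_N^\ell]$. The leading constant then appears as $(\ell+1)!!-(\ell-1)!!=\ell(\ell-1)!!$, and the $O_\ell(N^{\ell/2-2})$ error is inherited from the remainders of the first identity at orders $\ell$ and $\ell+2$.

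\textbf{Your route.} Your direct moment-method count is fully self-contained, with no external moment formula, and handles both identities by one mechanism. It also explains the constant combinatorially as $\ell(\ell-1)\cdot(\ell-3)!!$, with the convention $(-1)!!=1$ when $\ell=2$. The bookkeeping you flag is settled by parity. The number $b$ of free blocks satisfies $2b\le \ell-2$. If either $i$ or $j$ occurs at least three times, or some free block has size at least $4$, then $2b\le\ell-4$. The number of configurations depends only on $\ell$, and the exclusion of $i,j$ only changes the falling factorial $(N-2)(N-3)\cdots(N-b-1)$ at lower order.

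\textbf{Comparison.} The paper's recursion is shorter for the second identity and avoids the configuration analysis entirely. Its cost is relying on the cited binomial moment formula, with its remainder, at two consecutive even orders. Your argument instead gives exact expressions for both moments as finite sums of falling factorials.
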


\begin{proof}
Let $B_N\sim\bin(N,\nicefrac{1}{2})$. Since $S_N\disteq2(B_N-\nicefrac{N}{2})$, the first claim follows from the standard centered moment formulas for binomial random variables \citep[Theorem~3]{skorski2025handy}. For the second claim, we first observe that $\E_\pi[x_ix_jS_N^\ell]=0$ for odd $\ell$ by symmetry. For even $\ell$, the identity
\begin{align*}
    S_N^2=N+2\sum_{i<j}x_ix_j
\end{align*}
and exchangeability imply the following
\begin{align*}
    \E_\pi[S_N^{\ell+2}]
    =N\E_\pi[S_N^\ell]+N(N-1)\E_\pi[x_ix_jS_N^\ell].
\end{align*}
Rearranging and applying the first claim, we obtain
\begin{align*}
    \E_\pi[x_ix_jS_N^\ell]
    &=\frac{\E_\pi[S_N^{\ell+2}]-N\E_\pi[S_N^\ell]}{N(N-1)}=\frac{1}{N(N-1)}\left[
        \bigl((\ell+1)!!-(\ell-1)!!\bigr)N^{\nicefrac{\ell}{2}+1}
        +O_\ell(N^{\nicefrac{\ell}{2}})
    \right]\\
    &=\ell(\ell-1)!!\,N^{\nicefrac{\ell}{2}-1}+O_\ell(N^{\nicefrac{\ell}{2}-2}).
\end{align*}
\end{proof}

\subsection{Discrete Sobolev Norms for Spin Glass Hamiltonians}
\label{subsec:discrete-sobolev-norms}
We now define a discrete Sobolev Norm for functions on $\Cube$. This extends the notion of the Sobolev $\cG$-norm for smooth functions on the sphere, introduced by Ben Arous, Gheissari, and Jagannath \citep{ben2020bounding,arous2020algorithmic}.

\begin{definition}[Discrete Sobolev Norm]
\label{def:discrete-sobolev-norm}
Fix an integer $n\geq0$. For any function $f:\Cube\to\R$, its Sobolev $\cD^n$ norm is defined as follows:
\begin{align}
\label{eq:discrete-sobolev-norm}
    \norm{f}_{\cD^n} &\doteq \sum_{\ell=0}^n N^{\nicefrac{\ell}{2}}\max_{x\in\Cube}\norm{\dgrad^\ell f(x)}_{\mathrm{op}},
\end{align}
where $\norm{\dgrad^0 f(x)}=|f(x)|$ by convention.
\end{definition}
The following lemma controls the discrete Sobolev norms of the pure $k$-spin Hamiltonian. Its proof is presented in \cref{app:kspin-sobolev-proof}.
\begin{lemma}[Sobolev norm concentration for the Ising pure $k$-spin Hamiltonian]
\label{lem:kspin-sobolev-norm-concentration}
Let $H_0$ be the Ising pure $k$-spin Hamiltonian defined in \cref{eq:tensor-pca-overlap-noise}. For any integer $n\ge0$ and every $t\ge0$, we have
\begin{align*}
    \Prb\left[
        \norm{H_0}_{\cD^n}
        \ge C_{k,n}N\left(1+\sqrt{\frac{t}{N}}\right)
    \right]
    \le \exp\left(-c_{k,n}(N+t)\right).
\end{align*}
\end{lemma}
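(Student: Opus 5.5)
The plan is to bound each term of the $\cD^n$ norm separately and combine them with a union bound over $\ell=0,\dots,n$. The starting point is an explicit formula for the derivatives. Because $x_i^2=1$ on $\Cube$, each monomial of $\dotp{G}{x^{\otimes k}}$ with a repeated index reduces to a lower-degree character. Dropping those diagonal terms, which cost at most $O_k(N^{-(k-1)/2}\cdot N^{k-1}\cdot\text{(Gaussian)})$ and are handled by the same argument, \cref{eq:partial-fourier-character} gives for distinct indices
\begin{align*}
    [\dgrad^\ell H_0(x)]_{i_1\ldots i_\ell}=\frac{k!}{(k-\ell)!}N^{-\frac{k-1}{2}}\,G(e_{i_1},\ldots,e_{i_\ell},x,\ldots,x)
\end{align*}
up to the removal of coinciding-index entries. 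Hence $\dgrad^\ell H_0(x)$ is a contraction of the Gaussian tensor $G$ against $k-\ell$ copies of $x$, with index-coincidence corrections. The required bound is $N^{\ell/2}\max_x\norm{\dgrad^\ell H_0(x)}_{\mathrm{op}}\lesssim N$.

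\textbf{Step 1 (the main contraction term).} By the formula above, $\norm{\dgrad^\ell H_0(x)}_{\mathrm{op}}\lesssim_k N^{-\frac{k-1}{2}}\sup_{u_j\in\bS^{N-1}}|G(u_1,\ldots,u_\ell,x,\ldots,x)|$. Writing $x=\sqrt N\,\hat x$ with $\hat x\in\bS^{N-1}$, this is at most $N^{-\frac{k-1}{2}}N^{\frac{k-\ell}{2}}\norm{G}_{\mathrm{op}}$. Multiplying by $N^{\ell/2}$ gives $N^{1/2}\norm{G}_{\mathrm{op}}$. The standard $\eps$-net and Borell--TIS bound for the injective norm of a symmetrized Gaussian tensor gives $\norm{G}_{\mathrm{op}}\le C_k(\sqrt N+\sqrt t)$ with probability $1-e^{-c(N+t)}$. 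The product is therefore $C_kN(1+\sqrt{t/N})$, which is exactly the claimed form.

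\textbf{Step 2 (index-coincidence corrections).} These terms are the main technical nuisance. There are two sources: zeroing the diagonal entries of $\dgrad^\ell$, and the lower-degree characters created when $x_i^2=1$ collapses repeated indices of $G$. Each such piece is again a contraction of a partial trace of $G$ (a lower-order Gaussian tensor with variance inflated by at most a factor of $N$ per trace) against fewer copies of $x$. I would bound every piece by the same net argument. For example, a term with one pair of indices identified is $N^{-\frac{k-1}{2}}\sum_a G_{aa\cdots}$, a Gaussian tensor of order $k-2$ with entry variance $O(N)$. Its injective norm is $\lesssim\sqrt N(\sqrt N+\sqrt t)$, and it is contracted against $k-2-\ell$ copies of $x$. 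The resulting power of $N$ is $N^{-\frac{k-1}{2}}\cdot N^{1/2}\cdot N^{\frac{k-2-\ell}{2}}\cdot N^{\ell/2}\cdot(\sqrt N+\sqrt t)=O(\sqrt N+\sqrt t)$, which is lower order. The diagonal-zeroing step can increase the operator norm by at most a constant factor, because removing diagonal blocks of a tensor is a bounded operation up to $C_\ell$ (one can decompose by the index-coincidence pattern and apply the triangle inequality).

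\textbf{Step 3 (combining).} Replacing $\max_x$ by the supremum over the sphere already absorbs the maximum over the $2^N$ configurations into the injective norm, so no separate union bound over $x$ is needed. A union bound over the finitely many $\ell\le n$ and finitely many coincidence patterns then yields the stated tail with constants $C_{k,n},c_{k,n}$. The step I expect to require the most care is the bookkeeping in Step 2, where the diagonal corrections must be shown not to spoil the $N$ scaling. The main Gaussian estimate in Step 1 is standard.
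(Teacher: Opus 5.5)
Your argument is correct, but it follows a different route from the paper's.

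\textbf{The paper's route.} The paper never writes $\dgrad^\ell H_0(x)$ as a contraction of $G$. It expands $H_0=N^{-\frac{k-1}{2}}\sum_{I\in[N]^k}\widetilde G_I\,\chi_{A(I)}$, where $A(I)$ is the set of indices of odd multiplicity in $I$. It then fixes $x\in\Cube$ and unit vectors $a^{(1)},\dots,a^{(\ell)}$, and bounds the variance of the scalar Gaussian $\dgrad^\ell H_0(x)[a^{(1)},\dots,a^{(\ell)}]$ by $k^{2\ell}N^{1-\ell}$. This uses one Cauchy--Schwarz on the coefficient of each $\widetilde G_I$ and the count $\#\{I: j_1,\dots,j_\ell\in A(I)\}\le k^\ell N^{k-\ell}$. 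All the index-coincidence effects of your Step 2 (collapsed characters, vanishing diagonal entries) are absorbed into that count. The paper then union-bounds over the $2^N$ points of $\Cube$ and an $\eps$-net of $(\bS^{N-1})^\ell$. Both sets have size $e^{O_{k,\ell}(N)}$, which a deviation of $\sqrt{N+t}$ standard deviations pays for.

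\textbf{Your route.} You replace the union over configurations by injective norms. This makes the bound deterministic on a single event and exhibits the leading term as $N^{1/2}\norm{G}_{\mathrm{op}}$. It is also the natural argument on continuous domains, where Euclidean derivatives of $\dotp{G}{x^{\otimes k}}$ are exact contractions and Step 2 disappears. The price is bookkeeping that you only sketch, but it does close:
\begin{itemize}
\item Decompose by coincidence pattern, using inclusion--exclusion to remove distinctness constraints. Each piece is a Gaussian multilinear form in which every block of slots not containing a $j_r$ is either traced or weighted by $x_v$.
\item A net argument, after substituting $x=\sqrt N\hat x$ (and $a^{(r)}\odot x$ for $a^{(r)}$ when an $x$-slot merges with a $j_r$-slot), shows that the piece contributes $O_k\bigl(N^{\frac{\ell+s-k+1}{2}}\sqrt{N+t}\bigr)$ to $N^{\ell/2}\norm{\dgrad^\ell H_0(x)}_{\mathrm{op}}$. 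Here $s$ is the number of blocks not containing a $j_r$.
\item The blocks containing $j_1,\dots,j_\ell$ occupy at least $\ell$ slots, so $\ell+s\le k$. Hence no piece exceeds the order of the main term.
\item Some pieces are partial diagonals rather than partial traces (e.g.\ from $x_a^3=x_a$), but the same count covers them.
\end{itemize}

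\textbf{Diagonal zeroing.} Your claim holds, but it needs one line beyond the triangle inequality. Fix a partition of the $\ell$ slots and restrict an $\ell$-tensor $T$ to index tuples that are constant on its blocks. This restriction has injective norm at most $\norm{T}_{\mathrm{op}}$, because $\sum_i\abs{u_iv_i}\le1$ for unit $u,v$. M\"obius inversion over set partitions then writes the off-diagonal part of $T$ as an $O_\ell(1)$ combination of such restrictions.
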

We next control the contribution of the signal term by bounding the discrete Sobolev norms of the overlap and its powers. The proof is presented in \cref{app:signal-sobolev-proof}.
\begin{lemma}[Sobolev norm bounds for the signal term]
\label{lem:overlap-sobolev-bounds}
Let $m$ denote the correlation function defined in \cref{eq:tensor-pca-overlap-noise}. For any integer $n\ge0$, we have
\begin{align*}
    \norm{m}_{\cD^n}\le2,
    \qquad
    \norm{m^k}_{\cD^n}\le C_{k,n}.
\end{align*}
\end{lemma}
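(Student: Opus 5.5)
We prove the two bounds of \cref{lem:overlap-sobolev-bounds} by computing the discrete derivatives of $m$ and $m^k$ explicitly, since both are low-degree multilinear polynomials. We then bound each term in \cref{eq:discrete-sobolev-norm} directly.

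\textbf{The bound for $m$.} Since $m(x)=N^{-1}\dotp{\xopt}{x}$ is a degree-one Fourier polynomial, \cref{eq:partial-fourier-character} gives $\dgrad m(x)=\xopt/N$ and $\dgrad^\ell m\equiv0$ for $\ell\ge2$. Hence
\begin{align*}
    \norm{m}_{\cD^n}=\max_{x}|m(x)|+N^{1/2}\norm{\xopt}/N=1+1=2 .
\end{align*}

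\textbf{Expanding $m^k$.} We write $m(x)^k=N^{-k}\sum_{i_1,\ldots,i_k}\chi_{i_1}(\xopt)\cdots\chi_{i_k}(\xopt)\,x_{i_1}\cdots x_{i_k}$. Using $x_i^2=1$, we reduce this to a multilinear polynomial $m^k=\sum_{|S|\le k}c_S\,\xopt^S\chi_S$, where $\xopt^S\doteq\prod_{i\in S}\xopt_i$. The coefficient $c_S$ depends only on $|S|=j$ and counts the index tuples whose set of odd-multiplicity indices is exactly $S$. The unreduced count is at most $C_k N^{(k-j)/2}$, so $|c_S|\le C_kN^{-k}N^{(k-j)/2}=C_kN^{-(k+j)/2}$. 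The gauge $x\mapsto \diag(\xopt)x$ is an isometry of $\Cube$ and preserves operator norms of derivative tensors after conjugation by a diagonal sign matrix. We may therefore assume $\xopt=\mathbf 1$, so that $m^k$ is a symmetric polynomial in $x$.

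\textbf{Bounding the derivatives of $m^k$.} The $\ell$-th derivative is
\begin{align*}
    \dgrad^\ell m^k(x)_{i_1\ldots i_\ell}=\sum_{j\ge\ell}c_j\sum_{S\supseteq\{i_1,\ldots,i_\ell\},|S|=j}\chi_{S\setminus\{i_\ell\}}(x)
\end{align*}
for distinct indices, and zero otherwise. Rather than bounding coefficients termwise, which loses factors of $N$, we write this as a linear combination of tensors of the form
\begin{align*}
    \bigl(\tfrac{\dotp{\mathbf 1}{x}}{N}\bigr)^{a}\cdot\operatorname{off}\bigl((x/N)^{\otimes\ell}\bigr)\cdot N^{\ell}\cdot N^{-\ell},
\end{align*}
with lower-order corrections. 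Concretely, we use the identity $\sum_{|T|=r,\,T\cap I=\emptyset}\chi_T(x)=$ (elementary symmetric polynomial) $e_r(x_{I^c})$. This is a polynomial in $p_1=\dotp{\mathbf 1}{x_{I^c}}$ of degree $r$ whose coefficients are of matching size, so it is bounded by $C_kN^{r/2}\cdot(\text{polynomial in }|p_1|/\sqrt N)$. A cleaner route is to differentiate $m^k$ via an iterated discrete product rule (\cref{lem:discrete-product-rule}), applied to $m^k=m\cdot m^{k-1}$. Inductively, $\dgrad^\ell m^k$ is a finite sum, with $C_{k,\ell}$ terms, of tensors of the form $g(x)\cdot\operatorname{off}$-restricted $(\xopt/N)^{\otimes\ell}$ times products of $x_i$'s along the diagonal corrections. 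Here each scalar coefficient $g$ is a polynomial in $m$ and satisfies $|g|\le C_k$, and each correction carries an extra factor $N^{-1}$. Since $\norm{(\xopt/N)^{\otimes\ell}}_{\mathrm{op}}=N^{-\ell/2}$, and zeroing diagonals of rank-one sign tensors changes the operator norm by at most $C_\ell N^{-\ell/2}$ (the removed parts are lower-order tensors with explicit bounds), we obtain $N^{\ell/2}\max_x\norm{\dgrad^\ell m^k}_{\mathrm{op}}\le C_{k,\ell}$. Summing over $\ell\le n$ gives the claim.

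\textbf{Main obstacle.} The delicate step is this operator-norm bound for the diagonal-deleted structure. Discrete derivatives kill repeated indices, so $\dgrad^\ell m^k$ is not exactly rank-one. We must show that the off-diagonal projection of $(\xopt)^{\otimes\ell}$ still has operator norm $O_\ell(N^{\ell/2})$. We will handle this by inclusion--exclusion over partitions of $[\ell]$: the off-diagonal tensor is a signed sum, with $O_\ell(1)$ terms, of tensors that are rank-one on merged index blocks. Each such term has operator norm at most $N^{(\#\text{blocks})/2}\le N^{\ell/2}$, which yields the required bound.
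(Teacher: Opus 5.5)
Your proposal is correct, but on the operator-norm step it takes a much longer route than the paper.

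The paper never splits $\dgrad^\ell(m^k)$ into structured pieces. For distinct indices $A=\{i_1,\ldots,i_\ell\}$ it writes each entry as $\frac{\prod_r\xopt_{i_r}}{(2N)^\ell}\int_{[-1,1]^\ell}g^{(\ell)}\bigl(m_{-A}(x)+\frac1N\sum_r t_r\xopt_{i_r}\bigr)\,dt$ with $g(t)=t^k$, so every entry is $O_{k,\ell}(N^{-\ell})$. It then bounds $\bigl|\dgrad^\ell(m^k)(x)[a^{(1)},\ldots,a^{(\ell)}]\bigr|\le C_{k,\ell}N^{-\ell}\prod_r\norm{a^{(r)}}_1\le C_{k,\ell}N^{-\ell/2}$, using $\norm{a}_1\le\sqrt N$ on $\bS^{N-1}$. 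This crude entrywise bound is already sharp, since it matches $\norm{(\xopt/N)^{\otimes\ell}}_{\mathrm{op}}=N^{-\ell/2}$.

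The same comparison makes your ``main obstacle'' disappear. Any $\ell$-tensor with entries bounded by $1$, diagonal-deleted or not, has operator norm at most $N^{\ell/2}$ by this $\ell_1$--$\ell_2$ argument, so no M\"obius inversion over set partitions is needed. Likewise, bounding your displayed formula termwise does not lose powers of $N$ at the level of entries: $|\text{entry}|\le\sum_j|c_j|\binom{N-\ell}{j-\ell}\lesssim_k\sum_{j\le k}N^{(j-k)/2-\ell}\lesssim_k N^{-\ell}$, which suffices. The loss you worried about only appears if you sum the operator norms of the individual $\dgrad^\ell\chi_S$.

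What your route buys is finer structure. It gives a leading term $g^{(\ell)}(m)\operatorname{off}\bigl((\xopt/N)^{\otimes\ell}\bigr)$ plus $O(N^{-1})$ corrections, each a diagonal-deleted rank-one tensor with factors $\xopt/N$ or $x/N$. Your iterated product rule and partition estimate establish this correctly, but the lemma never uses it.

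Two minor slips: the derivative formula should involve $\chi_{S\setminus\{i_1,\ldots,i_\ell\}}$, and after gauging to $\xopt=\mathbf 1$ the leading tensor is $\operatorname{off}\bigl((\mathbf 1/N)^{\otimes\ell}\bigr)$, not $\operatorname{off}\bigl((x/N)^{\otimes\ell}\bigr)$.
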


\begin{remark}
\label{rem:tensor-pca-sobolev-bound}
By \cref{lem:kspin-sobolev-norm-concentration,lem:overlap-sobolev-bounds}, , the tensor PCA Hamiltonian satisfies $\norm{H}_{\cD^n}\le C_{k,n}N(1+\lambda)$ for any fixed integer $n\ge0$ with probability at least $1-\exp(-c_{k,n}N)$. 
\end{remark}

\subsection{Continuous-Time Glauber Dynamics}
\label{subsec:cts-glauber}
We briefly introduce continuous-time single-site dynamics, a class of stochastic processes on $\Cube$ that includes Glauber dynamics. Given a family of spin update kernels $q( \cdot \mid i,t,x)_{i \in [N], t\geq 0, x \in \Cube}$ on $\{\pm 1\}$, the trajectory $(X_t)_{t \geq 0}$ of single-single dynamics evolves as follows: Let $(T_n)_{n \ge 1}$ denote the arrival times of a rate $N$ Poisson process \citep[Section~1.1]{bjork2021point}. At time $T_n$, the process independently samples a coordinate $I_n\sim\Uniform([N])$ and a spin $Z_n\sim q(\cdot\mid I_n,T_n,X_{T_n^-})$ from the conditional site update distribution, which is a probability measure on $\{\pm 1\}$. The dynamics then updates by setting $X_{T_n,I_n} \doteq Z_n$ and leaving the remaining coordinates unaffected. The continuous-time trajectory $(X_t)_{t \geq 0}$ is constant between arrival times. 

The updates of single-site dynamics form a marked Poisson point process \citep[Sections~7.2,~7.5]{bjork2021point}, where $(I_n,Z_n)$ marks the arrival $T_n$ and $q(z\mid i,t,X_{t^-})$ is the intensity of mark $(i,z)$. We use the phrase \emph{filtration of single-site dynamics up to time $T$} to denote the filtration generated by $X_0$ and $(T_n,I_n,Z_n)_{n:T_n\le t}$ for $0\le t\le T$, and the phrase \emph{law of single-site dynamics up to time $T$} to denote the joint law of $X_0$ and $(T_n,I_n,Z_n)_{n:T_n\le T}$. Observe that the trajectory $(X_t)_{t \geq 0}$ is a deterministic function of the initialization, arrival times and marks.
Furthermore, since the trajectory is piecewise constant and the Poisson interarrival times are i.i.d. exponentially distributed \citep[cf.][Section~1.1]{bjork2021point}, one can simulate the process exactly up to time $T$ in $O(N_T)$ steps by successively sampling the arrival times and marks, where $N_T\sim\Poisson(NT)$ denotes the number of arrivals up to time $T$.

Given a Hamiltonian $V:\Cube\to\R$ and an inverse temperature $\beta\ge0$, Glauber dynamics (\cref{alg:continuous-time-glauber}) is a specific instance of single-site dynamics associated with the site update kernel $q(z\mid i,t,x) = \nicefrac{1}{2}(1+z\tanh(\beta \partial_i V(x)))$.
The resulting trajectory $(X_t)_{t \geq 0}$ is a reversible Markov chain whose stationary distribution is the Gibbs measure $\mu_{V,\beta}(x)\propto\exp(\beta V(x))$ \citep[Section~3.3]{levin2017markov}.
\begin{algorithm}[!t]
\caption{Continuous-time Glauber dynamics}
\label[algorithm]{alg:continuous-time-glauber}
\textbf{Input:} Hamiltonian $V:\Cube\to\R$, inverse temperature $\beta\ge0$, initial configuration $X_0\in\Cube$, and simulation time $T>0$.
\begin{algorithmic}[1]
    \Statex \textbf{Simulating Poisson process arrival times.}
    \State Set $n=0$ and $T_0=0$.
    \While{$T_n\le T$}
        \State Set $T_{n+1}=T_n+E_n$, $E_n\iidsim\operatorname{Exp}(N)$, and increment $n\gets n+1$.
    \EndWhile
    \State Set $N_T=\sup\{n\ge0:T_n\le T\}$, the number of arrivals up to time $T$.
    \Statex \textbf{Simulating the trajectory.}
    \For{$n=1,\ldots,N_T$}
        \State Set $X_t=X_{T_{n-1}}$ for $T_{n-1}\le t<T_n$.
        \State Sample $I_n\sim\Uniform([N])$ and then $Z_n\sim q(\cdot\mid I_n,T_n,X_{T_n^-})$, where
        \begin{align*}
            q(z\mid i,t,x)
            =\frac{\exp\bigl(\beta V(x_{\neg i},x_i=z)\bigr)}
            {\sum_{b\in\{\pm1\}}\exp\bigl(\beta V(x_{\neg i},x_i=b)\bigr)},
            \qquad z\in\{\pm1\}.
        \end{align*}
        \State Set $X_{T_n,I_n}\doteq Z_n$ and $X_{T_n,j}\doteq X_{T_n^-,j}$ for $j\ne I_n$.
    \EndFor
    \State Set $X_t=X_{T_{N_T}}$ for $T_{N_T}\le t\le T$.
\end{algorithmic}
\textbf{Output:} Trajectory $(X_t)_{0\le t\le T}$.
\end{algorithm}
The associated Markov semigroup $P_t^{(V,\beta)}$ describes the evolution of scalar observables along the trajectory. Specifically, for any $f : \Cube \to \R$,
\begin{align*}
    P_t^{(V,\beta)}f(x)=\E[f(X_t)\mid X_0=x].
\end{align*}
The generator $\cL_{V,\beta}$, which characterizes the infinitesimal action of the semigroup, is defined as follows for any $f : \Cube \to \R$.
\begin{align}
\label{eq:glauber-generator}
    \begin{aligned}
        \cL_{V,\beta}f(x)
        &=\lim_{t\to0^+}\frac{P_t^{(V,\beta)}f(x)-f(x)}{t}
        =\sum_{i=1}^N c_i^{(V,\beta)}(x)\bigl[f(x^{\sim i})-f(x)\bigr],\\
        c_i^{(V,\beta)}(x)
        &=\frac12\bigl[1-x_i\tanh(\beta\partial_iV(x))\bigr].
    \end{aligned}
\end{align}
where $c_i^{(V,\beta)}(x)$ denotes the flip rate of coordinate $i$ in configuration $x$. The Markov semigroup and its associated generator also satisfies the Kolmogorov equations \citep[Section~20.1, Eq.~(20.7)]{levin2017markov}
\begin{align}
\label{eq:glauber-kolmogorov}
    \frac{d}{dt}P_t^{(V,\beta)}f
    =\cL_{V,\beta}P_t^{(V,\beta)}f
    =P_t^{(V,\beta)}\cL_{V,\beta}f. \qquad \forall \ f \in \Cube \to \R
\end{align}
Intuitively, for any observable $f:\Cube\to\R$, $\cL_{V,\beta}f(X_t)$ represents instantaneous expected rate of change of $f(X_t)$ along the trajectory. This interpretation is made rigorous via Dynkin's formula~\citep[cf.][Chapter~4, Proposition~1.7]{ethier1986markov}.

\begin{lemma}[Dynkin's Formula for Continuous-time Glauber dynamics]
\label{lem:single-site-dynkin}
Let $(X_t)_{t\ge0}$ denote the trajectory of Glauber dynamics for the Hamiltonian $V$ at inverse temperature $\beta$. For any $f:\Cube\to\R$, the process
\begin{align*}
    M_t^f=f(X_t)-f(X_0)-\int_0^t\cL_{V,\beta}f(X_s)\,ds
\end{align*}
is a square integrable martingale with $M_0^f=0$, known as the Dynkin martingale. Its predictable quadratic variation is
\begin{align}
\label{eq:single-site-predictable-bracket}
    \langle M^f\rangle_t
    =2\int_0^t\Gamma_{V,\beta}(f,f)(X_s)\,ds,
\end{align}
where $\Gamma_{V,\beta}$ denotes the carr\'e du champ operator associated with $\cL_{V,\beta}$, defined by
\begin{align*}
    \Gamma_{V,\beta}(f,g)(x)
    \doteq\frac12\bigl[\cL_{V,\beta}(fg)-f\cL_{V,\beta}g-g\cL_{V,\beta}f\bigr](x)
    =2\sum_{i=1}^N c_i^{(V,\beta)}(x)\partial_i f(x)\partial_i g(x).
\end{align*}
\end{lemma}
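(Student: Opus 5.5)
The plan is to view $(X_t)$ as a finite-state continuous-time Markov jump process and derive both claims from the compensator of its jump measure. By the marked-Poisson description of \cref{subsec:cts-glauber}, coordinate $i$ flips at configuration $x$ with intensity $c_i^{(V,\beta)}(x)$; equivalently, the rate-$N$ clock picks $i$ with probability $1/N$, and the fresh spin disagrees with $x_i$ with probability $\frac12(1-x_i\tanh(\beta\partial_iV(x)))$. So the counting process $\mathcal N^i_t$ of flips of site $i$ has compensator $\int_0^t c_i^{(V,\beta)}(X_{s})\,ds$, and $\tilde{\mathcal N}^i_t\doteq\mathcal N^i_t-\int_0^t c_i^{(V,\beta)}(X_s)\,ds$ is a martingale. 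This holds because the intensities are bounded by $1$ and $\E[\mathcal N^i_t]\le t<\infty$ (standard compensator theory for marked point processes, \citep[Sections~7.2,~7.5]{bjork2021point}).

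\textbf{Martingale property.} Write $f(X_t)-f(X_0)=\sum_i\int_0^t [f(X_{s^-}^{\sim i})-f(X_{s^-})]\,d\mathcal N^i_s$. Substituting $d\mathcal N^i_s=d\tilde{\mathcal N}^i_s+c_i(X_s)\,ds$ gives
\begin{align*}
M^f_t=\sum_{i=1}^N\int_0^t\bigl[f(X_{s^-}^{\sim i})-f(X_{s^-})\bigr]\,d\tilde{\mathcal N}^i_s .
\end{align*}
The drift term here is exactly $\int_0^t\cL_{V,\beta}f(X_s)\,ds$ by \cref{eq:glauber-generator}. The integrands are predictable and bounded by $2\max|f|$, so $M^f$ is a martingale with $M^f_0=0$. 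It is also square integrable, since $|M^f_t|\le 2\max|f|\,(\mathcal N_t+Nt)$ and the total jump count $\mathcal N_t\sim\Poisson(Nt)$ has finite second moment. Alternatively, one could cite \citep[Chapter~4, Proposition~1.7]{ethier1986markov} for the martingale property, using that $\cL_{V,\beta}$ is a bounded operator on the finite space $\Cube$.

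\textbf{Predictable bracket.} The counting processes $\mathcal N^i$ have no common jumps almost surely, and their compensators are continuous. Hence $\langle\tilde{\mathcal N}^i,\tilde{\mathcal N}^j\rangle_t=\delta_{ij}\int_0^t c_i(X_s)\,ds$, and therefore
\begin{align*}
\langle M^f\rangle_t=\int_0^t\sum_i c_i(X_s)\bigl[f(X_s^{\sim i})-f(X_s)\bigr]^2ds .
\end{align*}
Since $f(x^{\sim i})-f(x)=-2x_i\partial_if(x)$ and $x_i^2=1$, the integrand equals $4\sum_i c_i\,(\partial_if)^2=2\Gamma_{V,\beta}(f,f)$. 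This matches the claimed formula, provided the second expression for $\Gamma$ holds.

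\textbf{Carr\'e du champ identity.} This is a direct computation. Set $\delta_ih(x)\doteq h(x^{\sim i})-h(x)$. Then $\delta_i(fg)-f\delta_ig-g\delta_if=\delta_if\,\delta_ig$, so
\begin{align*}
\tfrac12[\cL(fg)-f\cL g-g\cL f]=\tfrac12\sum_i c_i\,\delta_if\,\delta_ig=2\sum_i c_i\,\partial_if\,\partial_ig,
\end{align*}
again using $\delta_ih=-2x_i\partial_ih$. An alternative route to the bracket is to apply the martingale statement to $f^2$ and compare with Itô's product rule for $f(X_t)^2$, which gives the same formula.

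\textbf{Main obstacle.} The only delicate step is rigorously identifying the compensator $c_i^{(V,\beta)}(X_{s^-})$ for the marked process. This is where the construction of the dynamics enters: the thinned rate-$N$ Poisson process, with marks drawn from $q(\cdot\mid I_n,T_n,X_{T_n^-})$. Boundedness of all rates makes the integrability issues trivial, so the remaining work is only bookkeeping.
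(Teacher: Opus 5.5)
Your argument is correct, but it takes a different route from the paper, which gives no proof of this lemma. The paper invokes the general Dynkin formula for Markov processes \citep[Chapter~4, Proposition~1.7]{ethier1986markov} for the martingale property, and treats the bracket formula and the carr\'e du champ identity as standard facts. The usual way to obtain the bracket from that citation is your ``alternative route'': apply Dynkin's formula to $f^2$ and compare with $f(X_t)^2$.

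You instead work directly with the marked Poisson construction of \cref{subsec:cts-glauber}. You split the jumps by site and identify the compensator $\int_0^t c_i^{(V,\beta)}(X_s)\,ds$ of each flip-counting process. You then write $M^f$ as a sum of stochastic integrals of the bounded predictable integrands $f(X_{s^-}^{\sim i})-f(X_{s^-})$ against the compensated counts. Finally, you read off $\langle M^f\rangle$ from the orthogonality of compensated counting processes with continuous compensators and no common jumps.

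The citation route is shorter and only needs $\cL_{V,\beta}$ to be a bounded operator on a finite state space. Yours is self-contained within the point-process framework the paper already uses, e.g.\ for \cref{lem:single-site-change-measure}. It also makes explicit the jump structure $\Delta M^f=f(X^{\sim i})-f(X)$ that the Bernstein bound in \cref{lem:dynkin-martingale-concentration} relies on.

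One cosmetic point: the total number of flips up to time $t$ is only dominated by a $\Poisson(Nt)$ variable, not equal to one. Square integrability is immediate anyway from the deterministic bound $|M^f_t|\le2\max|f|\,(1+Nt)$, which follows from $c_i^{(V,\beta)}\le1$.
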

The following uniform concentration guarantee for Dynkin martingales shall be used often in our proofs. 

\begin{lemma}[Concentration of Dynkin martingales]
\label{lem:dynkin-martingale-concentration}
Let $(X_t)_{t\ge0}$ denote the trajectory of Glauber dynamics for the Hamiltonian $V$ at inverse temperature $\beta$. Suppose that $f:\Cube\to\R$ satisfies $\norm{f}_{\cD^1}\le B$ and $\max_{x\in\Cube}\norm{\dgrad f(x)}_\infty\le L$ for some $B,L>0$, and let $M_t^f$ denote its Dynkin martingale. For every $T,r>0$, we have
\begin{align*}
    \Prb\left[\sup_{0\le t\le T}|M_t^f|\ge r\right]
    \le2\exp\left(-\frac{cNr^2}{B^2T+NLr}\right),
\end{align*}
where $c>0$ is a universal constant.
\end{lemma}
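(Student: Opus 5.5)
We will prove the bound with a standard exponential martingale argument for jump martingales with bounded jumps, combined with Doob's maximal inequality. The argument has three steps: bound the jumps, bound the predictable quadratic variation, then apply a Bernstein-type (Freedman) inequality.

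\emph{Step 1: jumps.} The Dynkin martingale $M_t^f$ jumps only at arrival times $T_n$ at which the updated spin actually changes. At such a time the jump equals $f(X_{T_n})-f(X_{T_n^-})$, which is either $0$ or $\pm2\partial_{I_n}f(X_{T_n^-})$. Hence every jump is bounded in absolute value by $2L$.

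\emph{Step 2: quadratic variation.} By \cref{eq:single-site-predictable-bracket} and $c_i^{(V,\beta)}\in[0,1]$,
\begin{align*}
    \langle M^f\rangle_t
    =4\int_0^t\sum_{i=1}^N c_i^{(V,\beta)}(X_s)\,\partial_if(X_s)^2\,ds
    \le4\int_0^t\norm{\dgrad f(X_s)}^2\,ds.
\end{align*}
The Sobolev norm gives $N^{1/2}\max_x\norm{\dgrad f(x)}\le\norm{f}_{\cD^1}\le B$, so $\norm{\dgrad f}^2\le B^2/N$ uniformly. Therefore $\langle M^f\rangle_T\le 4B^2T/N$ deterministically.

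\emph{Step 3: exponential martingale and Doob.} For a purely discontinuous martingale with jumps bounded by $a=2L$ and compensator intensity as above, the process
\begin{align*}
    \exp\Bigl(\theta M_t^f-\psi(\theta a)a^{-2}\langle M^f\rangle_t\Bigr),
    \qquad \psi(u)=e^u-1-u,
\end{align*}
is a supermartingale for every $\theta>0$. This is the standard fact for marked point process martingales: for a jump $x$ with $|x|\le a$ one has $e^{\theta x}-1-\theta x\le\psi(\theta a)x^2/a^2$, and one applies It\^o's formula for the compensated jump measure. Since $\langle M^f\rangle_T\le v\doteq4B^2T/N$ is deterministic, Doob's maximal inequality applied to this supermartingale gives
\begin{align*}
    \Prb\Bigl[\sup_{t\le T}M_t^f\ge r\Bigr]\le\exp\bigl(-\theta r+\psi(\theta a)v/a^2\bigr).
\end{align*}
Optimizing over $\theta$ yields Bennett's bound. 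Using $\psi(u)\le u^2/(2(1-u/3))$, this becomes
\begin{align*}
    \exp\Bigl(-\frac{r^2}{2(v+ar/3)}\Bigr)
    =\exp\Bigl(-\frac{Nr^2}{8B^2T+\tfrac43NLr}\Bigr).
\end{align*}
Applying the same bound to $-M^f$ and taking a union bound gives the stated inequality with, for example, $c=1/8$.

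The only delicate point is justifying the exponential supermartingale property for the continuous-time jump process. We would cite the standard result for martingales driven by marked Poisson point processes with bounded intensities (\citep{bjork2021point}); integrability is automatic because the jump rates are bounded by $N$ and the jumps are bounded, so every moment is finite on $[0,T]$.
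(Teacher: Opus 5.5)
Your proposal is correct and follows the paper's proof: the same jump bound $2L$, the same deterministic bracket bound $\langle M^f\rangle_T\le 4B^2T/N$ from $c_i^{(V,\beta)}\le 1$, and a Bernstein inequality for jump martingales with bounded jumps. The only difference is that you derive the Bernstein step yourself via the exponential supermartingale and Doob's (Ville's) maximal inequality, where the paper cites a standard Bernstein bound for continuous-time jump martingales; your explicit constant $c=1/8$ checks out.
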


\begin{proof}
By \cref{def:discrete-sobolev-norm}, $\max_{x\in\Cube}\norm{\dgrad f(x)}\le \nicefrac{B}{\sqrt N}$. Moreover, since $c_i^{(V,\beta)}(x)\le1$, we infer from \cref{lem:single-site-dynkin} that
\begin{align*}
    \langle M^f\rangle_T
    =4\int_0^T\sum_{i=1}^N c_i^{(V,\beta)}(X_s)\bigl(\partial_i f(X_s)\bigr)^2\,ds
    \le\frac{4B^2T}{N}.
\end{align*}
Moreover, each jump of $M^f$ is a coordinate flip of $f$, so
\begin{align*}
    |\Delta M_t^f|
    \le\max_{x\in\Cube,\ i\in[N]}|f(x^{\sim i})-f(x)|
    =2\max_{x\in\Cube,\ i\in[N]}|\partial_i f(x)|
    \le2L.
\end{align*}
The proof is concluded by applying a standard Bernstein bound for continuous-time jump martingales \citep[cf.][Corollary~11(b)]{howard2020time}.
\end{proof}

\begin{remark*}
By \cref{def:discrete-sobolev-norm}, $\max_{x\in\Cube}\norm{\dgrad f(x)}_\infty\le N^{-1/2}\norm{f}_{\cD^1}$. Hence \cref{lem:dynkin-martingale-concentration} may always be applied with $L=B/\sqrt N$, at the cost of a weaker concentration guarantee when a smaller bound on the discrete partial derivatives is available.
\end{remark*}

The Markov semigroup $P_t^{(V,\beta)}$ and the generator $\cL_{V,\beta}$ act entrywise on vector and matrix-valued observables. Consequently, the corresponding vector (resp. matrix) valued Dynkin martingale is obtained by applying \cref{lem:single-site-dynkin} to each entry. To this end, we obtain the following uniform concentration guarantee for operator norms of rectangular matrix valued Dynkin martingales via a direct application of~\cref{lem:dynkin-martingale-concentration}.

\begin{lemma}[Concentration of matrix-valued Dynkin martingales]
\label{lem:matrix-dynkin-uniform-concentration}
Let $m,n\ge1$, and let $F:\Cube\to\R^{m\times n}$ be any function satisfying $\norm{u^\top F(\cdot)v}_{\cD^1}\le B\sqrt N$ for every $u\in\bS^{m-1}$, $v\in\bS^{n-1}$ and some $B>0$. Suppose also that $\max_{x\in\Cube,\ i\in[N]}\norm{\partial_i F(x)}_{\mathrm{op}}\le L$ for some $L>0$. Let $(X_t)_{t\ge0}$ denote the trajectory of Glauber dynamics for the Hamiltonian $V$ at inverse temperature $\beta$, and let $M_t^F$ denote the associated Dynkin martingale. For every $T>0$ and $\delta\in(0,1)$, the following holds with probability at least $1-\delta$:
\begin{align*}
    \sup_{0\le t\le T}\norm{M_t^F}_{\mathrm{op}}
    \le C\left[
        B\sqrt{T\bigl(m+n+\log(\nicefrac{1}{\delta})\bigr)}
        +L\bigl(m+n+\log(\nicefrac{1}{\delta})\bigr)
    \right],
\end{align*}
where $C>0$ is a universal constant. In particular, taking $n=1$ yields the same guarantee for vector-valued observables, with the Euclidean norm in place of the operator norm.
\end{lemma}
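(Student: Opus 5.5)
Fix unit vectors $u\in\bS^{m-1}$, $v\in\bS^{n-1}$, and reduce the operator norm bound to scalar Dynkin martingales combined with a net argument. By linearity of the generator, $u^\top M_t^F v$ is exactly the Dynkin martingale $M_t^{f_{u,v}}$ of the scalar observable $f_{u,v}(x)\doteq u^\top F(x)v$. By hypothesis $\norm{f_{u,v}}_{\cD^1}\le B\sqrt N$. Moreover, for every $i$ and $x$,
\begin{align*}
    |\partial_i f_{u,v}(x)|=|u^\top\partial_iF(x)v|\le\norm{\partial_iF(x)}_{\mathrm{op}}\le L .
\end{align*}
\Cref{lem:dynkin-martingale-concentration}, applied with $B\sqrt N$ in place of $B$, then gives
\begin{align*}
    \Prb\Bigl[\sup_{0\le t\le T}|u^\top M_t^Fv|\ge r\Bigr]\le 2\exp\Bigl(-\frac{cr^2}{B^2T+Lr}\Bigr).
\end{align*}
This is a Bernstein tail with variance proxy $B^2T$ and scale $L$. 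Setting the exponent equal to $s$ shows that the deviation exceeds $C(B\sqrt{Ts}+Ls)$ with probability at most $2e^{-s}$.

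Next take $1/4$-nets $\cN_m\subseteq\bS^{m-1}$ and $\cN_n\subseteq\bS^{n-1}$ of cardinalities at most $9^m$ and $9^n$. Apply the scalar bound with $s=(m+n)\log 9+\log(2/\delta)$ and union bound over the at most $9^{m+n}$ pairs $(u,v)\in\cN_m\times\cN_n$. With probability at least $1-\delta$, simultaneously for all net pairs and all $t\le T$,
\begin{align*}
    |u^\top M_t^Fv|\le C\bigl[B\sqrt{T(m+n+\log(1/\delta))}+L(m+n+\log(1/\delta))\bigr].
\end{align*}
The supremum over $t$ stays inside each event, because \cref{lem:dynkin-martingale-concentration} is already uniform in time. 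This means we never need a time discretization.

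To pass from the net to the full spheres, fix any matrix $M$. The standard argument gives $\norm{M}_{\mathrm{op}}\le \frac{1}{1-2\cdot\frac14-\frac1{16}}\max_{u\in\cN_m,v\in\cN_n}|u^\top Mv|\le 2\max|u^\top Mv|$. Apply this pointwise in $t$ to $M=M_t^F$ on the good event, and take the supremum over $t\in[0,T]$. This gives the claim after adjusting $C$.

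For the case $n=1$, the net over $v$ is trivial, and $\norm{M_t^F}_{\mathrm{op}}$ is the Euclidean norm, so the same argument applies. The only point needing care is the conversion of the Bernstein tail $\exp(-cr^2/(B^2T+Lr))$ into the additive form $B\sqrt{Ts}+Ls$. This is routine: pick $r$ to be the larger of the two terms, up to constants.
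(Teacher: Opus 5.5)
Your proposal is correct and matches the paper's proof essentially step for step. Like the paper, you reduce to the scalar Dynkin martingale $M_t^{f_{u,v}}$ by linearity, apply \cref{lem:dynkin-martingale-concentration} with $B\sqrt N$ and $L$, convert the Bernstein tail into $C(B\sqrt{Ts}+Ls)$, and union bound over $\nicefrac14$-nets of sizes $9^m$ and $9^n$ with $s=(m+n)\log 9+\log(\nicefrac{2}{\delta})$. There is one harmless arithmetic slip: $\bigl(1-2\cdot\frac14-\frac1{16}\bigr)^{-1}=\frac{16}{7}>2$, so either use the sharper comparison $\norm{M}_{\mathrm{op}}\le(1-2\cdot\frac14)^{-1}\max_{u,v}|u^\top Mv|=2\max_{u,v}|u^\top Mv|$, as the paper does, or absorb $\frac{16}{7}$ into $C$.
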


\begin{proof}
Fix $u\in\bS^{m-1}$ and $v\in\bS^{n-1}$, and set $f_{u,v}(x)=u^\top F(x)v$. By linearity of the generator, $u^\top M_t^Fv=M_t^{f_{u,v}}$. Since $\norm{f_{u,v}}_{\cD^1}\le B\sqrt N$ and $\max_{x\in\Cube}\norm{\dgrad f_{u,v}(x)}_\infty\le\max_{x\in\Cube,\ i\in[N]}\norm{\partial_i F(x)}_{\mathrm{op}}\le L$, we obtain from \cref{lem:dynkin-martingale-concentration} that, for every $s>0$,
\begin{align*}
    \Prb\left[
        \sup_{0\le t\le T}|u^\top M_t^Fv|
        \ge C\bigl(B\sqrt{Ts}+Ls\bigr)
    \right]
    \le2e^{-s}.
\end{align*}
Let $\cN_m$ and $\cN_n$ be fixed $\nicefrac{1}{4}$-nets of $\bS^{m-1}$ and $\bS^{n-1}$ with $|\cN_m|\le9^m$ and $|\cN_n|\le9^n$. For every $A\in\R^{m\times n}$, a standard $\eps$-net comparison implies
\begin{align*}
    \norm{A}_{\mathrm{op}}
    \le2\max_{u\in\cN_m,\,v\in\cN_n}|u^\top Av|.
\end{align*}
The proof is concluded by taking $s=(m+n)\log9+\log(\nicefrac{2}{\delta})$, applying a union bound over $\cN_m\times\cN_n$, and adjusting constants appropriately.
\end{proof}

\begin{remark}
\label{rem:symmetric-matrix-dynkin-concentration}
If $m=n$ and $F$ takes values in $\Sym_n$, it suffices to replace the Sobolev norm bound by $\norm{w^\top F(\cdot)w}_{\cD^1}\le B\sqrt N$ for every $w\in\bS^{n-1}$. 
In either case, the Sobolev norm bound implies $\norm{\partial_i F(x)}_{\mathrm{op}}\le B$ for every $i \in [N]$ and $x \in \Cube$, so one may always take $L=B$.
\end{remark}

Finally, the following change of measure theorem for single-site dynamics is a jump process analogue of the classical Girsanov theorem for diffusion processes \citep[Section~6.4]{stroock2007multidimensional}. We refer the reader to \citet[cf. Lemma~2.2 and Theorem~2.4]{sokol2015exponential} for a detailed proof.

\begin{lemma}[Change of Measure for Single-Site Dynamics]
\label{lem:single-site-change-measure}
Fix $T>0$, and let $\Q^T$ denote the law of single-site dynamics up to time $T$, with spin update kernel $q$ and trajectory $(X_t)_{0\le t\le T}$. Let $a_i(z,t)$ denote tilts that are predictable with respect to the filtration of single-site dynamics up to time $T$, satisfying $0<a_i(z,t)\le C$ for a deterministic $C<\infty$ and all $i\in[N]$, $z\in\{\pm1\}$, and $t\le T$. For $0\le t\le T$, define
\begin{align*}
    L_t^a
    =\exp\left(\sum_{n:T_n\le t}\log a_{I_n}(Z_n,T_n)
    -\int_0^t\sum_{i=1}^N\sum_{z\in\{\pm1\}}
    q(z\mid i,s,X_{s^-})\bigl(a_i(z,s)-1\bigr)\,ds\right).
\end{align*}
Then $(L_t^a)_{0\le t\le T}$ is a mean-one martingale under $\Q^T$. Define the tilted law $\Q^{T,a}$ by $\frac{d\Q^{T,a}}{d\Q^T}=L_T^a$. Under $\Q^{T,a}$, updates with mark $(i,z)$ have intensity $a_i(z,t)q(z\mid i,t,X_{t^-})$ for $t\le T$, and $X_0$ has the same distribution as under $\Q^T$.
\end{lemma}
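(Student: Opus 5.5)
The plan is to view the updates as a marked point process $\mu(dt,di,dz)=\sum_n\delta_{(T_n,I_n,Z_n)}$ on $[0,T]\times[N]\times\{\pm1\}$. Under $\Q^T$ this process has predictable compensator $\nu(dt,i,z)=q(z\mid i,t,X_{t^-})\,dt$: the total arrival rate is $N$, each site is chosen with probability $1/N$, and the spin is then drawn from $q$. I would then identify $L^a$ as the Dol\'eans-Dade exponential of the compensated integral $\int(a-1)\,d(\mu-\nu)$, in three steps.

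\textbf{Step 1 (the exponential solves a linear SDE).} Between jumps, $\log L^a_t$ has derivative $-\sum_{i,z}q(z\mid i,t,X_{t^-})\bigl(a_i(z,t)-1\bigr)$. At a jump time $T_n$ with mark $(I_n,Z_n)$, $L^a$ is multiplied by $a_{I_n}(Z_n,T_n)$. Together these say that
\begin{align*}
    L^a_t=1+\int_0^t\sum_{i,z}L^a_{s^-}\bigl(a_i(z,s)-1\bigr)\,(\mu-\nu)(ds,i,z).
\end{align*}
The integrand is predictable because $a$ is predictable and $L^a_{s^-}$ is left-continuous and adapted. Integrals of predictable integrands against $\mu-\nu$ are local martingales, so $L^a$ is a nonnegative local martingale with $L^a_0=1$.

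\textbf{Step 2 (true martingale).} I expect this to be the main obstacle, since it is where integrability must be checked. Since $0<a\le C$ and $q\le1$, we have
\begin{align*}
    0\le L^a_t\le C_+^{N_T}\,e^{2N T},\qquad C_+\doteq\max\{C,1\},
\end{align*}
for all $t\le T$, where $N_T\sim\Poisson(NT)$ under $\Q^T$. Therefore
\begin{align*}
    \E\Bigl[\sup_{t\le T}L^a_t\Bigr]\le e^{2NT}\,\E\bigl[C_+^{N_T}\bigr]=e^{2NT}e^{NT(C_+-1)}<\infty.
\end{align*}
A local martingale whose running supremum is integrable is a true martingale, by dominated convergence along a localizing sequence. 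Hence $L^a$ is a mean-one martingale, and $\Q^{T,a}$ is a probability measure.

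\textbf{Step 3 (intensities under the tilt).} Take any bounded predictable $H(t,i,z)$ and set $M^H_t=\int_0^t H\,d(\mu-\nu)$. By the product formula for finite-variation processes,
\begin{align*}
    d\bigl(L^aM^H\bigr)_t=L^a_{t^-}\,dM^H_t+M^H_{t^-}\,dL^a_t+d[L^a,M^H]_t .
\end{align*}
The covariation term equals $\int L^a_{s^-}H(a-1)\,d\mu$. Its compensator is $\int L^a_{s^-}H(a-1)\,d\nu$, and the bound from Step 2 ensures all the martingale terms are integrable. Taking $\Q^T$-expectations and moving to the tilted law gives
\begin{align*}
    \E_{\Q^{T,a}}\Bigl[\int_0^T H\,d\mu\Bigr]=\E_{\Q^{T,a}}\Bigl[\int_0^T\sum_{i,z}H(s,i,z)\,a_i(z,s)\,q(z\mid i,s,X_{s^-})\,ds\Bigr].
\end{align*}
This says that under $\Q^{T,a}$ the compensator of $\mu$ is $a\,q\,dt$. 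A marked point process is uniquely determined in law by its initial data together with its predictable compensator (Jacod's theorem). This identifies the mark intensity under $\Q^{T,a}$ as $a_i(z,t)\,q(z\mid i,t,X_{t^-})$.

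Finally, $X_0$ is $\cF_0$-measurable and $\E_{\Q^T}[L^a_T\mid\cF_0]=L^a_0=1$. It follows that $\Q^{T,a}$ and $\Q^T$ agree on $\cF_0$, so $X_0$ has the same distribution under both laws.
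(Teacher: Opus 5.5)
Your proposal is correct and follows the standard route that the paper defers to through its citation of Sokol and Hansen. That route identifies $L^a$ as the Dol\'eans-Dade exponential of $\int(a-1)\,d(\mu-\nu)$, verifies the true martingale property, and reads off the tilted compensator $a\,q\,dt$ via Girsanov and Jacod's uniqueness theorem; your one simplification is checking the martingale property directly through $\sup_{t\le T}L^a_t\le C_+^{N_T}e^{2NT}$ with $N_T$ Poisson of mean $NT$, which the deterministic bound $a\le C$ makes available, in place of the more general non-explosion criteria of the cited reference.
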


%% file: averaged_differentials.tex
\section{Averaged Discrete Differentials of the Tensor PCA Hamiltonian}
\label{sec:averaged-discrete-differentials}

In this section, we analyze the statistical properties of the average discrete gradient and Hessian of the tensor PCA Hamiltonian. In \cref{subsec:average-discrete-gradient}, we show that, for odd $k$, the average discrete gradient is a noisy observation of the planted vector. Analogously, in \cref{subsec:average-discrete-hessian}, we show that, for even $k$, the average discrete Hessian has a spiked Wigner type structure, with a rank-one spike in the planted direction.

\subsection{Analysis of the Average Discrete Gradient}
\label{subsec:average-discrete-gradient}

We now establish a signal-noise decomposition of the average discrete gradient of the tensor PCA Hamiltonian.

\begin{theorem}[Signal-Noise Decomposition of the Average Discrete Gradient]
\label{avg-discrete-grad-snr}
Let $H$ denote the tensor PCA Hamiltonian defined in \cref{eq:tensor-pca-hamiltonian}, and define the average gradient $u_H\doteq\E_\pi[\dgrad H(x)]$. Then, the following holds:
\begin{enumerate}
    \item If $k$ is even, then $u_H=0$ holds deterministically.
    \item If $k$ is odd, then $u_H=N\lambda\alpha_N\plant+\zeta$, where $\zeta\doteq\E_\pi[xH_0(x)]\sim\cN(0,\sigma_N^2I)$, $\alpha_N\asymp_k N^{-\nicefrac{k}{2}}$, and $\sigma_N^2=\alpha_N\sqrt{N}\asymp_k N^{-\tfrac{k-1}{2}}$.
\end{enumerate}
\end{theorem}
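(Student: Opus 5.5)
By Stein's lemma for the hypercube (\cref{lem:hypercube-stein}), $u_H=\E_\pi[xH(x)]$. Using the decomposition \cref{eq:tensor-pca-hamiltonian}, this splits as
\begin{align*}
    u_H=N\lambda\,\E_\pi[x\,m(x)^k]+\E_\pi[xH_0(x)].
\end{align*}
We treat the signal and noise parts separately.

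\emph{Even $k$.} Both $m(x)^k$ and $H_0(x)=N^{-\frac{k-1}{2}}\dotp{G}{x^{\otimes k}}$ are even functions of $x$. The map $x\mapsto -x$ preserves $\pi$, so $\E_\pi[xf(x)]=-\E_\pi[xf(x)]$ for any even $f$, and hence $u_H=0$ deterministically.

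\emph{Signal term, odd $k$.} Write $v\doteq\E_\pi[x\,m(x)^k]$. For a signed permutation $M\in O(N,\Z)$ fixing $\xopt$, invariance of $\pi$ gives $Mv=v$. Rather than use \cref{lem:commutants-bn}, I will compute directly via the gauge change $x\mapsto D x$ with $D=\diag(\xopt)$. This reduces to $\xopt=\mathbf 1$, so $m(x)=S_N/N$ and
\begin{align*}
    v_i=N^{-k}\,\E_\pi[x_iS_N^k]=N^{-k-1}\E_\pi[S_N^{k+1}],
\end{align*}
the last step by exchangeability. Undoing the gauge gives
\begin{align*}
    v=N^{-k-1}\E_\pi[S_N^{k+1}]\,\xopt=N^{-k-\frac12}\E_\pi[S_N^{k+1}]\,\plant.
\end{align*}
Set $\alpha_N\doteq N^{-k-\frac12}\E_\pi[S_N^{k+1}]$, so the signal part is $N\lambda\alpha_N\plant$. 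Since $k+1$ is even, \cref{lem:rademacher-moments} gives $\E_\pi[S_N^{k+1}]=k!!\,N^{\frac{k+1}{2}}(1+o(1))$, hence $\alpha_N\asymp_kN^{-k/2}$. Equivalently, $\alpha_N\sqrt N=N\,\E_{x,x'}[(\dotp{x}{x'}/N)^{k+1}]$, because $\dotp{x}{x'}$ has the same law as $S_N$ for independent $x,x'\sim\pi$.

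\emph{Noise term.} $\zeta=\E_\pi[xH_0(x)]$ is a linear image of $G$, so it is a centered Gaussian vector. By Fubini and \cref{eq:kspin-covariance},
\begin{align*}
    \E[\zeta\zeta^\top]=\E_{x,x'}\bigl[xx'^\top N(\dotp{x}{x'}/N)^k\bigr].
\end{align*}
This covariance commutes with every $M\in O(N,\Z)$, since $\pi\otimes\pi$ is invariant under $(x,x')\mapsto(Mx,Mx')$, which preserves $\dotp{x}{x'}$. By \cref{lem:commutants-bn} it therefore equals $\sigma_N^2I$, with
\begin{align*}
    \sigma_N^2=\tfrac1N\operatorname{tr}=\E\bigl[N(\dotp{x}{x'}/N)^{k+1}\bigr]=\alpha_N\sqrt N.
\end{align*}
A jointly Gaussian vector with isotropic covariance has law $\cN(0,\sigma_N^2I)$, and $\sigma_N^2\asymp_kN^{-\frac{k-1}{2}}$.

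The main obstacle is bookkeeping. Matching the normalization so that the signal coefficient and $\sigma_N^2$ share the same $\alpha_N$ requires the identification of $\dotp{x}{x'}$ with $S_N$ in law, and checking this is the delicate step. Applying \cref{lem:rademacher-moments} to exponent $k+1$ is then routine.
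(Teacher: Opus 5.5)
Your proof is correct and follows the paper's argument essentially step for step: Stein's lemma to split $u_H$, the parity symmetry $x\mapsto -x$ for even $k$, the gauge $D=\diag(\xopt)$ together with exchangeability to compute $\alpha_N$, and the commutant lemma plus a trace computation to identify the noise covariance as $\sigma_N^2 I$ with $\sigma_N^2=\alpha_N\sqrt N$. The only difference is minor. You check $O(N,\Z)$-invariance of $\E[\zeta\zeta^\top]$ directly from the isotropy of the covariance kernel of $H_0$, whereas the paper derives $\zeta\disteq U\zeta$ from the orthogonal invariance $U^{\otimes k}G\disteq G$; both are equally valid.
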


\begin{proof}
By Stein's Lemma (\cref{lem:hypercube-stein}) and \cref{eq:tensor-pca-hamiltonian},
\begin{align}
\label{eq:average-gradient-decomposition}
    u_H=\E_\pi[xH(x)]
    =N\lambda\E_\pi[xm(x)^k]+\E_\pi[xH_0(x)].
\end{align}
If $k$ is even, then $H(-x)=H(x)$, and invariance of $\pi$ under $x\mapsto-x$ implies that $u_H=0$.

Henceforth, assume $k$ is odd. We first analyze the signal term in \cref{eq:average-gradient-decomposition}. Let $D=\diag(\xopt)$. Since $D \in O(N,\Z)$, for $x\sim\pi$, the vector $y=Dx$ is also distributed according to $\pi$. Define $\widetilde m(y)=N^{-1}\sum_{i=1}^N y_i$, the overlap with the all-ones vector. We observe that
\begin{align*}
    m(x)
    =\frac{1}{N}\sum_{i=1}^N \xopt_i x_i
    =\frac{1}{N}\sum_{i=1}^N(Dx)_i
    =\widetilde m(y).
\end{align*}
Moreover, $D^2=I$, so $x=Dy$. Consequently,
\begin{align}
\label{eq:average-gradient-signal-transform}
    \E_{x\sim\pi}[xm(x)^k]
    =\E_{y\sim\pi}[Dy\,\widetilde m(y)^k]
    =D\E_{y\sim\pi}[y\,\widetilde m(y)^k].
\end{align}

Now let $P$ be any permutation matrix. Since $Py\sim\pi$ whenever $y\sim\pi$, and $\widetilde m(Py)=\widetilde m(y)$, we have
\begin{align*}
    \E_{y\sim\pi}[y\,\widetilde m(y)^k]
    =\E_{y\sim\pi}[Py\,\widetilde m(Py)^k]
    =P\E_{y\sim\pi}[y\,\widetilde m(y)^k].
\end{align*}
Since $\E_{y\sim\pi}[y\,\widetilde m(y)^k]$ is fixed by every permutation matrix, it is of the form $c\mathbf{1}$ for some $c\in\R$. We conclude from \cref{eq:average-gradient-signal-transform} that
\begin{align*}
    \E_{x\sim\pi}[xm(x)^k]
    =cD\mathbf{1}
    =c\xopt
    =\alpha_N\plant,
\end{align*}
where $\alpha_N=c\sqrt N$. Taking the inner product with the unit vector $\plant$, we compute $\alpha_N$ as follows:
\begin{align}
\label{eq:average-gradient-alpha-moment}
    \alpha_N
    &=\E_{x\sim\pi}[\dotp{x}{\plant}m(x)^k]
    =\sqrt N\,\E_{x\sim\pi}[m(x)^{k+1}]\nonumber\\
    &=\sqrt N\,\E_{y\sim\pi}[\widetilde m(y)^{k+1}]
    =N^{-k-\nicefrac{1}{2}}\E_{y\sim\pi}\left[\left(\sum_{i=1}^N y_i\right)^{k+1}\right].
\end{align}
Since $k+1$ is even, we infer from \cref{lem:rademacher-moments} that
\begin{align*}
    \alpha_N
    =N^{-k-\nicefrac{1}{2}}\left[k!!\,N^{\tfrac{k+1}{2}}+O_k(N^{\tfrac{k-1}{2}})\right]
    \asymp_k N^{-\nicefrac{k}{2}}.
\end{align*}

We now analyze the noise term in \cref{eq:average-gradient-decomposition}. By Stein's Lemma (\cref{lem:hypercube-stein}) and \cref{eq:tensor-pca-overlap-noise},
\begin{align*}
    \zeta
    =\E_\pi[\dgrad H_0(x)]
    =\E_\pi[xH_0(x)]
    =N^{-\tfrac{k-1}{2}}\E_{x\sim\pi}\left[x\dotp{G}{x^{\otimes k}}\right]
    =\Lambda(G),
\end{align*}
where $\Lambda:(\R^N)^{\otimes k}\to\R^N$ is a linear operator. Since $G$ is a centered Gaussian tensor, $\zeta\sim\cN(0,\Sigma)$ for some covariance matrix $\Sigma$.

We now prove that the law of $\zeta$ is invariant under $O(N,\Z)$. Fix $U\in O(N,\Z)\subseteq O(N)$. By rotational invariance of the standard Gaussian, $U^{\otimes k}G\disteq G$. Moreover, $U^\top x\sim\pi$ whenever $x\sim\pi$. Making the appropriate change of variables, we obtain
\begin{align*}
    \Lambda(U^{\otimes k}G)
    &=N^{-\tfrac{k-1}{2}}\E_{x\sim\pi}
      \left[x\dotp{U^{\otimes k}G}{x^{\otimes k}}\right]\\
    &=N^{-\tfrac{k-1}{2}}\E_{x\sim\pi}
      \left[x\dotp{G}{(U^\top x)^{\otimes k}}\right]\\
    &=N^{-\tfrac{k-1}{2}}U\E_{y\sim\pi}
      \left[y\dotp{G}{y^{\otimes k}}\right]
    =U\Lambda(G).
\end{align*}
Together with $U^{\otimes k}G\disteq G$, this implies that
\begin{align*}
    \zeta=\Lambda(G)
    \disteq\Lambda(U^{\otimes k}G)
    =U\Lambda(G)
    =U\zeta.
\end{align*}
Consequently, for every $U\in O(N,\Z)$,
\begin{align*}
    \Sigma
    =\E_G[\zeta\zeta^\top]
    =\E_G[(U\zeta)(U\zeta)^\top]
    =U\E_G[\zeta\zeta^\top]U^\top
    =U\Sigma U^\top.
\end{align*}
Thus, $\Sigma$ commutes with every element of $O(N,\Z)$, and \cref{lem:commutants-bn} implies that $\Sigma=\sigma_N^2I$ for some $\sigma_N^2\ge0$. Finally, to compute $\sigma_N^2$, we take the trace and use \cref{eq:kspin-covariance} to obtain the following.
\begin{align*}
    \sigma_N^2
    &=\frac{1}{N}\Tr(\Sigma)
    =\frac{1}{N}\E_G[\norm{\zeta}^2]\\
    &=\frac{1}{N}\E_G\left[
      \dotp{\E_{x\sim\pi}[xH_0(x)]}{\E_{x'\sim\pi}[x'H_0(x')]}\right]\\
    &=\frac{1}{N}\E_{x,x'\,\iidsim\pi}
      \left[\dotp{x}{x'}\,\E_G[H_0(x)H_0(x')]\right]\\
    &=N^{-k}\E_{x,x'\,\iidsim\pi}\left[\dotp{x}{x'}^{k+1}\right].
\end{align*}
Consider $x,x'\iidsim\pi$ and let $z_i=x_ix'_i$. Then $z\sim\pi$ and $\dotp{x}{x'}=\sum_{i=1}^N z_i$. Comparing with \cref{eq:average-gradient-alpha-moment}, we conclude that
\begin{align}
\label{eq:average-gradient-variance-moment}
    \sigma_N^2
    =N^{-k}\E_{z\sim\pi}\left[\left(\sum_{i=1}^N z_i\right)^{k+1}\right]
    =\alpha_N\sqrt N
    \asymp_k N^{-\tfrac{k-1}{2}}.
\end{align}
\end{proof}

\begin{remark}
\label{rem:average-gradient-snr}
We make the following observations from \cref{avg-discrete-grad-snr}. For even $k$, spin flip symmetry forces $u_H=0$, so the average discrete gradient carries no information about the planted signal. For odd $k$, it is a noisy observation of the planted vector $\plantsign$, with signal-to-noise ratio $\lambda\sigma_N\asymp_k\lambda N^{-\tfrac{k-1}{4}}$. In particular, at the conjectured algorithmic threshold $\lambda\gtrsim_k N^{\tfrac{k-2}{4}}$, the SNR is at least $\Omega(N^{-\nicefrac{1}{4}})$. This is formalized in the following corollary.
\end{remark}

\begin{corollary}[Correlation of the Sign-Rounded Average Discrete Gradient]
\label{cor:avg-discrete-grad-correlation}
Let $k\ge3$ be odd and $\lambda\ge0$, and define $\widehat u_H\doteq\sgn(u_H)$. Then,
\begin{align*}
    \Prb\left[
        c_k\min\left\{1,\lambda N^{-\tfrac{k-1}{4}}\right\}
        \le m(\widehat u_H)
        \le C_k\min\left\{1,\lambda N^{-\tfrac{k-1}{4}}\right\}
    \right]
    \ge1-2\exp\left(-c_k\min\left\{N,\lambda^2N^{-\tfrac{k-3}{2}}\right\}\right).
\end{align*}
In particular, if $\lambda\ge C_kN^{\tfrac{k-2}{4}}$, then, for all sufficiently large $N$,
\begin{align*}
    \Prb\left[m(\widehat u_H)\ge N^{-\nicefrac{1}{4}}\right]
    \ge1-\exp(-c_k\sqrt N).
\end{align*}
\end{corollary}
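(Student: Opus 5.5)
Since $\zeta\sim\cN(0,\sigma_N^2I)$ by \cref{avg-discrete-grad-snr}, the coordinates of the gauged vector $w\doteq D u_H$, where $D=\diag(\xopt)$, are i.i.d. Write $w_i=\xopt_i(u_H)_i=\lambda\alpha_N\sqrt N+\xopt_i\zeta_i$. Since $\xopt_i\zeta_i\sim\cN(0,\sigma_N^2)$ independently, and $\alpha_N\sqrt N=\sigma_N^2$, we have $w_i\iidsim\cN(\lambda\sigma_N^2,\sigma_N^2)$. Moreover $m(\widehat u_H)=N^{-1}\sum_i\sgn(w_i)$, because $\xopt_i\sgn((u_H)_i)=\sgn(w_i)$ almost surely.

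This reduces the claim to an average of i.i.d. $\{\pm1\}$ variables with mean
\begin{align*}
    p\doteq\E[\sgn(w_i)]=2\Phi(\lambda\sigma_N)-1,
\end{align*}
where $\Phi$ is the standard Gaussian CDF. The first step is the elementary estimate $2\Phi(s)-1\asymp\min\{1,s\}$ for $s\ge0$. For small $s$ this follows from $\Phi'(0)>0$ and concavity of $\Phi$ on $[0,\infty)$, which gives $c\min\{1,s\}\le2\Phi(s)-1\le\min\{1,\sqrt{2/\pi}\,s\}$. Combining this with $\sigma_N\asymp_kN^{-\tfrac{k-1}{4}}$ yields $p\asymp_k\min\{1,\lambda N^{-\tfrac{k-1}{4}}\}$.

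The second step is Hoeffding's inequality:
\begin{align*}
    \Prb\bigl[|m(\widehat u_H)-p|\ge p/2\bigr]\le2\exp(-Np^2/8).
\end{align*}
On the complement, $m(\widehat u_H)\in[p/2,3p/2]$, which gives both bounds with adjusted constants. The exponent is $Np^2\gtrsim_k N\min\{1,\lambda^2N^{-\tfrac{k-1}{2}}\}=\min\{N,\lambda^2N^{-\tfrac{k-3}{2}}\}$, as claimed.

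The one place needing care is the upper endpoint when $p$ is close to $1$, where the interval $[p/2,3p/2]$ exceeds $[-1,1]$. This is harmless because $m\le1$ always and $C_k\ge1$ can be chosen, so the upper bound holds trivially in that regime.

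For the corollary, take $\lambda\ge C_kN^{\tfrac{k-2}{4}}$. Then $\lambda N^{-\tfrac{k-1}{4}}\ge C_kN^{-1/4}$, so the lower bound gives $m(\widehat u_H)\ge c_kC_kN^{-1/4}\ge N^{-1/4}$ once $C_k\ge1/c_k$. The failure probability exponent is at least $c_k\min\{N,C_k^2N^{\tfrac{k-2}{2}-\tfrac{k-3}{2}}\}=c_k\min\{N,C_k^2\sqrt N\}\gtrsim\sqrt N$. The prefactor $2$ is absorbed into the exponent for large $N$.

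None of these steps is a genuine obstacle; the only points to handle explicitly are the gauge reduction to i.i.d. coordinates and the constant bookkeeping.
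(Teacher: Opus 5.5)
Your proposal is correct and follows the paper's route: flip signs by $\xopt$ to reduce $m(\widehat u_H)$ to an average of i.i.d.\ signs $\sgn(\lambda\sigma_N+Z_i)$ with $Z_i\sim\cN(0,1)$, bound their common mean above and below by constant multiples of $\min\{1,\lambda\sigma_N\}$, and apply a Hoeffding/Chernoff bound at deviation half the mean. The only difference is that the paper lower-bounds the Gaussian density on $[0,1]$ where you use concavity of $\Phi$. The upper-endpoint caveat you flag is not actually needed, since $3p/2\le\tfrac32\min\{1,\sqrt{2/\pi}\,\lambda\sigma_N\}$ directly.
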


\begin{proof}
By \cref{avg-discrete-grad-snr} and the identity $\sigma_N^2=\alpha_N\sqrt N$, we have $u_H=\lambda\sigma_N^2\xopt+\zeta$, where $\zeta\sim\cN(0,\sigma_N^2I)$. Let $t_N=\lambda\sigma_N$ and $Z_i=\tfrac{\xopt_i\zeta_i}{\sigma_N}$. Then $Z_1,\ldots,Z_N\iidsim\cN(0,1)$, and the signed coordinates $Y_i=\xopt_i(\widehat u_H)_i$ satisfy, almost surely,
\begin{align*}
    Y_i
    =\xopt_i\sgn\left(\lambda\sigma_N^2\xopt_i+\zeta_i\right)
    =\sgn\left(\lambda\sigma_N^2+\xopt_i\zeta_i\right)
    =\sgn(t_N+Z_i).
\end{align*}
Consequently, the $Y_i$ are i.i.d. biased Rademacher variables and $m(\widehat u_H)=N^{-1}\sum_{i=1}^N Y_i$.

We first bound their common mean. By symmetry of the standard Gaussian density,
\begin{align*}
    \mu_N\doteq\E[Y_i]
    &=\Prb[Z_i\ge-t_N]-\Prb[Z_i<-t_N]
    =2\Prb[0\le Z_i\le t_N]\\
    &=\frac{2}{\sqrt{2\pi}}\int_0^{t_N}e^{-\tfrac{s^2}{2}}\,ds
    \ge\frac{2}{\sqrt{2\pi e}}\min\{t_N,1\},
\end{align*}
where we restrict the integral to $[0,\min\{t_N,1\}]$ and use $e^{-\tfrac{s^2}{2}}\ge e^{-\nicefrac{1}{2}}$ on $[0,1]$. Similarly, bounding the Gaussian density by $\tfrac{1}{\sqrt{2\pi}}$ and using $\mu_N\le1$, we obtain $\mu_N\le\min\{t_N,1\}$. Moreover, Since the $Y_i$ are i.i.d. biased Rademacher variables with mean $\mu_N$, we obtain the following via a Chernoff bound \citep{van2014probability}:
\begin{align*}
    \Prb\left[\left|m(\widehat u_H)-\mu_N\right|\ge\frac{\mu_N}{2}\right]
    \le2\exp\left(-\frac{N\mu_N^2}{8}\right)
    \le2\exp\left(-cN\min\{t_N^2,1\}\right).
\end{align*}
The first claim follows by applying $c_k\min\{t_N,1\}\le\mu_N\le\min\{t_N,1\}$ with $t_N=\lambda\sigma_N\asymp_k\lambda N^{-\tfrac{k-1}{4}}$ and adjusting constants appropriately.

Finally, if $\lambda\ge C_kN^{\tfrac{k-2}{4}}$, then $\lambda N^{-\tfrac{k-1}{4}}\ge C_kN^{-\nicefrac{1}{4}}$ and $\lambda^2N^{-\tfrac{k-3}{2}}\ge C_k^2\sqrt N$. The second claim then follows for sufficiently large $N$ by adjusting constants appropriately.
\end{proof}

\subsection{Analysis of the Average Discrete Hessian}
\label{subsec:average-discrete-hessian}
Our analysis of the average discrete Hessian requires the following concentration inequality for operator norms of symmetric subgaussian matrices.

\begin{lemma}[{\cite[Corollary~4.4.8]{vershynin2018high}}]
\label{lem:subgaussian-symmetric-matrix-op-norm}
Let $W\in\Sym_N$ have zero diagonal and independent, mean-zero $\sigma$-subgaussian entries above the diagonal, for some $\sigma>0$. Then, for every $t\ge0$,
\begin{align*}
    \Prb\left[\norm{W}_{\mathrm{op}}\ge C\sigma(\sqrt N+t)\right]
    \le\exp(-ct^2),
\end{align*}
\end{lemma}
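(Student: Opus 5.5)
The statement is Vershynin's operator-norm bound for a symmetric matrix with independent subgaussian entries above the diagonal. I will reduce it to the rectangular case via an $\eps$-net argument on the quadratic form.

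First, by homogeneity, rescale $W$ by $\sigma^{-1}$ so that the entries are $1$-subgaussian. Next, since $W$ is symmetric, $\norm{W}_{\mathrm{op}}=\sup_{v\in\bS^{N-1}}|v^\top Wv|$. Fix a $\nicefrac14$-net $\cN$ of $\bS^{N-1}$ with $|\cN|\le 9^N$. The standard symmetric-matrix net comparison then gives $\norm{W}_{\mathrm{op}}\le 2\max_{v\in\cN}|v^\top Wv|$.

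For a fixed $v$, the zero diagonal gives
\begin{align*}
    v^\top Wv=2\sum_{i<j}W_{ij}v_iv_j .
\end{align*}
This is a sum of independent mean-zero subgaussian variables. By the subgaussian Hoeffding inequality it is $C$-subgaussian with variance proxy at most $4\sum_{i<j}v_i^2v_j^2\le 2$. Hence
\begin{align*}
    \Prb[|v^\top Wv|\ge s]\le 2e^{-cs^2}.
\end{align*}

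Take $s=C'(\sqrt N+t)$ and apply a union bound over $\cN$. The resulting bound is
\begin{align*}
    9^N\cdot 2\exp\bigl(-cC'^2(N+t^2)\bigr)\le \exp(-ct^2),
\end{align*}
once $C'$ is large enough that $cC'^2N$ absorbs $N\log 9+\log 2$.

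The only delicate point is the bookkeeping of constants in this union bound. For small $t$ the right-hand side $\exp(-ct^2)$ is close to $1$, and the claim is trivially true there after adjusting constants. Alternatively, one can simply invoke \citep[Corollary~4.4.8]{vershynin2018high} directly, as the statement indicates.
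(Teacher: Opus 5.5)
Your argument is correct, but it is not the route behind the cited result. It also does not ``reduce to the rectangular case'' as your opening sentence says; it is a direct single-net argument on the quadratic form.

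The paper gives no proof and simply invokes \citep[Corollary~4.4.8]{vershynin2018high}. The proof there splits $W$ into its upper- and lower-triangular parts and applies the rectangular bound to each part. That rectangular bound is itself a net argument over pairs $(u,v)$ for the bilinear form $u^\top Wv$. The two estimates are then combined by the triangle inequality and a union bound, which gives failure probability $4e^{-t^2}$.

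Your version uses symmetry directly, through $\norm{W}_{\mathrm{op}}=\sup_v|v^\top Wv|$ and the comparison $\norm{W}_{\mathrm{op}}\le(1-2\eps)^{-1}\max_{v\in\cN}|v^\top Wv|$. It needs only one net on $\bS^{N-1}$ and one Hoeffding bound, with variance proxy $4\sum_{i<j}v_i^2v_j^2\le 2$. It lands directly on the form $\exp(-ct^2)$: choosing $cC'^2\ge\log 18$ already gives $2\cdot 9^N e^{-cC'^2N}\le 1$ for every $N\ge1$ and every $t\ge0$.

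So the ``delicate point'' in your last paragraph does not arise in your own argument. It arises only if you cite Vershynin's $4e^{-t^2}$ form. In that case the small-$t$ regime is not literally trivial, since $e^{-ct^2}<1$, but it follows by applying the cited bound at $t+2\le t+2\sqrt N$ and tripling $C$, using $4e^{-(t+2)^2}\le e^{-t^2}$.

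The triangular decomposition buys modularity, since it reuses a single rectangular theorem. Your route is self-contained and slightly more economical, needing one net rather than a product of two.
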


We now characterize the structure of the average discrete Hessian. In particular, we observe that for even $k$, it exhibits a spiked Wigner type structure.

\begin{theorem}[Signal-Noise Decomposition of the Average Discrete Hessian]
\label{avg-discrete-hess-snr}
Let $H$ denote the tensor PCA Hamiltonian defined in \cref{eq:tensor-pca-hamiltonian}, and define the average Hessian $A_H\doteq\E_\pi[\dgrad^2H(x)]$. Then, the following holds:
\begin{enumerate}
    \item If $k$ is odd, then $A_H=0$ holds deterministically.
    \item If $k$ is even, then
    \begin{align*}
        A_H=\lambda\gamma_N\left(\plant(\plant)^\top-\frac{I}{N}\right)+W_N,
    \end{align*}
    where $\gamma_N\asymp_k N^{-\tfrac{k-2}{2}}$ and $W_N$ satisfies
    \begin{align*}
        \Prb\left[\norm{W_N}_{\mathrm{op}}\ge C_kN^{-\tfrac{k-2}{4}}\right]
        \le\exp(-cN).
    \end{align*}
\end{enumerate}
\end{theorem}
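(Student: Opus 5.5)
The plan mirrors the proof of \cref{avg-discrete-grad-snr}, now using the second identity of Stein's Lemma (\cref{lem:hypercube-stein}): $A_H=\E_\pi[(xx^\top-I)H(x)]=N\lambda\E_\pi[(xx^\top-I)m(x)^k]+\E_\pi[(xx^\top-I)H_0(x)]$.

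\textbf{Odd $k$.} Here $H(-x)=-H(x)$ while $xx^\top-I$ is even. Since $\pi$ is invariant under $x\mapsto-x$, the expectation vanishes deterministically, so $A_H=0$.

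\textbf{Even $k$, signal term.} Set $D=\diag(\xopt)$ and $y=Dx$, as before. Then
\begin{align*}
\E_\pi[(xx^\top-I)m(x)^k]=D\,\E_\pi[(yy^\top-I)\widetilde m(y)^k]\,D.
\end{align*}
Permutation invariance shows that the inner matrix commutes with every permutation. Its diagonal also vanishes, because $y_i^2=1$. Hence it equals $c(\mathbf 1\mathbf 1^\top-I)$, and conjugating by $D$ gives $c(\xopt(\xopt)^\top-I)=cN(\plant(\plant)^\top-I/N)$.

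To evaluate $c$, note $c=\E[y_1y_2\widetilde m(y)^k]=N^{-k}\E[y_1y_2S_N^k]$. By \cref{lem:rademacher-moments}, $c=N^{-k}(k(k-1)!!N^{k/2-1}+O_k(N^{k/2-2}))$. Setting $\gamma_N=N^2c\asymp_kN^{-(k-2)/2}$, the signal term becomes $N\lambda cN(\cdots)=\lambda\gamma_N(\plant(\plant)^\top-I/N)$, which matches the claimed form.

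\textbf{Even $k$, noise term.} The noise term is $W_N=\E_\pi[(xx^\top-I)H_0(x)]$, a linear image of $G$, hence a centered Gaussian matrix. It is symmetric with zero diagonal. Its entries are $W_{ij}=\widehat{H_0}(\{i,j\})$, the level-two Fourier coefficients of $H_0$, and these are linear functionals of $G$.

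I would show the entries above the diagonal are independent via an $O(N,\Z)$ invariance argument at the level of covariances. For $U\in O(N,\Z)$ the same change of variables as before gives $\Lambda_2(U^{\otimes k}G)=U\Lambda_2(G)U^\top$, so $W_N\disteq UW_NU^\top$. Taking $U=D$ a diagonal sign matrix, $\E[W_{ij}W_{kl}]$ picks up the factor $e_ie_je_ke_l$. This factor forces the covariance to vanish unless $\{i,j\}=\{k,l\}$, given $i\ne j$ and $k\ne l$. Permutations then make all the variances equal. Jointly Gaussian and uncorrelated therefore means the entries are i.i.d.\ $\cN(0,v)$.

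To compute $v$, use the trace trick with \cref{eq:kspin-covariance}:
\begin{align*}
\E\norm{W_N}_F^2=\E_{x,x'}\bigl[(\dotp{x}{x'}^2-N)\,N^{1-k}\dotp{x}{x'}^k\bigr].
\end{align*}
With $z\sim\pi$ this equals $N^{1-k}(\E S_N^{k+2}-N\E S_N^k)$, which is $\asymp_k N^{1-k}N^{k/2+1}$. Dividing by $N(N-1)$ gives $v\asymp_kN^{-k/2}$, consistent with $\gamma_N/N$.

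\textbf{Operator norm bound.} Applying \cref{lem:subgaussian-symmetric-matrix-op-norm} with $\sigma=\sqrt v\asymp N^{-k/4}$ and $t=\sqrt N$ gives $\norm{W_N}_{\mathrm{op}}\le C_kN^{-k/4}\cdot 2\sqrt N=C_kN^{-(k-2)/4}$ with probability at least $1-\exp(-cN)$.

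\textbf{Main obstacle.} The delicate step is the independence of the off-diagonal entries. The sign-flip invariance argument must be carried out carefully, including the case where the index pairs share one index, and the variance computation relies on the $O_\ell$ corrections in \cref{lem:rademacher-moments}.
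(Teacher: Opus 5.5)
Your proof is correct. It agrees with the paper on the Stein's-lemma decomposition, the parity argument for odd $k$, and the final application of \cref{lem:subgaussian-symmetric-matrix-op-norm} with $t=\sqrt N$. The main difference is in how you identify the law of $W_N$, and to a lesser extent the signal term.

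The paper computes the full covariance in one step. It writes $\E_G[(W_N)_{ij}(W_N)_{ab}]=N^{-(k-1)}\E_{x,x'}[x_ix_jx'_ax'_b\dotp{x}{x'}^k]$ and substitutes $z=x\odot x'$, which is uniform and independent of $x$. The expectation then factorizes as $\E_\pi[x_ix_jx_ax_b]\,\E_\pi[z_az_bS_N^k]$, so the indicator $\ind{\{i,j\}=\{a,b\}}$ and the exact variance $\gamma_N/N$ appear together.

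You instead reuse the $O(N,\Z)$-equivariance argument from the treatment of $\zeta$ in the gradient case. Sign flips give uncorrelatedness and permutations give equal variances; you then recover the variance from $\E\norm{W_N}_F^2$. For the signal term, you likewise use that the commutant of the permutation matrices is spanned by $I$ and $\mathbf 1\mathbf 1^\top$, rather than computing entries directly. Your route is more structural and parallels the gradient case. The paper's is shorter and shows immediately that the noise variance is tied exactly to the signal constant $\gamma_N$.

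Your closing worry about needing the $O_\ell$ corrections in the variance step is unfounded. The identity $\E_\pi[S_N^{k+2}]=N\E_\pi[S_N^k]+N(N-1)\E_\pi[z_1z_2S_N^k]$ from the proof of \cref{lem:rademacher-moments} shows that your trace computation gives exactly $v=N^{1-k}\E_\pi[z_1z_2S_N^k]=\gamma_N/N$, with no asymptotics needed.
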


\begin{proof}
By Stein's Lemma (\cref{lem:hypercube-stein}) and \cref{eq:tensor-pca-hamiltonian,eq:tensor-pca-overlap-noise}, we obtain
\begin{align}
\label{eq:average-hessian-decomposition}
    A_H
    &=\E_\pi[(xx^\top-I)H(x)]
    =A_{\mathrm{sig}}+W_N,\\
    A_{\mathrm{sig}}
    &=N\lambda\E_\pi[(xx^\top-I)m(x)^k],\nonumber\\
    W_N
    &=\E_\pi[(xx^\top-I)H_0(x)]
    =N^{-\tfrac{k-1}{2}}\E_\pi\left[(xx^\top-I)\dotp{G}{x^{\otimes k}}\right].\nonumber
\end{align}
If $k$ is odd, then $H(-x)=-H(x)$, whereas $xx^\top-I$ is invariant under $x\mapsto-x$. Invariance of $\pi$ under this map therefore implies that $A_H=0$. Henceforth, assume $k$ is even. We first analyze the signal term in \cref{eq:average-hessian-decomposition}. Define
\begin{align}
\label{eq:average-hessian-gamma-moment}
    \gamma_N
    &=N^{-(k-2)}\E_{z\sim\pi}\left[z_1z_2\left(\sum_{\ell=1}^N z_\ell\right)^k\right]
    =k(k-1)!!\,N^{-\tfrac{k-2}{2}}\left[1+O_k(N^{-1})\right],
\end{align}
where the second equality follows from \cref{lem:rademacher-moments}. Since both $\pi$ and $\sum_{\ell=1}^N z_\ell$ are invariant under coordinate permutations, we obtain the same expectation upon replacing $z_1z_2$ by $z_iz_j$ for any distinct $i,j\in[N]$.

Let $D=\diag(\xopt)$. For $x\sim\pi$, the vector $y=Dx$ also has distribution $\pi$, and $x_i=\xopt_i y_i$. Hence, for distinct $i,j\in[N]$,
\begin{align*}
    (A_{\mathrm{sig}})_{ij}
    &=\lambda N^{1-k}\E_{x\sim\pi}\left[x_ix_j\left(\sum_{\ell=1}^N\xopt_\ell x_\ell\right)^k\right]\\
    &=\lambda N^{1-k}\xopt_i\xopt_j\E_{y\sim\pi}\left[y_iy_j\left(\sum_{\ell=1}^N y_\ell\right)^k\right]
    =\frac{\lambda\gamma_N}{N}\xopt_i\xopt_j.
\end{align*}
Since $(A_{\mathrm{sig}})_{ii}=0$ and $\xopt=\sqrt N\,\plant$, we conclude that
\begin{align*}
    A_{\mathrm{sig}}
    =\frac{\lambda\gamma_N}{N}\left(\xopt(\xopt)^\top-I\right)
    =\lambda\gamma_N\left(\plant(\plant)^\top-\frac{I}{N}\right).
\end{align*}
Moreover, \cref{eq:average-hessian-gamma-moment} implies that $\lambda\gamma_N\asymp_k\lambda N^{-\tfrac{k-2}{2}}$.

We now analyze the noise term in \cref{eq:average-hessian-decomposition}. Since $W_N$ is a linear function of the centered Gaussian tensor $G$, its entries are jointly centered Gaussian. Moreover, $W_N$ is symmetric with zero diagonal. For $i,j,a,b\in[N]$ with $i\ne j$ and $a\ne b$, we use \cref{eq:kspin-covariance} to obtain
\begin{align*}
    \E_G[(W_N)_{ij}(W_N)_{ab}]
    &=\E_{x,x'\,\iidsim\pi}\left[x_ix_jx'_ax'_b\,\E_G[H_0(x)H_0(x')]\right]\\
    &=N^{-(k-1)}\E_{x,x'\,\iidsim\pi}\left[x_ix_jx'_ax'_b\left(\sum_{\ell=1}^N x_\ell x'_\ell\right)^k\right].
\end{align*}
Let $z_\ell=x_\ell x'_\ell$ for each $\ell\in[N]$. Conditional on any $x$, the vector $z$ is uniform on $\Cube$, so $x$ and $z$ are independent with law $\pi$, and $x'_\ell=x_\ell z_\ell$. Consequently,
\begin{align}
    \E_G[(W_N)_{ij}(W_N)_{ab}]
    &=N^{-(k-1)}\E_{x\sim\pi}[x_ix_jx_ax_b]\,
      \E_{z\sim\pi}\left[z_az_b\left(\sum_{\ell=1}^N z_\ell\right)^k\right]\nonumber\\
    &=\frac{\gamma_N}{N}\ind{\{i,j\}=\{a,b\}}.
\label{eq:average-hessian-noise-covariance}
\end{align}
Since the entries of $W_N$ are jointly centered Gaussian, their law is uniquely determined by their covariance $\E_G[(W_N)_{ij}(W_N)_{ab}]$. Hence, \cref{eq:average-hessian-noise-covariance} implies that the entries of $W_N$ above the diagonal are independent and distributed as $\cN(0,\nicefrac{\gamma_N}{N})$. Moreover, since $W_N$ has zero diagonal, we conclude the following by applying \cref{lem:subgaussian-symmetric-matrix-op-norm} with $\sigma^2=\nicefrac{\gamma_N}{N}$ and $t=\sqrt N$, we obtain
\begin{align*}
    \Prb\left[\norm{W_N}_{\mathrm{op}}\ge2C\sqrt{\gamma_N}\right]
    \le\exp(-cN).
\end{align*}
Observing that $\gamma_N\asymp_k N^{-\tfrac{k-2}{2}}$, concludes the proof after adjusting constants appropriately.
\end{proof}

\begin{remark}
\label{rem:average-hessian-snr}
For even $k$, we observe from \cref{avg-discrete-hess-snr} that the rank $1$ spike in the average discrete Hessian is precisely $\plant$. Indeed, $A_{\mathrm{sig}}$ has top eigenvector $\plant$ and eigengap $\lambda_1(A_{\mathrm{sig}})-\lambda_2(A_{\mathrm{sig}})=\lambda\gamma_N$. Relative to the noise scale $\sqrt{\gamma_N}$, the signal-to-noise ratio is therefore $\lambda\sqrt{\gamma_N}\asymp_k\lambda N^{-\tfrac{k-2}{4}}$. Hence, the Davis--Kahan theorem \citep[Theorem~4.5.5]{vershynin2018high} suggests that a top unit eigenvector of $A_H$ achieves constant absolute correlation with $\plant$ with probability at least $1-\exp(-cN)$ whenever $\lambda$ is above the conjectured algorithmic threshold, namely when $\lambda\gtrsim N^{\tfrac{k-2}{4}}$ 
\end{remark}

%% file: ergodic_averaging.tex
\section{Ergodic Averages of High Temperature Glauber Dynamics}
\label{sec:ergodic-averaging}

Our goal in this section is to establish the following recovery guarantee for ergodic averages of high temperature Glauber dynamics. 

\begin{theorem}[Recovery by Ergodic Averaging for Tensor PCA]
\label{thm:tpca-averaged-differential-recovery}
Let $\lambda\ge C_kN^{\tfrac{k-2}{4}}$, and let $(\tilX_t)_{t\ge0}$ denote Glauber dynamics for the tensor PCA Hamiltonian $H$ at inverse temperature $\beta_0>0$, initialized at $\tilX_0\sim\pi$ independently of the Gaussian tensor. Define
\begin{align*}
    \Xbar_T\doteq\frac{1}{T}\int_0^T\tilX_t\,dt,
    \qquad
    \Sigmabar_T\doteq\frac{1}{T}\int_0^T\tilX_t\tilX_t^\top\,dt,
\end{align*}
and write $u_H=\E_\pi[\dgrad H]$ and $A_H=\E_\pi[\dgrad^2H]$. For every fixed $\vartheta\in(0,1)$ and all sufficiently large $N$, the following hold whenever $0<\beta_0\lesssim_k\lambda^{-1}N^{-\nicefrac{k}{2}-\vartheta}$ and $T_{\mathrm{av}}\gtrsim_k\beta_0^{-2}\lambda^{-2}N^{k+2\vartheta}$, with sufficiently small and large implicit constants, respectively:
\begin{enumerate}
    \item
    If $k$ is odd, then, with probability at least $1-\exp(-c_kN^{(1-\vartheta)/2})$,
    \begin{align*}
        \frac{1}{N}d_H\bigl(\sgn(\Xbar_{T_{\mathrm{av}}}),\sgn(u_H)\bigr)
        \le\min\{N^{-\vartheta},\lambda^2\sigma_N^2N^{-\vartheta/2}\},
    \end{align*}
    where $\sigma_N\asymp_k N^{-(k-1)/4}$ is as defined in \cref{avg-discrete-grad-snr}.

    \item If $k$ is even and $\widehat v$ is a top unit eigenvector of $\Sigmabar_{T_{\mathrm{av}}}$, then, with probability at least $1-\exp(-c_kN)$,
    \begin{align*}
        \bigl|m(\sgn(\widehat v))\bigr|
        \ge 1-C_k\left(\frac{N^{\tfrac{k-2}{2}}}{\lambda^2}+N^{-2\vartheta}\right).
    \end{align*}
\end{enumerate}
\end{theorem}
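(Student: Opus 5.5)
The approach is to split the argument into a generic, Hamiltonian-agnostic estimate on ergodic averages and a tensor-PCA-specific rounding step. The generic estimate should show that, for any $V$ with $\norm{V}_{\cD^3}\le B$ and small $\beta_0$, the ergodic averages of Glauber dynamics track their first-order high temperature expansions:
\begin{align*}
    \Xbar_T\approx\beta_0\E_\pi[\dgrad V],
    \qquad
    \Sigmabar_T\approx I+\beta_0\E_\pi[\dgrad^2V],
\end{align*}
with explicit errors of order $\beta_0^2B^2/N$ (bias) plus $\sqrt{\mathrm{polylog}/T}$-type fluctuation terms. I would derive this without touching the stationary measure, by comparing directly with the simple random walk.

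For the first moment, I would apply Dynkin's formula (\cref{lem:single-site-dynkin}) to the coordinate functions $f(x)=x$. A direct computation gives $\cL_{V,\beta_0}x=\tanh(\beta_0\dgrad V(x))-x$. Integrating over $[0,T]$ and dividing by $T$ then yields
\begin{align*}
    \Xbar_T=\frac{1}{T}\int_0^T\tanh\bigl(\beta_0\dgrad V(\tilX_t)\bigr)\,dt+O\Bigl(\frac{\sqrt N}{T}\Bigr)+\frac{M_T}{T}.
\end{align*}
Here the martingale term is controlled by \cref{lem:matrix-dynkin-uniform-concentration}, and replacing $\tanh(z)$ by $z$ costs $O(\beta_0^3\norm{\dgrad V}_\infty^3)$ coordinatewise. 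It then remains to show that $\frac1T\int_0^T g(\tilX_t)\,dt\approx\E_\pi[g]$ for $g=\beta_0\dgrad V$.

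For this I would use the Green operator of the $\beta=0$ walk. Let $\cL_0$ be its generator and solve the Poisson equation $-\cL_0 h=g-\E_\pi g$. In Fourier terms this is $h=\sum_{S\neq\varnothing}\widehat g(S)\chi_S/(2|S|)$, and the discrete Bakry--\'Emery commutation $\dgrad P_t^0=e^{-2t}P_t^0\dgrad$ gives $\norm{h}_{\cD^1}\lesssim\norm{g}_{\cD^1}$. Next apply Dynkin to $h$ under the true chain, using $\cL_{V,\beta_0}h=\cL_0h+\sum_i(\cL_{V,\beta_0}-\cL_0)$. The perturbation term equals $-\sum_i\tanh(\beta_0\partial_iV)\,x_i\,[h(x^{\sim i})-h(x)]/2$, which is of order $\beta_0\norm{\dgrad V}\norm{\dgrad h}$. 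Combining these produces the bias bound, and the martingale concentration of \cref{lem:matrix-dynkin-uniform-concentration} gives the fluctuation bound. The second moment is handled the same way with $f(x)=xx^\top$. One computes $\cL(x_ix_j)=x_i\tanh(\beta_0\partial_jV)+x_j\tanh(\beta_0\partial_iV)-2x_ix_j$, and averaging $x\dgrad V^\top$ against $\pi$ recovers $\E_\pi[\dgrad^2V]$ through the symmetrized form of \cref{lem:hypercube-stein}. The operator-norm errors come from the same Poisson-equation argument applied to $w^\top(\cdot)w$ over a net, as in \cref{rem:symmetric-matrix-dynkin-concentration}.

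Next I would specialize to $V=H$, using \cref{rem:tensor-pca-sobolev-bound}, i.e.\ $\norm{H}_{\cD^n}\lesssim N(1+\lambda)$. The bias is $\beta_0\norm{u}$-relative of order $\beta_0\lambda N$, and the fluctuation is $\sqrt{N/T}$-type. Matching these against the per-coordinate scale of $\beta_0u_H$, namely $\beta_0\lambda\sigma_N^2$, yields exactly the stated choices $\beta_0\lesssim\lambda^{-1}N^{-k/2-\vartheta}$ and $T_{\mathrm{av}}\gtrsim\beta_0^{-2}\lambda^{-2}N^{k+2\vartheta}$.

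\textbf{Odd $k$.} The argument above gives $\norm{\Xbar_T/\beta_0-u_H}^2\le\eps^2$ with $\eps^2$ small relative to $N\lambda^2\sigma_N^4 N^{-\vartheta}$. A coordinate can have its sign flipped only if $|(u_H)_i|\le\delta$ or the error at $i$ exceeds $\delta$. Using \cref{avg-discrete-grad-snr}, the coordinates of $u_H$ are independent Gaussians with density at most $1/\sigma_N$, so the number with $|(u_H)_i|\le\delta$ is at most about $N\delta/\sigma_N$ with high probability by a Chernoff bound. Markov's inequality on the error gives at most $\eps^2/\delta^2$ coordinates with error above $\delta$. Optimizing $\delta$ then gives both forms of the minimum.

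\textbf{Even $k$.} The expansion gives $\Sigmabar_T=I+\beta_0(A_H+E)$ with $\norm{E}_{\mathrm{op}}\lesssim N^{-(k-2)/4-\vartheta}$. Combining \cref{avg-discrete-hess-snr} with Davis--Kahan, $\sin\theta\lesssim(\norm{W_N}+\norm{E})/(\lambda\gamma_N)$, the top eigenvector has squared correlation at least $1-C(N^{(k-2)/2}/\lambda^2+N^{-2\vartheta})$. The sign rounding loses at most a constant factor in $1-|m|$, since each coordinate where $\sgn(\widehat v)$ disagrees with $\plant$ contributes at least $1/N$ to $\norm{\widehat v\mp\plant}^2$.

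I expect the main obstacle to be the quantitative Green-operator comparison in the second paragraph. Specifically, it is bounding $\norm{h}_{\cD^2}$ uniformly, with dependence only on $\norm{g}_{\cD^2}$, without losing factors of $N$: the operator-norm errors must beat the tiny scale $\beta_0\lambda\sigma_N^2$. I would first try the Fourier-level weight bound $1/(2|S|)\le1/2$ together with commuting $\dgrad$ through the semigroup.
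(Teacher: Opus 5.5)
Your outline follows the paper's proof. You use Dynkin's formula for the coordinate and pair observables to replace $\Xbar_T$ by the ergodic average of $\tanh(\beta_0\dgrad V)$. You then use the simple random walk Green operator with the Bakry--\'Emery commutation to show that the ergodic average of $\dgrad V$, and of $Q(x)\doteq\frac12\operatorname{off}(x\dgrad V^\top+\dgrad V\,x^\top)$, concentrates at its $\pi$-mean. Finally you use a Gaussian-margin plus Markov count for odd $k$ and Davis--Kahan for even $k$.

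The obstacle you single out does not arise. You need no $\cD^2$ control of $h$, and no $\cD^3$ assumption on $V$. Normalize as $\norm{V}_{\cD^2}\le BN$, with $B\asymp\lambda$ for tensor PCA. Apply the Green operator to the scalar projections $f_w=\dotp{\dgrad V-\E_\pi[\dgrad V]}{w}$ and $w^\top\bigl(Q-\E_\pi[\dgrad^2V]\bigr)w$, whose $\cD^1$ norms are $\lesssim B\sqrt N$ and $\lesssim BN$ respectively. The contraction $\norm{\cR f}_{\cD^1}\le2\norm{f}_{\cD^1}$ then controls $\sup_x\abs{\cR f_w(x)}$ and $\sup_x\norm{\dgrad(\cR f_w)(x)}$ uniformly in $w\in\bS^{N-1}$. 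That is all you need: the drift term involves only first differences of the Poisson solution paired against $\tanh(\beta_0\dgrad V)$, and the same holds for the matrix martingale bound. Separately, with the paper's clock one has $\cL\chi_S=-\abs{S}\chi_S$ and $\dgrad P_t=e^{-t}P_t\dgrad$, consistent with your own formula $\cL_{V,\beta_0}x=\tanh(\beta_0\dgrad V)-x$. Your factors of $2$ are therefore off, though only constants are affected.

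The step that fails as written is the odd-$k$ rounding when $\lambda$ is large. Take $\beta_0$ near its upper limit. Then the bias contribution $\beta_0\lambda^2\sqrt N$ to your bound on $\norm{\Xbar_{T_{\mathrm{av}}}/\beta_0-u_H}$ is of order $\eps\asymp\lambda\sigma_N^2N^{-\vartheta}$, i.e.\ proportional to the signal, not to the noise. With only the density bound $1/\sigma_N$, the best mismatch fraction your two counts can certify is
\begin{align*}
    \min_\delta\bigl\{\delta/\sigma_N+\eps^2/(N\delta^2)\bigr\}\asymp\bigl(\lambda^2\sigma_N^2N^{-1-2\vartheta}\bigr)^{1/3}.
\end{align*}
This exceeds $N^{-\vartheta}$ once $\lambda\sigma_N\gg N^{(1-\vartheta)/2}$. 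The statement covers that regime, since $\lambda$ has no upper bound.

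The fix, as in the paper, is to cap the margin at the signal scale, $\delta\lesssim\lambda\sigma_N^2N^{-\vartheta}$. On that window the density of $(u_H)_i$ is $\lesssim\sigma_N^{-1}e^{-\lambda^2\sigma_N^2/8}$, so each coordinate falls in it with probability at most $N^{-\vartheta}/4$. The Markov term $\eps^2/(N\delta^2)$ is then $O(1/N)$.

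Relatedly, your even-$k$ bound $\norm{E}_{\mathrm{op}}\lesssim N^{-(k-2)/4-\vartheta}$ is only correct at $\lambda\asymp N^{(k-2)/4}$. In general the error is $\lambda N^{-(k-2)/2-\vartheta}\asymp\lambda\gamma_NN^{-\vartheta}$. That is exactly what the Davis--Kahan ratio needs, and it still gives the stated conclusion.
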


\Cref{thm:tpca-averaged-differential-recovery}, which is proved in \cref{subsec:tpca-recovery-proof}, is obtained by specializing our results on ergodic averages of Glauber dynamics for smooth Hamiltonians to tensor PCA. These results and their proof techniques, which we believe may be of independent interest, are presented in \cref{subsec:recovering-average-discrete-gradient,subsec:recovering-average-discrete-hessian}. In \cref{subsec:srw-resolvents} we establish some preliminaries regarding continuous-time Glauber dynamics.

\subsection{Functional Analysis of the Simple Random Walk}
\label{subsec:srw-resolvents}

Let $(P_t)_{t\ge0}$ denote the Markov semigroup of the continuous-time simple random walk (SRW) on $\Cube$, whose site update kernel is uniform on $\{\pm1\}$. The associated generator $\cL$ and carr\'e du champ operator $\Gamma$ satisfy the following for any $f,g:\Cube\to\R$,
\begin{align}
\label{eq:srw-generator}
    \cL f(x)
    &=\frac12\sum_{i=1}^N\bigl[f(x^{\sim i})-f(x)\bigr]
    =-\dotp{x}{\dgrad f(x)},\\
\label{eq:srw-carre-du-champ}
    \Gamma(f,g)(x)
    &=\frac14\sum_{i=1}^N\bigl[f(x^{\sim i})-f(x)\bigr]\bigl[g(x^{\sim i})-g(x)\bigr]
    =\dotp{\dgrad f(x)}{\dgrad g(x)}.
\end{align}

Observe that SRW corresponds to Glauber dynamics at infinite temperature, i.e., $\beta=0$. Hence, for any Hamiltonian $V:\Cube\to\R$ and inverse temperature $\beta\ge0$, we infer the following identity from \cref{eq:glauber-generator,eq:srw-generator}, which we call the \emph{drift-diffusion decomposition} for the generator of Glauber dynamics. 
\begin{align}
\label{eq:glauber-generator-srw-decomposition}
    \cL_{V,\beta}f(x)
    =\cL f(x)+\dotp{\tanh(\beta\dgrad V(x))}{\dgrad f(x)}.
\end{align}
Here, $\tanh$ implicitly acts elementwise, and the term $\dotp{\tanh(\beta\dgrad V(x))}{\dgrad f(x)}$ is called the \emph{drift term} of the generator. 

An essential ingredient in our proofs is the following discrete gradient commutation identity for the SRW semigroup, which can be interpreted as a hypercube analogue of the celebrated Bakry--\'Emery estimate for Markov diffusion semigroups on manifolds satisfying a curvature-dimension condition~\citep[Section~3.2.3]{bakry2013analysis}.

\begin{lemma}[Bakry--\'Emery Type Discrete Gradient Estimate for the Hypercube]
\label{lem:hypercube-bakry-emery}
The following holds for every $f:\Cube\to\R^m$ and $t\ge0$,
\begin{align*}
    \dgrad(P_tf)=e^{-t}P_t(\dgrad f),
\end{align*}
Moreover, if $\sup_{x \in \Cube }\norm{\dgrad f(x)}_{\mathrm{op}}\le L$, then
\begin{align*}
    \sup_{x \in \Cube}\norm{\dgrad(P_tf)(x)}_{\mathrm{op}}\le e^{-t}L
    \qquad\forall \ t \ge 0
\end{align*}
\end{lemma}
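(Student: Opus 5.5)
The plan is to diagonalize $P_t$ in the Fourier--Walsh basis and compare both sides of the identity character by character. By \cref{eq:srw-generator}, each character is an eigenfunction of the SRW generator. Indeed, for $S\subseteq[N]$, flipping coordinate $i$ negates $\chi_S$ exactly when $i\in S$, so
\begin{align*}
    \cL\chi_S=\frac12\sum_{i\in S}(-2\chi_S)=-|S|\chi_S.
\end{align*}
Consequently $P_t\chi_S=e^{-t|S|}\chi_S$, which we get either by solving the Kolmogorov equation \cref{eq:glauber-kolmogorov} at $\beta=0$ or from the finite-dimensional identity $P_t=e^{t\cL}$. Since $P_t$ and $\dgrad$ both act entrywise on $\R^m$-valued functions and are linear, it suffices to verify the identity for $f=\chi_S$ and extend by linearity through the Fourier--Walsh expansion.

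For a single character, \cref{eq:partial-fourier-character} gives, for $i\in S$,
\begin{align*}
    \partial_i(P_t\chi_S)=e^{-t|S|}\chi_{S\setminus\{i\}}.
\end{align*}
On the other side,
\begin{align*}
    e^{-t}P_t(\partial_i\chi_S)=e^{-t}e^{-t(|S|-1)}\chi_{S\setminus\{i\}}=e^{-t|S|}\chi_{S\setminus\{i\}}.
\end{align*}
For $i\notin S$, both sides vanish. This proves $\dgrad(P_tf)=e^{-t}P_t(\dgrad f)$ for every $f$.

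For the bound, I would use that $P_t$ is a Markov operator, so $P_tF(x)=\E[F(X_t)\mid X_0=x]$ is an average of the values $F(y)$ for any matrix-valued $F$. Since $\norm{\cdot}_{\mathrm{op}}$ is convex, Jensen's inequality gives
\begin{align*}
    \norm{P_t(\dgrad f)(x)}_{\mathrm{op}}\le\E\bigl[\norm{\dgrad f(X_t)}_{\mathrm{op}}\bigr]\le L.
\end{align*}
Combined with the commutation identity, this yields $\norm{\dgrad(P_tf)(x)}_{\mathrm{op}}\le e^{-t}L$ uniformly in $x$.

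There is no serious obstacle. The one point needing care is the eigenvalue computation: the factor $\tfrac12$ in $\cL$ must match the rate-$N$ Poisson clock with the uniform spin kernel, which gives flip rate $\tfrac12$ per site, so that the eigenvalue is exactly $-|S|$ and the exponents combine to produce precisely $e^{-t}$.
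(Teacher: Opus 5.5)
Your proposal is correct and follows essentially the same route as the paper: diagonalize $P_t$ in the Fourier--Walsh basis via $\cL\chi_S=-|S|\chi_S$, verify the commutation identity character by character, extend by linearity, and conclude the bound by Jensen's inequality for the Markov operator $P_t$. The only difference is cosmetic: the paper reduces the $\R^m$-valued case to scalars by testing against a fixed $u\in\R^m$, whereas you use directly that $P_t$ and $\dgrad$ act entrywise, which is equivalent.
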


\begin{proof}
Fix $u\in\R^m$ and set $g(x)=\dotp{f(x)}{u}$. By the Fourier--Walsh expansion,
\begin{align*}
    g(x)=\sum_{S\subseteq[N]}\widehat g(S)\chi_S(x).
\end{align*}
Recall from \cref{eq:partial-fourier-character} that $\partial_i\chi_S=\chi_{S\setminus\{i\}}$ if $i\in S$, and $\partial_i\chi_S=0$ otherwise. By \cref{eq:srw-generator},
\begin{align}
\label{eq:srw-generator-character}
    \cL\chi_S(x)
    =\frac12\sum_{i=1}^N\bigl[\chi_S(x^{\sim i})-\chi_S(x)\bigr]
    =\frac12\sum_{i\in S}(-2\chi_S(x))
    =-|S|\chi_S(x).
\end{align}
Consequently, the Kolmogorov equation~\cref{eq:glauber-kolmogorov} at $\beta=0$ implies
\begin{align*}
    \frac{d}{dt}P_t\chi_S=P_t(\cL\chi_S)=-|S|P_t\chi_S.
\end{align*}
Since $P_0=\mathrm{Id}$, we obtain the following by integrating the above identity
\begin{align}
\label{eq:srw-semigroup-character}
    P_t\chi_S=e^{-|S|t}\chi_S.
\end{align}
Consequently, for $i\notin S$, both $\partial_i(P_t\chi_S)$ and $P_t(\partial_i\chi_S)$ vanish. Moreover, for $i\in S$, we have the following:
\begin{align*}
    \partial_i(P_t\chi_S)
    &=e^{-|S|t}\chi_{S\setminus\{i\}},\\
    P_t(\partial_i\chi_S)
    &=e^{-(|S|-1)t}\chi_{S\setminus\{i\}}
    =e^t\partial_i(P_t\chi_S).
\end{align*}
Thus, $\partial_i(P_t\chi_S)=e^{-t}P_t(\partial_i\chi_S)$ for every $i\in[N]$ and $S\subseteq[N]$. By the Fourier--Walsh expansion and linearity of $P_t$ and $\dgrad$, we conclude that $\dgrad(P_tg)=e^{-t}P_t(\dgrad g)$. Applying this identity to $g=\dotp{f}{u}$, we obtain
\begin{align*}
    \bigl[\dgrad(P_tf)\bigr]u
    =e^{-t}P_t\bigl[(\dgrad f)u\bigr].
\end{align*}
Since $u\in\R^m$ was arbitrary, this proves the first identity. Finally, by Jensen's inequality for Markov semigroups,
\begin{align*}
    \norm{\dgrad(P_tf)(x)}_{\mathrm{op}}
    &=e^{-t}\norm{P_t(\dgrad f)(x)}_{\mathrm{op}}\\
    &\le e^{-t}P_t\bigl(\norm{\dgrad f}_{\mathrm{op}}\bigr)(x)
    \le e^{-t}L,
\end{align*}
which proves the second claim.
\end{proof}
We now introduce the Green operator (or restricted inverse of $-\cL$) for the simple random walk. Let $L^2(\pi)\supseteq L^2_0(\pi)=\{f\in L^2(\pi):\E_\pi[f]=0\}$ denote the subspace of zero mean functions on $\Cube$. The Green operator $\cR:L^2_0(\pi)\to L^2_0(\pi)$ is then defined as follows:
\begin{align}
\label{eq:srw-green-operator}
    \cR f=\int_0^\infty P_tf\,dt
    \qquad \forall \ f\in L^2_0(\pi).
\end{align}
For centered vector and matrix valued observables, $\cR$ acts elementwise. The following lemma shows that $\cR$ is a well-defined bounded linear operator and establishes several properties that will be vital to our analysis.
\begin{lemma}[Properties of the Green operator]
\label{lem:srw-green-operator}
The Green operator $\cR$ is a well-defined bounded linear operator on $L^2_0(\pi)$. Moreover, for any $f \in L^2_0(\pi)$, $\cR f$ is a solution to the Poisson equation
\begin{align}
\label{eq:srw-poisson-equation}
    \cL(\cR f)=-f
\end{align}
Furthermore, $\norm{\cR f}_{\cD^1}\le2\norm{f}_{\cD^1}$ for every $f\in L^2_0(\pi)$.
\end{lemma}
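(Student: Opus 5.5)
The plan is to work throughout in the Fourier--Walsh basis. Everything follows from the eigen-relation $P_t\chi_S=e^{-|S|t}\chi_S$ (\cref{eq:srw-semigroup-character}) and the commutation identity $\dgrad(P_tf)=e^{-t}P_t(\dgrad f)$ of \cref{lem:hypercube-bakry-emery}. Set $G\doteq\max_{x\in\Cube}\norm{\dgrad f(x)}$, so that $\sqrt N\,G\le\norm{f}_{\cD^1}$. I will bound both summands of $\norm{\cR f}_{\cD^1}$ by $\sqrt N\,G$, which gives the constant $2$.

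\textbf{Well-posedness and the Poisson equation.} For $f\in L^2_0(\pi)$ we have $\widehat f(\varnothing)=0$. Hence
\begin{align*}
    P_tf=\sum_{S\ne\varnothing}e^{-|S|t}\widehat f(S)\chi_S
    \qquad\text{and}\qquad
    \norm{P_tf}_{L^2(\pi)}\le e^{-t}\norm{f}_{L^2(\pi)}.
\end{align*}
The integral therefore converges, with $\cR f=\sum_{S\ne\varnothing}|S|^{-1}\widehat f(S)\chi_S$ and $\norm{\cR f}_{L^2}\le\norm{f}_{L^2}$. This shows that $\cR$ is a bounded linear map into $L^2_0(\pi)$. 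Applying \cref{eq:srw-generator-character} termwise gives $\cL\cR f=-\sum_{S\neq\varnothing}\widehat f(S)\chi_S=-f$, which is \cref{eq:srw-poisson-equation}.

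\textbf{Gradient part.} Differentiating under the integral and using \cref{lem:hypercube-bakry-emery}, I get $\dgrad\cR f=\int_0^\infty e^{-t}P_t(\dgrad f)\,dt$. Note that $\dgrad f$ need not be centered, but this is harmless because the factor $e^{-t}$ supplies integrability. Jensen's inequality for $P_t$ then gives $\norm{\dgrad \cR f(x)}\le\int_0^\infty e^{-t}G\,dt=G$. Thus $\sqrt N\max_x\norm{\dgrad\cR f(x)}\le\sqrt N\,G$.

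\textbf{Value part (main step).} The crude bound $|P_tf|\le\max|f|$ is not integrable in $t$, so I need decay of $P_tf$ measured by the gradient. Write $\rho=e^{-t}$ and $T_\rho\doteq P_{t}$, so that $T_\rho f=\sum_S\rho^{|S|}\widehat f(S)\chi_S$ and $T_0f=\widehat f(\varnothing)=0$. Differentiating in $\rho$,
\begin{align*}
    \frac{d}{d\rho}T_\rho f(x)
    =\sum_{S}|S|\rho^{|S|-1}\widehat f(S)\chi_S(x)
    =\rho^{-1}\dotp{x}{\dgrad T_\rho f(x)}
    =\dotp{x}{T_\rho(\dgrad f)(x)}.
\end{align*}
Here the second equality uses $\sum_i x_i\partial_i\chi_S=|S|\chi_S$, and the last uses \cref{lem:hypercube-bakry-emery}. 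By Cauchy--Schwarz with $\norm{x}=\sqrt N$, together with Jensen for $T_\rho$, the derivative is at most $\sqrt N\,G$ in absolute value. Integrating from $0$ to $\rho$ yields $|P_tf(x)|\le e^{-t}\sqrt N\,G$. Hence
\begin{align*}
    |\cR f(x)|\le\int_0^\infty e^{-t}\sqrt N\,G\,dt=\sqrt N\,G.
\end{align*}

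Adding the two parts gives $\norm{\cR f}_{\cD^1}\le2\sqrt N\,G\le2\norm{f}_{\cD^1}$. The only delicate point is the value bound: one must avoid Hamming-path Lipschitz arguments, which lose a factor of $\sqrt N$, and instead use the interpolation in $\rho$. That interpolation converts the $\rho$-derivative into $\dotp{x}{T_\rho\dgrad f}$, so only the Euclidean norm of the gradient enters.
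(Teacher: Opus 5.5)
Your proposal is correct and follows the paper's proof essentially step by step. You use the Fourier expansion $\cR f=\sum_{S\ne\varnothing}\widehat f(S)\chi_S/|S|$ for boundedness and the Poisson equation, and the Bakry--\'Emery identity under the integral for the gradient part. For the value part, your $\rho$-interpolation is the paper's identity $P_tf(x)=\int_t^\infty\dotp{x}{\dgrad(P_sf)(x)}\,ds$ rewritten with the substitution $\rho=e^{-s}$.
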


\begin{proof}
Observe that the set of non-constant Fourier characters $\{\chi_S:\varnothing\ne S\subseteq[N]\}$ forms an orthonormal basis of $L^2_0(\pi)$. Fix an arbitrary $f\in L^2_0(\pi)$. By the Kolmogorov equation, $P_t$ acts on these characters as per \cref{eq:srw-semigroup-character} from the proof of \cref{lem:hypercube-bakry-emery}. Hence, we obtain the following via linearity of $\cR$ and a Fourier-Walsh expansion
\begin{align*}
    \cR f
    &=\sum_{\varnothing\ne S\subseteq[N]}\widehat f(S)\chi_S\int_0^\infty e^{-|S|t}\,dt
    =\sum_{\varnothing\ne S\subseteq[N]}\frac{\widehat f(S)}{|S|}\chi_S.
\end{align*}
Since the sum is finite and each integral converges, $\cR$ is a well-defined linear operator on $L^2_0(\pi)$. By Parseval's identity~\citep[Section~1.4]{ryan2014aobf}, we obtain
\begin{align*}
    \norm{\cR f}_{L^2(\pi)}^2
    =\sum_{\varnothing\ne S\subseteq[N]}\frac{|\widehat f(S)|^2}{|S|^2}
    \le\sum_{\varnothing\ne S\subseteq[N]}|\widehat f(S)|^2
    =\norm{f}_{L^2(\pi)}^2.
\end{align*}
Hence, $\cR$ is a bounded linear operator on $L^2_0(\pi)$. To prove \cref{eq:srw-poisson-equation}, we use linearity of $\cL$ and \cref{eq:srw-generator-character} from the proof of \cref{lem:hypercube-bakry-emery} to obtain
\begin{align*}
    \cL(\cR f)
    &=\sum_{\varnothing\ne S\subseteq[N]}\frac{\widehat f(S)}{|S|}\cL\chi_S =-\sum_{\varnothing\ne S\subseteq[N]}\widehat f(S)\chi_S
    =-f,
\end{align*}
where the last equality uses $\widehat f(\varnothing)=\E_\pi[f]=0$.
We now prove the discrete Sobolev norm bound. Since discrete derivatives are finite differences, we may apply them inside the integral defining $\cR f$. Hence, by \cref{lem:hypercube-bakry-emery},
\begin{align*}
    \max_{x\in\Cube}\norm{\dgrad(\cR f)(x)}
    &\le\int_0^\infty\max_{x\in\Cube}\norm{\dgrad(P_tf)(x)}\,dt\\
    &\le\max_{x\in\Cube}\norm{\dgrad f(x)}\int_0^\infty e^{-t}\,dt
    =\max_{x\in\Cube}\norm{\dgrad f(x)}.
\end{align*}

Since $\pi$ is the stationary measure of $P_t$ and SRW is ergodic, we infer the following from the Kolmogorov equation~\cref{eq:glauber-kolmogorov} and \cref{eq:srw-generator} for every $x\in\Cube$ and $t\ge0$,
\begin{align*}
    P_tf(x)-\E_\pi[f]
    &=-\int_t^\infty\frac{d}{ds}P_sf(x)\,ds\\
    &=-\int_t^\infty\cL(P_sf)(x)\,ds
    =\int_t^\infty\dotp{x}{\dgrad(P_sf)(x)}\,ds.
\end{align*}
Since $\E_\pi[f]=0$ and $\norm{x}=\sqrt N$, we obtain the following from \cref{lem:hypercube-bakry-emery}.
\begin{align*}
    \norm{P_tf}_\infty
    &\le\sqrt N\int_t^\infty\max_{x\in\Cube}\norm{\dgrad(P_sf)(x)}\,ds\\
    &\le\sqrt N\max_{x\in\Cube}\norm{\dgrad f(x)}\int_t^\infty e^{-s}\,ds
    =\sqrt N e^{-t}\max_{x\in\Cube}\norm{\dgrad f(x)}.
\end{align*}
Integrating in $t$ and using \cref{eq:srw-green-operator}, we obtain
\begin{align*}
    \norm{\cR f}_\infty
    &\le\int_0^\infty\norm{P_tf}_\infty\,dt
    \le\sqrt N\max_{x\in\Cube}\norm{\dgrad f(x)}\int_0^\infty e^{-t}\,dt\\
    &=\sqrt N\max_{x\in\Cube}\norm{\dgrad f(x)}.
\end{align*}
Combining these estimates with \cref{def:discrete-sobolev-norm}, we conclude that
\begin{align*}
    \norm{\cR f}_{\cD^1}
    &=\norm{\cR f}_\infty+\sqrt N\max_{x\in\Cube}\norm{\dgrad(\cR f)(x)}\\
    &\le2\sqrt N\max_{x\in\Cube}\norm{\dgrad f(x)}
    \le2\norm{f}_{\cD^1}.
\end{align*}
\end{proof}

\begin{remark}
\label{rem:srw-green-operator-provenance}
By \cref{eq:srw-poisson-equation}, we may interpret $\cR$ as the inverse of $-\cL$ restricted to $L^2_0(\pi)$. Solutions to the Poisson equation are a canonical tool for proving central limit theorems for ergodic averages of Markov processes~\citep{komorowski2012fluctuations,eberle2025markov} (see also~\citep[Section~4.1]{chatterjee2005concentration} for connections to Stein's method of exchangeable pairs). More recently, such methods have been used to compare stationary distributions of contractive Markov chains under Dobrushin's conditions~\citep{bresler2019stein,reinert2019approximating}. In subsequent sections, we shall use the Green operator to analyze ergodic averages along high-temperature Glauber dynamics for Hamiltonians with bounded $\cD^2$ norm.
\end{remark}

\subsection{Recovering the Average Discrete Gradient}
\label{subsec:recovering-average-discrete-gradient}
We now show that the ergodic average of high-temperature Glauber dynamics approximates the average discrete gradient under $\pi$. For a potential $V:\Cube\to\R$, let $(Z_t)_{t\ge0}$ denote the trajectory of Glauber dynamics at inverse temperature $\beta_0\ge0$, initialized at $Z_0\sim\pi$. For any averaging time $T>0$, define its ergodic average by
\begin{align}
\label{eq:glauber-ergodic-average}
    \Zavg_T\doteq\frac{1}{T}\int_0^T Z_t\,dt.
\end{align}

In \cref{subsec:sketch-ergodic-averaging}, we argued heuristically that, for $V=H$, the ergodic average $\Zavg_T$ converges to the mean of the Gibbs measure $\mu_{V,\beta_0}$ as $T\to\infty$, and that this mean equals $\beta_0\E_\pi[\dgrad V]+O(\beta_0^2)$. We formalize this heuristic for general potentials in \cref{thm:glauber-recovers-average-gradient} below. The error bound in this theorem comprises terms that vanish as $T\to\infty$, which quantify the long time limit, and terms of higher order in $\beta_0$, which quantify the high temperature expansion.

\begin{theorem}[Recovery of the Average Discrete Gradient]
\label{thm:glauber-recovers-average-gradient}
Suppose that $\norm{V}_{\cD^2}\le BN$ for some $B>0$. For every $T>0$ and $\delta\in(0,1)$, the following holds with probability at least $1-\delta$:
\begin{align*}
    \norm{\Zavg_T-\beta_0\E_\pi[\dgrad V]}
    &\lesssim (1+\beta_0B)\left[
        \sqrt{\frac{N+\log(\nicefrac{1}{\delta})}{T}}
        +\frac{N+\log(\nicefrac{1}{\delta})}{T}
    \right]+\beta_0^2B^2\sqrt N+\beta_0^3B^3N^{3/2},
\end{align*}
where the implicit constant is universal.
\end{theorem}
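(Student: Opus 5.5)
We follow the sketch of \cref{subsec:sketch-ergodic-averaging}: compare Glauber dynamics with the simple random walk through the Green operator, using the drift-diffusion decomposition \cref{eq:glauber-generator-srw-decomposition}. Apply Dynkin's formula (\cref{lem:single-site-dynkin}) to the vector observable $f(x)=x$. Since $\partial_i x_j=\ind{i=j}$, we have $\cL x=-x$, and the drift term is $\tanh(\beta_0\dgrad V(x))$. Hence
\begin{align*}
    \Zavg_T=\frac{1}{T}\int_0^T\tanh\bigl(\beta_0\dgrad V(Z_t)\bigr)\,dt-\frac{Z_T-Z_0+M_T^{x}}{T}.
\end{align*}
The boundary term has norm at most $2\sqrt N/T$. \Cref{lem:matrix-dynkin-uniform-concentration} (with $B\lesssim1$ and $L\le1$) bounds $M_T^x$ by $\sqrt{T(N+\log(1/\delta))}+N+\log(1/\delta)$. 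Dividing by $T$ gives the leading fluctuation terms.

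Next, linearize $\tanh$. Since $\norm{V}_{\cD^2}\le BN$, we have $\norm{\dgrad V(x)}\le B\sqrt N$ and $\norm{\dgrad V}_\infty\le B\sqrt N$ coordinatewise. Using $|\tanh z-z|\le|z|^3/3$, the cubic remainder is at most $\beta_0^3\sum_i|\partial_iV|^3\le\beta_0^3\norm{\dgrad V}_\infty^2\norm{\dgrad V}\lesssim\beta_0^3B^3N^{3/2}$. This matches the last error term.

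It remains to control $\frac{\beta_0}{T}\int_0^T(g(Z_t)-\E_\pi g)\,dt$, where $g=\dgrad V$. Set $h=\cR(g-\E_\pi g)$, so that $\cL h=-(g-\E_\pi g)$ by \cref{lem:srw-green-operator}. Applying Dynkin's formula under the Glauber generator to $h$ gives
\begin{align*}
    \int_0^T(g-\E_\pi g)(Z_t)\,dt=h(Z_0)-h(Z_T)+M_T^h+\int_0^T\dgrad h(Z_t)^\top\tanh\bigl(\beta_0\dgrad V(Z_t)\bigr)\,dt.
\end{align*}
By the $\cD^1$ bound on $\cR$ (\cref{lem:srw-green-operator}, applied coordinatewise along directions $u$), together with $\norm{u^\top\dgrad V}_{\cD^1}\lesssim BN/\sqrt N\cdot\sqrt N$, we get $\norm{h}_\infty\lesssim B\sqrt N$ and $\norm{\dgrad h}_{\mathrm{op}}\le\sup\norm{\dgrad^2V}_{\mathrm{op}}\le B$. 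The boundary terms and martingale, via \cref{lem:matrix-dynkin-uniform-concentration}, then contribute $\beta_0B$ times the fluctuation rates, which gives the $(1+\beta_0B)$ prefactor.

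The main obstacle is the last integral, the second-order drift correction. Bounded crudely it contributes $\beta_0\cdot B\cdot\beta_0B\sqrt N=\beta_0^2B^2\sqrt N$, which is exactly the stated middle term. So the plan is to use $\norm{\dgrad h\,\tanh(\beta_0\dgrad V)}\le\norm{\dgrad h}_{\mathrm{op}}\,\beta_0\norm{\dgrad V}$ deterministically, with no further averaging. The delicate point is ensuring the vector-valued Green operator bounds hold in operator norm uniformly in $u$. For this I would use the commutation identity $\dgrad\cR f=\int_0^\infty e^{-t}P_t\dgrad f\,dt$ from \cref{lem:hypercube-bakry-emery} directly on the matrix $\dgrad^2V$, with Jensen, rather than going coordinatewise. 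Summing all contributions yields the bound.
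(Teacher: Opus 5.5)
Your proposal is correct and is essentially the paper's proof, presented in the reverse order. It uses the same Dynkin identity for $f(x)=x$ with a cubic linearization of $\tanh$, and the same Poisson-equation argument with $h=\cR(\dgrad V-\E_\pi[\dgrad V])$, bounds $\max_x\norm{h(x)}\lesssim B\sqrt N$ and $\max_x\norm{\dgrad h(x)}_{\mathrm{op}}\le B$, and a crude deterministic bound $\beta_0^2B^2\sqrt N$ on the drift correction. The one small difference is that the paper gets the $\dgrad h$ bound via $\cD^1$ norms of $\langle h,w\rangle$, while you apply the operator-norm Bakry--\'Emery estimate directly to $\dgrad^2V$, which is equally valid.

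There are three harmless slips, none of which affects the final bound:
\begin{itemize}
\item $M_T^x$ should enter your first display with the opposite sign.
\item The $\cD^1$ norm of $\langle u,\dgrad V-\E_\pi[\dgrad V]\rangle$ should read $\lesssim B\sqrt N$, not $B N$. This is what the Green-operator lemma needs to give $\max_x\norm{h(x)}\lesssim B\sqrt N$.
\item The inequality $\sum_i|\partial_iV|^3\le\norm{\dgrad V}_\infty^2\norm{\dgrad V}$ is false in general; the all-ones vector is a counterexample. The cubic remainder is nonetheless at most $\beta_0^3\norm{\dgrad V}^3\le\beta_0^3B^3N^{3/2}$, as you need.
\end{itemize}
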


\begin{proof}
The proof proceeds in two steps. The first step uses the Poisson equation from \cref{lem:srw-green-operator} and the drift-diffusion identity \cref{eq:glauber-generator-srw-decomposition} for the generator of Glauber dynamics to prove that the ergodic average of $\dgrad V(Z_t)$ concentrates around $\E_\pi[\dgrad V]$. The second step linearizes the drift term in \cref{eq:glauber-generator-srw-decomposition} to show that the ergodic averages of $Z_t$ and $\beta_0 \dgrad V(Z_t)$ are close to each other. 

Define $F(x)=\dgrad V(x)-\E_\pi[\dgrad V]$ and $\Phi=\cR F$. For any $w\in\bS^{N-1}$, the function $f_w=\dotp{F}{w}$ has zero mean under $\pi$. Since $\norm{V}_{\cD^2}\le BN$, we have $\norm{f_w}_{\cD^1}\le3B\sqrt N$. Consequently, by linearity of $\cR$ and \cref{lem:srw-green-operator}, we obtain $\norm{\dotp{\Phi}{w}}_{\cD^1}=\norm{\cR f_w}_{\cD^1}\le6B\sqrt N$. Taking the supremum over $w\in\bS^{N-1}$, we obtain
\begin{align}
\label{eq:gradient-recovery-green-bounds}
    \max_{x\in\Cube}\norm{\Phi(x)}\le6B\sqrt N,
    \qquad
    \max_{x\in\Cube}\norm{\dgrad\Phi(x)}_{\mathrm{op}}\le6B.
\end{align}

The Poisson equation~\cref{eq:srw-poisson-equation} from \cref{lem:srw-green-operator} implies $\cL\Phi=-F$. Combining this with the drift-diffusion decomposition~\cref{eq:glauber-generator-srw-decomposition}, we obtain
\begin{align}
\label{eq:gradient-recovery-generator}
    \cL_{V,\beta_0}\Phi(x)
    =-F(x)+\dgrad\Phi(x)^\top\tanh(\beta_0\dgrad V(x)).
\end{align}
Let $M_t^\Phi$ denote the Dynkin martingale associated with $\Phi$ along $(Z_t)_{t\ge0}$. By \cref{lem:single-site-dynkin},
\begin{align*}
    M_T^\Phi
    =\Phi(Z_T)-\Phi(Z_0)-\int_0^T\cL_{V,\beta_0}\Phi(Z_t)\,dt.
\end{align*}
By rearranging terms and using \cref{eq:gradient-recovery-generator}, we obtain
\begin{align}
\label{eq:gradient-recovery-dynkin-decomposition}
    \frac{1}{T}\int_0^T\dgrad V(Z_t)\,dt-\E_\pi[\dgrad V]
    &=\frac{\Phi(Z_0)-\Phi(Z_T)}{T}
      +\frac{M_T^\Phi}{T}
      +\frac{1}{T}\int_0^T
      \dgrad\Phi(Z_t)^\top\tanh(\beta_0\dgrad V(Z_t))\,dt.
\end{align}
We control the three terms separately. By \cref{eq:gradient-recovery-green-bounds}, the endpoint term satisfies
\begin{align}
\label{eq:gradient-recovery-endpoint-bound}
    \norm{\frac{\Phi(Z_0)-\Phi(Z_T)}{T}}
    \le\frac{12B\sqrt N}{T}.
\end{align}
For the drift term, since $\norm{V}_{\cD^2}\le BN$, we obtain the following from $|\tanh z|\le|z|$ and \cref{eq:gradient-recovery-green-bounds}:
\begin{align}
\label{eq:gradient-recovery-remainder-bound}
    \norm{\dgrad\Phi(x)^\top\tanh(\beta_0\dgrad V(x))}
    &\le\norm{\dgrad\Phi(x)}_{\mathrm{op}}\norm{\tanh(\beta_0\dgrad V(x))}\nonumber\\
    &\le\beta_0\norm{\dgrad\Phi(x)}_{\mathrm{op}}\norm{\dgrad V(x)}
    \le6\beta_0B^2\sqrt N.
\end{align}

It remains to control the martingale term. Applying \cref{lem:matrix-dynkin-uniform-concentration} with $m=N$ and $n=1$, we obtain the following with probability at least $1-\delta/2$:
\begin{align}
\label{eq:gradient-recovery-martingale-bound}
    \frac{\norm{M_T^\Phi}}{T}
    \lesssim B\left[
        \sqrt{\frac{N+\log(\nicefrac{1}{\delta})}{T}}
        +\frac{N+\log(\nicefrac{1}{\delta})}{T}
    \right].
\end{align}
Substituting \cref{eq:gradient-recovery-endpoint-bound,eq:gradient-recovery-remainder-bound,eq:gradient-recovery-martingale-bound} into \cref{eq:gradient-recovery-dynkin-decomposition}, we obtain the following with probability at least $1-\delta/2$:
\begin{align}
\label{eq:gradient-recovery-gradient-average-bound}
    \norm{\frac{1}{T}\int_0^T\dgrad V(Z_t)\,dt-\E_\pi[\dgrad V]}
    \lesssim B\left[
        \sqrt{\frac{N+\log(\nicefrac{1}{\delta})}{T}}
        +\frac{N+\log(\nicefrac{1}{\delta})}{T}
    \right]+\beta_0B^2\sqrt N.
\end{align}

It remains to compare the respective ergodic averages of $Z_t$ and $\dgrad V(Z_t)$. To this end, applying the generator decomposition~\cref{eq:glauber-generator-srw-decomposition} to eac $x_i$ for $i \in [N]$, we obtain
\begin{align*}
    \cL_{V,\beta_0}x
    =-x+\tanh(\beta_0\dgrad V(x)).
\end{align*}
Let $M_t^x$ denote the associated Dynkin martingale. By \cref{lem:single-site-dynkin},
\begin{align*}
    M_T^x
    =Z_T-Z_0-\int_0^T\bigl[\tanh(\beta_0\dgrad V(Z_t))-Z_t\bigr]\,dt.
\end{align*}
By rearranging terms and using \cref{eq:glauber-ergodic-average}, we obtain
\begin{align}
\label{eq:gradient-recovery-spin-decomposition}
    \Zavg_T-\frac{\beta_0}{T}\int_0^T\dgrad V(Z_t)\,dt
    =\frac{Z_0-Z_T}{T}+\frac{M_T^x}{T}
    +\frac{1}{T}\int_0^T
      \bigl[\tanh(\beta_0\dgrad V(Z_t))-\beta_0\dgrad V(Z_t)\bigr]\,dt.
\end{align}
Since $Z_0, Z_T \in \Cube$, the endpoint term is at most $2\sqrt N/T$. Moreover, since $\norm{V}_{\cD^2}\le BN$, the elementary inequality $|\tanh z-z|\le|z|^3$ implies
\begin{align*}
    \norm{\tanh(\beta_0\dgrad V(x))-\beta_0\dgrad V(x)}
    \le\beta_0^3\norm{\dgrad V(x)}^3
    \le\beta_0^3B^3N^{3/2}.
\end{align*}
Finally, $\norm{\dotp{x}{w}}_{\cD^1}\le2\sqrt N$ for every $w\in\bS^{N-1}$. Thus, applying \cref{lem:matrix-dynkin-uniform-concentration} with $m=N$ and $n=1$, we obtain the following with probability at least $1-\delta/2$:
\begin{align*}
    \frac{\norm{M_T^x}}{T}
    \lesssim
        \sqrt{\frac{N+\log(\nicefrac{1}{\delta})}{T}}
        +\frac{N+\log(\nicefrac{1}{\delta})}{T}.
\end{align*}
Combining these estimates with \cref{eq:gradient-recovery-spin-decomposition}, we conclude that, with probability at least $1-\delta/2$,
\begin{align}
\label{eq:gradient-recovery-spin-average-bound}
    \norm{\Zavg_T-\frac{\beta_0}{T}\int_0^T\dgrad V(Z_t)\,dt}
    \lesssim
        \sqrt{\frac{N+\log(\nicefrac{1}{\delta})}{T}}
        +\frac{N+\log(\nicefrac{1}{\delta})}{T}
        +\beta_0^3B^3N^{3/2}.
\end{align}
The proof is concluded by combining \cref{eq:gradient-recovery-gradient-average-bound,eq:gradient-recovery-spin-average-bound} via the triangle inequality and a union bound.
\end{proof}

\subsection{Recovering the Average Discrete Hessian}
\label{subsec:recovering-average-discrete-hessian}
Borrowing the notation from \cref{subsec:recovering-average-discrete-gradient}, we use $Z_t$ to denote Glauber dynamics for a potential $V:\Cube\to\R$ at inverse temperature $\beta_0\ge0$, initialized at $Z_0\sim\pi$. For any averaging time $T>0$, define its uncentered ergodic second moment by
\begin{align*}
    \Savg_T\doteq\frac{1}{T}\int_0^T Z_t Z_t^\top\,dt.
\end{align*}
Define the matrix-valued observable $Q_V:\Cube\to\Sym_N$ by
\begin{align}
\label{eq:hessian-recovery-observable}
    Q_V(x)\doteq\frac12\operatorname{off}\left(
        x\dgrad V(x)^\top+\dgrad V(x)x^\top
    \right).
\end{align}
The following Lemma, which controls the discrete Sobolev norm of $Q_V$, is proven in \cref{app:hessian-observable-proof}.
\begin{lemma}[Properties of the Hessian observable]
\label{lem:hessian-observable-properties}
For any $V:\Cube\to\R$, the observable $Q_V$ satisfies $\E_\pi[Q_V]=\E_\pi[\dgrad^2V]$. Moreover, if $\norm{V}_{\cD^2}\le BN$ for some $B>0$, then $\norm{w^\top Q_V(\cdot)w}_{\cD^1}\lesssim BN$ for every $w\in\bS^{N-1}$.
\end{lemma}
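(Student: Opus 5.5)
The mean identity and the Sobolev bound are two separate checks; I will take them in turn.

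For the mean identity, I will argue entrywise. The diagonal of $Q_V$ vanishes because of the $\operatorname{off}$ operator. The diagonal of $\dgrad^2V$ also vanishes identically, since $\partial_i^2V=0$. For distinct $i,j$ we have $(Q_V)_{ij}=\frac12\bigl(x_i\partial_jV+x_j\partial_iV\bigr)$. By \cref{lem:hypercube-stein}, whose second identity reads $\E_\pi[\partial_i\partial_jV]=\frac12\E_\pi[x_i\partial_jV+x_j\partial_iV]$, this gives $\E_\pi[Q_V]=\E_\pi[\dgrad^2V]$ directly.

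For the Sobolev bound, fix $w\in\bS^{N-1}$ and write
\begin{align*}
g_w(x)\doteq w^\top Q_V(x)w=\dotp{w\odot x}{\dgrad V(x)}-\sum_{i}w_i^2x_i\partial_iV(x),
\end{align*}
where $\odot$ is the entrywise product.

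\textbf{Sup-norm term.} By Cauchy--Schwarz, $\abs{\dotp{w\odot x}{\dgrad V}}\le\norm{w\odot x}\norm{\dgrad V}=\norm{\dgrad V}\le BN/\sqrt N$, using $\norm{w\odot x}=\norm{w}=1$ and $\sqrt N\max\norm{\dgrad V}\le\norm{V}_{\cD^2}\le BN$. The diagonal correction satisfies $\abs{\sum_iw_i^2x_i\partial_iV}\le\norm{\dgrad V}_\infty\le B\sqrt N$. Hence $\max_x\abs{g_w(x)}\lesssim B\sqrt N\le BN$.

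\textbf{Gradient term.} We need $\sqrt N\max_x\norm{\dgrad g_w(x)}\lesssim BN$, that is, $\norm{\dgrad g_w}\lesssim B\sqrt N$. To compute $\partial_k g_w$ I will use \cref{lem:discrete-product-rule} together with $\partial_kx_i=\ind{i=k}$. Two simplifications follow from the coordinate structure:
\begin{itemize}
\item $x_k\partial_kV$ does not depend on $x_k$, so $\partial_k(x_k\partial_kV)=\partial_kV\cdot\partial_kx_k=\partial_kV$ with no cross term, because $\partial_k\partial_kV=0$;
\item for $i\ne k$, $\partial_k(x_i\partial_iV)=x_i\partial_k\partial_iV$.
\end{itemize}
This yields
\begin{align*}
\partial_kg_w=w_k^2\partial_kV+\bigl(\dgrad^2V(x)(w\odot x)\bigr)_k\,w_k\text{-type terms}\ldots
\end{align*}
More cleanly, the off-diagonal form of $g_w$ is $\sum_{i\ne j}w_iw_jx_i\partial_jV$, so its derivative is
\begin{align*}
\partial_kg_w=w_k\sum_{j\ne k}w_j\partial_jV+\sum_{i\ne k,\,j}w_iw_jx_i\,\partial_k\partial_jV.
\end{align*}

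I bound the two pieces of $\partial_kg_w$ separately.
\begin{itemize}
\item The first piece forms the vector $w\,\dotp{w}{\dgrad V}$ minus a diagonal correction $(w_k^2\partial_kV)_k$. Its norm is at most $2\norm{\dgrad V}\le2B\sqrt N$.
\item The second piece is $\dgrad^2V(x)\,(w\odot x)$ scaled by $w_j$, minus a term removing $i=k$. Written as $\sum_j w_j\,\partial_k\partial_jV\cdot\dotp{w\odot x}{\cdot}$ restricted, I will bound it by writing the full sum as the vector $\dgrad^2V(x)$ contracted in the second slot with $w$ and weighted by the scalar $\dotp{w}{x}$-type factors.
\end{itemize}

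The main obstacle is this second piece: the weight $\sum_iw_ix_i=\dotp{w}{x}$ can be as large as $\sqrt N$. Even so, $\norm{\dotp{w}{x}\,\dgrad^2V\,w}\le\sqrt N\norm{\dgrad^2V}_{\mathrm{op}}\le\sqrt N\cdot BN/N=B\sqrt N$, using $N\max\norm{\dgrad^2V}_{\mathrm{op}}\le\norm{V}_{\cD^2}\le BN$. The correction term removing $i=k$ has entries $w_k x_k\sum_jw_j\partial_k\partial_jV=w_kx_k(\dgrad^2V\,w)_k$. Its norm is at most $\norm{w}_\infty\norm{\dgrad^2V\,w}\le B$.

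Summing, $\norm{\dgrad g_w}\lesssim B\sqrt N$ uniformly in $x$. Together with the sup-norm bound, this gives $\norm{g_w}_{\cD^1}\lesssim BN$.
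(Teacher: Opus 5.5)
Your route is the paper's: Stein's lemma for the mean identity, then expanding $w^\top Q_V w$ and bounding its sup norm and discrete gradient via Cauchy--Schwarz and the $\cD^2$ norm. The mean identity is correct, but two computations in the Sobolev part are wrong as written.

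\textbf{The formula for $g_w$.} Your displayed formula is incorrect. Since $w^\top\bigl(x\dgrad V^\top+\dgrad V x^\top\bigr)w=2\dotp{w}{x}\dotp{w}{\dgrad V}$, the leading term is $\dotp{w}{x}\dotp{w}{\dgrad V(x)}$, not $\dotp{w\odot x}{\dgrad V(x)}$. Your own later off-diagonal form $\sum_{i\ne j}w_iw_jx_i\partial_jV$ agrees with the former. As a result, your claimed bound $\max_x\abs{g_w(x)}\lesssim B\sqrt N$ is false. For $V(x)=c\dotp{\mathbf 1}{x}$ one has $\norm{V}_{\cD^2}=2cN$, yet $w=\mathbf 1/\sqrt N$ and $x=\mathbf 1$ give $g_w=c(N-1)$. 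The correct estimate is $\abs{g_w}\le(\sqrt N+1)\norm{\dgrad V}\le 2BN$, because $\abs{\dotp{w}{x}}$ can be as large as $\sqrt N$. This is still exactly the scale the lemma needs.

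\textbf{The gradient.} You took $\partial_k(x_k\partial_jV)=\partial_jV$ for $j\ne k$. Your ``no cross term'' remark holds only for $x_k\partial_kV$, where $\partial_k\partial_kV=0$. For $j\ne k$, $\partial_jV$ depends on $x_k$, and the product rule gives $\partial_k(x_k\partial_jV)=\partial_jV-x_k\partial_k\partial_jV$. There is a second omission. The constraint $j\ne i$ in $\sum_{i\ne j}$ contributes $-\bigl(\dgrad^2V(x)(x\odot w^{\odot2})\bigr)_k$. Your bound drops this, since it treats the second piece as $\dotp{w}{x}\dgrad^2V w$ minus only the $i=k$ term. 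The correct formula, which is exactly the paper's, is
\begin{align*}
\dgrad g_w(x)=\dotp{w}{\dgrad V(x)}w-w^{\odot2}\odot\dgrad V(x)+\dotp{w}{x}\dgrad^2V(x)w-2\diag(x\odot w)\dgrad^2V(x)w-\dgrad^2V(x)(x\odot w^{\odot2}).
\end{align*}
Each omitted term has Euclidean norm at most $\norm{\dgrad^2V(x)}_{\mathrm{op}}\le B$, using $\norm{w}_\infty\le1$ and $\norm{x\odot w^{\odot2}}\le1$. So your conclusion $\norm{\dgrad g_w}\lesssim B\sqrt N$ survives, and with it $\norm{g_w}_{\cD^1}\lesssim BN$. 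Before this is a proof, though, the derivative must be recomputed as above and the garbled intermediate line (``$w_k$-type terms$\ldots$'') deleted.
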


The next theorem establishes a concentration guarantee for the ergodic second moment for potentials with bounded Sobolev norm.
\begin{theorem}[Recovery of the Average Discrete Hessian]
\label{thm:glauber-recovers-average-hessian}
Suppose that $\norm{V}_{\cD^2}\le BN$ for some $B>0$. For every $T>0$ and $\delta\in(0,1)$, the following holds with probability at least $1-\delta$:
\begin{align*}
    \norm{\Savg_T-I-\beta_0\E_\pi[\dgrad^2V]}_{\mathrm{op}}
    &\lesssim (1+\beta_0B)\sqrt{N}\left[
        \sqrt{\frac{N+\log(\nicefrac{1}{\delta})}{T}}
        +\frac{N+\log(\nicefrac{1}{\delta})}{T}
    \right]+\beta_0^2B^2N+\beta_0^3B^3N^2.
\end{align*}
\end{theorem}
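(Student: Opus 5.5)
We mirror the two-step argument of \cref{thm:glauber-recovers-average-gradient}, replacing the vector observable $\dgrad V$ by the Hessian observable $Q_V$ from \cref{eq:hessian-recovery-observable} and the spin observable $x$ by $xx^\top$. Since $\Savg_T$ has unit diagonal, it suffices to control $\operatorname{off}(\Savg_T)-\beta_0\E_\pi[\dgrad^2V]$; the Hessian has zero diagonal. Throughout, $\norm{\cdot}_{\mathrm{op}}$ of symmetric matrices is handled via quadratic forms $w^\top(\cdot)w$, as permitted by \cref{rem:symmetric-matrix-dynkin-concentration}.

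\emph{Step 1: linearization of $xx^\top$.} For distinct $i\ne j$, the function $x_ix_j$ has $\cL(x_ix_j)=-2x_ix_j$. By \cref{eq:glauber-generator-srw-decomposition},
\begin{align*}
\cL_{V,\beta_0}\operatorname{off}(xx^\top)=-2\operatorname{off}(xx^\top)+\operatorname{off}\bigl(x\tanh(\beta_0\dgrad V)^\top+\tanh(\beta_0\dgrad V)x^\top\bigr).
\end{align*}
Here we use $\partial_i(x_ix_j)=x_j$. The Dynkin decomposition (\cref{lem:single-site-dynkin}) then gives
\begin{align*}
\operatorname{off}(\Savg_T)-\frac{\beta_0}{T}\int_0^TQ_V(Z_t)\,dt=\frac{1}{2T}\Bigl[\operatorname{off}(Z_0Z_0^\top-Z_TZ_T^\top)+M_T\Bigr]+\mathrm{Err}_3,
\end{align*}
where $M$ is the associated martingale and $\mathrm{Err}_3$ collects the cubic $\tanh$ corrections. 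The endpoint term has operator norm at most $N/T$. For the martingale, take $w^\top\operatorname{off}(xx^\top)w=\dotp{x}{w}^2-1$. This has sup norm at most $N$ and gradient norm $\lesssim\sqrt N$, so its $\cD^1$ norm is $\lesssim N=\sqrt N\cdot\sqrt N$. Moreover, each $\partial_i$ of $\operatorname{off}(xx^\top)$ is a matrix with operator norm $\lesssim\sqrt N$. \Cref{lem:matrix-dynkin-uniform-concentration} with $B,L\asymp\sqrt N$ then yields the $\sqrt N[\sqrt{(N+\log(1/\delta))/T}+(N+\log(1/\delta))/T]$ term. For the cubic remainder, $|\tanh z-z|\le|z|^3$ and $\norm{x}=\sqrt N$ give
\begin{align*}
\norm{\operatorname{off}(xv^\top+vx^\top)}_{\mathrm{op}}\le2\sqrt N\norm{v}, \qquad \norm{v}\le\beta_0^3B^3N^{3/2},
\end{align*}
so this term is at most $\beta_0^3B^3N^2$.

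\emph{Step 2: ergodic average of $Q_V$.} Set $F=Q_V-\E_\pi[Q_V]$; by \cref{lem:hessian-observable-properties}, $\E_\pi[Q_V]=\E_\pi[\dgrad^2V]$. Let $\Phi=\cR F$ be defined entrywise. \Cref{lem:hessian-observable-properties} gives $\norm{w^\top F w}_{\cD^1}\lesssim BN$, and \cref{lem:srw-green-operator} gives $\norm{w^\top\Phi w}_{\cD^1}\lesssim BN$. Using the Poisson equation together with \cref{eq:glauber-generator-srw-decomposition},
\begin{align*}
\frac1T\int_0^TQ_V(Z_t)\,dt-\E_\pi[\dgrad^2V]=\frac{\Phi(Z_0)-\Phi(Z_T)}{T}+\frac{M^\Phi_T}{T}+\frac1T\int_0^T\sum_i\tanh(\beta_0\partial_iV)\,\partial_i\Phi\,dt.
\end{align*}
We bound each term in turn.
\begin{itemize}
\item[(i)] The endpoint term is $\lesssim BN/T$ in operator norm.
\item[(ii)] The martingale is controlled by \cref{lem:matrix-dynkin-uniform-concentration} with constant $B\sqrt N$, matching the $B\sqrt N[\cdots]$ term; this is multiplied by $\beta_0$ in the final bound.
\item[(iii)] For the drift term, the quadratic form $w^\top(\cdot)w$ equals $\dotp{\tanh(\beta_0\dgrad V)}{\dgrad(w^\top\Phi w)}$. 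Its absolute value is at most $\beta_0\cdot B\sqrt N\cdot BN/\sqrt N=\beta_0B^2\sqrt N$, because $\norm{w^\top\Phi w}_{\cD^1}\lesssim BN$ bounds the gradient by $B\sqrt N$. After the factor $\beta_0$ this gives $\beta_0^2B^2\sqrt N\le\beta_0^2B^2N$.
\end{itemize}
Combining Steps 1 and 2 by the triangle inequality and a union bound over the two martingale events yields the claim.

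\emph{Main obstacle.} The delicate point is the operator-norm bookkeeping for matrix observables. We must check that \cref{lem:hessian-observable-properties} indeed delivers the $BN$ scaling, which corresponds to $B\sqrt N$ after normalizing as in \cref{lem:matrix-dynkin-uniform-concentration}. We also need the jump bound $L$ for $\Phi$: operator norms of $\partial_i\Phi$ bounded by $B\sqrt N$ via \cref{rem:symmetric-matrix-dynkin-concentration}, plus the Bakry--\'Emery step from \cref{lem:hypercube-bakry-emery} applied entrywise inside $\cR$. A larger $L$ would only affect the lower-order $(N+\log(1/\delta))/T$ term.
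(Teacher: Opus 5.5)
Your proposal is correct and follows the paper's proof essentially verbatim: the same Green-operator/Poisson-equation argument controls the ergodic average of $Q_V$, the same Dynkin decomposition of $xx^\top-I$ with a cubic $\tanh$ remainder compares it with the second moment, and the two are combined by the triangle inequality and a union bound (you simply present the two steps in the opposite order). One harmless arithmetic slip occurs in (iii): the product $\beta_0\cdot B\sqrt N\cdot BN/\sqrt N$ equals $\beta_0B^2N$, not $\beta_0B^2\sqrt N$, but after the factor $\beta_0$ this is exactly the $\beta_0^2B^2N$ you end up using, so the final bound is unaffected.
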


\begin{proof}
As in the proof of \cref{thm:glauber-recovers-average-gradient}, we first use the Poisson equation from \cref{lem:srw-green-operator} and the drift-diffusion identity~\cref{eq:glauber-generator-srw-decomposition} to show that the ergodic average of $Q_V(Z_t)$ concentrates around $\E_\pi[\dgrad^2V]$. We then linearize the drift of $xx^\top-I$ to compare its ergodic average with that of $\beta_0Q_V(Z_t)$.

Define $F(x)=Q_V(x)-\E_\pi[\dgrad^2V]$ and $\Phi=\cR F$. For every $w\in\bS^{N-1}$, the function $f_w=w^\top Fw$ has zero mean under $\pi$ by \cref{lem:hessian-observable-properties} and satisfies $\norm{f_w}_{\cD^1}\lesssim BN$. Consequently, by linearity of $\cR$ and \cref{lem:srw-green-operator}, we obtain $\norm{w^\top\Phi(\cdot)w}_{\cD^1}=\norm{\cR f_w}_{\cD^1}\lesssim BN$. Since $\Phi$ takes values in $\Sym_N$, we conclude that
\begin{align}
\label{eq:hessian-recovery-green-bounds}
    \max_{x\in\Cube}\norm{\Phi(x)}_{\mathrm{op}}\lesssim BN,
    \qquad
    \max_{x\in\Cube}\sup_{w\in\bS^{N-1}}\norm{\dgrad(w^\top\Phi w)(x)}\lesssim B\sqrt N.
\end{align}

The Poisson equation~\cref{eq:srw-poisson-equation} from \cref{lem:srw-green-operator} implies $\cL\Phi=-F$. Combining this with the drift-diffusion decomposition~\cref{eq:glauber-generator-srw-decomposition}, we obtain
\begin{align}
\label{eq:hessian-recovery-generator}
    \cL_{V,\beta_0}\Phi(x)
    =-F(x)+\sum_{i=1}^N\tanh(\beta_0\partial_iV(x))\partial_i\Phi(x).
\end{align}
Let $M_t^\Phi$ denote the Dynkin martingale associated with $\Phi$ along $(Z_t)_{t\ge0}$. By \cref{lem:single-site-dynkin},
\begin{align*}
    M_T^\Phi
    =\Phi(Z_T)-\Phi(Z_0)-\int_0^T\cL_{V,\beta_0}\Phi(Z_t)\,dt.
\end{align*}
By rearranging terms and using \cref{eq:hessian-recovery-generator}, we obtain
\begin{align}
\label{eq:hessian-recovery-dynkin-decomposition}
    \frac{1}{T}\int_0^TQ_V(Z_t)\,dt-\E_\pi[\dgrad^2V]
    =\frac{\Phi(Z_0)-\Phi(Z_T)}{T}+\frac{M_T^\Phi}{T}
      +\frac{1}{T}\int_0^T\sum_{i=1}^N
      \tanh(\beta_0\partial_iV(Z_t))\partial_i\Phi(Z_t)\,dt.
\end{align}
We control the three terms separately. By \cref{eq:hessian-recovery-green-bounds}, the endpoint term satisfies
\begin{align}
\label{eq:hessian-recovery-endpoint-bound}
    \norm{\frac{\Phi(Z_0)-\Phi(Z_T)}{T}}_{\mathrm{op}}
    \lesssim\frac{BN}{T}.
\end{align}
For the drift term, since $\norm{V}_{\cD^2}\le BN$, we obtain the following from Cauchy--Schwarz, $|\tanh z|\le|z|$, and \cref{eq:hessian-recovery-green-bounds}:
\begin{align}
\label{eq:hessian-recovery-drift-bound}
    \norm{\sum_{i=1}^N\tanh(\beta_0\partial_iV(x))\partial_i\Phi(x)}_{\mathrm{op}}
    &=\sup_{w\in\bS^{N-1}}\left|\dotp{\dgrad(w^\top\Phi w)(x)}{\tanh(\beta_0\dgrad V(x))}\right|\nonumber\\
    &\le\sup_{w\in\bS^{N-1}}\norm{\dgrad(w^\top\Phi w)(x)}\norm{\tanh(\beta_0\dgrad V(x))}
    \lesssim\beta_0B^2N.
\end{align}

It remains to control the martingale term. Applying \cref{lem:matrix-dynkin-uniform-concentration} with $m=n=N$, we obtain the following with probability at least $1-\delta/2$:
\begin{align}
\label{eq:hessian-recovery-martingale-bound}
    \frac{\norm{M_T^\Phi}_{\mathrm{op}}}{T}
    \lesssim B\sqrt N\left[
        \sqrt{\frac{N+\log(\nicefrac{1}{\delta})}{T}}
        +\frac{N+\log(\nicefrac{1}{\delta})}{T}
    \right].
\end{align}
Substituting \cref{eq:hessian-recovery-endpoint-bound,eq:hessian-recovery-drift-bound,eq:hessian-recovery-martingale-bound} into \cref{eq:hessian-recovery-dynkin-decomposition}, we obtain the following with probability at least $1-\delta/2$:
\begin{align}
\label{eq:hessian-recovery-observable-average-bound}
    \norm{\frac{1}{T}\int_0^TQ_V(Z_t)\,dt-\E_\pi[\dgrad^2V]}_{\mathrm{op}}
    \lesssim B\sqrt N\left[
        \sqrt{\frac{N+\log(\nicefrac{1}{\delta})}{T}}
        +\frac{N+\log(\nicefrac{1}{\delta})}{T}
    \right]+\beta_0B^2N.
\end{align}

It remains to compare $\Savg_T-I$ with the ergodic average of $\beta_0Q_V(Z_t)$. To this end, define $K(x)=xx^\top-I$ and $r(x)=\tanh(\beta_0\dgrad V(x))-\beta_0\dgrad V(x)$.
Applying the generator decomposition~\cref{eq:glauber-generator-srw-decomposition} to $x_ix_j$ for distinct $i,j\in[N]$, we obtain
\begin{align}
\label{eq:hessian-recovery-pair-generator}
    \cL_{V,\beta_0}(x_ix_j)
    =-2x_ix_j+x_j\tanh(\beta_0\partial_iV(x))+x_i\tanh(\beta_0\partial_jV(x)).
\end{align}
Since $\cL_{V,\beta_0}(x_i^2)=0$, we infer from \cref{eq:hessian-recovery-observable,eq:hessian-recovery-pair-generator} that
\begin{align}
\label{eq:hessian-recovery-quadratic-generator}
    \cL_{V,\beta_0}K(x)=-2K(x)+2\beta_0Q_V(x)+2E(x),
    \qquad
    E(x)=\frac12\operatorname{off}\left(xr(x)^\top+r(x)x^\top\right).
\end{align}
Let $M_t^K$ denote the associated Dynkin martingale. By \cref{lem:single-site-dynkin},
\begin{align*}
    M_T^K=K(Z_T)-K(Z_0)-\int_0^T\cL_{V,\beta_0}K(Z_t)\,dt.
\end{align*}
By rearranging terms and using \cref{eq:hessian-recovery-quadratic-generator}, we obtain
\begin{align}
\label{eq:hessian-recovery-second-moment-decomposition}
    \Savg_T-I-\frac{\beta_0}{T}\int_0^TQ_V(Z_t)\,dt
    =\frac{K(Z_0)-K(Z_T)}{2T}+\frac{M_T^K}{2T}+\frac{1}{T}\int_0^TE(Z_t)\,dt.
\end{align}
Since $\norm{K(x)}_{\mathrm{op}}\le N$, the endpoint term is at most $N/T$. Moreover, since $\norm{V}_{\cD^2}\le BN$, the elementary inequality $|\tanh z-z|\le|z|^3$ implies $\norm{r(x)}\le\beta_0^3\norm{\dgrad V(x)}^3\le\beta_0^3B^3N^{3/2}$. Using $\norm{\operatorname{off}(M)}_{\mathrm{op}}\le2\norm{M}_{\mathrm{op}}$ and $x \in \Cube$, we conclude that
\begin{align}
\label{eq:hessian-recovery-second-moment-remainder-bound}
    \norm{E(x)}_{\mathrm{op}}
    \le2\sqrt N\norm{r(x)}\le2\beta_0^3B^3N^2.
\end{align}

Finally, for every $w\in\bS^{N-1}$,
\begin{align*}
    \partial_i(w^\top K(x)w)=2w_i\bigl(\dotp{w}{x}-w_ix_i\bigr),
    \qquad
    \norm{\dgrad(w^\top Kw)(x)}\le2(\sqrt N+1).
\end{align*}
Together with $\norm{K(x)}_{\mathrm{op}}\le N$, this implies $\norm{w^\top K(\cdot)w}_{\cD^1}\lesssim N$. Thus, applying \cref{lem:matrix-dynkin-uniform-concentration} with $m=n=N$, we obtain the following with probability at least $1-\delta/2$:
\begin{align}
\label{eq:hessian-recovery-quadratic-martingale-bound}
    \frac{\norm{M_T^K}_{\mathrm{op}}}{2T}
    \lesssim\sqrt N\left[
        \sqrt{\frac{N+\log(\nicefrac{1}{\delta})}{T}}
        +\frac{N+\log(\nicefrac{1}{\delta})}{T}
    \right].
\end{align}
Combining the endpoint bound and \cref{eq:hessian-recovery-second-moment-remainder-bound,eq:hessian-recovery-quadratic-martingale-bound} with \cref{eq:hessian-recovery-second-moment-decomposition}, we conclude that, with probability at least $1-\delta/2$,
\begin{align}
\label{eq:hessian-recovery-second-moment-bound}
    \norm{\Savg_T-I-\frac{\beta_0}{T}\int_0^TQ_V(Z_t)\,dt}_{\mathrm{op}}
    \lesssim\sqrt N\left[
        \sqrt{\frac{N+\log(\nicefrac{1}{\delta})}{T}}
        +\frac{N+\log(\nicefrac{1}{\delta})}{T}
    \right]+\beta_0^3B^3N^2.
\end{align}
The proof is concluded by combining \cref{eq:hessian-recovery-observable-average-bound,eq:hessian-recovery-second-moment-bound} via the triangle inequality and a union bound.
\end{proof}

\begin{remark}
\label{rem:ergodic-averaging-diffusion-extension}
The proof of \cref{thm:glauber-recovers-average-gradient,thm:glauber-recovers-average-hessian} is quite flexible and can be changed slightly to yield similar results for a large family of stochastic processes (beyond Glauber Dynamics).
For instance, the Green operator and Dynkin martingale arguments can be adapted to Markov diffusion processes on compact Riemannian manifolds, using pointwise carr\'e du champ estimates obtained via the Bakry--\'Emery curvature-dimension condition~\citep[Section~1.16]{bakry2013analysis}. In our forthcoming companion paper~\citep{das2026averaged} on spherical tensor PCA, we apply these techniques to establish analogous recovery guarantees for spherical Langevin dynamics.
\end{remark}

\subsection{Proof of \texorpdfstring{\cref{thm:tpca-averaged-differential-recovery}}{Theorem~\ref{thm:tpca-averaged-differential-recovery}}}
\label{subsec:tpca-recovery-proof}

\begin{proof}[Proof of \cref{thm:tpca-averaged-differential-recovery}]
\leavevmode\par
Let $\cE=\{\norm{H}_{\cD^2}\le C_k\lambda N\}$. By \cref{lem:kspin-sobolev-norm-concentration,lem:overlap-sobolev-bounds}, we have $\Prb[\cE]\ge1-\exp(-c_kN)$. For every realization of $H$ in $\cE$, we apply \cref{thm:glauber-recovers-average-gradient,thm:glauber-recovers-average-hessian} with $B\asymp_k\lambda$ and $\delta=\exp(-N)$. Taking a union bound, we obtain the following jointly with $\cE$, with probability at least $1-\exp(-c_kN)$:
\begin{align}
\label{eq:tpca-averaging-normalized-errors}
    \norm{\frac{\Xbar_{T_{\mathrm{av}}}}{\beta_0}-u_H}
    \le\eps_N,
    \qquad
    \norm{\frac{\Sigmabar_{T_{\mathrm{av}}}-I}{\beta_0}-A_H}_{\mathrm{op}}
    \le\sqrt N\,\eps_N.
\end{align}
Our assumptions on on $\beta_0$ and $T_{\mathrm{av}}$ imply
\begin{align}
\label{eq:tpca-averaging-error-scale}
    \eps_N
    &\lesssim_k
        (\beta_0^{-1}+\lambda)\left(
            \sqrt{\frac{N}{T_{\mathrm{av}}}}+\frac{N}{T_{\mathrm{av}}}
        \right)
        +\beta_0\lambda^2\sqrt N+\beta_0^2\lambda^3N^{3/2}
        \le c_k\lambda N^{-\tfrac{k-1}{2}-\vartheta}.
\end{align}
Here $c_k>0$ can be made arbitrarily small by choosing the leading constant in the upper bound on $\beta_0$ sufficiently small and that in the lower bound on $T_{\mathrm{av}}$ sufficiently large.

For odd $k$, \cref{avg-discrete-grad-snr} implies $u_H=\lambda\sigma_N^2\xopt+\zeta$, where $\zeta\sim\cN(0,\sigma_N^2I)$ and $\sigma_N^2\asymp_k N^{-\tfrac{k-1}{2}}$. Fix any $i \in [N]$. If $\sgn((u_H)_i)\ne\sgn(\Xbar_{T_{\mathrm{av}},i})$, then
\begin{align}
\label{eq:tpca-sign-mismatch-error}
    \left|\frac{\Xbar_{T_{\mathrm{av}},i}}{\beta_0}-(u_H)_i\right|
    &=\left|\frac{\Xbar_{T_{\mathrm{av}},i}}{\beta_0}\right|+|(u_H)_i|
    \ge |(u_H)_i|.
\end{align}

Set $\rho_N\doteq\min\{N^{-\vartheta},\lambda^2\sigma_N^2N^{-\vartheta/2}\}$ and $h_N\doteq\tfrac18\sigma_N\min\{\lambda\sigma_NN^{-\vartheta},\lambda^2\sigma_N^2N^{-\vartheta/2}\}$. Using \cref{eq:tpca-sign-mismatch-error}, we obtain
\begin{align}
\label{eq:tpca-sign-rounding-hamming-decomposition}
    \frac{1}{N}d_H\bigl(\sgn(\Xbar_{T_{\mathrm{av}}}),\sgn(u_H)\bigr)
    &=\frac{1}{N}\sum_{i=1}^N
        \ind{\sgn(\Xbar_{T_{\mathrm{av}},i})\ne\sgn((u_H)_i)}\nonumber\\
    &=\frac{1}{N}\sum_{i=1}^N
        \ind{\sgn(\Xbar_{T_{\mathrm{av}},i})\ne\sgn((u_H)_i)}
        \ind{|(u_H)_i|\le h_N}\nonumber\\
    &\quad+\frac{1}{N}\sum_{i=1}^N
        \ind{\sgn(\Xbar_{T_{\mathrm{av}},i})\ne\sgn((u_H)_i)}
        \ind{|(u_H)_i|>h_N}\nonumber\\
    &\le\frac{1}{N}\sum_{i=1}^N\ind{|(u_H)_i|\le h_N}
        +\frac{1}{Nh_N^2}
        \norm{\frac{\Xbar_{T_{\mathrm{av}}}}{\beta_0}-u_H}^2.
\end{align}

We control the first term in \cref{eq:tpca-sign-rounding-hamming-decomposition}. By standard bounds on the Gaussian density,
\begin{align*}
    \Prb\left[|(u_H)_i|\le h_N\right]
    &\le\frac{\lambda\sigma_NN^{-\vartheta}}{4\sqrt{2\pi}}
        \exp\left(-\frac{\lambda^2\sigma_N^2}{8}\right)
    \le\frac14N^{-\vartheta},\\
    \Prb\left[|(u_H)_i|\le h_N\right]
    &\le\frac{2h_N}{\sqrt{2\pi}\sigma_N}
    \le\frac14\lambda^2\sigma_N^2N^{-\vartheta/2}.
\end{align*}
Thus, $\Prb[|(u_H)_i|\le h_N]\le\rho_N/4$. Independence of the co-ordinates and a Chernoff bound implies
\begin{align}
\label{eq:tpca-small-gradient-coordinate-count}
    \Prb\left[
        \frac{1}{N}\sum_{i=1}^N\ind{|(u_H)_i|\le h_N}
        >\frac{\rho_N}{2}
    \right]
    \le\exp(-cN\rho_N).
\end{align}

For the second term in \cref{eq:tpca-sign-rounding-hamming-decomposition}, we use \cref{eq:tpca-averaging-normalized-errors,eq:tpca-averaging-error-scale} and $\lambda^2\sigma_N^2\gtrsim_kN^{-1/2}$ to conclude that the following holds with probability at least $1-\exp(-c_kN)$,
\begin{align}
\label{eq:tpca-sign-rounding-error-bound}
    \frac{1}{Nh_N^2}
    \norm{\frac{\Xbar_{T_{\mathrm{av}}}}{\beta_0}-u_H}^2
    \lesssim_k\frac1N+\frac{N^{-1-\vartheta}}{\lambda^2\sigma_N^2}
    \lesssim_k N^{-1}+N^{-1/2-\vartheta}
    =o(\rho_N).
\end{align}
Since $N\rho_N\gtrsim_kN^{(1-\vartheta)/2}$, combining \cref{eq:tpca-small-gradient-coordinate-count,eq:tpca-sign-rounding-error-bound} with \cref{eq:tpca-sign-rounding-hamming-decomposition} and taking a union bound, we conclude that
\begin{align*}
    \Prb\left[\frac{1}{N}d_H\bigl(\sgn(\Xbar_{T_{\mathrm{av}}}),\sgn(u_H)\bigr)\le\rho_N\right]
    \ge1-\exp(-c_kN^{(1-\vartheta)/2}),
\end{align*}
which proves the claim for odd $k$.

For even $k$, write $A_{\mathrm{sig}}=\lambda\gamma_N(\plant(\plant)^\top-I/N)$. By \cref{avg-discrete-hess-snr,eq:tpca-averaging-normalized-errors,eq:tpca-averaging-error-scale}, with probability at least $1-\exp(-c_kN)$,
\begin{align}
\label{eq:tpca-hessian-perturbation-bound}
    \frac{\norm{\Sigmabar_{T_{\mathrm{av}}}-(I+\beta_0A_{\mathrm{sig}})}_{\mathrm{op}}}{\beta_0\lambda\gamma_N}
    &\le\frac{\sqrt N\,\eps_N+\norm{W_N}_{\mathrm{op}}}{\lambda\gamma_N}
    \lesssim_k N^{-\vartheta}+\frac{N^{\tfrac{k-2}{4}}}{\lambda}.
\end{align}
The matrix $I+\beta_0A_{\mathrm{sig}}$ has top eigenvector $\plant$ and eigengap $\beta_0\lambda\gamma_N$. Applying the Davis--Kahan theorem~\citep[Theorem~4.5.5]{vershynin2018high} to $\Sigmabar_{T_{\mathrm{av}}}$ and $I+\beta_0A_{\mathrm{sig}}$, and using \cref{eq:tpca-hessian-perturbation-bound}, we obtain
\begin{align*}
    \min_{s\in\{\pm1\}}\norm{\widehat v-s\plant}
    &\lesssim\frac{\norm{\Sigmabar_{T_{\mathrm{av}}}-(I+\beta_0A_{\mathrm{sig}})}_{\mathrm{op}}}{\beta_0\lambda\gamma_N}
    \lesssim_k N^{-\vartheta}+\frac{N^{\tfrac{k-2}{4}}}{\lambda}.
\end{align*}
Fix $s\in\{\pm1\}$. If $\sgn(\widehat v_i)\ne s\xopt_i$, then
\begin{align*}
    |\widehat v_i-s\plant_i|
    &=|\widehat v_i|+|\plant_i|
    \ge\frac{1}{\sqrt N}.
\end{align*}
Squaring and summing over $i\in[N]$, we obtain
\begin{align*}
    \frac{1}{N}d_H\bigl(\sgn(\widehat v),s\xopt\bigr)
    &=\frac{1}{N}\sum_{i=1}^N\ind{\sgn(\widehat v_i)\ne s\xopt_i}\\
    &\le\sum_{i=1}^N|\widehat v_i-s\plant_i|^2
        \ind{\sgn(\widehat v_i)\ne s\xopt_i}\\
    &\le\norm{\widehat v-s\plant}^2.
\end{align*}
Consequently,
\begin{align*}
    1-\bigl|m(\sgn(\widehat v))\bigr|
    &=\frac{2}{N}\min_{s\in\{\pm1\}}d_H\bigl(\sgn(\widehat v),s\xopt\bigr)\\
    &\le2\min_{s\in\{\pm1\}}\norm{\widehat v-s\plant}^2
    \lesssim_k \frac{N^{\tfrac{k-2}{2}}}{\lambda^2}+N^{-2\vartheta}.
\end{align*}
A union bound proves the required claim for even $k$.
\end{proof}

%% file: strong_recovery_odd.tex
\section{Strong Recovery at the Algorithmic Threshold for Odd \texorpdfstring{$k$}{k}}
\label{sec:strong-recovery-odd}

Throughout this section, we assume that $k$ is odd. We use $(X_t)_{t\ge0}$ to denote the trajectory of Glauber dynamics for the tensor PCA Hamiltonian $H$ at a constant inverse temperature $\beta > 0$, and assume that the signal strength satisfies $\lambda\ge C_{\beta,k}N^\gamma$, where $\gamma\ge\tfrac{k-2}{4}$ and $C_{\beta,k}$ is a sufficiently large constant.

Our goal in this section is to prove the following theorem, which is a formal version of \cref{thm:glauber-odd-k-informal}. 

\begin{theorem}[Persistent Strong Recovery above the Algorithmic Threshold]
\label{thm:persistent-strong-recovery-algorithmic-threshold}
Let $k\ge3$ be odd, let $\beta>0$, and assume $\lambda\ge C_{\beta,k}N^{\tfrac{k-2}{4}}$, where $C_{\beta,k}$ is a sufficiently large constant. Fix $\vartheta\in(0,1)$ and $\eta\in(0,\nicefrac{1}{2})$, and let $(X_t)_{t\ge0}$ denote the trajectory of the second phase of \cref{alg:averaged-annealed-glauber}, initialized at $\tilX_0\sim\pi$ with inverse temperatures $\beta_0$ and $\beta$ for the two respective phases, where $\beta_0\asymp_k\lambda^{-1}N^{-\nicefrac{k}{2}-\vartheta}$ and $T_{\mathrm{av}}\asymp_kN^{2k+4\vartheta}$. Set $T_{\mathrm{str}}\doteq\log N$. Then, for all sufficiently large $N$ and every deterministic $T$ satisfying $T_{\mathrm{str}}\le T\le\poly(N)$,
\begin{align*}
    \Prb\left[\inf_{t\in[T_{\mathrm{str}},T]}m(X_t)\ge1-N^{-\eta}\right]
    \ge1-\exp\left(-c_{\beta,k}N^{\min\{(1-\vartheta)/2,\,1-2\eta\}}\right).
\end{align*}
\end{theorem}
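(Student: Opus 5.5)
The plan follows the outline of \cref{subsec:sketch-reverse-planting}. We split the second phase into three stages: a \emph{warm-start stage}, an \emph{escape stage} ending when the correlation first reaches $Cr_N$ with $r_N\doteq\lambda^{-1/(k-1)}$, and a \emph{strong recovery stage}.

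\textbf{Warm-start stage.} Combine \cref{thm:tpca-averaged-differential-recovery} (odd case) with \cref{cor:avg-discrete-grad-correlation}. With the stated $\beta_0,T_{\mathrm{av}}$, the initialization $X_0=\sgn(\Xbar_{T_{\mathrm{av}}})$ lies in $\cB_H(\sgn(u_H),\rho_N)$ with $\rho_N=\min\{N^{-\vartheta},\lambda^2\sigma_N^2N^{-\vartheta/2}\}$. Since $m(\sgn(u_H))\asymp_k\min\{1,\lambda N^{-(k-1)/4}\}$, the perturbation $2\rho_N$ is $o(m(\sgn(u_H)))$, so $m(X_0)\ge c_k\min\{1,\lambda N^{-(k-1)/4}\}$. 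This gives $\beta k\lambda m(X_0)^{k-2}\ge C'$ for a large $C'$ once $C_{\beta,k}$ is large. The failure probability is $\exp(-c_kN^{(1-\vartheta)/2})$, which accounts for the first exponent in the statement. We intersect with $\cE=\{\norm{H_0}_{\cD^2}\le C_kN\}$ from \cref{lem:kspin-sobolev-norm-concentration}.

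\textbf{Strong recovery stage.} This is deterministic once the dynamics enters the critical regime. Apply Dynkin's formula (\cref{lem:single-site-dynkin}) to $m$. On $\cE$ the drift is $\cL_{H,\beta}m(x)=N^{-1}\dotp{\xopt}{\tanh(\beta\dgrad H(x))}-m(x)$. Since $\partial_i(N\lambda m^k)=k\lambda m^{k-1}\xopt_i+O(\lambda m^{k-2}/N)$ and $\abs{\partial_iH_0}\lesssim 1$ typically, we bound
\[
N^{-1}\dotp{\xopt}{\tanh(\beta\dgrad H)}\ge \tanh(\beta k\lambda m^{k-1})-\text{(error)}.
\]
This requires care because $\tanh$ saturates. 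We use that $\abs{\{i:\abs{\partial_iH_0}>s\}}\lesssim N/s^2$, which follows from $\norm{\dgrad H_0}\lesssim\sqrt N$. The resulting bound is $\frac{d}{dt}\E m\ge 1-m-O(N^{-1}\text{ fraction error})$ once $m\ge Cr_N$. This drives $1-m$ down to $N^{-\eta}$ within time $O(\log N)$. \Cref{lem:dynkin-martingale-concentration} with $L=1/N$ controls the martingale. Fluctuations over a unit-time window are then $\lesssim N^{-1/2+o(1)}$, and $N^{-\eta}\gg N^{-1/2}$ gives the probability bound $\exp(-cN^{1-2\eta})$. A union bound over $\poly(N)$ unit windows yields persistence up to time $T$. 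It also shows that $m$ never falls below $Cr_N$ again once it exceeds $2Cr_N$.

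\textbf{Escape stage (main obstacle).} Here we must show $m_t$ reaches $2Cr_N$ within $O(1)$ time. We first apply the gauge transform $x\mapsto Dx$, $D=\diag(\xopt)$, and use the symmetry of $G$ to reduce to an averaged $\xopt$ when proving the null estimate. Next we define the null law, which is the same procedure with $\lambda=0$. It starts from a point in $\cB_H(\sgn(\zeta),\rho_N)$, and $\zeta$ is independent of the higher Fourier levels of $H_0$. Under this law, Hoeffding's inequality conditional on the trajectory bounds the null drift by $N^{-1/2+o(1)}$ uniformly over a time grid.

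To transfer this bound to the true dynamics, we take the tilts
\[
a_i(z,t)=\frac{q_H(z\mid i,X)}{q_{H_0}(z\mid i,X)}
\]
in \cref{lem:single-site-change-measure}. This gives adaptive contiguity: the log-likelihood ratio stopped at a stopping time $\tau$ is controlled by $\int_0^\tau N(\beta\lambda m^{k-1})^2\,dt$ plus the cost of the initial-law change. The initial-law change also needs a Gaussian density ratio for the shift $\lambda\sigma_N^2\xopt$ in $\zeta$. Its cost is $\approx N\lambda^2\sigma_N^2\asymp\lambda^2N^{-(k-3)/2}$, which is too large for naive contiguity. I expect this to be the crux. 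My first attempt is to couple the initialization directly: condition on $u_H$, treat $H_0-\dotp{\zeta}{\cdot}$ as independent, and apply change of measure only to the dynamics given that fixed start, with the start chosen from the Hamming ball.

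We then stop at the failure time $\tau$, the first time the null drift exceeds half the signal drift $\beta k\lambda m^{k-1}$, or $m$ hits $2Cr_N$. Before $\tau$ the ODE comparison $\dot m\ge\tfrac12\beta k\lambda m^{k-1}-m$, combined with martingale bounds, forces the hitting time to be $O(1)$. The accumulated signal cost is then $\lesssim N\lambda^2(Cr_N)^{2k-2}=O(N)$, but it multiplies a null-law probability of order $\exp(-N^{1-o(1)}\cdot)$. Making this exponent win is the main quantitative difficulty, and I would balance it by tightening the Hoeffding threshold, choosing it proportional to the signal drift.
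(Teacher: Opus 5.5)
Your warm-start and post-critical stages are sound and match the paper: the warm start combines \cref{cor:avg-discrete-grad-correlation} with \cref{thm:tpca-averaged-differential-recovery}, and the post-critical stage follows the argument of \cref{thm:strong-recovery-after-critical-correlation}. The escape stage, however, is left open, and the quantitative plan you sketch for it cannot work.

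\textbf{The uniform cost bound is fatal.} You bound the accumulated signal cost by $N\lambda^2(Cr_N)^{2k-2}=O(N)$. Under the null law, the event that the null drift exceeds a constant fraction of the signal drift at correlation level $m$ has probability only about $\exp(-cN\lambda^2m^{2k-2})$. Near the start, $m\approx a_0\asymp\lambda\sigma_N$, this exponent equals $N(a_0/r_N)^{2k-2}\ll N$; it is $\asymp\sqrt N$ at $\lambda\asymp N^{(k-2)/4}$. So $e^{O(N)}\,\Q(E)^{1/2}$ is vacuous, however you tune the Hoeffding threshold.

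\textbf{The missing idea is a level-adaptive cost bound.} The accumulated cost must be tied to the level $m_\tau$ reached at the failure time. Before $\tau$, the planted drift of $m$ is at least $c\lambda m_t^{k-1}$, so $c\lambda\int_0^\tau m_t^{k-1}\,dt\le m_\tau-m_0-M_\tau$. Together with $m_t\le 2m_\tau$ for $t\le\tau$, this gives $\int_0^\tau N\lambda^2m_t^{2k-2}\,dt\lesssim N\lambda m_\tau^{k}/c$. That is smaller than the null exponent $Nc^2\lambda^2m_\tau^{2k-2}$ at the same level by a factor $\asymp(c^3\lambda m_\tau^{k-2})^{-1}$. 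This factor is small precisely because of the escape condition $\lambda a_0^{k-2}\gg1$. Since $m_\tau$ is random, the paper then bounds the full exponent on the failure event by $\frac{N}{16}\sup_{t\le T_\lambda}\abs{(\cL_{\mathrm{null}}m_\phi)(Y_t)}^2-c_3N\lambda(a_0/2)^k$. It integrates this against the exponential-moment form of the null concentration (\cref{lem:null-trajectory-uniform-concentration}), rather than using a fixed-threshold tail bound.

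\textbf{The environment cost is not the obstruction you take it to be.} Half of the deterministic surplus above, $\frac{c_3}{2}N\lambda(a_0/2)^k$, already exceeds $2N\lambda^2\sigma_N^2\asymp Na_0^2$, again because $\lambda a_0^{k-2}$ is a large constant. So the paper simply pays $e^{\frac12N\lambda^2\sigma_N^2}$. It does so after the gauge $x\mapsto\xopt\odot\phi\odot x$, which sends $\sgn(u_H)$ to $\mathbf{1}$, using a null environment in which $\phi$ is uniform and independent of $R$.

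\textbf{Conditioning on $u_H$ breaks your null estimate as stated.} If $\xopt$ is fixed, conditioning on $u_H$ fixes $\zeta$ and leaves no independent signs for Hoeffding. If $\xopt$ is averaged by gauge symmetry, as you propose, its posterior satisfies $\E[\xopt_i\mid u_H]=\tanh(\lambda (u_H)_i)$. The null drift then carries a bias of order $\lambda\sigma_N$ and is no longer $N^{-1/2+o(1)}$. The route is salvageable: the bias is $O(a_0)$, the signal drift at level $m\ge a_0/2$ is $\gtrsim m\,\lambda(a_0/2)^{k-2}$, and conditioning would avoid the environment cost entirely. But you would have to argue this, and it still needs the level-adaptive cost bound above, which is the real crux.
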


The dynamics exhibits different behaviors, depending on how warm the start is.
From a sufficiently warm start, the dynamics is ``critical'': the drift induced by the signal completely dominates the noise, as we will show in \cref{subsec:boosting-critical-to-strong-recovery}.
However, for $\lambda$ just above the algorithmic threshold, the initialization provided by phase 1 of our algorithm is warm, but not warm enough for the easy dynamics to kick in. 
This ``subcritical'' regime is the topic of \cref{subsec:critical-recovery-warm-start,subsec:average-gradient-gauge,subsec:reverse-planting,subsec:critical-recovery-reverse-planting}, where we show that the initialization depends on the noise in a mild way, the signal induces a steady drift which eventually overwhelms the influence of the chaos term.
The complete proof is presented in \cref{subsec:proof-persistent-strong-recovery-algorithmic-threshold}.

\subsection{Boosting Critical Recovery to Persistent Strong Recovery}
\label{subsec:boosting-critical-to-strong-recovery}

Recall from \cref{subsec:sketch-reverse-planting} that the signal drift dominates both the entropic and disorder drifts, even in the worst case, whenever the correlation exceeds a large constant multiple of the \emph{critical correlation} $\lambda^{-\tfrac{1}{k-1}}$. Indeed, the signal drift is of order $\lambda m(X_t)^{k-1}$, whereas the disorder drift is of order one uniformly over the hypercube. We formalize this worst-case argument in \cref{thm:strong-recovery-after-critical-correlation} below, which shows that, once the process has attained critical correlation, running Glauber dynamics for an additional $\Theta(\log N)$ time suffices to achieve $1-\poly(\nicefrac{1}{N})$ correlation with high probability. Beyond that, the correlation stays above this level throughout any $\poly(N)$ time horizon.

\begin{theorem}[Persistent Strong Recovery after Critical Correlation]
\label{thm:strong-recovery-after-critical-correlation}
Fix $\eta,s\in(0,\nicefrac{1}{2})$, define $\bar r_N\doteq N^{-s}$, and assume that $\lambda\bar r_N^{k-1}\ge C_{\beta,k}$. Define the stopping time
\begin{align*}
    \tau_{C\bar r_N}\doteq\inf\{t\ge0:m(X_t)\ge C\bar r_N\},
\end{align*}
where $C\geq 1$ is a constant, and set $T_{\mathrm{str}}\doteq1+2\eta\log N$. Then, the following holds for all sufficiently large $N$, every deterministic $T$ satisfying $T_{\mathrm{str}}\le T\le\poly(N)$, and any initial state $X_0$, possibly depending on $H$,
\begin{align*}
    \Prb\left[\inf_{t\in[T_{\mathrm{str}},T]}m(X_{\tau_{C\bar r_N}+t})\ge1-N^{-\eta}\right]
    \ge1-\exp(-c_kN)-2\exp\left(-cN\min\{\bar r_N^2,N^{-2\eta}\}\right).
\end{align*}
\end{theorem}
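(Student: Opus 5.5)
Throughout, I would work on the event $\cE=\{\norm{H_0}_{\cD^2}\le C_kN\}$, which has probability at least $1-\exp(-c_kN)$ by \cref{lem:kspin-sobolev-norm-concentration}. On $\cE$ every estimate is uniform over the hypercube, so the dependence of $X_0$ on $H$ is irrelevant. The plan is to apply Dynkin's formula (\cref{lem:single-site-dynkin}) to $f=m$ along the trajectory restarted at the stopping time $\tau\doteq\tau_{C\bar r_N}$. By the strong Markov property, the shifted process is again Glauber dynamics for $H$, started from the $H$-dependent state $X_\tau$. The generator decomposition \cref{eq:glauber-generator-srw-decomposition} gives
\begin{align*}
    \cL_{H,\beta}m(x)=-m(x)+\frac1N\sum_{i=1}^N\xopt_i\tanh\bigl(\beta\partial_iH(x)\bigr).
\end{align*}
I would write $\partial_iH(x)=\lambda k m(x)^{k-1}\xopt_i+e_i(x)+\partial_iH_0(x)$. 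Here $e_i$ is the correction from the discrete derivative of $m^k$, and it is of order $\lambda N^{-1}$, using $\partial_i m=\xopt_i/N$ and expanding $m^k$ in the flipped coordinate. The disorder part satisfies $\norm{\dgrad H_0(x)}\le C_k\sqrt N$ on $\cE$. Hence all but at most $N/M^2$ coordinates have $|\partial_iH_0|\le C_kM$.

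The key drift estimate is the following. Suppose $m(x)\ge \bar r_N$ and $\lambda\bar r_N^{k-1}\ge C_{\beta,k}$. Then on the good coordinates $\beta\xopt_i\partial_iH(x)\ge \tfrac12\beta k\lambda m^{k-1}$, so $\xopt_i\tanh(\beta\partial_iH)\ge 1-2e^{-\beta k\lambda m^{k-1}}$. Treating the bad coordinates as $-1$ gives
\begin{align*}
    \cL_{H,\beta}m(x)\ge 1-m(x)-2e^{-\beta k\lambda\bar r_N^{k-1}}-2N^{-1}\#\{i:|\partial_iH_0|>C_kM\}.
\end{align*}
I would choose $M$ so that the error term is at most a small constant $\epsilon$ in the early phase. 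This yields $\tfrac{d}{dt}m\ge (1-\epsilon)-m$, so from $m\ge \bar r_N$ the correlation reaches a constant level, say $1/2$, in $O(1)$ time.

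For the refinement down to $1-N^{-\eta}$, I need the bad-coordinate fraction to be at most $N^{-\eta}/4$. This needs $M\asymp N^{\eta/2}$ together with $\lambda m^{k-1}\gg M$, which holds once $m\ge1/2$ because $\lambda\ge C N^{s(k-1)}$. Here I need $N^{\eta/2}\ll\lambda$, which is fine since $\lambda\bar r_N^{k-1}\ge C_{\beta,k}$ forces $\lambda$ to be polynomially large relative to constants; I would double-check this against the exponent constraints $\eta,s<1/2$. With this choice the ODE comparison $\tfrac d{dt}(1-m)\le -(1-m)+N^{-\eta}/2$ gives $1-m\le N^{-\eta}$ after time $1+2\eta\log N$, which explains $T_{\mathrm{str}}$. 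As long as $m$ stays above $\bar r_N$, the same inequality also keeps $1-m\le N^{-\eta}$ for all later times, so a barrier argument gives persistence.

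The stochastic part uses \cref{lem:dynkin-martingale-concentration} with $f=m$, $B=2$ and $L=1/N$. To handle the full horizon $T\le\poly(N)$, I would not bound the martingale over the whole interval $[0,T]$. Instead I would partition it into unit-length windows and, on each window, show that the increment $\sup_{t\le1}|M_{t_0+t}-M_{t_0}|$ is at most $r=\tfrac14\min\{\bar r_N,N^{-\eta}\}$. Each window fails with probability at most $2\exp(-cN r^2)$, and a union bound over $\poly(N)$ windows is absorbed into the constant. The main obstacle is making the drift-plus-martingale comparison rigorous. Within each window, the integrated deterministic drift must dominate the martingale fluctuation both in the growth phase, where the required level is $\bar r_N$, and in the maintenance phase, where it is $N^{-\eta}$. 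The comparison must also apply simultaneously to all windows, starting from the random time $\tau$; this is handled by the strong Markov property and the fact that the estimates are uniform on $\cE$.
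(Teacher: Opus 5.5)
Your outline follows the paper's proof closely: a strong Markov restart at $\tau_{C\bar r_N}$, Dynkin's formula for $m$, and a drift bound that is uniform over $\Cube$ on the event $\max_x\norm{\dgrad H_0(x)}^2\le C_1N$. That bound comes from counting the coordinates where the disorder field beats the signal field. Unit-window martingale bounds are then combined by a union bound over $\lceil T\rceil$ windows. However, one step fails as you justify it. With only this $\ell^2$ control, the bad-coordinate fraction at $m\asymp1$ is bounded only by $O(\lambda^{-2})$, so reaching $1-N^{-\eta}$ genuinely requires $\lambda^2\gg N^{\eta}$. This does \emph{not} follow from $\lambda\bar r_N^{k-1}\ge C_{\beta,k}$ together with $s,\eta<\tfrac12$. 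That hypothesis only gives $\lambda\gtrsim N^{s(k-1)}$, and $s(k-1)<\eta/2$ is allowed (e.g.\ $k=3$, $s=0.05$, $\eta=0.45$), so your proposed double-check would fail. The paper closes this with the standing assumption of the section, $\lambda\ge C_{\beta,k}N^{\gamma}$ with $\gamma\ge\tfrac{k-2}{4}$. This gives $\lambda^{-2}\le N^{-1/2}\ll N^{-\eta}$, and you need to invoke it explicitly. A smaller slip of the same kind concerns your bound $e_i=O(\lambda/N)$: it is true, but it is not dominated by $\lambda m^{k-1}$ near $m\approx\bar r_N$ once $s(k-1)\ge1$, which is possible for $k\ge5$. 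Use instead the relative bound $|e_i|\lesssim_k\lambda m^{k-2}/N$, valid since $m\gtrsim N^{-1/2}\gg1/N$. Alternatively, use the paper's mean-value estimate $N\lambda\xopt_i\partial_i(m^k)(x)\ge k\lambda(m(x)/2)^{k-1}$ for $m(x)\ge4/N$.

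You leave the drift-versus-martingale comparison as the main obstacle; the paper resolves it through Duhamel's formula. Writing $\cL_{H,\beta}m=1-m-2b$, it obtains
\begin{align*}
    1-m(X_t)=e^{-t}\bigl(1-m(X_0)\bigr)+\int_0^te^{-(t-s)}\,2b(X_s)\,ds+\int_0^te^{-(t-s)}M_s^m\,ds-M_t^m .
\end{align*}
The window bounds give $\abs{M_t^m-M_s^m}\le(t-s+3)\eps_N$, so the last two terms are at most $7\eps_N$ uniformly on $[0,T]$. This holds even though $M_T^m$ itself may be of order $\sqrt{T/N}$. The paper pairs this with the single estimate $b(x)\le C_k\lambda^{-2}m(x)^{2-2k}+\exp(-c_k\beta\lambda m(x)^{k-1})$, which comes from a threshold $\tfrac{c_k}{2}\lambda m^{k-1}$ rather than your phase-dependent $M$. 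A barrier argument then keeps $m\ge C\bar r_N/2$, gives $m>\tfrac13$ on $[1,T]$, and yields a deficit below $N^{-\eta}$ on $[T_{\mathrm{str}},T]$. A window-by-window induction on $D_j=1-m(X_j)$ of the form $D_{j+1}\le e^{-1}D_j+\sup 2b+2r$ would also work. In either bookkeeping, the window level must be a small multiple of $\min\{\bar r_N,N^{-\eta}\}$; the paper uses $\tfrac1{32}$. With your $\tfrac14$, the martingale contribution $7\eps_N$ alone already exceeds $N^{-\eta}$.
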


\begin{proof}
    Write $q_N\doteq N^{-\eta}$ and $\eps_N\doteq\tfrac{1}{32}\min\{\bar r_N,q_N\}$. Throughout the proof, we assume that $N$ is sufficiently large and $C_{\beta, k}$ is a sufficiently large constant. By the strong Markov property~\citep[Chapter~4]{ethier1986markov} applied at the stopping time $\tau_{C\bar r_N}$, it suffices to consider an initial state $X_0$ with $m(X_0)\ge C\bar r_N$, possibly depending on $H$.

Since $\dgrad m=\xopt/N$ and $\tanh$ is odd, we infer from \cref{eq:srw-generator,eq:glauber-generator-srw-decomposition} that
\begin{align}
\label{eq:strong-recovery-overlap-drift}
    \cL_{H,\beta}m(x)
    &=-m(x)+\frac1N\sum_{i=1}^N\tanh\bigl(\beta\xopt_i\partial_iH(x)\bigr)
      =1-m(x)-2b(x),\\
    b(x)&\doteq\frac{1}{2N}\sum_{i=1}^N\Bigl[1-\tanh\bigl(\beta\xopt_i\partial_iH(x)\bigr)\Bigr].\nonumber
\end{align}
Let $M_t^m$ denote the Dynkin martingale associated with $m$. By \cref{lem:single-site-dynkin},
\begin{align}
\label{eq:strong-recovery-dynkin}
    M_t^m=m(X_t)-m(X_0)-\int_0^t\cL_{H,\beta}m(X_s)\,ds.
\end{align}
Consider the events
\begin{align*}
    \cE\doteq\left\{\max_{x\in\Cube}\norm{\dgrad H_0(x)}^2\le C_1N\right\},
    \qquad
    \calM_j\doteq\left\{\sup_{j\le t\le j+1}\bigl|M_t^m-M_j^m\bigr|\le\eps_N\right\},
\end{align*}
and let $\calM\doteq\bigcap_{j=0}^{\lceil T\rceil-1}\calM_j$. By \cref{lem:kspin-sobolev-norm-concentration}, we can choose $C_1$ such that $\Prb[\cE]\ge1-\exp(-c_kN)$. Moreover, since $\norm{m}_{\cD^1}\le2$ by \cref{lem:overlap-sobolev-bounds} and $\dgrad m=\xopt/N$, we infer from the Markov property, \cref{lem:dynkin-martingale-concentration}, and a union bound that
\begin{align}
\label{eq:strong-recovery-martingale-bound}
    \Prb[\calM^c]
    \le\sum_{j=0}^{\lceil T\rceil-1}\Prb[\calM_j^c]
    \le2\lceil T\rceil\exp\left(-cN\min\{\bar r_N^2,q_N^2\}\right).
\end{align}
Henceforth, we work on the event $\cE\cap\calM$.

We first express the deficit $1-m(X_t)$ in terms of $b(X_t)$. By \cref{eq:strong-recovery-overlap-drift,eq:strong-recovery-dynkin},
\begin{align*}
    1-m(X_t)+M_t^m
    =1-m(X_0)-\int_0^t\bigl[1-m(X_s)-2b(X_s)\bigr]\,ds.
\end{align*}
Since the integrand is bounded and measurable, the right-hand side is Lipschitz in $t$, and thus differentiable almost everywhere. Hence, for almost every $t\ge0$,
\begin{align*}
    \frac{d}{dt}\Bigl[e^t\bigl(1-m(X_t)+M_t^m\bigr)\Bigr]
    &=e^t\bigl(1-m(X_t)+M_t^m\bigr)
      +e^t\,\frac{d}{dt}\bigl(1-m(X_t)+M_t^m\bigr)\\
    &=e^t\bigl(1-m(X_t)+M_t^m\bigr)-e^t\bigl[1-m(X_t)-2b(X_t)\bigr]\\
    &=e^t\bigl[M_t^m+2b(X_t)\bigr].
\end{align*}
Integrating this identity from $0$ to $t$ and using $M_0^m=0$, we obtain
\begin{align*}
    e^t\bigl(1-m(X_t)+M_t^m\bigr)
    =1-m(X_0)+\int_0^te^s\bigl[M_s^m+2b(X_s)\bigr]\,ds.
\end{align*}
Multiplying both sides by $e^{-t}$ and rearranging, we conclude that
\begin{align*}
    1-m(X_t)
    =e^{-t}\bigl(1-m(X_0)\bigr)
     +\int_0^te^{-(t-s)}\,2b(X_s)\,ds
     +\int_0^te^{-(t-s)}M_s^m\,ds-M_t^m.
\end{align*}
Since we work on $\calM$, we infer from the triangle inequality that
\begin{align*}
    \bigl|M_t^m-M_s^m\bigr|
    &\le\bigl|M_t^m-M_{\lfloor t\rfloor}^m\bigr|
      +\sum_{l=\lfloor s\rfloor+1}^{\lfloor t\rfloor}\bigl|M_l^m-M_{l-1}^m\bigr|
      +\bigl|M_s^m-M_{\lfloor s\rfloor}^m\bigr|\\
    &\le\bigl(\lfloor t\rfloor-\lfloor s\rfloor+2\bigr)\eps_N
    \le(t-s+3)\eps_N,
    \qquad0\le s\le t\le T.
\end{align*}
Since $M_0^m=0$ and $\int_0^te^{-(t-s)}\,ds=1-e^{-t}$, the last two terms satisfy
\begin{align*}
    \left|\int_0^te^{-(t-s)}M_s^m\,ds-M_t^m\right|
    &=\left|\int_0^te^{-(t-s)}\bigl(M_s^m-M_t^m\bigr)\,ds-e^{-t}\bigl(M_t^m-M_0^m\bigr)\right|\\
    &\le\eps_N\int_0^te^{-(t-s)}(t-s+3)\,ds+\eps_Ne^{-t}(t+3)
    \le7\eps_N,
    \qquad0\le t\le T,
\end{align*}
where the last inequality uses $\int_0^\infty e^{-u}(u+3)\,du=4$ and $e^{-t}(t+3)\le3$. Consequently,
\begin{align}
\label{eq:strong-recovery-deficit-bound}
    1-m(X_t)
    \le e^{-t}\bigl(1-m(X_0)\bigr)+\int_0^te^{-(t-s)}\,2b(X_s)\,ds+7\eps_N,
    \qquad0\le t\le T.
\end{align}

We next bound $b(x)$ in terms of the correlation $m(x)$. Since $H=N\lambda m^k+H_0$, we have $\xopt_i\partial_iH(x)=N\lambda\xopt_i\partial_i(m^k)(x)+\xopt_i\partial_iH_0(x)$ for every $i\in[N]$. Define $m_{-i}(x)\doteq N^{-1}\sum_{j\ne i}\xopt_jx_j$. If $m(x)\ge4/N$, then $m_{-i}(x)\pm1/N\ge m(x)-2/N\ge m(x)/2$. Hence, by the definition of the discrete partial derivative and the mean value theorem,
\begin{align}
\label{eq:strong-recovery-signal-field}
    N\lambda\xopt_i\partial_i(m^k)(x)
    &=\frac{N\lambda}{2}\left[
        \left(m_{-i}(x)+\frac1N\right)^k
        -\left(m_{-i}(x)-\frac1N\right)^k
    \right]\nonumber\\
    &\ge k\lambda\left(\frac{m(x)}2\right)^{k-1}
    \ge c_k\lambda m(x)^{k-1}.
\end{align}
Let $\cI(x)\doteq\{i\in[N]:\xopt_i\partial_iH_0(x)\le-\tfrac{c_k}{2}\lambda m(x)^{k-1}\}$. Since we work on $\cE$,
\begin{align*}
    |\cI(x)|\,\frac{c_k^2\lambda^2m(x)^{2k-2}}{4}
    \le\sum_{i\in\cI(x)}\bigl(\partial_iH_0(x)\bigr)^2
    \le\norm{\dgrad H_0(x)}^2
    \le C_1N.
\end{align*}
Moreover, for every $i\notin\cI(x)$, we infer from \cref{eq:strong-recovery-signal-field} that $\xopt_i\partial_iH(x)>\tfrac{c_k}{2}\lambda m(x)^{k-1}$, and therefore
\begin{align*}
    \frac{1-\tanh\bigl(\beta\xopt_i\partial_iH(x)\bigr)}{2}
    =\frac{1}{1+\exp\bigl(2\beta\xopt_i\partial_iH(x)\bigr)}
    \le\exp\bigl(-c_k\beta\lambda m(x)^{k-1}\bigr).
\end{align*}
Bounding the same quantity by $1$ for $i\in\cI(x)$, we conclude that the following holds whenever $m(x)\ge4/N$:
\begin{align}
\label{eq:strong-recovery-update-error}
    b(x)
    &\le\frac{|\cI(x)|}{N}+\exp\bigl(-c_k\beta\lambda m(x)^{k-1}\bigr)\nonumber\\
    &\le\frac{C_k}{\lambda^2m(x)^{2k-2}}
        +\exp\bigl(-c_k\beta\lambda m(x)^{k-1}\bigr).
\end{align}
We observe that the right-hand side of \cref{eq:strong-recovery-update-error} is nonincreasing in $m(x)$.

We now show that the correlation stays above $C\bar r_N/2$ throughout $[0,T]$. Since $s<\nicefrac{1}{2}$, we have $\bar r_N\ge N^{-\nicefrac{1}{2}}$, and hence $C\bar r_N/2\ge4/N$. As the right-hand side of \cref{eq:strong-recovery-update-error} is nonincreasing in $m(x)$ and $\lambda \bar r_N^{k-1}\ge C_{\beta,k}$, we obtain the following whenever $m(x)\ge C\bar r_N/2$ after adjusting constants appropriately:
\begin{align}
\label{eq:strong-recovery-crude-rate}
    2b(x)
    \le\frac{C_k}{C^{2k-2}C_{\beta,k}^2}
      +2\exp\left(-c_k\beta C^{k-1}C_{\beta,k}\right)
    \le\frac{1}{16}.
\end{align}
Let $\rho\doteq\inf\{t\ge0:m(X_t)\le C\bar r_N/2\}$. For $s<\rho$, we have $m(X_s)>C\bar r_N/2$, and hence $2b(X_s)\le\nicefrac{1}{16}$ by \cref{eq:strong-recovery-crude-rate}. Consequently, \cref{eq:strong-recovery-deficit-bound} implies
\begin{align}
\label{eq:strong-recovery-crude-deficit}
    1-m(X_t)
    \le e^{-t}\bigl(1-m(X_0)\bigr)+\bigl(1-e^{-t}\bigr)\frac{1}{16}+7\eps_N,
    \qquad0\le t\le\rho\wedge T.
\end{align}
Since $1-m(X_0)\le1-C\bar r_N$ and $\nicefrac{1}{16}\le1-C\bar r_N$, it follows that
\begin{align*}
    1-m(X_t)\le1-C\bar r_N+7\eps_N<1-\frac{C\bar r_N}{2},
    \qquad0\le t\le\rho\wedge T,
\end{align*}
where the last inequality uses $7\eps_N\le7\bar r_N/32<C\bar r_N/2$. If $\rho\le T$, taking $t=\rho$ contradicts $m(X_\rho)\le C\bar r_N/2$. We conclude that $\rho>T$, so \cref{eq:strong-recovery-crude-deficit} holds for every $t\in[0,T]$.

We next show that the deficit falls below $q_N$ by time $T_{\mathrm{str}}$ and stays below $q_N$ up to time $T$. Since $\lambda^{-2}\le C_{\beta,k}^{-2}N^{-2\gamma}$ with $2\gamma\ge\tfrac{k-2}{2}\ge\nicefrac{1}{2}>\eta$, and $\exp(-c_k\beta\lambda)=o(N^{-\eta})$, we infer from \cref{eq:strong-recovery-update-error} that, whenever $m(x)\ge\nicefrac{1}{3}$,
\begin{align}
\label{eq:strong-recovery-sharp-rate}
    2b(x)\le\frac{C_k}{\lambda^2}+2\exp(-c_k\beta\lambda)\le\frac{q_N}{64}.
\end{align}
By \cref{eq:strong-recovery-crude-deficit},
\begin{align*}
    1-m(X_t)\le e^{-1}+\frac{1}{16}+7\eps_N<\frac23,
    \qquad1\le t\le T,
\end{align*}
where we used $\eps_N=o_N(1)$. Consequently, $m(X_t)>\nicefrac{1}{3}$ for every $t\in[1,T]$. Splitting the integral in \cref{eq:strong-recovery-deficit-bound} at $s=1$, we obtain
\begin{align}
\label{eq:strong-recovery-persistence}
    1-m(X_t)
    &\le e^{-t}\bigl(1-m(X_0)\bigr)
      +\int_0^1e^{-(t-s)}\,2b(X_s)\,ds\nonumber\\
    &\qquad+\int_1^te^{-(t-s)}\,2b(X_s)\,ds
      +7\eps_N\nonumber\\
    &\le e^{-t}
      +\frac{1}{16}\int_0^1e^{-(t-s)}\,ds
      +\frac{q_N}{64}\int_1^te^{-(t-s)}\,ds
      +\frac{7q_N}{32}\nonumber\\
    &\le q_N^2+\frac{q_N^2}{16}+\frac{q_N}{64}+\frac{7q_N}{32}
    <q_N,
    \qquad T_{\mathrm{str}}\le t\le T.
\end{align}
Here, the second inequality uses \cref{eq:strong-recovery-crude-rate,eq:strong-recovery-sharp-rate}.

By \cref{eq:strong-recovery-persistence}, we have $m(X_t)>1-q_N$ for every $t\in[T_{\mathrm{str}},T]$ on $\cE\cap\calM$. Taking a union bound and using \cref{eq:strong-recovery-martingale-bound}, we obtain
\begin{align*}
    \Prb\left[\inf_{t\in[T_{\mathrm{str}},T]}m(X_t)<1-q_N\right]
    \le\Prb[\cE^c]+\Prb[\calM^c]
    \le\exp(-c_kN)+2\lceil T\rceil\exp\left(-cN\min\{\bar r_N^2,q_N^2\}\right).
\end{align*}
Since $T\le\poly(N)$ and $N\min\{\bar r_N^2,q_N^2\}/\log N\to\infty$ for $s,\eta<\nicefrac{1}{2}$, the proof is concluded by absorbing the factor $\lceil T\rceil$ after adjusting constants appropriately.
\end{proof}

A direct application of \cref{thm:tpca-averaged-differential-recovery,thm:strong-recovery-after-critical-correlation} establishes persistent strong recovery for \cref{alg:averaged-annealed-glauber} in the regime $\lambda\gtrsim N^{\tfrac{k-2}{4}+\tfrac{1}{4k}}$, which is above the algorithmic threshold by a factor of $N^{\tfrac{1}{4k}}$. Specifically, we prove the following.

\begin{corollary}[Persistent Strong Recovery Slightly above the Algorithmic Threshold]
\label{cor:averaged-glauber-persistent-strong-recovery}

Suppose $\gamma\ge\tfrac{k-2}{4}+\tfrac{1}{4k}$. Fix $\vartheta\in(0,1)$ and $\eta\in(0,\nicefrac{1}{2})$. Let $(X_t)_{t\ge0}$ denote the trajectory of the second phase of \cref{alg:averaged-annealed-glauber}, initialized at $\tilX_0\sim\pi$ with inverse temperatures $\beta_0$ and $\beta$ for the two respective phases, where $\beta_0$ and $T_{\mathrm{av}}$ are chosen as in \cref{thm:tpca-averaged-differential-recovery} for this choice of $\vartheta$. Set $T_{\mathrm{str}}\doteq1+2\eta\log N$. Then, for all sufficiently large $N$ and every deterministic $T$ satisfying $T_{\mathrm{str}}\le T\le\poly(N)$,

\begin{align*}
    \Prb\left[\inf_{t\in[T_{\mathrm{str}},T]}m(X_t)\ge1-N^{-\eta}\right]
    \ge1-\exp\left(-c_kN^{\min\{(1-\vartheta)/2,\,1-2\eta\}}\right).
\end{align*}

\end{corollary}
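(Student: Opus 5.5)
The plan is to chain \cref{thm:tpca-averaged-differential-recovery}, \cref{cor:avg-discrete-grad-correlation}, and \cref{thm:strong-recovery-after-critical-correlation}. The second phase starts at $X_0=\sgn(\Xbar_{T_{\mathrm{av}}})$. It therefore suffices to show that, with the stated probability, $m(X_0)\ge C\bar r_N$ for a suitable $\bar r_N=N^{-s}$ with $s<\nicefrac12$ and $\lambda\bar r_N^{k-1}\ge C_{\beta,k}$. Then $\tau_{C\bar r_N}=0$, and \cref{thm:strong-recovery-after-critical-correlation} applies directly to this ($H$-dependent) initial state.

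\textbf{Step 1: the warm start has large correlation.} Since $\gamma\ge\tfrac{k-2}{4}+\tfrac1{4k}$, we have $\lambda N^{-\frac{k-1}{4}}\ge C_{\beta,k}N^{-\frac14+\frac1{4k}}$. By \cref{cor:avg-discrete-grad-correlation}, with probability at least $1-2\exp(-c_k\min\{N,\lambda^2N^{-\frac{k-3}{2}}\})\ge1-\exp(-c_kN^{1/2})$,
\[
m(\sgn(u_H))\ge c_k\min\{1,\lambda N^{-\frac{k-1}{4}}\}.
\]
Next, \cref{thm:tpca-averaged-differential-recovery} gives $N^{-1}d_H(\sgn(\Xbar_{T_{\mathrm{av}}}),\sgn(u_H))\le\min\{N^{-\vartheta},\lambda^2\sigma_N^2N^{-\vartheta/2}\}$ with probability at least $1-\exp(-c_kN^{(1-\vartheta)/2})$. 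Since $m$ changes by at most $2/N$ per flip, this yields
\[
m(X_0)\ge c_k\min\{1,\lambda\sigma_N\}-2\min\{N^{-\vartheta},\lambda^2\sigma_N^2N^{-\vartheta/2}\}.
\]
The error is negligible in both regimes. If $\lambda\sigma_N\le1$, then $\lambda^2\sigma_N^2N^{-\vartheta/2}\le\lambda\sigma_N N^{-\vartheta/2}=o(\lambda\sigma_N)$. Otherwise the error is at most $2N^{-\vartheta}=o(1)$. Hence $m(X_0)\ge \tfrac{c_k}{2}\min\{1,\lambda\sigma_N\}$.

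\textbf{Step 2: choose $\bar r_N$ and verify criticality.} Take $\bar r_N\doteq N^{-s}$ with $s\doteq\tfrac14-\tfrac1{4k}$, so $s<\nicefrac12$, and choose $C=1$. Then $\min\{1,\lambda\sigma_N\}\gtrsim_k C_{\beta,k}N^{-s}$, which gives $m(X_0)\ge\bar r_N$ once $C_{\beta,k}$ is large. Criticality also holds:
\[
\lambda\bar r_N^{k-1}\ge C_{\beta,k}N^{\frac{k-2}{4}+\frac1{4k}-\frac{(k-1)^2}{4k}}.
\]
The exponent is $\tfrac{k(k-2)+1-(k-1)^2}{4k}=0$, so $\lambda\bar r_N^{k-1}\ge C_{\beta,k}$. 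This exact cancellation is precisely why the threshold $\tfrac1{4k}$ appears. For $\lambda$ larger than $N^{\gamma}$ the same choice only improves the estimates. If one wants correlation scaling with $\lambda$, one can instead take $\bar r_N$ as the smaller of $N^{-s}$ and the actual warm-start level, but the fixed choice suffices.

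\textbf{Step 3: conclude.} Condition on the first phase. The initial state $X_0$ is $H$-dependent, which \cref{thm:strong-recovery-after-critical-correlation} explicitly permits. With $\tau_{C\bar r_N}=0$ it gives failure probability at most $\exp(-c_kN)+2\exp(-cN\min\{N^{-2s},N^{-2\eta}\})$. Since $1-2s=\tfrac12+\tfrac1{2k}>\tfrac{1-\vartheta}{2}$, this term is dominated by $\exp(-c_kN^{\min\{(1-\vartheta)/2,1-2\eta\}})$. A union bound with the failure probabilities of Steps 1 and 2, namely $\exp(-c_kN^{1/2})$ and $\exp(-c_kN^{(1-\vartheta)/2})$, then yields the claimed bound.

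The only delicate point is the exponent bookkeeping in Step 2: checking that the Hamming error from ergodic averaging does not erode the warm-start correlation when $\lambda\sigma_N$ is close to $N^{-s}$. This is handled by the $\lambda^2\sigma_N^2N^{-\vartheta/2}$ form of the bound in \cref{thm:tpca-averaged-differential-recovery}.
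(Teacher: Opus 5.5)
Your proposal is correct and follows the paper's proof essentially step for step. The paper also takes $s=\tfrac{k-1}{4k}$ and $C=1$, and lower-bounds $m(X_0)$ by combining \cref{cor:avg-discrete-grad-correlation} with the Hamming bound of \cref{thm:tpca-averaged-differential-recovery}, handling the error term in the same two cases ($\lambda\sigma_N\le1$ versus $\lambda\sigma_N>1$); it then checks $\lambda\bar r_N^{k-1}\ge C_{\beta,k}$ via $\gamma\ge(k-1)s$, and concludes from \cref{thm:strong-recovery-after-critical-correlation} with $\tau_{\bar r_N}=0$, a union bound, and the observation $1-2s=\tfrac{k+1}{2k}>\tfrac{1-\vartheta}{2}$.
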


\begin{proof}

Define $a\doteq\tfrac{k-1}{4k}$ and note that $\gamma-\tfrac{k-1}{4}\ge-a$, so that $\lambda\sigma_N\gtrsim_kC_{\beta,k}N^{-a}$. Combining \cref{thm:tpca-averaged-differential-recovery} with \cref{cor:avg-discrete-grad-correlation}, we obtain the following with probability at least $1-\exp(-c_kN^{(1-\vartheta)/2})$, for sufficiently large $N$ and after choosing $C_{\beta,k}$ sufficiently large:
\begin{align*}
    m(X_0)
    &\ge m(\sgn(u_H))-\frac{2}{N}d_H\bigl(X_0,\sgn(u_H)\bigr)\\
    &\ge c_k\min\{1,\lambda\sigma_N\}-2\min\{N^{-\vartheta},\lambda^2\sigma_N^2N^{-\vartheta/2}\}\\
    &\ge\frac{c_k}{2}\min\{1,\lambda\sigma_N\}
    \ge 2N^{-a}.
\end{align*}
For the third inequality, we use $\lambda^2\sigma_N^2N^{-\vartheta/2}\le\lambda\sigma_NN^{-\vartheta/2}$ when $\lambda\sigma_N\le1$, and $N^{-\vartheta}=o(1)$ when $\lambda\sigma_N>1$.

By construction, $(X_t)_{t\ge0}$ is Glauber dynamics for $H$ at inverse temperature $\beta$, started from $X_0=\sgn(\Xbar_{T_{\mathrm{av}}})$. As \cref{thm:strong-recovery-after-critical-correlation} allows the initial state to depend on $H$, we may apply it to $(X_t)_{t\ge0}$ with $s=a$ and $C=1$. Since $\gamma\ge(k-1)a$, the correlation level $\bar r_N=N^{-a}$ satisfies $\lambda\bar r_N^{k-1}\ge C_{\beta,k}$, and $\tau_{\bar r_N}=0$ on the above event. Finally, since $1-2a=\tfrac{k+1}{2k}>\nicefrac{1}{2}>\tfrac{1-\vartheta}{2}$, the proof is concluded via a union bound after adjusting constants appropriately.
\end{proof}

\subsection{Achieving Critical Recovery above the Algorithmic Threshold from a Sufficiently Warm Start}
\label{subsec:critical-recovery-warm-start}

We now focus our attention on strong recovery in the regime $N^{\tfrac{k-2}{4}}\lesssim\lambda\lesssim N^{\tfrac{k-2}{4}+\tfrac{1}{4k}}$. Specifically, we set $\lambda=C_{\beta,k}N^\gamma$ with $\tfrac{k-2}{4}\le\gamma<\tfrac{k-2}{4}+\tfrac{1}{4k}$, and define the critical correlation parameter as $r_N\doteq\lambda^{-\tfrac{1}{k-1}}\asymp N^{-\tfrac{\gamma}{k-1}}$. In this regime, \cref{cor:avg-discrete-grad-correlation} suggests that the warm start computed by the first phase of \cref{alg:averaged-annealed-glauber} attains only subcritical correlation, of order $\lambda N^{-\tfrac{k-1}{4}}\ll r_N$.

In view of \cref{thm:strong-recovery-after-critical-correlation}, it suffices to show that the second phase of \cref{alg:averaged-annealed-glauber} attains critical correlation in polynomial time. To this end, we prove a stronger guarantee in \cref{thm:critical-recovery-warm-start}, which establishes that, simultaneously for all initializations sufficiently close to the sign-rounded average gradient, Glauber dynamics attains critical correlation within $O(1)$ time.

For $e\in\Cube$, let $(X_t(e))_{t\ge0}$ denote the trajectory of Glauber dynamics for $H$ at inverse temperature $\beta$, initialized at $X_0(e)=e$. Since $k$ is odd, we recall from \cref{avg-discrete-grad-snr} that the average gradient $u_H=\E_\pi[\dgrad H]$ is distributed as $\cN(\lambda\sigma_N^2\xopt,\sigma_N^2I)$, where $\sigma_N^2\asymp_kN^{-\tfrac{k-1}{2}}$.

\begin{theorem}[Critical Recovery from a Sufficiently Warm Start]
\label{thm:critical-recovery-warm-start}
Suppose $\tfrac{k-2}{4}\le\gamma\le\tfrac{k-2}{4}+\tfrac{1}{4k}$. Fix $\vartheta\in(0,\nicefrac{1}{2})$, and let $u_H=\E_\pi[\dgrad H]$. For every $e\in\Cube$, define the stopping time
\begin{align*}
    T_*(e)\doteq\inf\bigl\{t\ge0:m(X_t(e))\ge Cr_N\bigr\}.
\end{align*}
Then, the following holds for all sufficiently large $N$:
\begin{align*}
    \Prb\left[T_*(e)\le C_1N^{-(k-1)\left(\gamma-\tfrac{k-2}{4}\right)}
    \ \ \forall\,e\in\cB_H\bigl(\sgn(u_H),\lambda^2\sigma_N^2N^{-\vartheta}\bigr)\right]
    \ge1-\exp\bigl(-cN\lambda^2\sigma_N^2\bigr),
\end{align*}
where $C_1,c>0$ are constants depending only on $\beta$, $k$ and $C$.
\end{theorem}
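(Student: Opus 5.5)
I will follow the roadmap of \cref{subsec:sketch-reverse-planting}: reduce the disorder drift to a null quantity that is controlled by Hoeffding, then transfer that control to the planted dynamics by a change of measure stopped before the signal becomes large.

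\textbf{Step 1 (gauge and decoupling).} First I would remove the dependence on a fixed $\xopt$. The gauge $D=\diag(\xopt)\in O(N,\Z)$ maps $H$ to a tensor PCA Hamiltonian with planted vector $\mathbf 1$. I would then randomize with an independent uniform sign vector $\epsilon$, acting on both the disorder and the initialization. This allows the analysis to treat $\xopt$ as uniformly random and independent of the null data.

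Next I decompose $H_0(x)=\dotp{\zeta}{x}+H_0^{\ge2}(x)$. By Gaussian conditioning, $\zeta=\E_\pi[xH_0]$ is independent of $H_0^{\ge2}$. Hence the initialization set $\cB_H(\sgn(u_H),\lambda^2\sigma_N^2N^{-\vartheta})$ depends on the disorder only through $\zeta$.

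\textbf{Step 2 (null law and signal-free drift).} I define the null dynamics as Glauber dynamics for $H_0$ started from $e$, with $e$ in the corresponding ball around $\sgn(\zeta)$. Along the null trajectory, the null drift
\[
\tfrac1N\sum_i\xopt_i\bigl[\tanh(\beta\partial_iH_0(X_t))-X_{t,i}\bigr]
\]
is, conditionally on the trajectory, a Rademacher average in the uniform $\xopt$. Hoeffding's inequality then gives a bound of $O(N^{-1/2}\sqrt{\log})$ at each time. I would extend this to the whole window $[0,C_1]$ by a time discretization and a union over the $\poly(N)$ jump times.

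The ball contains exponentially many $e$. To handle this, I would take a union bound over a net of size $\exp(N\lambda^2\sigma_N^2N^{-\vartheta}\log N)$. This entropy is beaten by the failure probability $\exp(-cN\lambda^2\sigma_N^2)$, thanks to the $N^{-\vartheta}$ slack.

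\textbf{Step 3 (adaptive contiguity via \cref{lem:single-site-change-measure}).} To pass from the null to the planted law, I use the tilts $a_i(z,t)=q_H(z\mid i,X_{t^-})/q_{H_0}(z\mid i,X_{t^-})$. Since $\partial_i(N\lambda m^k)\approx k\lambda m^{k-1}\xopt_i$, these tilts are close to $1$ while $m\le Cr_N$. Consequently the log-likelihood ratio up to a stopping time $\tau$ is controlled by a martingale plus
\[
\int_0^\tau N\beta^2\lambda^2 m_t^{2k-2}\,dt.
\]
The initial-law difference, $\sgn(u_H)$ versus $\sgn(\zeta)$, I would handle similarly. Its cost is $\exp(O(N\lambda^2\sigma_N^2))$, via the Gaussian shift $\lambda\sigma_N^2\xopt$ in the law of $\zeta$, and this matches the target probability scale.

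I would stop at the failure time $\tau_f$: the first time before $T_*$ at which the null drift exceeds half the signal drift $\beta k\lambda m_t^{k-1}-m_t$. Before $\tau_f$ the correlation obeys the supercritical ODE $\dot m\gtrsim\lambda m^{k-1}$ from $m_0\asymp\lambda\sigma_N$. This growth reaches $Cr_N$ within time of order $(\lambda m_0^{k-2})^{-1}\asymp N^{-(k-1)(\gamma-\frac{k-2}{4})}$, which gives the stated bound on $T_*$. The accumulated cost $\int N\lambda^2m^{2k-2}dt$ stays of order $N\lambda^2\sigma_N^2$ times a constant. Transferring the Hoeffding failure event, whose null probability is $\exp(-cN\cdot\text{signal}^2)$, then yields a small planted probability for $\tau_f<T_*$.

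\textbf{Main obstacle.} The main obstacle is making this transfer quantitatively consistent. The accumulated signal cost grows as $m_t$ rises, and the Hoeffding bound must beat it at every scale of $m$. I would first try a dyadic peeling in $m\in[2^j m_0,2^{j+1}m_0]$. On each shell I would compare the cost $N\lambda^2m^{2k-2}\Delta t$ with the Hoeffding exponent $N(\lambda m^{k-1})^2$, expecting a constant-factor gain on each shell. The Dynkin martingale fluctuations of $m$ would be controlled by \cref{lem:dynkin-martingale-concentration}.
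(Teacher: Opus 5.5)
Your architecture matches the paper's. You condition on the first Fourier level and pass to a null law in which the planted signs are independent of both the environment and the trajectory. You control the null drift by a Rademacher bound along the null path, and transfer it by the jump-process change of measure, paying an environment cost $\exp(\frac12N\lambda^2\sigma_N^2)$ and a dynamical cost $\int_0^\tau\norm{\dgrad H_{\mathrm{sig}}(Y_t)}^2dt$. Then you stop at the failure time, integrate the power-law inequality, and union bound over the ball. Randomizing $\xopt$ by an independent sign gauge is a legitimate substitute for the average-gradient gauge: after your Gaussian shift $u=\zeta+\lambda\sigma_N^2\xopt$, the map $x\mapsto\sgn(u)\odot x$ turns your null law into the paper's, with $\phi=\sgn(u)\odot\xopt$.

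The step that fails as written is your claim that the accumulated cost stays of order $N\lambda^2\sigma_N^2$. Where the drift lower bound holds up to $\tau$, you get $\int_0^\tau\norm{\dgrad H_{\mathrm{sig}}(Y_t)}^2dt\lesssim N\lambda m_\tau^k/\eta$. This is the true order, since before failure also $\dot m\lesssim\lambda m^{k-1}$. When $m$ reaches $Cr_N$ the cost is therefore $\asymp N\lambda(Cr_N)^k=C^kNr_N$. By contrast, $N\lambda^2\sigma_N^2\asymp Na_0^2\le CNr_Na_0\ll Nr_N$ (when $T_*>0$, so $a_0<Cr_N$). So no single global cost bound can be paid for by the Hoeffding tail at the initial scale. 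The scale-by-scale comparison you list as the main obstacle is the crux, not a refinement.

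It does go through, because the problem is self-similar. At failure level $m$ the cost is $\lesssim N\lambda m^k/\eta$, while the null tail is $\exp(-cN\eta^2\lambda^2m^{2k-2})$. Their ratio is $\lesssim(\eta^3\lambda m^{k-2})^{-1}\le(\eta^3\lambda(a_0/2)^{k-2})^{-1}$, which is small once $C_{\beta,k}$ is large depending on $\beta,k,C$ (both $\eta$ and $\cosh^2(\beta D)$ depend on $C$). The same large quantity $\lambda a_0^{k-2}$ is needed to beat the environment cost. Saying it ``matches the target scale'' is not enough, since $\exp(\frac12N\lambda^2\sigma_N^2)$ is paid up front and the remaining exponent must exceed it by a constant factor.

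The paper avoids peeling altogether. On $\cE_{\mathrm{fail}}$ it combines three facts: $\eta\lambda\int_0^\tau m_t^{k-1}dt\le m_\tau-a_0/2$, $m_t\le 2m_\tau$, and $\eta^2\lambda^2m_\tau^{2k-2}\le\abs{(\cL_{\mathrm{null}}m_\phi)(Y_\tau)}^2$ at $\tau=\tau^{\mathrm{bad}}$. Together these give the pathwise bound $\beta^2\cosh^2(\beta D)\int_0^\tau\norm{\dgrad H_{\mathrm{sig}}(Y_t)}^2dt\le\frac{N}{16}\sup_{t\le T_\lambda}\abs{(\cL_{\mathrm{null}}m_\phi)(Y_t)}^2-c_3N\lambda(a_0/2)^k$. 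A single exponential-moment bound (\cref{lem:null-trajectory-uniform-concentration}) then absorbs the cost at every scale at once, and the margin $\frac{c_3}{2}N\lambda(a_0/2)^k\ge 2N\lambda^2\sigma_N^2$ pays for the environment. Your dyadic version reaches the same conclusion at the price of a $\log N$ union over shells. For that, each shell event must contain the martingale and drift events, so that \cref{rem:adaptive-contiguity-transfer} applies with a deterministic cost bound on that shell.

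Two smaller corrections.
\begin{itemize}
\item With the reparametrization by $u$, the null Hamiltonian in planted coordinates is $\dotp{u}{x}+H_0^{\ge 2}(x)=H_0(x)+N\lambda\sigma_N^2m(x)$. So the tilt must be $q_H/q_{H_0+N\lambda\sigma_N^2m}$, and the signal is $N\lambda m^k-N\lambda\sigma_N^2m$, which is the paper's $H_{\mathrm{sig}}$. With your tilt $q_H/q_{H_0}$, the two laws do not share an environment, and the initial-state comparison is not a likelihood ratio. The extra term is harmless, since $\sigma_N^2\ll m^{k-1}$ for $m\ge a_0/2$.
\item The signal drift is not $\beta k\lambda m^{k-1}$ once $\tanh$ saturates, and the entropic $-m$ already sits inside the null drift, so it should not be subtracted again. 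The lower bound $c_2\lambda m^{k-1}$ on $(\cL_{\mathrm{pl}}-\cL_{\mathrm{null}})m_\phi$ needs the regularity event $\max_y\norm{\dgrad H_{\mathrm{null}}(y)}^2\le C_0N$, as in \cref{lem:signal-drift-bounds}.
\end{itemize}
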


The key challenge in establishing \cref{thm:critical-recovery-warm-start} is that the initialization may depend nontrivially on the disorder $H_0$. For $\tfrac{k-2}{4}\le\gamma<\tfrac{k-2}{4}+\tfrac{1}{4k}$, this dependence can significantly complicate the analysis, since the correlation attained by the warm start is too weak to ensure that the contribution of the signal term $N\lambda m^k$ uniformly dominates that of $H_0$. Indeed, as we already discuss in \cref{subsec:sketch-reverse-planting}, one can imagine a potentially dangerous initialization with nontrivial but subcritical correlation that lies close to one of the exponentially many local maxima of $H_0$~\citep{benarous2019landscape,deoliveira1997landscape,ros2019complex}. In the local neighborhood of such an initialization, the drift induced by $H_0$ could dominate that of the signal, thereby trapping the dynamics at polynomial timescales. 

Our analysis rules out such a scenario for any initialization sufficiently close to $\sgn(u_H)$ by proving that the dependence of $u_H$ on the disorder is benign. Indeed, by Stein's lemma, $u_H$ depends on $H_0$ only through its first Fourier level. Since $H_0$ is an isotropic Gaussian process, a Gaussian conditioning argument shows that $u_H$ is independent of the higher Fourier levels of $H_0$, which comprise the bulk of the disorder. Our proof exploits this independence through an intricate dynamical change of measure argument, which proves that for any initialization sufficiently close to $\sgn(u_H)$, the drift contribution of the signal term dominates that of the disorder until the dynamics achieves critical recovery. In \cref{subsec:average-gradient-gauge}, we formalize the independence of $u_H$ from the higher Fourier levels of $H_0$ via Gaussian conditioning. \Cref{subsec:reverse-planting} develops the dynamical change of measure argument which we call the reverse planting trick. Finally, we prove \cref{thm:critical-recovery-warm-start} in \cref{subsec:critical-recovery-reverse-planting}.

\subsection{The Average Gradient Gauge Transformation}
\label{subsec:average-gradient-gauge}

The following lemma, which is a straightforward consequence of Gaussian conditioning, allows us to control the dependence of $u_H$ on the disorder.

\begin{lemma}[Decomposition of the $k$-Spin Hamiltonian]
\label{lem:kspin-gaussian-conditioning}
Let $k\ge3$ be odd, and let $\zeta\doteq\E_\pi[\dgrad H_0]$. Then, $H_0(x)=\dotp{\zeta}{x}+H_0^\perp(x)$, where $H_0^\perp:\Cube\to\R$ is a centered Gaussian process independent of $\zeta$ with
\begin{align*}
    \E\bigl[H_0^\perp(x)H_0^\perp(y)\bigr]
    =N\left(\frac{\dotp{x}{y}}{N}\right)^k-\sigma_N^2\dotp{x}{y}.
\end{align*}
\end{lemma}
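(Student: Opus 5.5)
Define $H_0^\perp(x)\doteq H_0(x)-\dotp{\zeta}{x}$ and verify the three claims (Gaussianity, independence, covariance) directly from the covariance structure. By Stein's lemma (\cref{lem:hypercube-stein}), $\zeta=\E_\pi[xH_0(x)]$ is a linear functional of the Gaussian tensor $G$. Hence $\zeta$ and the whole process $(H_0(x))_{x\in\Cube}$ are jointly centered Gaussian, and so is $H_0^\perp$. By joint Gaussianity, independence of $H_0^\perp$ from $\zeta$ reduces to showing $\E[H_0^\perp(x)\zeta]=0$ for every $x\in\Cube$.

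First compute the cross-covariance. Using \cref{eq:kspin-covariance},
\begin{align*}
    \E[H_0(x)\zeta]=\E_{y\sim\pi}\bigl[y\,\E[H_0(x)H_0(y)]\bigr]=N^{1-k}\E_{y\sim\pi}\bigl[y\dotp{x}{y}^k\bigr].
\end{align*}
Substitute $y=Dz$ with $D=\diag(x)\in O(N,\Z)$, so that $z\sim\pi$ and $\dotp{x}{y}=\sum_\ell z_\ell$. This gives $N^{1-k}D\,\E_z[z(\sum_\ell z_\ell)^k]$. By permutation invariance, $\E_z[z(\sum_\ell z_\ell)^k]=c\mathbf 1$, and pairing with $\mathbf 1/N$ gives $c=N^{-1}\E_z[(\sum_\ell z_\ell)^{k+1}]$. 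Comparing with \cref{eq:average-gradient-variance-moment}, this yields exactly $\E[H_0(x)\zeta]=\sigma_N^2x$. The same computation appears implicitly in the proof of \cref{avg-discrete-grad-snr}, where $\Cov(\zeta)=\sigma_N^2I$ was established.

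Independence then follows from
\begin{align*}
    \E[H_0^\perp(x)\zeta]=\sigma_N^2x-\E[\zeta\zeta^\top]x=\sigma_N^2x-\sigma_N^2x=0.
\end{align*}
For the covariance, expand
\begin{align*}
    \E[H_0^\perp(x)H_0^\perp(y)]=\E[H_0(x)H_0(y)]-x^\top\E[\zeta H_0(y)]-y^\top\E[\zeta H_0(x)]+x^\top\E[\zeta\zeta^\top]y,
\end{align*}
which equals $N(\dotp{x}{y}/N)^k-\sigma_N^2\dotp{x}{y}$.

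No step is a real obstacle. The only point needing care is identifying the constant in the cross-covariance with $\sigma_N^2$ via the gauge $y=Dz$ and the moment identity already used in \cref{avg-discrete-grad-snr}. Oddness of $k$ matters only in that $\sigma_N^2>0$ there; for even $k$ the same computation gives $\zeta=0$.
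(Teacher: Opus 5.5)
Your proposal is correct and follows essentially the same route as the paper: both compute $\E[H_0(x)\zeta]=\sigma_N^2x$ via the change of variables $z=x\odot y$ and permutation invariance, use joint Gaussianity to get independence, and expand the covariance directly. The only difference is cosmetic: you check that $H_0^\perp$ and $\zeta$ are uncorrelated directly instead of invoking the Gaussian conditioning formula, which has the small advantage of not needing $\sigma_N^2>0$.
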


\begin{proof}
By \cref{lem:hypercube-stein}, $\zeta=\E_{y\sim\pi}[yH_0(y)]$. Hence, for every $x\in\Cube$ and $i\in[N]$, we obtain the following from \cref{eq:kspin-covariance} and \cref{eq:average-gradient-variance-moment} in the proof of \cref{avg-discrete-grad-snr}:
\begin{align*}
    \E[H_0(x)\zeta_i]
    &=\E_{y\sim\pi}\bigl[y_i\E[H_0(x)H_0(y)]\bigr]\\
    &=N^{1-k}\E_{y\sim\pi}\left[y_i\left(\sum_{j=1}^Nx_jy_j\right)^k\right]\\
    &=x_iN^{1-k}\E_{z\sim\pi}\left[z_i\left(\sum_{j=1}^Nz_j\right)^k\right]
    =\sigma_N^2x_i,
\end{align*}
where we used the change of variables $z_j=x_jy_j$ and permutation invariance of $\pi$.

Since $k\ge3$ is odd, \cref{eq:average-gradient-variance-moment} implies that $\sigma_N^2>0$. As $(H_0(x))_{x\in\Cube}$ and $\zeta$ are jointly Gaussian and $\E[\zeta\zeta^\top]=\sigma_N^2I$, the Gaussian conditioning formula~\citep[Theorems~9.1 and~9.9]{janson1997gaussian} implies that $\E[H_0(x)\mid\zeta]=\dotp{\zeta}{x}$ and that $H_0^\perp(x)\doteq H_0(x)-\dotp{\zeta}{x}$ is a centered Gaussian process independent of $\zeta$. 
Moreover, we obtain 
\begin{align*}
    \E\bigl[H_0^\perp(x)H_0^\perp(y)\bigr]
    &=\E\bigl[(H_0(x)-\dotp{\zeta}{x})(H_0(y)-\dotp{\zeta}{y})\bigr]\\
    &=\E[H_0(x)H_0(y)]-\sum_{i=1}^Ny_i\E[H_0(x)\zeta_i]-\sum_{i=1}^Nx_i\E[H_0(y)\zeta_i]+\sum_{i,j=1}^Nx_iy_j\E[\zeta_i\zeta_j]\\
    &=N\left(\frac{\dotp{x}{y}}{N}\right)^k-2\sigma_N^2\sum_{i=1}^Nx_iy_i+\sigma_N^2\sum_{i=1}^Nx_iy_i\\
    &=N\left(\frac{\dotp{x}{y}}{N}\right)^k-\sigma_N^2\dotp{x}{y},
\end{align*}
which concludes the proof.
\end{proof}

By \cref{lem:hypercube-stein}, $\dotp{\zeta}{x}$ is the first Fourier level of $H_0$. Since $u_H=\lambda\sigma_N^2\xopt+\zeta$ by \cref{avg-discrete-grad-snr}, \cref{lem:kspin-gaussian-conditioning} shows that $u_H$ is independent of the higher Fourier levels $H_0^\perp$, which comprise the bulk of the disorder. However, the initialization considered in \cref{thm:critical-recovery-warm-start} still depends on the disorder through $\sgn(u_H)$. To simplify the analysis, we apply a randomized gauge transformation, which we call the \emph{average gradient gauge transformation}. This transformation maps $\sgn(u_H)$ to the all-ones vector $\mathbf{1}$, so that the transformed dynamics starts from a fixed state and the disorder dependence of the initialization is transferred to the gauge-transformed Hamiltonian. Our approach is motivated by the Nishimori gauge transformation in the spin glass literature~\citep{nishimori2001statistical}, except that we map the initial state of the dynamics, rather than the planted signal, to $\mathbf{1}$. Precisely, we define the following random variables for $i\in[N]$:
\begin{align}
\label{eq:gauge-variables}
    \xi_i\doteq\xopt_i\zeta_i,
    \qquad
    U_i\doteq\lambda\sigma_N^2+\xi_i,
    \qquad
    \phi_i\doteq\sgn(U_i),
    \qquad
    R_i\doteq\abs{U_i}.
\end{align}
Since $\xopt\in\Cube$ is fixed, $\xi\sim\cN(0,\sigma_N^2I)$. Furthermore, $u_H=\lambda\sigma_N^2\xopt+\zeta$, and hence $u_H=\xopt\odot U$ and, almost surely, $\sgn(u_H)=\xopt\odot\phi$, where $\odot$ denotes the entrywise product. Finally, we formally define the average-gradient gauge transformation and the associated gauge-transformed dynamics as follows:
\begin{align*}
    y=y(x)\doteq\xopt\odot\phi\odot x,
    \qquad
    Y_t(e)\doteq\xopt\odot\phi\odot X_t(\xopt\odot\phi\odot e).
\end{align*}
Observe that $y(\sgn(u_H))=\mathbf{1}$ and $Y_t(\mathbf{1})=\xopt\odot\phi\odot X_t(\sgn(u_H))$. The following lemma analyzes the effect of the gauge transformation on the tensor PCA Hamiltonian.

\begin{lemma}[Gauge-Transformed Hamiltonian]
\label{lem:tpca-hamiltonian-under-gauge}
Let $m_\phi(y)\doteq m(\xopt\odot\phi\odot y)$ and $H_\phi(y)\doteq H(\xopt\odot\phi\odot y)$ denote the gauge-transformed correlation and the gauge-transformed tensor PCA Hamiltonian, respectively. Then, $m_\phi(y)=\tfrac{1}{N}\dotp{\phi}{y}$ and $H_\phi=H_{\mathrm{null}}+H_{\mathrm{sig}}$, where
\begin{align*}
    H_{\mathrm{null}}(y)
    &\doteq\dotp{R}{y}+W(y),
    \qquad
    W(y)\doteq H_0^\perp(\xopt\odot\phi\odot y),\\
    H_{\mathrm{sig}}(y)
    &\doteq N\lambda m_\phi(y)^k-N\lambda\sigma_N^2m_\phi(y).
\end{align*}
Moreover, the following hold:
\begin{enumerate}
    \item The Gaussian process $W$ is independent of $(\phi,R)$ and has the same law as $H_0^\perp$.
    \item For any $e\in\Cube$, the trajectory $Y_t(e)$ has the same law as Glauber dynamics for the gauge-transformed Hamiltonian $H_\phi$ at inverse temperature $\beta$, initialized at $Y_0=e$.
\end{enumerate}
\end{lemma}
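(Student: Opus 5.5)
The plan is a direct computation: substitute $x=\xopt\odot\phi\odot y$ into each piece of $H=N\lambda m^k+\dotp{\zeta}{\cdot}+H_0^\perp$, then use \cref{lem:kspin-gaussian-conditioning} for the law of $W$ and the coordinatewise-sign invariance of Glauber dynamics for item~2. Throughout, write $D\doteq\diag(\xopt\odot\phi)\in O(N,\Z)$, so $D^2=I$ and $x=Dy$.

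\textbf{Step 1 (algebraic identities).} For the correlation, $m_\phi(y)=\frac1N\sum_i\xopt_i\xopt_i\phi_iy_i=\frac1N\dotp{\phi}{y}$, using $(\xopt_i)^2=1$. For the linear part, $\zeta=\xopt\odot\xi$ and $\phi_iU_i=R_i$ give
\begin{align*}
    \dotp{\zeta}{Dy}=\sum_i\xi_i\phi_iy_i=\sum_i(U_i-\lambda\sigma_N^2)\phi_iy_i=\dotp{R}{y}-N\lambda\sigma_N^2m_\phi(y).
\end{align*}
Adding $N\lambda m_\phi(y)^k$ and $W(y)=H_0^\perp(Dy)$ yields $H_\phi=H_{\mathrm{null}}+H_{\mathrm{sig}}$.

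\textbf{Step 2 (item 1).} The pair $(\phi,R)$ is a measurable function of $\zeta$, because $\xopt$ is deterministic. By \cref{lem:kspin-gaussian-conditioning}, $H_0^\perp$ is independent of $\zeta$. Fix $\zeta$, and hence fix $D$. Then $y\mapsto H_0^\perp(Dy)$ is a centered Gaussian process. Its covariance is
\begin{align*}
    N\left(\frac{\dotp{Dy}{Dy'}}{N}\right)^k-\sigma_N^2\dotp{Dy}{Dy'},
\end{align*}
and since $D$ is orthogonal, $\dotp{Dy}{Dy'}=\dotp{y}{y'}$. So the conditional law of $W$ given $\zeta$ equals the law of $H_0^\perp$, and does not depend on $\zeta$. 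This proves both that $W$ is independent of $(\phi,R)$ and that $W\disteq H_0^\perp$.

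\textbf{Step 3 (item 2).} Condition on $(\phi,G)$; this is legitimate because the dynamics' Poisson clocks and update randomness are independent of $H$. Set $Y_t=DX_t$ with $X_0=De$. The arrival times and site choices $I_n$ are unchanged by this coordinatewise sign flip. When site $i$ updates, $X_{t,i}$ is set to $z$ with probability proportional to $\exp(\beta H(x_{\neg i},x_i=z))$. The corresponding $Y$-spin is $D_{ii}z$, and
\begin{align*}
    H(x_{\neg i},x_i=z)=H_\phi(y_{\neg i},y_i=D_{ii}z),
\end{align*}
so $Y_{t,i}$ is set to $w$ with probability proportional to $\exp(\beta H_\phi(y_{\neg i},y_i=w))$. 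Hence $Y$ is Glauber dynamics for $H_\phi$ at inverse temperature $\beta$, started from $Y_0=D^2e=e$. Equivalently, one can check that the generators are conjugate, $\cL_{H_\phi,\beta}f=(\cL_{H,\beta}(f\circ D))\circ D$.

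\textbf{Main obstacle.} The only subtle step is item~1. The gauge $D$ is random and depends on $\zeta$, so independence has to be argued conditionally on $\zeta$, and the covariance calculation relies on $D$ preserving inner products. The rest is bookkeeping.
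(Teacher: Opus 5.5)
Your proposal is correct and follows the paper's proof essentially step for step. You make the same substitution $x=\xopt\odot\phi\odot y$ into $H=N\lambda m^k+\dotp{\zeta}{\cdot}+H_0^\perp$, and you prove item~1 with the same conditioning-on-$\zeta$ and isometry-invariance argument. The only difference is in item~2: you verify pathwise that the sign-flipped process has the Glauber update kernel for $H_\phi$, whereas the paper proves the generator conjugation $\cL_{H_\phi,\beta}f=(\cL_{H,\beta}(f\circ D))\circ D$, which you also note as an equivalent check.
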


\begin{proof}
Since $\xopt\in\Cube$, by direct computation,
\begin{align*}
    m_\phi(y)
    =m(\xopt\odot\phi\odot y)
    =\frac{1}{N}\dotp{\xopt}{\xopt\odot\phi\odot y}
    =\frac{1}{N}\sum_{i=1}^N(\xopt_i)^2\phi_iy_i
    =\frac{1}{N}\dotp{\phi}{y}.
\end{align*}
Next, since $u_H=\lambda\sigma_N^2\xopt+\zeta$ and $\dotp{\xopt}{x}=Nm(x)$, we infer from \cref{eq:tensor-pca-hamiltonian,lem:kspin-gaussian-conditioning} that
\begin{align*}
    H(x)
    &=N\lambda m(x)^k+\dotp{\zeta}{x}+H_0^\perp(x)\\
    &=\dotp{u_H}{x}+N\lambda m(x)^k-N\lambda\sigma_N^2m(x)+H_0^\perp(x).
\end{align*}
Moreover, since $u_H=\xopt\odot U$,
\begin{align*}
    \dotp{u_H}{\xopt\odot\phi\odot y}
    =\sum_{i=1}^NU_i\phi_iy_i
    =\sum_{i=1}^NR_iy_i
    =\dotp{R}{y}.
\end{align*}
Consequently,
\begin{align*}
    H_\phi(y)
    &=H(\xopt\odot\phi\odot y)\\
    &=\dotp{R}{y}+N\lambda m_\phi(y)^k-N\lambda\sigma_N^2m_\phi(y)+H_0^\perp(\xopt\odot\phi\odot y)\\
    &=H_{\mathrm{null}}(y)+H_{\mathrm{sig}}(y).
\end{align*}

We next prove the first claim. Since $\xopt\in\Cube$ is fixed, $(\phi,R)$ is a measurable function of $\zeta$. Moreover, by \cref{lem:kspin-gaussian-conditioning}, $H_0^\perp$ is a centered Gaussian process independent of $\zeta$. Since the map $y\mapsto\xopt\odot\phi\odot y$ is an isometry of $\Cube$ and the covariance of $H_0^\perp$ depends only on $\dotp{x}{y}$, the law of $H_0^\perp$ is invariant under this transformation. Hence,
\begin{align*}
    \Law(W\mid\phi,R)
    =\Law\bigl(H_0^\perp(\xopt\odot\phi\odot\cdot)\mid\phi,R\bigr)
    =\Law(H_0^\perp\mid\phi,R)
    =\Law(H_0^\perp).
\end{align*}
Thus, $W$ is independent of $(\phi,R)$ and has the same law as $H_0^\perp$.

The second claim is a standard consequence of the gauge invariance of Glauber dynamics (see \citep[\S4.10]{nishimori2001statistical} for a proof in the special case of the Edwards--Anderson model). We present a proof below for completeness. Define the gauge transformation operator $\Lambda_\phi:\Cube\to\Cube$ as $\Lambda_\phi x\doteq\xopt\odot\phi\odot x$, and for any $f:\Cube\to\R$, define $\Lambda_\phi f\doteq f\circ\Lambda_\phi$. Observe that $\Lambda_\phi^2=I$, that $\Lambda_\phi$ commutes with coordinate flips, i.e., $\Lambda_\phi x^{\sim i}=(\Lambda_\phi x)^{\sim i}$, and that $H_\phi=\Lambda_\phi H$. Recall from \cref{eq:glauber-generator} that the generator of Glauber dynamics for $H$ at inverse temperature $\beta$ is
\begin{align*}
    \cL_{H,\beta}F(x)
    =\sum_{i=1}^Nc_i^{(H,\beta)}(x)\bigl[F(x^{\sim i})-F(x)\bigr],
    \qquad
    c_i^{(H,\beta)}(x)
    =\frac12\bigl[1-x_i\tanh(\beta\partial_iH(x))\bigr].
\end{align*}
Since $2x_i\partial_iH(x)=H(x)-H(x^{\sim i})$, the flip rate $c_i^{(H,\beta)}(x)$ depends on $x$ only through $H(x^{\sim i})-H(x)$. Moreover, for any $x\in\Cube$,
\begin{align*}
    H_\phi\bigl((\Lambda_\phi x)^{\sim i}\bigr)-H_\phi(\Lambda_\phi x)
    &=H_\phi(\Lambda_\phi x^{\sim i})-H_\phi(\Lambda_\phi x)\\
    &=H(\Lambda_\phi^2x^{\sim i})-H(\Lambda_\phi^2x)
    =H(x^{\sim i})-H(x),
\end{align*}
and hence, $c_i^{(H_\phi,\beta)}(\Lambda_\phi x)=c_i^{(H,\beta)}(x)$. It follows that for any $F:\Cube\to\R$,
\begin{align*}
    (\cL_{H,\beta}\Lambda_\phi F)(x)
    &=\sum_{i=1}^Nc_i^{(H,\beta)}(x)\bigl[F(\Lambda_\phi x^{\sim i})-F(\Lambda_\phi x)\bigr]\\
    &=\sum_{i=1}^Nc_i^{(H_\phi,\beta)}(\Lambda_\phi x)\bigl[F\bigl((\Lambda_\phi x)^{\sim i}\bigr)-F(\Lambda_\phi x)\bigr]
    =(\Lambda_\phi\cL_{H_\phi,\beta}F)(x).
\end{align*}
Hence, $\Lambda_\phi\cL_{H,\beta}\Lambda_\phi=\cL_{H_\phi,\beta}$. Finally, since $Y_t(e)=\Lambda_\phi X_t(\Lambda_\phi e)$, for any $F:\Cube\to\R$,
\begin{align*}
    \left.\frac{d}{dt}\E\bigl[F(Y_t(e))\bigr]\right|_{t=0}
    &=\left.\frac{d}{dt}\E\bigl[F\bigl(\Lambda_\phi X_t(\Lambda_\phi e)\bigr)\bigr]\right|_{t=0}\\
    &=(\cL_{H,\beta}\Lambda_\phi F)(\Lambda_\phi e)
    =(\Lambda_\phi\cL_{H,\beta}\Lambda_\phi F)(e)
    =(\cL_{H_\phi,\beta}F)(e),
\end{align*}
where the expectations are taken with the Hamiltonian $H$ and initial state $e$ held fixed.
Thus, $\cL_{H_\phi,\beta}$ is the generator of $Y_t(e)$, and hence $Y_t(e)$ is Glauber dynamics for the gauge-transformed Hamiltonian $H_\phi$ at inverse temperature $\beta$, initialized at $Y_0=e$.
\end{proof}

\Cref{lem:tpca-hamiltonian-under-gauge} shows that, modulo the lower order linear terms $\dotp{R}{y}$ and $-N\lambda\sigma_N^2m_\phi(y)$, the average gradient gauge transformation turns $H$ into a tensor PCA Hamiltonian with planted vector $\phi$ and disorder $W$, which is independent of $(\phi,R)$. The decomposition $H_\phi=H_{\mathrm{null}}+H_{\mathrm{sig}}$ is constructed so that the entire dependence of $H_\phi$ on $\phi$ resides in $H_{\mathrm{sig}}$. To this end, $H_{\mathrm{null}}$ contains the first Fourier level of $H$ (which takes the form $\dotp{R}{y}$ after the gauge transformation), including that of the signal.

\subsection{The Reverse Planting Trick and Adaptive Contiguity}
\label{subsec:reverse-planting}

We use $\cL_{\mathrm{pl}}\doteq\cL_{H_\phi,\beta}$ and $\cL_{\mathrm{null}}\doteq\cL_{H_{\mathrm{null}},\beta}$ to denote the generators of Glauber dynamics at inverse temperature $\beta$ for $H_\phi$ and $H_{\mathrm{null}}$, respectively. By Dynkin's formula, the quantity $(\cL_{\mathrm{null}}m_\phi)(Y_t(e))$ encapsulates the contribution of the disorder to the time evolution of the correlation $m_\phi$ along the Glauber trajectory. Hence, in order to show that the signal contribution dominates that of the disorder until the dynamics achieves critical recovery, our analysis needs to control $(\cL_{\mathrm{null}}m_\phi)(Y_t(e))$ along the gauge-transformed trajectory.
Explicitly, by \cref{eq:glauber-generator}, we have the following for every $y\in\Cube$.
\begin{align*}
    (\cL_{\mathrm{null}}m_\phi)(y)
    =\frac{1}{N}\sum_{i=1}^N\phi_i\bigl[\tanh\bigl(\beta\partial_iH_{\mathrm{null}}(y)\bigr)-y_i\bigr].
\end{align*}
Although the coordinates of $\phi$ are marginally i.i.d.\ biased Rademacher variables, the above quantity is hard to control since $\phi$ is correlated with $\partial_iH_{\mathrm{null}}(y)=R_i+\partial_iW(y)$ through $R$ as per \cref{eq:gauge-variables}. Moreover, the trajectory $Y_t(e)$ along which it is evaluated depends nontrivially on $\phi$ through $H_{\mathrm{sig}}$.

We circumvent these difficulties via a change of measure to a carefully constructed null law, under which the dynamics is independent of $\phi$. For a fixed $e \in \Cube$, we formally define both the law of the true dynamics (which we call the planted law) and this null law, representing them as laws of Glauber dynamics in the random environment $(\phi,R,W)$.

\begin{definition}[Planted Law]
\label{def:planted-law}
The planted law $\Prb$ is the law of Glauber dynamics in a random environment, which factorizes as $\Prb=\Prb_{\mathrm{env}}(\phi,R,W)\cdot\Prb_{\mathrm{dyn}}(\cdot\mid\phi,R,W)$, where $\Prb_{\mathrm{env}}$ and $\Prb_{\mathrm{dyn}}$ are defined as follows:
\begin{enumerate}
    \item $(\phi_i,R_i)=(\sgn(U_i),\abs{U_i})$, where $U_i\iidsim\cN(\lambda\sigma_N^2,\sigma_N^2)$ for $i\in[N]$.
    \item $W$ is independent of $(\phi,R)$ and has the same law as $H_0^\perp$.
    \item Conditional on the environment $(\phi,R,W)$, $\Prb_{\mathrm{dyn}}$ is the law of Glauber dynamics at inverse temperature $\beta$ for the Hamiltonian $H_\phi(y)=\dotp{R}{y}+W(y)+N\lambda m_\phi(y)^k-N\lambda\sigma_N^2m_\phi(y)$, initialized at $Y_0=e$. Its generator is $\cL_{\mathrm{pl}}$.
\end{enumerate}
\end{definition}

\begin{definition}[Null Law]
\label{def:null-law}
The null law $\Q$ is the law of Glauber dynamics in a random environment, which factorizes as $\Q=\Q_{\mathrm{env}}(\phi,R,W)\cdot\Q_{\mathrm{dyn}}(\cdot\mid\phi,R,W)$, where $\Q_{\mathrm{env}}$ and $\Q_{\mathrm{dyn}}$ are defined as follows:
\begin{enumerate}
    \item $\phi_i\iidsim\Unif(\{\pm1\})$ and $R_i=\abs{U_i}$, where $U_i\iidsim\cN(0,\sigma_N^2)$ for $i\in[N]$, independently of $\phi$.
    \item $W$ is independent of $(\phi,R)$ and has the same law as $H_0^\perp$.
    \item Conditional on the environment $(\phi,R,W)$, $\Q_{\mathrm{dyn}}$ is the law of Glauber dynamics at inverse temperature $\beta$ for the Hamiltonian $H_{\mathrm{null}}(y)=\dotp{R}{y}+W(y)$, initialized at $Y_0=e$. Its generator is $\cL_{\mathrm{null}}$.
\end{enumerate}
\end{definition}

Under both laws, we write $Y_t=Y_t(e)$ for the associated trajectory. By \cref{lem:tpca-hamiltonian-under-gauge}, $\Prb$ is the law of the true dynamics in the gauge-transformed coordinates. The two laws are best understood by inverting the gauge transformation, which, for simplicity, we do only for $e=\mathbf{1}$. Under $\Prb$, $\xopt\odot\phi\odot Y_t(\mathbf{1})$ is Glauber dynamics for $H$ initialized at its sign-rounded average gradient $\sgn(u_H)=\xopt\odot\phi$. Similarly, under $\Q$, $\xopt\odot\phi\odot Y_t(\mathbf{1})$ is Glauber dynamics for the Hamiltonian
\begin{align*}
    \widetilde H(x)
    \doteq H_{\mathrm{null}}(\xopt\odot\phi\odot x)
    =\dotp{\widetilde\zeta}{x}+W(\xopt\odot\phi\odot x),
    \qquad
    \widetilde\zeta\doteq\xopt\odot\phi\odot R,
\end{align*}
initialized at $\xopt\odot\phi=\sgn(\widetilde\zeta)$. Moreover, $\widetilde\zeta\sim\cN(0,\sigma_N^2I)$ and $W(\xopt\odot\phi\odot\cdot)$ are independent under $\Q$, and the latter is identically distributed as $H_0^\perp$ by isotropy. Applying \cref{lem:kspin-gaussian-conditioning}, we observe that $\Law(\widetilde H,\widetilde\zeta)=\Law(H_0,\zeta)$, where $\zeta=\E_\pi[\dgrad H_0]$, and consequently, $\widetilde\zeta$ is the average gradient of $\widetilde H$.

Thus, we conclude that both the planted and null laws describe Glauber dynamics initialized at the sign-rounded average gradient of their respective underlying Hamiltonians, which correspond to the true tensor PCA Hamiltonian under $\Prb$ and the pure disorder Hamiltonian under $\Q$. Under this setup, $\Q$ corresponds precisely to $\Prb$ with $\lambda=0$, as the signs of $U_i\iidsim\cN(0,\sigma_N^2)$ are then uniform and independent of their magnitudes, and $H_{\mathrm{sig}}$ vanishes. In particular, the null dynamics carries no information about $\xopt$.

The independence of $\phi$ and the Glauber trajectory under the null law yields the following uniform concentration guarantee for $\cL_{\mathrm{null}}m_\phi$.

\begin{lemma}[Uniform Concentration along the Null Trajectory]
\label{lem:null-trajectory-uniform-concentration}
For every $T>0$ and $e\in\Cube$,
\begin{align*}
    \E_{\Q}\left[\exp\left(\frac{N}{16}\sup_{t\in[0,T]}\bigl|(\cL_{\mathrm{null}}m_\phi)(Y_t(e))\bigr|^2\right)\right]
    \le\sqrt2(1+NT).
\end{align*}
\end{lemma}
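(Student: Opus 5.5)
Under $\Q$, $\phi$ is independent of $(R,W)$ and hence of the whole trajectory $(Y_t(e))_{t\ge0}$. The reason is that $\Q_{\mathrm{dyn}}$ depends only on $H_{\mathrm{null}}=\dotp{R}{\cdot}+W$, and under $\Q_{\mathrm{env}}$ the signs $\phi$ are uniform and independent of $(R,W)$. The plan is therefore to condition on everything except $\phi$, that is, on $(R,W)$ and the dynamics randomness (initial state, Poisson arrival times and marks). Conditionally, the trajectory is a fixed piecewise constant path visiting at most $1+N_T$ distinct configurations $y^{(0)},\ldots,y^{(N_T)}$, where $N_T\sim\Poisson(NT)$ is the number of arrivals in $[0,T]$. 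The supremum over $t\in[0,T]$ then reduces to a maximum over these configurations.

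For each fixed $y\in\Cube$, the null drift is
\begin{align*}
    (\cL_{\mathrm{null}}m_\phi)(y)=\frac1N\sum_{i=1}^N\phi_i a_i(y),
    \qquad a_i(y)\doteq\tanh\bigl(\beta\partial_iH_{\mathrm{null}}(y)\bigr)-y_i,
\end{align*}
where $a_i(y)$ is measurable with respect to the conditioning and satisfies $|a_i(y)|\le2$. Thus it is a Rademacher average $\frac1N\sum_i\phi_ia_i$ with $\sum_ia_i^2\le4N$. By Hoeffding's inequality, its squared value $Z_y\doteq|(\cL_{\mathrm{null}}m_\phi)(y)|^2$ is subgaussian with variance proxy $4/N$. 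Integrating the tail bound $\Prb[|S|\ge s]\le 2e^{-Ns^2/8}$ gives
\begin{align*}
    \E_\phi\bigl[\exp(\tfrac{N}{16}Z_y)\bigr]\le \sqrt2 .
\end{align*}
This follows from the moment generating function of a subgaussian square, $\E e^{\theta S^2}\le(1-2\theta\sigma^2)^{-1/2}$ with $\theta=N/16$ and $\sigma^2=4/N$, which gives $(1-1/2)^{-1/2}=\sqrt2$. The constant $1/16$ is chosen to produce exactly this. Any residual constant factor from a cruder Hoeffding route can be absorbed, but I would use the MGF bound for sub-Gaussians directly so the constant comes out as $\sqrt2$.

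Next, bound $\exp(\frac N{16}\sup_t Z_{Y_t})=\max_j\exp(\frac N{16}Z_{y^{(j)}})$ by the sum over $j\le N_T$. Taking the conditional expectation over $\phi$ gives at most $\sqrt2(1+N_T)$. Here the visited configurations are random, but they are fixed given the conditioning, and $\phi$ is independent of them; this independence is the one point that needs care. Finally, taking the expectation over $N_T$ with $\E[N_T]=NT$ yields $\sqrt2(1+NT)$.

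The main obstacle is justifying the conditional independence of $\phi$ from the path under $\Q$. One must check that the Glauber flip rates for $H_{\mathrm{null}}$ do not involve $\phi$. The initialization $e$ is deterministic, and $R$ is independent of $\phi$ under $\Q_{\mathrm{env}}$, unlike under $\Prb$, where $\phi=\sgn(U)$ is correlated with $R=|U|$. Given this, the argument is a union bound over the jump skeleton combined with Hoeffding's inequality.
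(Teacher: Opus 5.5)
Your proof is correct and follows the paper's argument. The paper likewise conditions on $(R,W)$ and the dynamics filtration $\cF_T$, under which $\phi$ stays i.i.d.\ Rademacher, gets the per-state bound $\sqrt2$ from the sub-Gaussian square MGF bound (Wainwright, Theorem~2.6), sums over the at most $N_T+1$ visited states, and then averages over $N_T$.

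Two small wording points. It is the Rademacher average itself, not its square, that is sub-Gaussian with proxy $4/N$. Also, integrating the Hoeffding tail alone would only give the constant $3$, so the direct MGF route you settle on is the one that yields $\sqrt2$.
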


\begin{proof}
Throughout, we work under $\Q$ and write $Y_t=Y_t(e)$. Let $(\cF_t)_{0\le t\le T}$ denote the filtration of Glauber dynamics up to time $T$ defined in \cref{subsec:cts-glauber}. By \cref{def:null-law}, conditional on $R,W,\cF_T$, the coordinates of $\phi$ remain i.i.d.\ Rademacher variables, and the trajectory $(Y_t)_{0\le t\le T}$ is $\cF_T$-measurable. Moreover, by \cref{eq:glauber-generator},
\begin{align}
\label{eq:null-drift-rademacher-sum}
    (\cL_{\mathrm{null}}m_\phi)(y)
    =\frac{1}{N}\sum_{i=1}^N\phi_i\eta_i(y),
    \qquad
    \eta_i(y)\doteq\tanh\bigl(\beta\partial_iH_{\mathrm{null}}(y)\bigr)-y_i.
\end{align}
Fix an $\cF_T$-measurable random state $y \in \Cube$. Conditional on $R,W,\cF_T$, the coefficients $\eta_i(y)$ are fixed with $\abs{\eta_i(y)}\le2$ and $\phi_i\iidsim\Unif(\{\pm1\})$. Hence, \cref{eq:null-drift-rademacher-sum} is a Rademacher sum with variance proxy $4/N$. Hence, by~\citet[Theorem~2.6]{wainwright2019high}, we obtain the following:
\begin{align}
\label{eq:null-drift-conditional-mgf}
    \E_{\Q}\left[\exp\left(\frac{N}{16}(\cL_{\mathrm{null}}m_\phi)(y)^2\right)\,\middle|\,R,W,\cF_T\right]
    \le\sqrt2.
\end{align}

Recall from \cref{subsec:cts-glauber} that $N_T\sim\Poisson(NT)$ denotes the number of arrivals up to time $T$. Since the trajectory is piecewise constant, it visits at most $N_T+1$ states during $[0,T]$, each of which is $\cF_T$-measurable. Bounding the supremum of the exponential by the sum over these states, we infer from \cref{eq:null-drift-conditional-mgf} that
\begin{align*}
    \E_{\Q}\left[\exp\left(\frac{N}{16}\sup_{t\in[0,T]}\bigl|(\cL_{\mathrm{null}}m_\phi)(Y_t)\bigr|^2\right)\,\middle|\,R,W,\cF_T\right]
    \le\sqrt2(N_T+1).
\end{align*}
The proof is concluded by taking expectations and using $\E[N_T]=NT$.
\end{proof}

By Markov's inequality, \cref{lem:null-trajectory-uniform-concentration} implies sharp subgaussian-style concentration for $\cL_{\mathrm{null}}m_\phi$ along the entire Glauber trajectory under the null law. We wish to use this estimate to control the influence of the disorder along the true Glauber trajectory, which requires transferring it from the null law to the planted law. We call this approach the \emph{reverse planting trick}, as it runs in the direction opposite to canonical applications of the planting trick~\citep{achlioptas2008algorithmic}, where the null model is the true distribution of interest, and statements proved under the planted model are transferred to it via contiguity.

However, the planted and null laws are far from contiguous, as their Hamiltonians differ by the signal term $H_{\mathrm{sig}}$, whose leading order term is $N\lambda m_\phi^k$. The discrepancy between the two laws is thus governed by the strength of the signal. Since we only need to transfer the uniform concentration of $\cL_{\mathrm{null}}m_\phi$ to the planted law until the dynamics achieves critical recovery, we instead prove the following fine-grained contiguity estimate, which adapts to the signal strength along the trajectory, and applies to any stopping time.

\begin{theorem}[Adaptive Contiguity Estimate]
\label{thm:adaptive-contiguity}
Fix $e\in\Cube$ and $T>0$, and let $(\cG_t)_{0\le t\le T}$ denote the joint filtration of the environment $(\phi,R,W)$ and Glauber dynamics up to time $T$. Let $\tau$ be a $(\cG_t)$-stopping time satisfying $\tau\le T$ almost surely, and assume $\norm{\dgrad H_{\mathrm{sig}}(Y_t(e))}_\infty\le D$ for all $t<\tau$. Then, for every $E\in\cG_\tau$,
\begin{align*}
    \Prb(E)
    \le\exp\left(\frac{1}{2}N\lambda^2\sigma_N^2\right)\E_{\Q}\left[\exp\left(\beta^2\cosh^2(\beta D)\int_0^\tau\norm{\dgrad H_{\mathrm{sig}}(Y_t(e))}^2\,dt\right)\ind{E}\right]^{\nicefrac{1}{2}}.
\end{align*}
\end{theorem}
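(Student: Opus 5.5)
The plan is to compute the likelihood ratio $d\Prb/d\Q$ on $\cG_\tau$ as an environment factor times a dynamical Girsanov factor. I would then split this product by Cauchy--Schwarz into a mean-one piece and a piece that depends only on the signal accumulated up to $\tau$. The constant $\exp(\tfrac12N\lambda^2\sigma_N^2)$ should come entirely from the environment, and the integral of $\norm{\dgrad H_{\mathrm{sig}}}^2$ from a $\chi^2$-type bound on the Glauber update kernels.

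\emph{Environment factor.} Under both laws $W$ has the same law and is independent of $(\phi,R)$. Moreover, $(\phi,R)$ is an almost surely bijective function of $U$, with $U_i\iidsim\cN(\lambda\sigma_N^2,\sigma_N^2)$ under $\Prb$ and $U_i\iidsim\cN(0,\sigma_N^2)$ under $\Q$. Hence
\[
\Lambda\doteq\frac{d\Prb_{\mathrm{env}}}{d\Q_{\mathrm{env}}}=\exp\Bigl(\lambda\textstyle\sum_iU_i-\tfrac12N\lambda^2\sigma_N^2\Bigr)=e^{\frac12N\lambda^2\sigma_N^2}\,\Lambda_2^{1/2},
\]
where $\Lambda_2\doteq\exp(2\lambda\sum_iU_i-2N\lambda^2\sigma_N^2)$ is the density of the mean-$2\lambda\sigma_N^2$ Gaussian shift relative to $\Q_{\mathrm{env}}$. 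In particular $\E_\Q[\Lambda_2]=1$.

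\emph{Dynamical factor.} Conditionally on the environment, both dynamics are single-site dynamics started at $e$ with the same Poisson clock. They differ only in the update kernels
\[
q_{\bullet}(z\mid i,y)=\tfrac12\bigl(1+z\tanh(\beta\partial_iH_\bullet(y))\bigr),\qquad \bullet\in\{\mathrm{pl},\mathrm{null}\}.
\]
I would apply \cref{lem:single-site-change-measure} under $\Q_{\mathrm{dyn}}$ with tilt $a_i(z,t)=q_{\mathrm{pl}}/q_{\mathrm{null}}$ evaluated at $Y_{t^-}$ for $t\le\tau$, and $a\equiv1$ after $\tau$. This tilt is predictable and positive. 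It is deterministically bounded on the finite cube, since the ratio of logistic probabilities is at most $e^{2\beta\max|\partial_iH_{\mathrm{sig}}|}$. The lemma then gives $d\Prb/d\Q|_{\cG_\tau}=\Lambda L_\tau^a$.

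The key algebraic identity is
\[
L^a_\tau=\bigl(L^{a^2}_\tau\bigr)^{1/2}\exp\Bigl(\tfrac12 V_\tau\Bigr),\qquad V_\tau\doteq\int_0^\tau\sum_{i,z}q_{\mathrm{null}}(a-1)^2\,dt.
\]
It follows because $q(a^2-1)-2q(a-1)=q(a-1)^2$. Then
\[
\Prb(E)=e^{\frac12N\lambda^2\sigma_N^2}\,\E_\Q\bigl[(\Lambda_2L^{a^2}_\tau)^{1/2}e^{V_\tau/2}\ind{E}\bigr],
\]
and Cauchy--Schwarz bounds this by $e^{\frac12N\lambda^2\sigma_N^2}\,\E_\Q[\Lambda_2L^{a^2}_\tau]^{1/2}\,\E_\Q[e^{V_\tau}\ind{E}]^{1/2}$. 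Conditionally on the environment, $L^{a^2}$ is a mean-one martingale by \cref{lem:single-site-change-measure}, since $a^2$ is also bounded and predictable. Hence $\E_\Q[\Lambda_2L^{a^2}_\tau]=\E_\Q[\Lambda_2]=1$.

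\emph{Chi-square bound (the main computation).} It remains to show $V_\tau\le\beta^2\cosh^2(\beta D)\int_0^\tau\norm{\dgrad H_{\mathrm{sig}}(Y_t)}^2dt$. Per site, write $h=\partial_iH_{\mathrm{null}}$ and $g=\partial_iH_{\mathrm{sig}}$, so that $h+g=\partial_iH_\phi$. Let $\Delta\doteq|\tanh(\beta(h+g))-\tanh(\beta h)|$. Then
\[
\sum_z\frac{(q_{\mathrm{pl}}-q_{\mathrm{null}})^2}{q_{\mathrm{null}}}=\frac{\Delta^2}{4}\Bigl(\frac1{q_{\mathrm{null}}(+)}+\frac1{q_{\mathrm{null}}(-)}\Bigr)=\Delta^2\cosh^2(\beta h).
\]
Using $\tanh(x+y)-\tanh x=\sinh y/(\cosh x\cosh(x+y))$, this equals $\sinh^2(\beta g)/\cosh^2(\beta(h+g))\le\sinh^2(\beta g)\le\beta^2g^2\cosh^2(\beta g)$. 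For $t<\tau$ we have $|g|\le D$, so the last quantity is at most $\beta^2\cosh^2(\beta D)g^2$. Each site is selected at total rate one, because the clock has rate $N$ and the site is uniform. Summing over $i$ therefore gives exactly the claimed bound on $V_\tau$.

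The step I expect to be delicate is the bookkeeping of the change of measure at a stopping time with respect to the joint filtration $\cG_t$ of environment and dynamics. Concretely, I need to check that $\Lambda L_\tau^a$ is indeed $d\Prb/d\Q$ on $\cG_\tau$, and that the stopped $L^{a^2}$ still has conditional mean one. I would handle this by conditioning on the environment, which is $\cG_0$-measurable, so that the whole argument reduces to the fixed-environment Girsanov lemma with the tilt switched off after $\tau$.
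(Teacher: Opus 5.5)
Your proposal is correct and follows essentially the paper's argument. The matching steps are: the Gaussian environment likelihood ratio (your $\Lambda=e^{\frac12N\lambda^2\sigma_N^2}\Lambda_2^{1/2}$ is equivalent to the paper's $\E_\Q[Z_{\mathrm{env}}^2]=e^{N\lambda^2\sigma_N^2}$), the Girsanov factor $L^a$, Cauchy--Schwarz against the mean-one martingale $L^{a^2}$, and the per-site bound $\sum_z q_{\mathrm{null}}(a-1)^2=\sinh^2(\beta g)/\cosh^2(\beta(h+g))\le\beta^2\cosh^2(\beta D)g^2$.

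The only difference is cosmetic. You switch the tilt off after $\tau$ so that $L^{a^2}_T=L^{a^2}_\tau$, whereas the paper keeps the unstopped tilt and applies optional stopping to $Z(t)$ and $L_t$. The identification $d\Prb/d\Q|_{\cG_\tau}=\Lambda L^a_\tau$ that you flag as delicate is settled by exactly that optional stopping step, or equivalently by the usual uniqueness of laws given intensities up to a stopping time.
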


\begin{proof}
Throughout, we write $Y_t=Y_t(e)$. For $t\le T$, let $\Prb^t$ and $\Q^t$ denote the joint laws of the environment $(\phi,R,W)$ and of Glauber dynamics up to time $t$ under the planted and null laws, respectively. By \cref{def:planted-law,def:null-law}, these laws factorize as $\Prb^t=\Prb_{\mathrm{env}}\cdot\Prb_{\mathrm{dyn}}^t$ and $\Q^t=\Q_{\mathrm{env}}\cdot\Q_{\mathrm{dyn}}^t$, where the superscript $t$ denotes the restriction to Glauber dynamics up to time $t$. Hence, the likelihood ratio $Z(t)\doteq\frac{d\Prb^t}{d\Q^t}$ factorizes as follows:
\begin{align*}
    Z(t)=Z_{\mathrm{env}}Z_{\mathrm{dyn}}(t),
    \qquad
    Z_{\mathrm{env}}\doteq\frac{d\Prb_{\mathrm{env}}}{d\Q_{\mathrm{env}}},
    \qquad
    Z_{\mathrm{dyn}}(t)\doteq\frac{d\Prb_{\mathrm{dyn}}^t(\cdot\mid\phi,R,W)}{d\Q_{\mathrm{dyn}}^t(\cdot\mid\phi,R,W)}.
\end{align*}

We begin by analyzing $Z_{\mathrm{env}}$. By \cref{def:planted-law,def:null-law}, $W$ is independent of $(\phi,R)$ and has the same law under $\Prb_{\mathrm{env}}$ and $\Q_{\mathrm{env}}$. Moreover, $(\phi,R)\disteq(\sgn(U),\abs{U})$, where $U\sim\cN(\lambda\sigma_N^2\mathbf{1},\sigma_N^2I)$ under $\Prb_{\mathrm{env}}$ and $U\sim\cN(0,\sigma_N^2I)$ under $\Q_{\mathrm{env}}$. Hence, $Z_{\mathrm{env}}$ is the likelihood ratio of these Gaussian laws evaluated at $\phi\odot R$, and we obtain the following via a standard Gaussian MGF computation.
\begin{align}
\label{eq:environment-likelihood-second-moment}
    \begin{aligned}
        \E_{\Q}\bigl[Z_{\mathrm{env}}^2\bigr]
        &=\E_{U\sim\cN(0,\sigma_N^2I)}\left[\exp\left(-\frac{1}{\sigma_N^2}\left(\norm{U-\lambda\sigma_N^2\mathbf{1}}^2-\norm{U}^2\right)\right)\right]\\
        &=\exp(-N\lambda^2\sigma_N^2)\,\E_{U\sim\cN(0,\sigma_N^2I)}\left[\exp\left(2\lambda\sum_{i=1}^NU_i\right)\right]
        =\exp(N\lambda^2\sigma_N^2).
    \end{aligned}
\end{align}

We now analyze $Z_{\mathrm{dyn}}$. For $h\in\R$ and $z\in\{\pm1\}$, let $\nu_h(z)\doteq\frac12\bigl(1+z\tanh(\beta h)\bigr)$. By \cref{subsec:cts-glauber}, conditional on the environment, $\Prb_{\mathrm{dyn}}$ and $\Q_{\mathrm{dyn}}$ are the laws of single-site dynamics with spin update kernels $\nu_{\partial_iH_\phi(y)}(z)$ and $\nu_{\partial_iH_{\mathrm{null}}(y)}(z)$, respectively. For $i\in[N]$, $z\in\{\pm1\}$, $y\in\Cube$, and $t\le T$, define
\begin{align*}
    r_i(z;y)\doteq\frac{\nu_{\partial_iH_\phi(y)}(z)}{\nu_{\partial_iH_{\mathrm{null}}(y)}(z)},
    \qquad
    a_i(z,t)\doteq r_i(z;Y_{t^-}).
\end{align*}
Observe that, conditional on the environment, the $a_i(z,t)$ are positive, predictable with respect to the filtration of Glauber dynamics up to time $T$, and bounded since $\Cube$ is finite. We define the stochastic process $(L_t^a)_{0\le t\le T}$ and the tilted measure $\Q_{\mathrm{dyn}}^{T,a}$ as follows:
\begin{align}
\label{eq:dynamical-exponential-martingale}
    L_t^a
    \doteq\exp\left(\sum_{n:T_n\le t}\log a_{I_n}(Z_n,T_n)-\int_0^t\sum_{i=1}^N\sum_{z\in\{\pm1\}}\nu_{\partial_iH_{\mathrm{null}}(Y_{s^-})}(z)\bigl(a_i(z,s)-1\bigr)\,ds\right),
    \qquad
    \frac{d\Q_{\mathrm{dyn}}^{T,a}}{d\Q_{\mathrm{dyn}}^T}
    \doteq L_T^a.
\end{align}
Then, by \cref{lem:single-site-change-measure}, $L_t^a$ is a non-negative mean-one martingale under $\Q_{\mathrm{dyn}}^T$, and thus $\Q_{\mathrm{dyn}}^{T,a}$ is a well-defined law of a marked point process. Since $\sum_z\nu_{\partial_iH_{\mathrm{null}}(y)}(z)r_i(z;y)=\sum_z\nu_{\partial_iH_\phi(y)}(z)=1$, the integral term in \cref{eq:dynamical-exponential-martingale} vanishes. Moreover, by \cref{lem:single-site-change-measure}, updates with mark $(i,z)$ under $\Q_{\mathrm{dyn}}^{T,a}$ have intensity $a_i(z,t)\nu_{\partial_iH_{\mathrm{null}}(Y_{t^-})}(z)=\nu_{\partial_iH_\phi(Y_{t^-})}(z)$. Hence, $\Q_{\mathrm{dyn}}^{T,a}=\Prb_{\mathrm{dyn}}^T$, and consequently,
\begin{align}
\label{eq:dynamical-likelihood-ratio-product}
    Z_{\mathrm{dyn}}(t)
    =L_t^a
    =\prod_{n:T_n\le t}r_{I_n}(Z_n;Y_{T_n^-}),
    \qquad t\le T.
\end{align}
Since $L_t^a$ is a non-negative mean-one martingale under $\Q_{\mathrm{dyn}}$, so is $Z_{\mathrm{dyn}}(t)$.

We now control $Z_{\mathrm{dyn}}(t)^2$. To this end, define
\begin{align*}
    g_i(y)
    \doteq\sum_{z\in\{\pm1\}}\nu_{\partial_iH_{\mathrm{null}}(y)}(z)\bigl[r_i(z;y)-1\bigr]^2
    =\sum_{z\in\{\pm1\}}\nu_{\partial_iH_{\mathrm{null}}(y)}(z)\bigl[r_i(z;y)^2-1\bigr],
\end{align*}
where the equality uses $\sum_z\nu_{\partial_iH_{\mathrm{null}}(y)}(z)r_i(z;y)=1$. Using \cref{eq:dynamical-likelihood-ratio-product} and applying \cref{lem:single-site-change-measure} with the tilts $a_i(z,t)^2$, we conclude that the following process is a mean-one martingale under $\Q_{\mathrm{dyn}}$:
\begin{align}
\label{eq:dynamical-likelihood-ratio-square}
    \begin{aligned}
        L_t\doteq L_t^{a^2}
        &=\exp\left(2\sum_{n:T_n\le t}\log r_{I_n}(Z_n;Y_{T_n^-})-\int_0^t\sum_{i=1}^Ng_i(Y_{s^-})\,ds\right)\\
        &=Z_{\mathrm{dyn}}(t)^2\exp\left(-\int_0^t\sum_{i=1}^Ng_i(Y_{s^-})\,ds\right).
    \end{aligned}
\end{align}
Moreover, since $\nu_h(z)=\frac{e^{\beta zh}}{2\cosh(\beta h)}$, a direct computation shows that for any $h,\Delta\in\R$,
\begin{align*}
    \sum_{z\in\{\pm1\}}\nu_h(z)\left[\frac{\nu_{h+\Delta}(z)}{\nu_h(z)}\right]^2
    =\frac{\cosh(\beta h)\cosh\bigl(\beta(h+2\Delta)\bigr)}{\cosh^2\bigl(\beta(h+\Delta)\bigr)}
    =1+\frac{\sinh^2(\beta\Delta)}{\cosh^2\bigl(\beta(h+\Delta)\bigr)}
    \le1+\sinh^2(\beta\Delta),
\end{align*}
where the second equality uses $\cosh(a-b)\cosh(a+b)=\cosh^2a+\sinh^2b$. 

Since $H_\phi=H_{\mathrm{null}}+H_{\mathrm{sig}}$, applying this bound with $h=\partial_iH_{\mathrm{null}}(y)$ and $\Delta=\partial_iH_{\mathrm{sig}}(y)$, and using $\abs{\tanh x}\le\abs{x}$, we obtain the following whenever $\abs{\partial_iH_{\mathrm{sig}}(y)}\le D$:
\begin{align*}
    0\le g_i(y)
    \le\sinh^2\bigl(\beta\partial_iH_{\mathrm{sig}}(y)\bigr)
    =\cosh^2\bigl(\beta\partial_iH_{\mathrm{sig}}(y)\bigr)\tanh^2\bigl(\beta\partial_iH_{\mathrm{sig}}(y)\bigr)
    \le\beta^2\cosh^2(\beta D)\partial_iH_{\mathrm{sig}}(y)^2.
\end{align*}
Using the above inequality and our assumption on $\dgrad H_{\mathrm{sig}}(Y_t)$, we obtain
\begin{align}
\label{eq:dynamical-compensator-bound}
    \int_0^\tau\sum_{i=1}^Ng_i(Y_{s^-})\,ds\le\beta^2\cosh^2(\beta D)\int_0^\tau\norm{\dgrad H_{\mathrm{sig}}(Y_s)}^2\,ds.
\end{align}

Finally, we combine the two factors via an optional stopping argument. Since $Z_{\mathrm{env}}$ is a function of the environment, it is $\cG_0$-measurable. Moreover, recall that conditional on the environment, $Z_{\mathrm{dyn}}(t)$ and $L_t$ are mean-one martingales with respect to the filtration of Glauber dynamics. Hence, for any $0\le s\le t\le T$,
\begin{align*}
    \E_\Q\bigl[Z(t)\mid\cG_s\bigr]
    =Z_{\mathrm{env}}\E_\Q\bigl[Z_{\mathrm{dyn}}(t)\mid\cG_s\bigr]
    =Z_{\mathrm{env}}Z_{\mathrm{dyn}}(s)
    =Z(s),
    \qquad
    \E_\Q\bigl[L_t\mid\cG_s\bigr]
    =L_s,
\end{align*}
and thus, $Z(t)$ and $L_t$ are martingales with respect to $(\cG_t)$ under $\Q$. Since $\tau\le T$ is a $(\cG_t)$-stopping time, Doob's optional stopping theorem~\citep[Chapter~2]{ethier1986markov} implies that $\E_\Q[Z(T)\mid\cG_\tau]=Z(\tau)$ and $\E_\Q[L_\tau\mid\cG_0]=L_0=1$. Consequently, for any $E\in\cG_\tau$,
\begin{align}
\label{eq:adaptive-contiguity-optional-stopping}
    \begin{aligned}
        \E_\Q\bigl[Z(\tau)\ind{E}\bigr]
        &=\E_\Q\bigl[\E_\Q[Z(T)\mid\cG_\tau]\ind{E}\bigr]
        =\E_\Q\bigl[\E_\Q[Z(T)\ind{E}\mid\cG_\tau]\bigr]\\
        &=\E_\Q\bigl[Z(T)\ind{E}\bigr]
        =\E_\Q\left[\frac{d\Prb^T}{d\Q^T}\ind{E}\right]
        =\Prb(E).
    \end{aligned}
\end{align}
By \cref{eq:adaptive-contiguity-optional-stopping} and the Cauchy--Schwarz inequality, we obtain
\begin{align}
\label{eq:adaptive-contiguity-cauchy-schwarz}
    \begin{aligned}
        \Prb(E)
        &=\E_\Q\left[Z(\tau)\exp\left(-\frac12\int_0^\tau\sum_{i=1}^Ng_i(Y_{s^-})\,ds\right)\cdot\exp\left(\frac12\int_0^\tau\sum_{i=1}^Ng_i(Y_{s^-})\,ds\right)\ind{E}\right]\\
        &\le\E_\Q\left[Z(\tau)^2\exp\left(-\int_0^\tau\sum_{i=1}^Ng_i(Y_{s^-})\,ds\right)\right]^{\nicefrac{1}{2}}\E_\Q\left[\exp\left(\int_0^\tau\sum_{i=1}^Ng_i(Y_{s^-})\,ds\right)\ind{E}\right]^{\nicefrac{1}{2}}.
    \end{aligned}
\end{align}
Thus, it suffices to bound the two expectations on the right-hand side. For the first, since $Z(\tau)=Z_{\mathrm{env}}Z_{\mathrm{dyn}}(\tau)$ and $Z_{\mathrm{env}}$ is $\cG_0$-measurable, we infer from \cref{eq:environment-likelihood-second-moment,eq:dynamical-likelihood-ratio-square} that
\begin{align}
\label{eq:adaptive-contiguity-first-expectation}
    \begin{aligned}
        &\E_\Q\left[Z(\tau)^2\exp\left(-\int_0^\tau\sum_{i=1}^Ng_i(Y_{s^-})\,ds\right)\right]\\
        &\qquad=\E_\Q\left[Z_{\mathrm{env}}^2\,\E_\Q\left[Z_{\mathrm{dyn}}(\tau)^2\exp\left(-\int_0^\tau\sum_{i=1}^Ng_i(Y_{s^-})\,ds\right)\,\middle|\,\cG_0\right]\right]\\
        &\qquad=\E_\Q\bigl[Z_{\mathrm{env}}^2\,\E_\Q[L_\tau\mid\cG_0]\bigr]
        =\E_\Q\bigl[Z_{\mathrm{env}}^2\bigr]
        =\exp(N\lambda^2\sigma_N^2),
    \end{aligned}
\end{align}
where the third equality uses $\E_\Q[L_\tau\mid\cG_0]=1$.

For the second expectation in \cref{eq:adaptive-contiguity-cauchy-schwarz}, we infer from \cref{eq:dynamical-compensator-bound} that
\begin{align}
\label{eq:adaptive-contiguity-second-expectation}
    \E_\Q\left[\exp\left(\int_0^\tau\sum_{i=1}^Ng_i(Y_{s^-})\,ds\right)\ind{E}\right]
    \le\E_\Q\left[\exp\left(\beta^2\cosh^2(\beta D)\int_0^\tau\norm{\dgrad H_{\mathrm{sig}}(Y_s)}^2\,ds\right)\ind{E}\right].
\end{align}
Substituting \cref{eq:adaptive-contiguity-first-expectation,eq:adaptive-contiguity-second-expectation} into \cref{eq:adaptive-contiguity-cauchy-schwarz}, we conclude that
\begin{align*}
    \Prb(E)
    \le\exp\left(\frac12N\lambda^2\sigma_N^2\right)\E_\Q\left[\exp\left(\beta^2\cosh^2(\beta D)\int_0^\tau\norm{\dgrad H_{\mathrm{sig}}(Y_t)}^2\,dt\right)\ind{E}\right]^{\nicefrac{1}{2}},
\end{align*}
which concludes the proof.
\end{proof}

\begin{remark}
\label{rem:adaptive-contiguity-transfer}
If $\int_0^\tau\norm{\dgrad H_{\mathrm{sig}}(Y_t(e))}^2\,dt\le\Gamma$ holds on $E$ for some deterministic $\Gamma\ge0$, then \cref{thm:adaptive-contiguity} implies that $\Prb(E)\le\exp\bigl(\frac12N\lambda^2\sigma_N^2+\frac12\beta^2\cosh^2(\beta D)\Gamma\bigr)\Q(E)^{1/2}$. Thus, events that are (sufficiently) exponentially unlikely under the null law remain so under the planted law, albeit at an exponential cost due to the change in environment and the accumulated signal contribution along the trajectory.
\end{remark}

\subsection{Critical Recovery via the Reverse Planting Trick}
\label{subsec:critical-recovery-reverse-planting}

In this subsection, we shall use the reverse planting trick to establish \cref{thm:critical-recovery-warm-start}. Throughout, we retain the assumptions of \cref{thm:critical-recovery-warm-start}, and we take $N$ to be sufficiently large. Our proof will make use of the following regularity estimates, which we prove in \cref{app:regularity-critical-recovery}.

\begin{lemma}[Correlation of the Warm Start]
\label{lem:warm-start-correlation}
There exist constants $c_0,c>0$, depending only on $k$, such that the following holds under the planted law:
\begin{align*}
    \Prb\left[m_\phi(e)\ge c_0\lambda\sigma_N\ \ \forall\,e\in\cB_H\bigl(\mathbf{1},\lambda^2\sigma_N^2N^{-\vartheta}\bigr)\right]
    \ge1-\exp\bigl(-cN\lambda^2\sigma_N^2\bigr).
\end{align*}
\end{lemma}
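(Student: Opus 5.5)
The proof reduces to the center of the ball and then absorbs the Hamming perturbation deterministically. By \cref{lem:tpca-hamiltonian-under-gauge}, $m_\phi(e)=\frac1N\dotp{\phi}{e}$. For any $e\in\cB_H(\mathbf{1},\rho)$ with $\rho\doteq\lambda^2\sigma_N^2N^{-\vartheta}$, the vectors $e$ and $\mathbf{1}$ differ in at most $\rho N$ coordinates, and each such coordinate changes $\dotp{\phi}{e}$ by at most $2$. Hence, deterministically,
\begin{align*}
    m_\phi(e)\ge m_\phi(\mathbf{1})-2\rho=\frac1N\sum_{i=1}^N\phi_i-2\lambda^2\sigma_N^2N^{-\vartheta}.
\end{align*}
It therefore suffices to show that, with the stated probability, $\frac1N\sum_i\phi_i\ge 2c_0\lambda\sigma_N$, and that $2\lambda^2\sigma_N^2N^{-\vartheta}\le c_0\lambda\sigma_N$.

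For the first claim, recall that under $\Prb$ we have $\phi_i=\sgn(U_i)$ with $U_i\iidsim\cN(\lambda\sigma_N^2,\sigma_N^2)$. This is exactly the setting of \cref{cor:avg-discrete-grad-correlation}, with $t_N=\lambda\sigma_N$. The mean is $\mu_N=2\Prb[0\le Z\le t_N]\ge c\min\{t_N,1\}$.

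It remains to check $\lambda\sigma_N\lesssim1$ in the regime of \cref{thm:critical-recovery-warm-start}. There $\gamma\le\frac{k-2}{4}+\frac1{4k}$, so
\begin{align*}
    \lambda\sigma_N\asymp_k C_{\beta,k}N^{\gamma-\frac{k-1}{4}}\le C_{\beta,k}N^{-\frac{k-1}{4k}}=o(1).
\end{align*}
Thus $\min\{t_N,1\}=t_N$, and $\mu_N\ge c\lambda\sigma_N$.

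A Hoeffding or Chernoff bound for the i.i.d.\ $\pm1$ variables $\phi_i$ then gives
\begin{align*}
    \Prb\Bigl[\tfrac1N\textstyle\sum_i\phi_i<\mu_N/2\Bigr]\le\exp(-N\mu_N^2/8)\le\exp(-cN\lambda^2\sigma_N^2).
\end{align*}
This is exactly the claimed failure probability.

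For the second claim, the perturbation term satisfies $2\lambda^2\sigma_N^2N^{-\vartheta}=2(\lambda\sigma_N)N^{-\vartheta}\cdot\lambda\sigma_N$. Since $\lambda\sigma_N=o(1)$, this is $o(\lambda\sigma_N)$, so for large $N$ it is at most $\frac{c}{4}\lambda\sigma_N$. Taking $c_0=c/4$ completes the argument.

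There is no real obstacle here. The only points needing care are that the bound must hold uniformly over the ball, which the deterministic Lipschitz estimate handles without a union bound, and that $\lambda\sigma_N\le1$ in the stated range of $\gamma$. The constants $c_0$ and $c$ come out depending only on $k$, since $C_{\beta,k}$ enters only through making $\lambda\sigma_N$ larger, which helps the bound.
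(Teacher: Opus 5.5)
Your proof is correct and follows the paper's argument essentially step for step: you reduce to the center $\mathbf{1}$ via the deterministic bound $m_\phi(e)\ge m_\phi(\mathbf{1})-\frac{2}{N}d_H(e,\mathbf{1})$, control $m_\phi(\mathbf{1})=\frac1N\sum_i\phi_i$ by the Chernoff estimate underlying \cref{cor:avg-discrete-grad-correlation} after checking $\lambda\sigma_N=o(1)$ in the stated range of $\gamma$, and absorb the $2\lambda^2\sigma_N^2N^{-\vartheta}$ perturbation. The only cosmetic difference is that the paper cites the corollary through the identity $m_\phi(\mathbf{1})=m(\sgn(u_H))$, whereas you rerun its Hoeffding computation directly on the i.i.d.\ signs $\phi_i$.
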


Henceforth, we write $a_0\doteq c_0\lambda\sigma_N$, where $c_0$ is the constant in \cref{lem:warm-start-correlation}.

\begin{lemma}[Regularity of the Null Hamiltonian]
\label{lem:null-hamiltonian-regularity}
There exist constants $C_0,c>0$, depending only on $k$, such that the following holds under the planted law:
\begin{align*}
    \Prb\left[\max_{y\in\Cube}\norm{\dgrad H_{\mathrm{null}}(y)}^2\ge C_0N\right]
    \le\exp(-cN).
\end{align*}
\end{lemma}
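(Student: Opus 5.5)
We bound $\max_y\norm{\dgrad H_{\mathrm{null}}(y)}^2$ by splitting $\dgrad H_{\mathrm{null}}(y)=R+\dgrad W(y)$. This holds because $\partial_i\dotp{R}{y}=R_i$ is constant in $y$. It therefore suffices to show that, with probability at least $1-\exp(-cN)$ under the planted law, both $\norm{R}^2\lesssim N$ and $\max_y\norm{\dgrad W(y)}^2\lesssim_k N$ hold; then $\norm{\dgrad H_{\mathrm{null}}}^2\le 2\norm{R}^2+2\norm{\dgrad W}^2$.

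For the linear part, $R=\abs{U}$ with $U_i\iidsim\cN(\lambda\sigma_N^2,\sigma_N^2)$. Hence $\norm{R}^2=\norm{U}^2\le 2N\lambda^2\sigma_N^4+2\norm{U-\lambda\sigma_N^2\mathbf 1}^2$. The second term is $\sigma_N^2\chi^2_N$, which is at most $4\sigma_N^2N$ with probability at least $1-e^{-cN}$, and $\sigma_N^2\asymp_k N^{-(k-1)/2}\le 1$. For the first term, we need $\lambda\sigma_N^2\lesssim 1$. In the regime of \cref{thm:critical-recovery-warm-start}, $\lambda=C_{\beta,k}N^\gamma$ with $\gamma\le\tfrac{k-2}{4}+\tfrac1{4k}$, so $\lambda\sigma_N^2\lesssim N^{\gamma-(k-1)/2}\le N^{-k/4+\cdots}=o(1)$. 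Thus $\norm{R}^2\lesssim N$.

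For the disorder part, \cref{lem:tpca-hamiltonian-under-gauge} shows that $W$ has the law of $H_0^\perp(\xopt\odot\phi\odot\cdot)$. Since $\Lambda_\phi$ is an isometry commuting with coordinate flips up to signs, we get $\max_y\norm{\dgrad W(y)}=\max_x\norm{\dgrad H_0^\perp(x)}$, with $\phi$ treated as fixed by independence. By \cref{lem:kspin-gaussian-conditioning}, $H_0^\perp(x)=H_0(x)-\dotp{\zeta}{x}$, so $\dgrad H_0^\perp(x)=\dgrad H_0(x)-\zeta$. Therefore
\[
\max_x\norm{\dgrad H_0^\perp(x)}\le\max_x\norm{\dgrad H_0(x)}+\norm{\zeta}.
\]
\cref{lem:kspin-sobolev-norm-concentration} with $n=1$ and $t=N$ gives $\sqrt N\max_x\norm{\dgrad H_0(x)}\le C_kN$ with probability at least $1-e^{-c_kN}$, that is, $\max_x\norm{\dgrad H_0}^2\le C_k N$. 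Also, $\zeta\sim\cN(0,\sigma_N^2I)$ gives $\norm{\zeta}^2\le 4N\sigma_N^2\le 4N$ with probability at least $1-e^{-cN}$. A union bound over these events finishes the proof.

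The only subtle point is making the transfer through $\phi$ and the gauge rigorous. We handle this by conditioning on $(\phi,R)$: then $W$ is distributed as a composition of $H_0^\perp$ with a fixed isometry, and the constants in the bound on $\max\norm{\dgrad\cdot}$ are invariant under such isometries. If instead one prefers to realize $H_0^\perp$ directly as $H_0-\dotp{\zeta}{\cdot}$ on a common probability space, the bound above applies verbatim. Otherwise the argument is routine.
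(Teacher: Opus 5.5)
Your proposal is correct and follows essentially the same route as the paper. You split $\dgrad H_{\mathrm{null}}=R+\dgrad W$, bound $\norm{R}=\norm{U}$ using $\lambda\sigma_N^2\le1$ together with Gaussian concentration, and control $\max\norm{\dgrad W}$ through $H_0^\perp=H_0-\dotp{\zeta}{\cdot}$, \cref{lem:kspin-sobolev-norm-concentration}, and a union bound. Your gauge/isometry discussion is harmless but not needed, since \cref{def:planted-law} already makes $W$ independent of $(\phi,R)$ with exactly the law of $H_0^\perp$.
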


\begin{lemma}[Bounds on the Signal Drift]
\label{lem:signal-drift-bounds}
There exist constants $c_1,C_1>0$, depending only on $k$, such that the following holds for every $y\in\Cube$ with $m_\phi(y)\ge a_0/2$ and every $i\in[N]$:
\begin{align}
\label{eq:signal-gradient-bounds}
    c_1\lambda m_\phi(y)^{k-1}
    \le\phi_i\partial_iH_{\mathrm{sig}}(y)
    \le C_1\lambda m_\phi(y)^{k-1}.
\end{align}
Moreover, for any $C>0$, there exists a constant $c_2>0$, depending only on $\beta$, $k$ and $C$, such that the following holds for every $y\in\Cube$ with $a_0/2\le m_\phi(y)\le Cr_N$ and $\norm{\dgrad H_{\mathrm{null}}(y)}^2\le C_0N$, where $C_0$ is the constant in \cref{lem:null-hamiltonian-regularity}:
\begin{align*}
    (\cL_{\mathrm{pl}}-\cL_{\mathrm{null}})m_\phi(y)
    \ge c_2\lambda m_\phi(y)^{k-1}.
\end{align*}
\end{lemma}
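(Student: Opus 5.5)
The plan is a direct computation from the explicit form of $H_{\mathrm{sig}}$ in \cref{lem:tpca-hamiltonian-under-gauge}. The first claim reduces to a mean value estimate. The second reduces to monotonicity of $\tanh$ and a counting argument over coordinates with large null field. The upper bound $m_\phi\le Cr_N$ is what keeps the signal field bounded, so that $\tanh$ does not saturate.

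\emph{Step 1: the signal field.} Write $m\doteq m_\phi(y)$ and $m_{-i}\doteq N^{-1}\sum_{j\ne i}\phi_jy_j$. Since $m_\phi(y)=N^{-1}\dotp{\phi}{y}$, we have $\phi_i\partial_i m_\phi=1/N$. The definition of the discrete partial derivative then gives
\begin{align*}
    \phi_i\partial_iH_{\mathrm{sig}}(y)
    =\frac{N\lambda}{2}\Bigl[\bigl(m_{-i}+\tfrac1N\bigr)^k-\bigl(m_{-i}-\tfrac1N\bigr)^k\Bigr]-\lambda\sigma_N^2 .
\end{align*}
By the mean value theorem, the bracketed term equals $k\lambda\xi^{k-1}$ for some $\xi$ with $|\xi-m|\le 2/N$. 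In the regime $k-2\over4$ $\le\gamma$ we have $a_0=c_0\lambda\sigma_N\gtrsim_k N^{-1/4}\gg 1/N$. Hence $m\ge a_0/2$ forces $\xi\in[m/2,2m]$, so the bracketed term lies between $k2^{-(k-1)}\lambda m^{k-1}$ and $k2^{k-1}\lambda m^{k-1}$.

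It remains to absorb the linear correction $\lambda\sigma_N^2$. Since $m^{k-1}\ge(a_0/2)^{k-1}\gtrsim_k(\lambda\sigma_N)^{k-1}\gtrsim_k N^{-(k-1)/4}$, while $\sigma_N^2\asymp_k N^{-(k-1)/2}$, this correction is $o(\lambda m^{k-1})$. Absorbing it into the constants yields \cref{eq:signal-gradient-bounds}.

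\emph{Step 2: the generator difference.} By \cref{eq:glauber-generator-srw-decomposition} the entropic parts cancel. Setting $h_i\doteq\phi_i\partial_iH_{\mathrm{null}}(y)$ and $s_i\doteq\phi_i\partial_iH_{\mathrm{sig}}(y)$, and using that $\tanh$ is odd, we get
\begin{align*}
    (\cL_{\mathrm{pl}}-\cL_{\mathrm{null}})m_\phi(y)=\frac1N\sum_{i=1}^N\bigl[\tanh(\beta(h_i+s_i))-\tanh(\beta h_i)\bigr].
\end{align*}
By Step 1 each $s_i>0$, so every summand is nonnegative because $\tanh$ is increasing. We may therefore discard any set of coordinates.

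Fix a constant $K$ with $C_0/K^2\le 1/2$. From $\norm{\dgrad H_{\mathrm{null}}(y)}^2\le C_0N$, Chebyshev's inequality shows that at most $N/2$ indices have $|h_i|>K$. For the remaining indices, the upper bound in Step 1 together with $m\le Cr_N=C\lambda^{-1/(k-1)}$ gives $s_i\le C_1\lambda(Cr_N)^{k-1}=C_1C^{k-1}$. The segment $[\beta h_i,\beta(h_i+s_i)]$ therefore lies inside $[-\beta K,\beta(K+C_1C^{k-1})]$. On this segment $\tanh'=\operatorname{sech}^2$ is bounded below by $\operatorname{sech}^2(\beta(K+C_1C^{k-1}))$.

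The mean value theorem and the lower bound $s_i\ge c_1\lambda m^{k-1}$ then give each such summand at least $\beta c_1\operatorname{sech}^2(\beta(K+C_1C^{k-1}))\lambda m^{k-1}$. Summing over at least $N/2$ good indices and dividing by $N$ yields the claim with
\begin{align*}
    c_2=\tfrac12\beta c_1\operatorname{sech}^2\bigl(\beta(K+C_1C^{k-1})\bigr),
\end{align*}
which depends only on $\beta$, $k$ and $C$.

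\emph{Main obstacle.} The delicate points are twofold. First, we must ensure the subtracted term $N\lambda\sigma_N^2m_\phi$ is negligible at the lowest admissible correlation $a_0/2$, which is where $\gamma\ge\frac{k-2}{4}$ enters. Second, we must avoid saturation of $\tanh$, which is exactly where the cap $m_\phi\le Cr_N$ is used.
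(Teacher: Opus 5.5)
Your proof is correct and follows the paper's argument essentially step for step. You use the same mean-value computation of $\phi_i\partial_iH_{\mathrm{sig}}$, absorbing the linear correction $\lambda\sigma_N^2$ via $a_0\gtrsim N^{-1/4}$, and the same monotonicity-plus-counting argument for $(\cL_{\mathrm{pl}}-\cL_{\mathrm{null}})m_\phi$, with the cap $m_\phi\le Cr_N$ bounding the signal field so that $\operatorname{sech}^2$ stays bounded below. The only difference is cosmetic: your threshold $K=\sqrt{2C_0}$ gives at least $N/2$ good coordinates, versus the paper's $2\sqrt{C_0}$ giving $3N/4$, which only changes the value of $c_2$.
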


\begin{lemma}[{\cite[Lemma~5.1]{arous2020algorithmic}}]
\label{lem:power-law-gronwall}
Let $a,c,T>0$, and let $f:[0,T]\to\R$ be a bounded measurable function such that
\begin{align*}
    f(t)\ge a+c\int_0^tf(s)^{k-1}\,ds
    \qquad\forall\,t\in[0,T].
\end{align*}
Then, the following holds for every $t\in[0,T]$ with $(k-2)ca^{k-2}t<1$:
\begin{align*}
    f(t)\ge\bigl[a^{-(k-2)}-(k-2)ct\bigr]^{-\frac{1}{k-2}}.
\end{align*}
\end{lemma}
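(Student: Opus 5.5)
This is a power-law Gronwall comparison. I will compare $f$ with the solution of the associated ODE, using a continuity argument. Define
\begin{align*}
    g(t)\doteq\bigl[a^{-(k-2)}-(k-2)ct\bigr]^{-\frac{1}{k-2}}
\end{align*}
on $[0,t^*)$ with $t^*\doteq\frac{1}{(k-2)ca^{k-2}}$. A direct computation shows that $g$ is the unique solution of $g'=cg^{k-1}$ with $g(0)=a$, that is, $g(t)=a+c\int_0^tg(s)^{k-1}\,ds$. The goal is $f\ge g$ on $[0,T]\cap[0,t^*)$. First I reduce to a positive function. Set $F(t)\doteq a+c\int_0^tf(s)^{k-1}\,ds$, so that $f\ge F$. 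Here is a subtlety: for odd $k$ the quantity $f^{k-1}$ is nonnegative, but for even $k$ we have $k-1$ odd. For even $k$ I would first observe that $f\ge a>0$ as long as $F$ stays above $a$. Rather than argue that directly, I work with $F$ itself.

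\textbf{Step 1.} $F$ is absolutely continuous (since $f$ is bounded measurable), with $F'=cf^{k-1}$ almost everywhere and $F(0)=a$. On any interval where $F>0$ we have $f\ge F>0$, so monotonicity of $z\mapsto z^{k-1}$ on $[0,\infty)$ gives $F'\ge cF^{k-1}$ almost everywhere. Let $t_0\doteq\inf\{t:F(t)\le a/2\}$. On $[0,t_0)$ the function $F$ is positive, so $F'\ge cF^{k-1}\ge0$ and $F$ is nondecreasing. By continuity $F$ cannot reach $a/2$ from $F(0)=a$. Hence $t_0=T$, and $F\ge a>0$ on all of $[0,T]$; in particular $f\ge a>0$ throughout.

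\textbf{Step 2.} I compare $F$ with $g$ through the separated variable. Define $\Psi(t)\doteq F(t)^{-(k-2)}$. It is absolutely continuous because $F\ge a>0$, and
\begin{align*}
    \Psi'=-(k-2)F^{-(k-1)}F'\le-(k-2)c \quad\text{a.e.}
\end{align*}
Integrating gives $F(t)^{-(k-2)}\le a^{-(k-2)}-(k-2)ct$. Whenever $(k-2)ca^{k-2}t<1$ the right-hand side is positive, so raising to the power $-\frac{1}{k-2}$ (which is decreasing) yields $F(t)\ge g(t)$. Since $f\ge F$, the claim follows.

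The only delicate point is Step 1: justifying the chain rule for absolutely continuous functions and ensuring positivity so that $z^{k-1}$ is monotone when $k$ is even. Both are routine once $F$, rather than $f$, is used as the comparison object. Everything else is the explicit separable-ODE computation.
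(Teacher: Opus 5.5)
Your proof is correct and follows essentially the same route as the paper's: work with the absolutely continuous function $F(t)=a+c\int_0^t f(s)^{k-1}\,ds$, show $F\ge a>0$, and integrate the a.e.\ inequality $\frac{d}{dt}F^{-(k-2)}\le-(k-2)c$. The only difference is that the paper gets $F\ge a$ immediately from $f^{k-1}\ge0$ because $k$ is odd (assumed throughout that section), whereas your continuity argument in Step~1 also covers even $k$; the comparison solution $g$ you introduce at the start is never actually used.
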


Equipped with the above regularity estimates, we now establish \cref{thm:critical-recovery-warm-start} by applying the adaptive contiguity bound up to a carefully chosen stopping time, thereby proving that the null drift is unlikely to overtake the signal drift before the dynamics achieves critical recovery.

\subsubsection{Proof of \texorpdfstring{\cref{thm:critical-recovery-warm-start}}{Theorem \ref{thm:critical-recovery-warm-start}}}
\label{subsubsec:proof-critical-recovery-warm-start}

\begin{proof}
Throughout this proof, we work under the planted law. We fix a deterministic initialization $e\in\cB_H(\mathbf{1},\lambda^2\sigma_N^2N^{-\vartheta})$, and write $Y_t=Y_t(e)$ and $m_t=m_\phi(Y_t)$. We also assume that $C_{\beta,k}$ is sufficiently large, depending on $\beta$, $k$ and $C$, where $C$ is the constant in the definition of $T_*(e)$.

We begin by fixing some constants. Let $C_1$ and $c_2$ denote the constants in \cref{lem:signal-drift-bounds}, and define
\begin{align*}
    \eta\doteq\frac{c_2}{2},
    \qquad
    D\doteq C_1C^{k-1},
    \qquad
    T_\lambda\doteq\frac{C_T}{\lambda a_0^{k-2}},
    \qquad
    C_T\doteq\frac{2^{k-1}}{(k-2)\eta}.
\end{align*}
Since $C_{\beta,k}$ is sufficiently large and
\begin{align*}
    \lambda a_0^{k-2}
    =c_0^{k-2}\lambda^{k-1}\sigma_N^{k-2}
    \asymp_kC_{\beta,k}^{k-1}N^{(k-1)(\gamma-\frac{k-2}{4})},
\end{align*}
we may assume that $\lambda a_0^{k-2}\ge1$, and hence $T_\lambda\le C_T$.

We now define the Dynkin martingale of $m_\phi$. By \cref{def:planted-law}, conditional on the environment, $(Y_t)_{t\ge0}$ is Glauber dynamics for $H_\phi$ at inverse temperature $\beta$. Hence, by \cref{lem:single-site-dynkin}, the following process is a martingale conditional on the environment:
\begin{align}
\label{eq:critical-recovery-dynkin}
    M_t\doteq m_t-m_0-\int_0^t(\cL_{\mathrm{pl}}m_\phi)(Y_s)\,ds.
\end{align}
Since $m_\phi$ is linear with $\partial_im_\phi=\phi_i/N$, we have $\norm{m_\phi}_{\cD^1}\le2$ and $\max_{y\in\Cube}\norm{\dgrad m_\phi(y)}_\infty=1/N$. Consider the events
\begin{align*}
    \cI\doteq\{m_0\ge a_0\},
    \qquad
    \cE_{\mathrm{reg}}\doteq\left\{\max_{y\in\Cube}\norm{\dgrad H_{\mathrm{null}}(y)}^2<C_0N\right\},
    \qquad
    \calM_{T_\lambda}\doteq\left\{\sup_{0\le t\le T_\lambda}\abs{M_t}\le\frac{a_0}{32}\right\}.
\end{align*}
Applying \cref{lem:warm-start-correlation,lem:null-hamiltonian-regularity}, and \cref{lem:dynkin-martingale-concentration} conditionally on the environment, we obtain the following:
\begin{align}
\label{eq:critical-recovery-event-probabilities}
    \begin{aligned}
        &\Prb(\cI^c)\le\exp\bigl(-cN\lambda^2\sigma_N^2\bigr),
        \qquad
        \Prb(\cE_{\mathrm{reg}}^c)\le\exp(-cN),\\
        &\Prb(\calM_{T_\lambda}^c)
        \le2\exp\left(-\frac{cN(a_0/32)^2}{4T_\lambda+a_0/32}\right)
        \le2\exp\bigl(-cN\lambda^2\sigma_N^2\bigr),
    \end{aligned}
\end{align}
where the last step uses $T_\lambda\le C_T$, $a_0\le1$ and $a_0=c_0\lambda\sigma_N$.

Define the following stopping times:
\begin{align*}
    T_*^Y(e)
    &\doteq\inf\bigl\{t\ge0:m_t\ge Cr_N\bigr\},\\
    \tau^{\downarrow}
    &\doteq\inf\Bigl\{t\ge0:m_t\le\frac{a_0}{2}\Bigr\},\\
    \tau^{\mathrm{bad}}
    &\doteq\inf\bigl\{t\ge0:\abs{(\cL_{\mathrm{null}}m_\phi)(Y_t)}\ge\eta\lambda m_t^{k-1}\bigr\},\\
    \tau
    &\doteq T_\lambda\wedge T_*^Y(e)\wedge\tau^{\downarrow}\wedge\tau^{\mathrm{bad}}.
\end{align*}

We next record two consequences of these definitions. For every $t<\tau$, we have $a_0/2<m_t<Cr_N$. Hence, since $\lambda r_N^{k-1}=1$, \cref{eq:signal-gradient-bounds} implies that
\begin{align}
\label{eq:critical-recovery-signal-gradient}
    \norm{\dgrad H_{\mathrm{sig}}(Y_t)}_\infty
    \le C_1\lambda(Cr_N)^{k-1}
    =D,
    \qquad
    \norm{\dgrad H_{\mathrm{sig}}(Y_t)}^2
    \le C_1^2N\lambda^2m_t^{2k-2}
    \qquad\forall\,t<\tau.
\end{align}
Moreover, $\cL_{\mathrm{pl}}m_\phi=\cL_{\mathrm{null}}m_\phi+(\cL_{\mathrm{pl}}-\cL_{\mathrm{null}})m_\phi$. Hence, on $\cE_{\mathrm{reg}}$, the second part of \cref{lem:signal-drift-bounds} and $t<\tau^{\mathrm{bad}}$ imply that
\begin{align}
\label{eq:critical-recovery-planted-drift}
    (\cL_{\mathrm{pl}}m_\phi)(Y_t)
    \ge2\eta\lambda m_t^{k-1}-\abs{(\cL_{\mathrm{null}}m_\phi)(Y_t)}
    \ge\eta\lambda m_t^{k-1}
    >0
    \qquad\forall\,t<\tau.
\end{align}

We now control the correlation up to time $\tau$. Define $\calM_\tau\doteq\{\sup_{0\le t\le\tau}\abs{M_t}\le a_0/32\}$. By \cref{eq:critical-recovery-dynkin,eq:critical-recovery-planted-drift}, the following holds on $\cI\cap\cE_{\mathrm{reg}}\cap\calM_\tau$ for any $0\le s\le t\le\tau$:
\begin{align*}
    m_t-m_s
    =\int_s^t(\cL_{\mathrm{pl}}m_\phi)(Y_u)\,du+M_t-M_s
    \ge M_t-M_s
    \ge-\frac{a_0}{16}.
\end{align*}
Since $M_0=0$ and $m_0\ge a_0$, we also have
\begin{align}
\label{eq:critical-recovery-overlap-bounds}
    m_t\ge m_0+M_t\ge\frac{31a_0}{32}
    \quad\forall\,t\le\tau,
    \qquad
    m_s\le m_t+\frac{a_0}{16}\le2m_t
    \quad\forall\,s\le t\le\tau.
\end{align}
Similarly, we infer from \cref{eq:critical-recovery-dynkin,eq:critical-recovery-planted-drift} that
\begin{align}
\label{eq:critical-recovery-integral-inequality}
    m_t
    =m_0+\int_0^t(\cL_{\mathrm{pl}}m_\phi)(Y_s)\,ds+M_t
    \ge\frac{31a_0}{32}+\eta\lambda\int_0^tm_s^{k-1}\,ds
    \qquad\forall\,t\le\tau.
\end{align}

We now show that the null drift is unlikely to dominate the signal drift before the dynamics achieves critical recovery. To this end, define the failure event
\begin{align*}
    \cE_{\mathrm{fail}}\doteq\cI\cap\cE_{\mathrm{reg}}\cap\calM_\tau\cap\bigl\{\tau=\tau^{\mathrm{bad}}<T_*^Y(e)\bigr\}.
\end{align*}
Since $\cI$ and $\cE_{\mathrm{reg}}$ depend only on the environment, and the remaining events are determined by the environment and the trajectory up to time $\tau$, we have $\cE_{\mathrm{fail}}\in\cG_\tau$. To bound $\Prb(\cE_{\mathrm{fail}})$ by \cref{thm:adaptive-contiguity}, we first control the accumulated signal drift on $\cE_{\mathrm{fail}}$. Note that \cref{eq:critical-recovery-signal-gradient,eq:critical-recovery-planted-drift,eq:critical-recovery-overlap-bounds} hold pathwise on $\cE_{\mathrm{fail}}$, irrespective of the underlying law. By \cref{eq:critical-recovery-signal-gradient,eq:critical-recovery-overlap-bounds}, the following holds on $\cE_{\mathrm{fail}}$:
\begin{align}
\label{eq:critical-recovery-energy-moment}
    \int_0^\tau\norm{\dgrad H_{\mathrm{sig}}(Y_t)}^2\,dt
    \le C_1^2N\lambda^2\int_0^\tau m_t^{2k-2}\,dt
    \le C_1^2N\lambda^2(2m_\tau)^{k-1}\int_0^\tau m_t^{k-1}\,dt.
\end{align}
Moreover, since $m_0+M_\tau\ge31a_0/32$ by \cref{eq:critical-recovery-overlap-bounds}, we infer the following from \cref{eq:critical-recovery-dynkin,eq:critical-recovery-planted-drift}
\begin{align}
\label{eq:critical-recovery-drift-integral}
    \eta\lambda\int_0^\tau m_t^{k-1}\,dt
    \le\int_0^\tau(\cL_{\mathrm{pl}}m_\phi)(Y_t)\,dt
    =m_\tau-m_0-M_\tau
    \le m_\tau-\frac{a_0}{2}.
\end{align}
Since $m_\tau\ge a_0/2$, we also have
\begin{align}
\label{eq:critical-recovery-power-gap}
    m_\tau^{2k-2}-\Bigl(\frac{a_0}{2}\Bigr)^{2k-2}
    \ge m_\tau^{2k-3}\Bigl(m_\tau-\frac{a_0}{2}\Bigr)
    \ge\Bigl(\frac{a_0}{2}\Bigr)^{k-2}m_\tau^{k-1}\Bigl(m_\tau-\frac{a_0}{2}\Bigr).
\end{align}
Combining \cref{eq:critical-recovery-energy-moment,eq:critical-recovery-drift-integral,eq:critical-recovery-power-gap}, we infer that the following holds on $\cE_{\mathrm{fail}}$:
\begin{align}
\label{eq:critical-recovery-energy-combined}
    \int_0^\tau\norm{\dgrad H_{\mathrm{sig}}(Y_t)}^2\,dt
    &\le\frac{2^{k-1}C_1^2N\lambda}{\eta}m_\tau^{k-1}\Bigl(m_\tau-\frac{a_0}{2}\Bigr)\nonumber\\
    &\le\frac{2^{k-1}C_1^2N}{\eta^3\lambda(a_0/2)^{k-2}}\Bigl[\eta^2\lambda^2m_\tau^{2k-2}-\eta^2\lambda^2\Bigl(\frac{a_0}{2}\Bigr)^{2k-2}\Bigr].
\end{align}
Moreover, since $\tau=\tau^{\mathrm{bad}}$ on $\cE_{\mathrm{fail}}$, we have
\begin{align}
\label{eq:critical-recovery-bad-time}
    \eta^2\lambda^2m_\tau^{2k-2}
    \le\bigl|(\cL_{\mathrm{null}}m_\phi)(Y_\tau)\bigr|^2
    \le\sup_{t\in[0,T_\lambda]}\bigl|(\cL_{\mathrm{null}}m_\phi)(Y_t)\bigr|^2.
\end{align}
Since $C_{\beta,k}$ is sufficiently large, we may assume that $\eta^3\lambda(a_0/2)^{k-2}\ge2^{k+3}\beta^2\cosh^2(\beta D)C_1^2$. Hence, writing $c_3\doteq2^{k-1}\beta^2\cosh^2(\beta D)C_1^2/\eta$ and combining \cref{eq:critical-recovery-energy-combined,eq:critical-recovery-bad-time}, we conclude that the following holds on $\cE_{\mathrm{fail}}$:
\begin{align}
\label{eq:critical-recovery-energy-bound}
    \beta^2\cosh^2(\beta D)\int_0^\tau\norm{\dgrad H_{\mathrm{sig}}(Y_t)}^2\,dt
    \le\frac{N}{16}\sup_{t\in[0,T_\lambda]}\bigl|(\cL_{\mathrm{null}}m_\phi)(Y_t)\bigr|^2-c_3N\lambda\Bigl(\frac{a_0}{2}\Bigr)^k.
\end{align}

We now transfer the failure event to the null law. Since $\cE_{\mathrm{fail}}\in\cG_\tau$ and $\tau\le T_\lambda$ is a bounded stopping time, applying \cref{thm:adaptive-contiguity} using \cref{eq:critical-recovery-signal-gradient} to $E=\cE_{\mathrm{fail}}$, we infer from \cref{eq:critical-recovery-energy-bound} that
\begin{align*}
    \Prb(\cE_{\mathrm{fail}})
    &\le\exp\Bigl(\frac12N\lambda^2\sigma_N^2\Bigr)\E_\Q\left[\exp\left(\beta^2\cosh^2(\beta D)\int_0^\tau\norm{\dgrad H_{\mathrm{sig}}(Y_t)}^2\,dt\right)\ind{\cE_{\mathrm{fail}}}\right]^{\nicefrac{1}{2}}\\
    &\le\exp\Bigl(\frac12N\lambda^2\sigma_N^2-\frac{c_3}{2}N\lambda\Bigl(\frac{a_0}{2}\Bigr)^k\Bigr)\E_\Q\left[\exp\left(\frac{N}{16}\sup_{t\in[0,T_\lambda]}\bigl|(\cL_{\mathrm{null}}m_\phi)(Y_t)\bigr|^2\right)\right]^{\nicefrac{1}{2}}\\
    &\le2^{\nicefrac{1}{4}}(1+NT_\lambda)^{\nicefrac{1}{2}}\exp\Bigl(\frac12N\lambda^2\sigma_N^2-\frac{c_3}{2}N\lambda\Bigl(\frac{a_0}{2}\Bigr)^k\Bigr),
\end{align*}
where the last step uses \cref{lem:null-trajectory-uniform-concentration}. Moreover, since $C_{\beta,k}$ is sufficiently large, we may assume that $c_3c_0^2\lambda(a_0/2)^{k-2}\ge16$, and hence
\begin{align*}
    \frac{c_3}{2}N\lambda\Bigl(\frac{a_0}{2}\Bigr)^k
    =\frac{c_3c_0^2}{8}\lambda\Bigl(\frac{a_0}{2}\Bigr)^{k-2}N\lambda^2\sigma_N^2
    \ge2N\lambda^2\sigma_N^2.
\end{align*}
Since $T_\lambda\le C_T$ and $N\lambda^2\sigma_N^2\gtrsim\sqrt N$, we conclude that
\begin{align}
\label{eq:critical-recovery-failure-probability}
    \Prb(\cE_{\mathrm{fail}})
    \le2^{\nicefrac{1}{4}}(1+NT_\lambda)^{\nicefrac{1}{2}}\exp\Bigl(-\frac32N\lambda^2\sigma_N^2\Bigr)
    \le\exp\bigl(-N\lambda^2\sigma_N^2\bigr).
\end{align}

We are now ready to show that the dynamics achieves critical recovery before time $T_\lambda$. Define the good event
\begin{align*}
    \cE_{\mathrm{good}}\doteq\cI\cap\cE_{\mathrm{reg}}\cap\calM_{T_\lambda}\cap\bigl(\{\tau<\tau^{\mathrm{bad}}\}\cup\{\tau=T_*^Y(e)\}\bigr).
\end{align*}
Observe that
\begin{align*}
    \bigl\{\tau=\tau^{\mathrm{bad}}<T_*^Y(e)\bigr\}^c
    =\{\tau<\tau^{\mathrm{bad}}\}\cup\{\tau=T_*^Y(e)\}.
\end{align*}
Moreover, since $\tau\le T_\lambda$, we have $\calM_{T_\lambda}\subseteq\calM_\tau$. Hence, $\cE_{\mathrm{good}}=\cI\cap\cE_{\mathrm{reg}}\cap\calM_{T_\lambda}\cap\cE_{\mathrm{fail}}^c$, and we infer from \cref{eq:critical-recovery-event-probabilities,eq:critical-recovery-failure-probability} and a union bound that
\begin{align*}
    \Prb(\cE_{\mathrm{good}})\ge1-5\exp\bigl(-cN\lambda^2\sigma_N^2\bigr).
\end{align*}
Henceforth, we work on $\cE_{\mathrm{good}}$. Since $\calM_{T_\lambda}\subseteq\calM_\tau$, \cref{eq:critical-recovery-overlap-bounds,eq:critical-recovery-integral-inequality} hold. We first turn \cref{eq:critical-recovery-integral-inequality} into an explicit lower bound on the correlation. To this end, define the blow-up time
\begin{align*}
    t_\infty
    \doteq\frac{(a_0/2)^{-(k-2)}}{(k-2)\eta\lambda}
    =\frac{2^{k-2}}{(k-2)\eta\lambda a_0^{k-2}}.
\end{align*}
By our choice of $C_T$, we have $t_\infty=T_\lambda/2<T_\lambda$. Since $31a_0/32\ge a_0/2$, applying \cref{lem:power-law-gronwall} to $f(t)=m_t$ on $[0,\tau]$ with $a=a_0/2$ and $c=\eta\lambda$, we infer from \cref{eq:critical-recovery-integral-inequality} that
\begin{align}
\label{eq:critical-recovery-power-law}
    m_t
    \ge\Bigl[\Bigl(\frac{a_0}{2}\Bigr)^{-(k-2)}-(k-2)\eta\lambda t\Bigr]^{-\frac{1}{k-2}}
    \qquad\forall\,t\in[0,\tau\wedge t_\infty).
\end{align}

We now examine which of the stopping times attains the minimum in the definition of $\tau$. By \cref{eq:critical-recovery-overlap-bounds}, we have $m_\tau\ge31a_0/32>a_0/2$, and hence $\tau\ne\tau^{\downarrow}$. Therefore, one of the following two cases holds.

\textbf{Case 1: $\tau=T_*^Y(e)$.} In this case, we have $T_*^Y(e)=\tau\le T_\lambda$, and we are done.

\textbf{Case 2: $\tau<T_*^Y(e)$.} By definition of $\cE_{\mathrm{good}}$, we must have $\tau<\tau^{\mathrm{bad}}$, and hence $\tau=T_\lambda$. Moreover, since $T_*^Y(e)>0$, we have $a_0\le m_0<Cr_N$. Define
\begin{align*}
    t_*\doteq\frac{(a_0/2)^{-(k-2)}-(2Cr_N)^{-(k-2)}}{(k-2)\eta\lambda}.
\end{align*}
Since $a_0/2<2Cr_N$, we have $0<t_*<t_\infty<T_\lambda=\tau$. Hence, \cref{eq:critical-recovery-power-law} holds at $t=t_*$, which implies that
\begin{align*}
    m_{t_*}
    \ge\Bigl[\Bigl(\frac{a_0}{2}\Bigr)^{-(k-2)}-(k-2)\eta\lambda t_*\Bigr]^{-\frac{1}{k-2}}
    =2Cr_N.
\end{align*}
Consequently, $T_*^Y(e)\le t_*<\tau$, which contradicts $\tau<T_*^Y(e)$. Hence, this case cannot occur.

Combining the two cases, we have $T_*^Y(e)\le T_\lambda$ on $\cE_{\mathrm{good}}$, and after adjusting constants, we conclude that
\begin{align}
\label{eq:critical-recovery-fixed-initialization}
    \Prb\bigl[T_*^Y(e)>T_\lambda\bigr]\le\exp\bigl(-cN\lambda^2\sigma_N^2\bigr).
\end{align}

Finally, we invert the gauge transformation and take a union bound over the initialization. Since $\sgn(u_H)=\xopt\odot\phi$ under the planted law, and the gauge transformation $x\mapsto\xopt\odot\phi\odot x$ is an isometric involution of $\Cube$, it maps $\cB_H(\sgn(u_H),r)$ isometrically onto $\cB_H(\mathbf{1},r)$ for every $r>0$. Moreover, $m(X_t(\xopt\odot\phi\odot e))=m_\phi(Y_t(e))$, and hence $T_*(\xopt\odot\phi\odot e)=T_*^Y(e)$ for every $e\in\Cube$. Consequently, we infer from \cref{eq:critical-recovery-fixed-initialization} that the following holds for every $e\in\cB_H(\mathbf{1},\lambda^2\sigma_N^2N^{-\vartheta})$:
\begin{align*}
    \Prb\bigl[T_*(\xopt\odot\phi\odot e)>T_\lambda\bigr]\le\exp\bigl(-cN\lambda^2\sigma_N^2\bigr).
\end{align*}
Since $\abs{\cB_H(\mathbf{1},r)}\le(3/r)^{rN}$ for $r\in(0,\nicefrac{1}{2}]$ and $\lambda^2\sigma_N^2\gtrsim N^{-\nicefrac{1}{2}}$, we have $\log\abs{\cB_H(\mathbf{1},\lambda^2\sigma_N^2N^{-\vartheta})}=o(N\lambda^2\sigma_N^2)$. Hence, a union bound over $e\in\cB_H(\mathbf{1},\lambda^2\sigma_N^2N^{-\vartheta})$ yields
\begin{align*}
    \Prb\Bigl[T_*(e)\le T_\lambda\ \ \forall\,e\in\cB_H\bigl(\sgn(u_H),\lambda^2\sigma_N^2N^{-\vartheta}\bigr)\Bigr]
    \ge1-\exp\bigl(-cN\lambda^2\sigma_N^2\bigr).
\end{align*}
The proof is concluded by noting that $T_\lambda=C_T/(\lambda a_0^{k-2})\lesssim N^{-(k-1)(\gamma-\frac{k-2}{4})}$, with implicit constants depending only on $\beta$, $k$ and $C$.
\end{proof}

\subsection{Proof of \texorpdfstring{\cref{thm:persistent-strong-recovery-algorithmic-threshold}}{Theorem~\ref{thm:persistent-strong-recovery-algorithmic-threshold}}}
\label{subsec:proof-persistent-strong-recovery-algorithmic-threshold}

\begin{proof}
Throughout the proof, we assume that $N$ is sufficiently large. Since $\beta_0^{-2}\lambda^{-2}N^{k+2\vartheta}\asymp_kN^{2k+4\vartheta}$, our choices of $\beta_0$ and $T_{\mathrm{av}}$ satisfy the assumptions of \cref{thm:tpca-averaged-differential-recovery}. By construction, $(X_t)_{t\ge0}$ is Glauber dynamics for $H$ at inverse temperature $\beta$, started from $X_0=\sgn(\Xbar_{T_{\mathrm{av}}})$. If $\lambda\ge C_{\beta,k}N^{\tfrac{k-2}{4}+\tfrac{1}{4k}}$, the claim follows from \cref{cor:averaged-glauber-persistent-strong-recovery}.

Henceforth, we assume that $\lambda=C_{\beta,k}N^\gamma$ with $\tfrac{k-2}{4}\le\gamma<\tfrac{k-2}{4}+\tfrac{1}{4k}$, and let $r_N\doteq\lambda^{-\tfrac{1}{k-1}}$, as in \cref{subsec:critical-recovery-warm-start}. We first show that the second phase attains critical correlation in constant time. We infer from \cref{thm:tpca-averaged-differential-recovery} that the following holds with probability at least $1-\exp(-c_kN^{(1-\vartheta)/2})$:
\begin{align*}
    X_0=\sgn(\Xbar_{T_{\mathrm{av}}})
    \in\cB_H\bigl(\sgn(u_H),\lambda^2\sigma_N^2N^{-\vartheta/2}\bigr).
\end{align*}
Since \cref{thm:critical-recovery-warm-start} holds simultaneously for all initializations in this Hamming ball, we may apply it with $\vartheta/2$ in place of $\vartheta$, and we obtain the following with probability at least $1-\exp(-c_kN^{(1-\vartheta)/2})-\exp(-cN\lambda^2\sigma_N^2)$ for some sufficiently large constant $C$:
\begin{align}
\label{eq:algorithmic-threshold-critical-time}
    T_*(X_0)=\inf\bigl\{t\ge0:m(X_t)\ge Cr_N\bigr\}\le C_1N^{-(k-1)\left(\gamma-\tfrac{k-2}{4}\right)}.
\end{align}
Let $\bar r_N\doteq Cr_N$. Since $\lambda r_N^{k-1}=1$, we have $\lambda\bar r_N^{k-1}=C^{k-1}\ge C_{\beta,k}$. Hence, applying \cref{thm:strong-recovery-after-critical-correlation} on the event in \cref{eq:algorithmic-threshold-critical-time} with $\tau_{\bar r_N}=T_*(X_0)$, and using $T_{\mathrm{str}}\ge T_*(X_0)+1+2\eta\log N$ for sufficiently large $N$, we obtain the following via a union bound:
\begin{align*}
    \Prb\left[\inf_{t\in[T_{\mathrm{str}},T]}m(X_t)<1-N^{-\eta}\right]
    \le\exp\bigl(-c_kN^{(1-\vartheta)/2}\bigr)+\exp\bigl(-cN\lambda^2\sigma_N^2\bigr)+3\exp\bigl(-cN\min\{\bar r_N^2,N^{-2\eta}\}\bigr).
\end{align*}
Finally, since $\gamma<\tfrac{k-2}{4}+\tfrac{1}{4k}$, we have $N\bar r_N^2\gtrsim N^{\tfrac{k+1}{2k}}$. As $N\lambda^2\sigma_N^2\gtrsim\sqrt N$ and $\tfrac{k+1}{2k}>\nicefrac{1}{2}>\tfrac{1-\vartheta}{2}$, the claim follows after adjusting constants appropriately.
\end{proof}

%% file: strong_recovery_every_order.tex
\section{Strong Recovery at the Algorithmic Threshold for Every Tensor Order}
\label{sec:strong-recovery-every-order}

In this section, we analyze \cref{alg:annealed-glauber-ergodic-second-moments} and show that it achieves persistent strong recovery above the algorithmic threshold for any $k\ge3$, irrespective of its parity. Specifically, we prove the following theorem, which is a formal version of \cref{thm:glauber-every-k-informal}. The complete proof is presented in \cref{subsec:proof-persistent-strong-recovery-every-order}.

\begin{theorem}[Persistent Strong Recovery above the Algorithmic Threshold for Every Tensor Order]
\label{thm:persistent-strong-recovery-every-order}
Let $k\ge3$, let $\beta>0$, and assume $\lambda\ge C_{\beta,k}N^{\tfrac{k-2}{4}}$, where $C_{\beta,k}$ is a sufficiently large constant. Fix $\vartheta\in(0,1)$ and $\eta\in(0,\nicefrac{1}{2})$, and let $(X_t)_{t\ge0}$ denote the trajectory of the second phase of \cref{alg:annealed-glauber-ergodic-second-moments}, initialized at $\tilX_0\sim\pi$ with inverse temperatures $\beta_0$ and $\beta$ for the two respective phases, where $\beta_0\asymp_k\lambda^{-1}N^{-\tfrac{k+1}{2}-\vartheta}$ and $T_{\mathrm{av}}\asymp_kN^{2k+2+4\vartheta}$. Set $T_{\mathrm{str}}\doteq\log N$. Then, for all sufficiently large $N$ and every deterministic $T$ satisfying $T_{\mathrm{str}}\le T\le\poly(N)$,
\begin{align*}
    \Prb\left[\inf_{t\in[T_{\mathrm{str}},T]}\abs{m(X_t)}\ge1-N^{-\eta}\right]
    \ge1-\exp\left(-c_{\beta,k}N^{\min\{(1-\vartheta)/2,\,1-2\eta\}}\right).
\end{align*}
Moreover, if $k$ is odd, the same bound holds with $m(X_t)$ in place of $\abs{m(X_t)}$.
\end{theorem}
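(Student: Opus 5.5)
The plan is to reduce the theorem to the two parity cases already handled: the odd-$k$ reverse-planting analysis of \cref{sec:strong-recovery-odd}, and a direct worst-case argument for even $k$. The common first step is a perturbation analysis of the augmented second moment. Writing $\bar M_{T_{\mathrm{av}}}=I_{N+1}+\beta_0\cA_H+\mathcal{E}$ with $\cA_H=\begin{pmatrix}A_H&u_H\\u_H^\top&0\end{pmatrix}$, we bound $\norm{\mathcal{E}}_{\mathrm{op}}$ by combining \cref{thm:glauber-recovers-average-gradient} for the off-diagonal block with \cref{thm:glauber-recovers-average-hessian} for the $N\times N$ block. The corner entry is exactly $1$. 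Both theorems are applied with $V=H$, $B\asymp_k\lambda$ on the event $\norm{H}_{\cD^2}\lesssim_k\lambda N$, and $\delta=e^{-N}$. The Hessian bound carries an extra $\sqrt N$ relative to the gradient bound. With the stated choices $\beta_0\asymp\lambda^{-1}N^{-\frac{k+1}{2}-\vartheta}$ and $T_{\mathrm{av}}\asymp N^{2k+2+4\vartheta}$ (one factor of $N^{1/2}$ in $\beta_0$ and one of $N$ in $T_{\mathrm{av}}$ smaller, respectively larger, than in \cref{thm:tpca-averaged-differential-recovery}), we expect $\norm{\mathcal{E}}_{\mathrm{op}}/\beta_0\le c_k\lambda N^{-\frac{k-1}{2}-\vartheta}$. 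This matches \cref{eq:tpca-averaging-error-scale}, now measured in operator norm for the full $(N+1)\times(N+1)$ matrix.

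\textbf{Odd $k$.} By \cref{avg-discrete-hess-snr}, $A_H=0$, so $\cA_H$ has eigenvalues $\pm\norm{u_H}$ with eigenvectors $\tfrac1{\sqrt2}(\pm u_H/\norm{u_H},1)$, and $0$ otherwise. The eigengap is therefore $\norm{u_H}\asymp\sqrt N\sigma_N\max\{1,\lambda\sigma_N\}$. Davis--Kahan applied to $\beta_0^{-1}(\bar M-I)$ gives $\norm{(v,a)-\tfrac1{\sqrt2}(u_H/\norm{u_H},1)}\lesssim\norm{\mathcal{E}}/(\beta_0\norm{u_H})$, and the orientation $a\ge0$ fixes the sign. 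Hence $\sqrt2 v\norm{u_H}$ approximates $u_H$ in Euclidean norm with error $\lesssim\norm{\mathcal{E}}/\beta_0\le c_k\lambda N^{-\frac{k-1}{2}-\vartheta}$. This is exactly the input used in the odd-$k$ proof of \cref{thm:tpca-averaged-differential-recovery}. Rerunning the Hamming decomposition \cref{eq:tpca-sign-rounding-hamming-decomposition} with $\sqrt2\norm{u_H}v$ in place of $\Xbar/\beta_0$ then gives $\sgn(v)\in\cB_H(\sgn(u_H),\min\{N^{-\vartheta},\lambda^2\sigma_N^2N^{-\vartheta/2}\})$ with probability $1-\exp(-c_kN^{(1-\vartheta)/2})$. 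Positive rescaling does not change signs. From this point the proof of \cref{thm:persistent-strong-recovery-algorithmic-threshold} applies verbatim: \cref{cor:averaged-glauber-persistent-strong-recovery} covers large $\lambda$, and \cref{thm:critical-recovery-warm-start} followed by \cref{thm:strong-recovery-after-critical-correlation} covers the threshold window. Both hold uniformly over the Hamming ball, so the warm start's dependence on the disorder costs nothing. This also yields the signed statement for odd $k$.

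\textbf{Even $k$.} Here $u_H=0$, so $\cA_H=A_H\oplus 0$. The perturbation $\mathcal{E}$ is still controlled in operator norm, and $\beta_0A_H$ has top eigenvalue $\approx\beta_0\lambda\gamma_N\gg\beta_0\norm{W_N}_{\mathrm{op}}$. Davis--Kahan against $I+\beta_0(A_{\mathrm{sig}}\oplus0)$, with gap $\beta_0\lambda\gamma_N$, shows that $(v,a)$ is close to $(\pm\plant,0)$, with error $\lesssim N^{-\vartheta}+N^{\frac{k-2}{4}}/\lambda$; the bound $\norm{\mathcal{E}}_{\mathrm{op}}/\beta_0\le c\lambda\gamma_N N^{-\vartheta}$ comes from the same computation as \cref{eq:tpca-hessian-perturbation-bound}. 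Since $\norm{v}\le1$, the squared-error argument at the end of the proof of \cref{thm:tpca-averaged-differential-recovery} gives $\abs{m(\sgn v)}\ge 1-C_k(C_{\beta,k}^{-2}+N^{-2\vartheta})\ge\tfrac12$, which is far above the critical correlation. Using $H(-x)=H(x)$, the dynamics is symmetric under $x\mapsto-x$, so we may assume $m(X_0)\ge\tfrac12$. We then apply \cref{thm:strong-recovery-after-critical-correlation} with $\bar r_N=N^{-s}$ for a small $s$, $C=1$ and $\tau=0$. Although that theorem is stated in the odd-$k$ section, its proof uses only the lower bound \cref{eq:strong-recovery-signal-field} for $m\ge4/N$ and the worst-case bound on $\dgrad H_0$, both of which hold for every $k$. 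The failure probability is $\exp(-c_kN)+\exp(-cN^{1-2\eta})$.

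The main obstacle is the odd-$k$ eigenvector step. We must turn an operator-norm perturbation of an $(N+1)$-dimensional matrix, whose gap $\norm{u_H}$ is only of order $\sqrt N\sigma_N$, into an $\ell_2$ error on $u_H$ that is as sharp as the direct ergodic-average bound. The other $N-1$ eigenvalues of $\cA_H$ sit at $0$, so the relevant gap is $\norm{u_H}$ rather than $2\norm{u_H}$; we must also verify that the normalization $\norm{u_H}$ does not itself need to be estimated, since only signs matter. If the Davis--Kahan constant loses too much, we would instead read off $\Xbar_{T_{\mathrm{av}}}$ from the off-diagonal block of $\bar M$ and argue that $v\propto\Xbar+O(\cdot)$ via the eigen-equation $\Sigmabar v+a\Xbar=\lambda_1v$.
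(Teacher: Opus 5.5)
Your proposal is correct. For odd $k$ it follows the paper essentially step for step: the blockwise operator-norm bound on $\beta_0^{-1}(\bar M_{T_{\mathrm{av}}}-I_{N+1})-\cA_H$, Davis--Kahan with gap $\norm{u_H}$, the orientation $a\ge0$, the margin/Hamming decomposition, and then the proof of \cref{thm:persistent-strong-recovery-algorithmic-threshold}. Your fallback via the eigen-equation is not needed.

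For even $k$ you take a genuinely different route in the boosting step. The paper proves a separate result, \cref{thm:strong-recovery-after-weak-recovery}, by a pathwise coupling. For even $k\ge4$ one has $\lambda\gtrsim N^{(k-2)/4}\ge\sqrt N$, so the signal field $k\lambda(\rho/2)^{k-1}$ dominates the worst-case field $\max_x\norm{\dgrad H_0(x)}\lesssim\sqrt N$ at every coordinate. Hence, once $\abs{m}\ge\rho$, every update aligns its spin with $\pm\xopt$ except with probability $e^{-c\beta\lambda}$. The only remaining errors are the coordinates never updated by time $\log N$, whose number is $\mathrm{Bin}(N,1/N)$.

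You instead reuse \cref{thm:strong-recovery-after-critical-correlation}, and you correctly observe that its proof never uses oddness. The signal-field bound \cref{eq:strong-recovery-signal-field} needs only $m(x)\ge4/N$, and the disorder enters only through $\max_x\norm{\dgrad H_0(x)}^2\le C_1N$. Choosing $s\le\eta$ and $s<\tfrac{k-2}{4(k-1)}$ makes $\lambda\bar r_N^{k-1}\ge C_{\beta,k}$ hold and gives failure probability $\exp(-c_kN)+\exp(-cN^{1-2\eta})$. Your sign reduction is also legitimate: for even $k$, $-X_t$ is Glauber dynamics from $-X_0$. Equivalently, $H$ is unchanged under $\xopt\mapsto-\xopt$, which is how the paper phrases it.

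What each route buys:
\begin{itemize}
\item Yours needs no new theorem and is parity-agnostic. Its cost is that the deficit can only be pushed to $N^{-\eta}$ with $\eta<1/2$, because the Dynkin-martingale fluctuations of $m$ are of order $N^{-1/2}$.
\item The paper's coupling gives a monotone, pathwise picture, allows any $\eta<1$, and has failure probability $\exp(-cN^{1-\eta})+NTe^{-c\beta\lambda}$. It relies on $\lambda\gg\sqrt N$, which is special to even $k$.
\end{itemize}
The theorem asks only for $\eta<1/2$ and an $\exp(-cN^{1-2\eta})$ failure term, so both routes suffice.
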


\subsection{Properties of the Augmented Ergodic Second Moment}
\label{subsec:recovery-ergodic-second-moments}

In this section, we analyze the properties of the augmented ergodic second moment of the first phase of \cref{alg:annealed-glauber-ergodic-second-moments}, defined by $\bar M_T\doteq\frac1T\int_0^TF(\tilX_t)^{\otimes2}\,dt$, where $F(x)\doteq(x,1)\in\R^{N+1}$ and $(\tilX_t)_{t\ge0}$ denotes the trajectory of this phase. Recall from \cref{sec:averaged-discrete-differentials} that the average discrete gradient and the average discrete Hessian of $H$ are $u_H=\E_\pi[\dgrad H]$ and $A_H=\E_\pi[\dgrad^2H]$, respectively. Expanding $F(\tilX_t)^{\otimes2}$ into blocks, we obtain
\begin{align*}
    \bar M_T
    =\begin{pmatrix}
        \Sigmabar_T & \Xbar_T\\
        \Xbar_T^\top & 1
    \end{pmatrix},
\end{align*}
where $\Xbar_T\doteq\frac1T\int_0^T\tilX_t\,dt$ and $\Sigmabar_T\doteq\frac1T\int_0^T\tilX_t\tilX_t^\top\,dt$ denote the ergodic average and the ergodic second moment of the first phase, respectively. Consequently, by \cref{thm:glauber-recovers-average-gradient,thm:glauber-recovers-average-hessian}, at high temperature we expect $\bar M_{T_{\mathrm{av}}}$ to concentrate around $I_{N+1}+\beta_0\cA_H$, where
\begin{align*}
    \cA_H\doteq\begin{pmatrix}
        A_H & u_H\\
        u_H^\top & 0
    \end{pmatrix}.
\end{align*}

As discussed in \cref{subsec:sketch-any-order}, the parity of $k$ determines which block of $\cA_H$ carries the signal. To observe this, let us first consider the case when $k$ is odd. In this case, \cref{avg-discrete-grad-snr,avg-discrete-hess-snr} imply that $A_H=0$ and $u_H$ is a noisy observation of $\xopt$. Hence, the top eigenvector of $\cA_H$ is $\tfrac{1}{\sqrt2}(u_H/\norm{u_H},1)$, and rounding its first $N$ coordinates to signs yields $\sgn(u_H)$. Consequently, we expect the initial state of the second phase of \cref{alg:annealed-glauber-ergodic-second-moments} to lie close to $\sgn(u_H)$, which parallels the setting in \cref{alg:averaged-annealed-glauber}.

When $k$ is even, \cref{avg-discrete-grad-snr} implies that $u_H=0$, so $\cA_H$ is block diagonal, and its top eigenvector is obtained by appending a zero to a top eigenvector of $A_H$. By \cref{avg-discrete-hess-snr}, $A_H$ has a rank-one spike in the planted direction $\plant$, and hence, by \cref{rem:average-hessian-snr}, its top eigenvector achieves constant absolute correlation with $\plant$. Consequently, we expect the initial state of the second phase to weakly recover the planted signal up to a global sign. 

We note that \cref{alg:annealed-glauber-ergodic-second-moments} does not need to know the parity of $k$, since the same top eigenvector computation extracts the signal from whichever block of $\cA_H$ carries it. We formalize these heuristics in the following theorem, which can be interpreted as an analogue of \cref{thm:tpca-averaged-differential-recovery} for \cref{alg:annealed-glauber-ergodic-second-moments}.

\begin{theorem}[Recovery by the Augmented Ergodic Second Moment for Tensor PCA]
\label{thm:recovery-ergodic-second-moments}
Assume that $\lambda\ge C_kN^{\tfrac{k-2}{4}}$, and let $(\tilX_t)_{t\ge0}$ denote Glauber dynamics for the tensor PCA Hamiltonian $H$ at inverse temperature $\beta_0>0$, initialized at $\tilX_0\sim\pi$ independently of the Gaussian tensor. Define
\begin{align*}
    \bar M_T\doteq\frac{1}{T}\int_0^TF(\tilX_t)^{\otimes2}\,dt,
    \qquad
    F(x)\doteq(x,1)\in\R^{N+1},
\end{align*}
write $u_H=\E_\pi[\dgrad H]$, and let $(v,a)\in\R^N\times\R$ be a top unit eigenvector of $\bar M_{T_{\mathrm{av}}}$ with $a\ge0$. For every fixed $\vartheta\in(0,1)$ and all sufficiently large $N$, the following hold whenever $0<\beta_0\lesssim_k\lambda^{-1}N^{-\tfrac{k+1}{2}-\vartheta}$ and $T_{\mathrm{av}}\gtrsim_k\beta_0^{-2}\lambda^{-2}N^{k+1+2\vartheta}$, with sufficiently small and large implicit constants, respectively:
\begin{enumerate}
    \item If $k$ is odd, then, with probability at least $1-\exp(-c_kN^{(1-\vartheta)/2})$, 
    \begin{align*}
        \frac{1}{N}d_H\bigl(\sgn(v),\sgn(u_H)\bigr)
        \le\min\{N^{-\vartheta},\lambda^2\sigma_N^2N^{-\vartheta/2}\},
    \end{align*}
    where $\sigma_N\asymp_k N^{-(k-1)/4}$ is as defined in \cref{avg-discrete-grad-snr}.

    \item If $k$ is even, then, with probability at least $1-\exp(-c_kN)$,
    \begin{align*}
        \bigl|m(\sgn(v))\bigr|
        \ge 1-C_k\left(\frac{N^{\tfrac{k-2}{2}}}{\lambda^2}+N^{-1-2\vartheta}\right).
    \end{align*}
\end{enumerate}
\end{theorem}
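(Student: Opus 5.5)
The plan is to reduce everything to a single operator-norm perturbation bound, and then run the eigenvector-perturbation and sign-rounding arguments from the proof of \cref{thm:tpca-averaged-differential-recovery}.

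\textbf{Step 1: perturbation bound.} On the event $\cE=\{\norm{H}_{\cD^2}\le C_k\lambda N\}$, I apply \cref{thm:glauber-recovers-average-gradient,thm:glauber-recovers-average-hessian} with $B\asymp_k\lambda$ and $\delta=e^{-N}$. This gives, with probability $1-\exp(-c_kN)$,
\begin{align*}
    \norm{\beta_0^{-1}\Xbar_{T_{\mathrm{av}}}-u_H}\le\eps_N,
    \qquad
    \norm{\beta_0^{-1}(\Sigmabar_{T_{\mathrm{av}}}-I)-A_H}_{\mathrm{op}}\le\sqrt N\eps_N .
\end{align*}
The new choices $\beta_0\lesssim\lambda^{-1}N^{-\frac{k+1}{2}-\vartheta}$ and $T_{\mathrm{av}}\gtrsim\beta_0^{-2}\lambda^{-2}N^{k+1+2\vartheta}$ are meant to give $\eps_N\le c_k\lambda N^{-\frac{k}{2}-\vartheta}$. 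This is $N^{1/2}$ smaller than in \cref{eq:tpca-averaging-error-scale}, and should follow from the same bookkeeping. Since $\bar M_T$ has these two blocks plus an exact entry $1$, bounding the off-diagonal block by its Euclidean norm gives
\begin{align*}
    \norm{\beta_0^{-1}(\bar M_{T_{\mathrm{av}}}-I_{N+1})-\cA_H}_{\mathrm{op}}\lesssim\sqrt N\eps_N\le c_k\lambda N^{-\frac{k-1}{2}-\vartheta}.
\end{align*}
Eigenvectors are unchanged by subtracting $I$ and dividing by $\beta_0$, so $(v,a)$ is a top eigenvector of a matrix $\cA_H+E$ with $\norm{E}_{\mathrm{op}}\lesssim\sqrt N\eps_N$.

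\textbf{Step 2: odd $k$.} Here $A_H=0$, so $\cA_H$ has eigenvalues $\pm\norm{u_H}$ with eigenvectors $\tfrac{1}{\sqrt2}(\pm u_H/\norm{u_H},1)$, and all other eigenvalues are $0$. The top eigengap is therefore $\norm{u_H}$.
\begin{itemize}
    \item By \cref{avg-discrete-grad-snr} and Gaussian norm concentration, $\norm{u_H}\gtrsim\sqrt N\sigma_N\max\{1,\lambda\sigma_N\}$ with probability $1-e^{-cN}$.
    \item Davis--Kahan gives $\min_{s=\pm1}\norm{(v,a)-s\,w}\lesssim\sqrt N\eps_N/\norm{u_H}$, where $w=\tfrac{1}{\sqrt2}(u_H/\norm{u_H},1)$. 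This ratio is $o(1)$.
    \item Because the last coordinate of $w$ is $1/\sqrt2$ and $a\ge0$, the sign $s=+1$ is forced.
    \item Rescaling, the vector $\tilde u\doteq\sqrt2\norm{u_H}\,v$ satisfies $\norm{\tilde u-u_H}\lesssim\sqrt N\eps_N\lesssim c_k\lambda N^{-\frac{k-1}{2}-\vartheta}$, which is exactly the $\ell^2$ accuracy used in the odd case before.
    \item Since $\sgn(\tilde u)=\sgn(v)$, I then repeat verbatim the decomposition \cref{eq:tpca-sign-rounding-hamming-decomposition}. That is, I count coordinates with $|(u_H)_i|\le h_N$ by a Chernoff bound, and control the mismatched coordinates by $\norm{\tilde u-u_H}^2/(Nh_N^2)$. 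This yields the bound $\min\{N^{-\vartheta},\lambda^2\sigma_N^2N^{-\vartheta/2}\}$ with probability $1-\exp(-c_kN^{(1-\vartheta)/2})$.
\end{itemize}

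\textbf{Step 3: even $k$.} Here $u_H=0$, so $\cA_H=\mathrm{diag}(A_H,0)$. I compare with $\mathrm{diag}(A_{\mathrm{sig}},0)$, whose top eigenvector is $(\plant,0)$ with eigenvalue $\lambda\gamma_N(1-1/N)$.
\begin{itemize}
    \item The gap to the next eigenvalue (either $-\lambda\gamma_N/N$ or $0$) is $\asymp\lambda\gamma_N$.
    \item The perturbation has norm at most $\norm{W_N}_{\mathrm{op}}+\sqrt N\eps_N$. The first term is $\lesssim\sqrt{\gamma_N}$ by \cref{avg-discrete-hess-snr}.
    \item Davis--Kahan then gives $\min_s\norm{(v,a)-s(\plant,0)}\lesssim N^{\frac{k-2}{4}}/\lambda+\sqrt N\eps_N/(\lambda\gamma_N)$, and the last term is $\lesssim N^{-1/2-\vartheta}$.
    \item In particular $\min_s\norm{v-s\plant}$ obeys the same bound; normalization is not even needed, since the sign argument only uses $\ell^2$ closeness.
    \item The coordinatewise argument $N^{-1}d_H(\sgn(v),s\xopt)\le\norm{v-s\plant}^2$ from before gives $1-|m(\sgn(v))|\lesssim N^{\frac{k-2}{2}}/\lambda^2+N^{-1-2\vartheta}$.
\end{itemize}

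\textbf{Main obstacle.} I expect the odd case to be the delicate step. The eigengap $\norm{u_H}$ is small, and the rescaling of $v$ by $\sqrt2\norm{u_H}$ must not inflate the error beyond the $\ell^2$ budget needed for sign rounding. I also need to ensure the top eigenvalue is not exchanged with the negative branch. The latter is harmless because the $-\norm{u_H}$ eigenvalue sits at distance $2\norm{u_H}$, but the Davis--Kahan constant and the $\sqrt N$ loss from treating $\Xbar$ as an operator-norm block must be tracked carefully. That $\sqrt N$ loss is precisely what forces the extra factor $N^{-1/2}$ in $\beta_0$ and the extra $N$ in $T_{\mathrm{av}}$.
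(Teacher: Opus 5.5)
Your proposal is correct and follows essentially the same route as the paper. Both arguments use the blockwise bound $\norm{\beta_0^{-1}(\bar M_{T_{\mathrm{av}}}-I_{N+1})-\cA_H}_{\mathrm{op}}\lesssim\sqrt N\eps_N\lesssim_k\lambda N^{-\frac{k-1}{2}-\vartheta}$, then Davis--Kahan (with the sign fixed by $a\ge0$ for odd $k$, and against $\mathrm{diag}(A_{\mathrm{sig}},0)$ for even $k$), then the margin-plus-$\ell^2$ sign-rounding argument; your rescaling $\sqrt2\norm{u_H}\,v$ is algebraically identical to the paper's inline threshold $h_N/(\sqrt2\norm{u_H})$.

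The only deviation is the eigengap lower bound: you use $\norm{u_H}\gtrsim\sqrt N\sigma_N\max\{1,\lambda\sigma_N\}$, which also uses the noise component, whereas the paper uses only the signal event $\dotp{u_H}{\xopt}\ge\frac12\lambda\sigma_N^2N$ (probability $1-e^{-c\sqrt N}$). Both give a Davis--Kahan ratio of $N^{-1/2-\vartheta}$, so either suffices.
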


\begin{proof}
Throughout the proof, we assume that $N$ is sufficiently large. We first show that the rescaled ergodic second moment $\widehat\cA\doteq\beta_0^{-1}(\bar M_{T_{\mathrm{av}}}-I_{N+1})$ is close to $\cA_H$ in operator norm. Since $\beta_0>0$, $(v,a)$ is also a top unit eigenvector of $\widehat\cA$. In the notation of \cref{thm:tpca-averaged-differential-recovery}, we have
\begin{align*}
    \bar M_{T_{\mathrm{av}}}
    =\begin{pmatrix}
        \Sigmabar_{T_{\mathrm{av}}} & \Xbar_{T_{\mathrm{av}}}\\
        \Xbar_{T_{\mathrm{av}}}^\top & 1
    \end{pmatrix}.
\end{align*}
Since the bottom-right entries of $\bar M_{T_{\mathrm{av}}}-I_{N+1}$ and $\cA_H$ both vanish, we obtain
\begin{align*}
    \widehat\cA-\cA_H
    =\begin{pmatrix}
        \beta_0^{-1}(\Sigmabar_{T_{\mathrm{av}}}-I)-A_H & 0\\
        0 & 0
    \end{pmatrix}
    +\begin{pmatrix}
        0 & \beta_0^{-1}\Xbar_{T_{\mathrm{av}}}-u_H\\
        (\beta_0^{-1}\Xbar_{T_{\mathrm{av}}}-u_H)^\top & 0
    \end{pmatrix}.
\end{align*}
The first matrix on the right-hand side is block diagonal, and hence its operator norm equals that of its top-left block. The second matrix has zero diagonal blocks, and its operator norm equals the Euclidean norm of its top-right block. Consequently, the triangle inequality implies
\begin{align}
\label{eq:second-moment-blockwise-error}
    \norm{\widehat\cA-\cA_H}_{\mathrm{op}}
    \le\norm{\beta_0^{-1}(\Sigmabar_{T_{\mathrm{av}}}-I)-A_H}_{\mathrm{op}}
    +\norm{\beta_0^{-1}\Xbar_{T_{\mathrm{av}}}-u_H}.
\end{align}

By \cref{eq:tpca-averaging-normalized-errors} in the proof of \cref{thm:tpca-averaged-differential-recovery}, the following event has probability at least $1-\exp(-c_kN)$.
\begin{align*}
    \cE\doteq\left\{\norm{\beta_0^{-1}\Xbar_{T_{\mathrm{av}}}-u_H}\le\eps_N\right\}
    \cap\left\{\norm{\beta_0^{-1}(\Sigmabar_{T_{\mathrm{av}}}-I)-A_H}_{\mathrm{op}}\le\sqrt N\,\eps_N\right\}
\end{align*}
Since our assumptions on $\beta_0$ and $T_{\mathrm{av}}$ are those of \cref{thm:tpca-averaged-differential-recovery} with $\vartheta+\nicefrac{1}{2}$ in place of $\vartheta$, the computation in \cref{eq:tpca-averaging-error-scale} also shows that $\sqrt N\eps_N\lesssim_k\lambda N^{-\frac{k-1}{2}-\vartheta}$. Henceforth, we work on $\cE$. By \cref{eq:second-moment-blockwise-error} and the definition of $\cE$, we obtain
\begin{align}
\label{eq:second-moment-augmented-error}
    \norm{\widehat\cA-\cA_H}_{\mathrm{op}}
    \le(1+\sqrt N)\eps_N
    \lesssim_k\lambda N^{-\tfrac{k-1}{2}-\vartheta}.
\end{align}

We first consider odd $k$. By \cref{avg-discrete-grad-snr,avg-discrete-hess-snr}, $A_H=0$ and $u_H=\lambda\sigma_N^2\xopt+\zeta$, where $\zeta\sim\cN(0,\sigma_N^2I)$ and $\sigma_N^2\asymp_kN^{-\frac{k-1}{2}}$. As in the proof of \cref{thm:tpca-averaged-differential-recovery}, let $\rho_N\doteq\min\{N^{-\vartheta},\lambda^2\sigma_N^2N^{-\vartheta/2}\}$, and let $h_N\doteq\tfrac18\sigma_N\min\{\lambda\sigma_NN^{-\vartheta},\lambda^2\sigma_N^2N^{-\vartheta/2}\}$. Consider the events
\begin{align*}
    \cE_{\mathrm{sig}}\doteq\left\{\dotp{u_H}{\xopt}\ge\tfrac12\lambda\sigma_N^2N\right\},
    \qquad
    \cE_{\mathrm{margin}}\doteq\left\{\sum_{i=1}^N\ind{\abs{(u_H)_i}\le h_N}\le\tfrac12N\rho_N\right\}.
\end{align*}
Since $\dotp{u_H}{\xopt}-\lambda\sigma_N^2N=\dotp{\zeta}{\xopt}\sim\cN(0,\sigma_N^2N)$ and $\lambda\sigma_N\gtrsim_kN^{-\nicefrac14}$, a Gaussian tail bound implies $\Prb[\cE_{\mathrm{sig}}^c]\le\exp(-\lambda^2\sigma_N^2N/8)\le\exp(-c_k\sqrt N)$. Moreover, as shown in the proof of \cref{thm:tpca-averaged-differential-recovery}, every coordinate of $u_H$ satisfies $\Prb[\abs{(u_H)_i}\le h_N]\le\rho_N/4$. Since these coordinates are independent, a Chernoff bound implies $\Prb[\cE_{\mathrm{margin}}^c]\le\exp(-cN\rho_N)$. Henceforth, we work on $\cE\cap\cE_{\mathrm{sig}}\cap\cE_{\mathrm{margin}}$.

We next show that $v$ is close to a positive multiple of $u_H$. Since we work on $\cE_{\mathrm{sig}}$, we have
\begin{align}
\label{eq:second-moment-gradient-norm}
    \norm{u_H}
    \ge\frac{\dotp{u_H}{\xopt}}{\sqrt N}
    \ge\frac12\lambda\sigma_N^2\sqrt N
    >0.
\end{align}
Moreover, since $A_H=0$, we have $\cA_H=\bigl(\begin{smallmatrix}0&u_H\\u_H^\top&0\end{smallmatrix}\bigr)$, which has top unit eigenvector $w\doteq\tfrac{1}{\sqrt2}(u_H/\norm{u_H},1)$ and eigengap $\norm{u_H}$, since its remaining eigenvalues are $0$ and $-\norm{u_H}$. Applying the Davis--Kahan theorem~\citep[Theorem~4.5.5]{vershynin2018high} to $\widehat\cA$ and $\cA_H$, we obtain
\begin{align}
\label{eq:second-moment-davis-kahan}
    \min_{s\in\{\pm1\}}\norm{(v,a)-sw}
    \lesssim\frac{\norm{\widehat\cA-\cA_H}_{\mathrm{op}}}{\norm{u_H}}
    \lesssim_k\frac{\lambda N^{-\frac{k-1}{2}-\vartheta}}{\lambda\sigma_N^2\sqrt N}
    \lesssim_kN^{-\nicefrac12-\vartheta},
\end{align}
where the second inequality uses \cref{eq:second-moment-augmented-error,eq:second-moment-gradient-norm}, and the last one uses $\sigma_N^2\asymp_kN^{-\frac{k-1}{2}}$.

Since the Davis--Kahan theorem determines $(v,a)$ only up to a global sign, we next show that the minimum in \cref{eq:second-moment-davis-kahan} is attained at $s=1$. Indeed, since $a\ge0$ and the last coordinate of $w$ equals $1/\sqrt2$, the last coordinate of $(v,a)+w$ is at least $1/\sqrt2$, and hence
\begin{align*}
    \norm{(v,a)+w}
    \ge a+\frac{1}{\sqrt2}
    \ge\frac{1}{\sqrt2}.
\end{align*}
In contrast, the right-hand side of \cref{eq:second-moment-davis-kahan} is thus smaller than $1/\sqrt2$ for all sufficiently large $N$. Consequently, the minimum in \cref{eq:second-moment-davis-kahan} is attained at $s=1$. Since the first $N$ coordinates of $w$ are $u_H/(\sqrt2\norm{u_H})$, we conclude from \cref{eq:second-moment-davis-kahan} that
\begin{align}
\label{eq:second-moment-eigenvector-error}
    \norm{v-\frac{u_H}{\sqrt2\norm{u_H}}}
    \le\norm{(v,a)-w}
    \lesssim\frac{\norm{\widehat\cA-\cA_H}_{\mathrm{op}}}{\norm{u_H}}.
\end{align}

We now compare the signs of $v$ and $u_H$. Fix $i\in[N]$. If $\sgn(v_i)\ne\sgn((u_H)_i)$, then
\begin{align}
\label{eq:second-moment-sign-mismatch}
    \left|v_i-\frac{(u_H)_i}{\sqrt2\norm{u_H}}\right|
    =\abs{v_i}+\frac{\abs{(u_H)_i}}{\sqrt2\norm{u_H}}
    \ge\frac{\abs{(u_H)_i}}{\sqrt2\norm{u_H}}.
\end{align}
Since we work on $\cE_{\mathrm{margin}}$, we obtain the following via \cref{eq:second-moment-sign-mismatch} and $\ind{\abs{z}\ge t}\le z^2/t^2$ for $t>0$:
\begin{align}
\label{eq:second-moment-hamming-bound}
    \frac1Nd_H\bigl(\sgn(v),\sgn(u_H)\bigr)
    &=\frac1N\sum_{i=1}^N\ind{\sgn(v_i)\ne\sgn((u_H)_i)}\nonumber\\
    &=\frac1N\sum_{i=1}^N\ind{\sgn(v_i)\ne\sgn((u_H)_i)}\ind{\abs{(u_H)_i}\le h_N}\nonumber\\
    &\quad+\frac1N\sum_{i=1}^N\ind{\sgn(v_i)\ne\sgn((u_H)_i)}\ind{\abs{(u_H)_i}>h_N}\nonumber\\
    &\le\frac1N\sum_{i=1}^N\ind{\abs{(u_H)_i}\le h_N}
      +\frac1N\sum_{i=1}^N\ind{\left|v_i-\frac{(u_H)_i}{\sqrt2\norm{u_H}}\right|>\frac{h_N}{\sqrt2\norm{u_H}}}\nonumber\\
    &\le\frac{\rho_N}{2}
      +\frac{2\norm{u_H}^2}{Nh_N^2}\sum_{i=1}^N\left|v_i-\frac{(u_H)_i}{\sqrt2\norm{u_H}}\right|^2\nonumber\\
    &=\frac{\rho_N}{2}
      +\frac{2\norm{u_H}^2}{Nh_N^2}\norm{v-\frac{u_H}{\sqrt2\norm{u_H}}}^2.
\end{align}
Moreover, we obtain the following via \cref{eq:second-moment-eigenvector-error,eq:second-moment-augmented-error}:
\begin{align*}
    \frac{2\norm{u_H}^2}{Nh_N^2}\norm{v-\frac{u_H}{\sqrt2\norm{u_H}}}^2
    &\lesssim\frac{\norm{\widehat\cA-\cA_H}_{\mathrm{op}}^2}{Nh_N^2}
    \lesssim_k\frac{\lambda^2\sigma_N^4N^{-2\vartheta}}{N\sigma_N^2\min\{\lambda^2\sigma_N^2N^{-2\vartheta},\lambda^4\sigma_N^4N^{-\vartheta}\}}\\
    &=\max\left\{\frac1N,\frac{N^{-1-\vartheta}}{\lambda^2\sigma_N^2}\right\}
    \lesssim_kN^{-1}+N^{-\nicefrac12-\vartheta}
    \le\frac{\rho_N}{2}.
\end{align*}
Substituting this bound into \cref{eq:second-moment-hamming-bound}, we obtain
\begin{align*}
    \frac1Nd_H\bigl(\sgn(v),\sgn(u_H)\bigr)\le\rho_N.
\end{align*}
Finally, since $N\rho_N\gtrsim_kN^{(1-\vartheta)/2}$, a union bound over $\cE$, $\cE_{\mathrm{sig}}$ and $\cE_{\mathrm{margin}}$ proves the claim for odd $k$ after adjusting constants appropriately.

We now consider even $k$. By \cref{avg-discrete-grad-snr,avg-discrete-hess-snr}, $u_H=0$ and $A_H=A_{\mathrm{sig}}+W_N$, where $A_{\mathrm{sig}}\doteq\lambda\gamma_N(\plant(\plant)^\top-I/N)$ with $\gamma_N\asymp_kN^{-\frac{k-2}{2}}$. Moreover, by \cref{avg-discrete-hess-snr}, $\norm{W_N}_{\mathrm{op}}\le C_kN^{-\frac{k-2}{4}}$ with probability at least $1-\exp(-cN)$. Henceforth, we also work on this event. Since $\cA_H=\bigl(\begin{smallmatrix}A_{\mathrm{sig}}+W_N&0\\0&0\end{smallmatrix}\bigr)$, we infer from \cref{eq:second-moment-augmented-error} and the triangle inequality that
\begin{align*}
    \norm{\widehat\cA-\begin{pmatrix}A_{\mathrm{sig}}&0\\0&0\end{pmatrix}}_{\mathrm{op}}
    \le\norm{\widehat\cA-\cA_H}_{\mathrm{op}}+\norm{W_N}_{\mathrm{op}}
    \lesssim_k\lambda N^{-\tfrac{k-1}{2}-\vartheta}+N^{-\tfrac{k-2}{4}}.
\end{align*}
The matrix $\bigl(\begin{smallmatrix}A_{\mathrm{sig}}&0\\0&0\end{smallmatrix}\bigr)$ has top unit eigenvector $(\plant,0)$ and eigengap $\lambda\gamma_N(1-\nicefrac{1}{N})$, since its remaining eigenvalues are $0$ and $-\lambda\gamma_N/N$. Applying the Davis--Kahan theorem to $\widehat\cA$ and this matrix, we obtain
\begin{align*}
    \min_{s\in\{\pm1\}}\norm{v-s\plant}
    \le\min_{s\in\{\pm1\}}\norm{(v,a)-s(\plant,0)}
    \lesssim_k\frac{\lambda N^{-\frac{k-1}{2}-\vartheta}+N^{-\frac{k-2}{4}}}{\lambda N^{-\frac{k-2}{2}}}
    =N^{-\nicefrac12-\vartheta}+\frac{N^{\tfrac{k-2}{4}}}{\lambda}.
\end{align*}
Fix $s\in\{\pm1\}$. If $\sgn(v_i)\ne s\xopt_i$, then $\abs{v_i-s\plant_i}\ge\abs{\plant_i}=1/\sqrt N$. Squaring and summing over $i\in[N]$, we obtain $\frac1Nd_H(\sgn(v),s\xopt)\le\norm{v-s\plant}^2$. Since $\abs{m(x)}=1-\frac2N\min_{s\in\{\pm1\}}d_H(x,s\xopt)$ for every $x\in\Cube$, we conclude that
\begin{align*}
    1-\bigl|m(\sgn(v))\bigr|
    \le2\min_{s\in\{\pm1\}}\norm{v-s\plant}^2
    \lesssim_k\frac{N^{\tfrac{k-2}{2}}}{\lambda^2}+N^{-1-2\vartheta}.
\end{align*}
A union bound proves the claim for even $k$.
\end{proof}

\begin{remark}
\label{rem:second-moment-parameter-cost}
Compared to \cref{thm:tpca-averaged-differential-recovery}, \cref{thm:recovery-ergodic-second-moments} requires an inverse temperature $\beta_0$ that is smaller by a factor of $\sqrt N$, and hence an averaging time that is longer by a factor of $N^2$. This is because, by \cref{eq:tpca-averaging-normalized-errors} in the proof of \cref{thm:tpca-averaged-differential-recovery}, the estimation error of the ergodic second moment is larger than that of the ergodic average by a factor of $\sqrt N$. Moreover, the top eigenvector of $\bar M_{T_{\mathrm{av}}}$ inherits the estimation error of the ergodic second moment, irrespective of the parity of $k$.
\end{remark}

\subsection{Boosting Weak Recovery to Persistent Strong Recovery}
\label{subsec:boosting-weak-to-strong-recovery}

The following theorem shows that, for even $k$, once Glauber dynamics has weakly recovered the planted signal upto a global sign, running the process for an additional $\log N$ time suffices to achieve $1-\poly(\nicefrac{1}{N})$ correlation with high probability. Beyond that, the correlation stays above this level throughout any $\poly(N)$ time horizon. 

The proof relies on the fact that the algorithmic threshold satisfies $N^{\tfrac{k-2}{4}}\ge\sqrt N$ for even $k$, while the discrete partial derivatives of $H_0$ are uniformly bounded by $O(\sqrt N)$ with high probability by \cref{lem:kspin-sobolev-norm-concentration}. Consequently, once the correlation is bounded away from zero, the signal term dominates the disorder at every coordinate, and with high probability, each update of Glauber dynamics aligns the updated spin with the planted vector up to a global sign.

\begin{theorem}[Persistent Strong Recovery after Weak Recovery]
\label{thm:strong-recovery-after-weak-recovery}
Let $k$ be even, let $(X_t)_{t\ge0}$ denote Glauber dynamics for $H$ at inverse temperature $\beta>0$, and fix arbitrary constants $\rho\in(0,1)$ and $\eta\in(0,1)$. Assume $\lambda\ge C_{\beta,k,\rho}N^{\tfrac{k-2}{4}}$, where $C_{\beta,k,\rho}$ is a sufficiently large constant, and set $T_{\mathrm{str}}\doteq\log N$. Then, the following holds for all sufficiently large $N$, every deterministic $T$ satisfying $T_{\mathrm{str}}\le T\le\poly(N)$, and any initial state $X_0$ with $\abs{m(X_0)}\ge\rho$, possibly depending on $H$,
\begin{align*}
    \Prb\left[\inf_{t\in[T_{\mathrm{str}},T]}\abs{m(X_t)}\ge1-N^{-\eta}\right]
    \ge1-\exp(-c_kN)-\exp(-cN^{1-\eta})-NT\exp\left(-k\beta\lambda\left(\frac{\rho}{2}\right)^{k-1}\right).
\end{align*}
\end{theorem}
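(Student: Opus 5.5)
By the symmetry $H(-x)=H(x)$ for even $k$ (Glauber dynamics commutes with the global flip), we may assume $m(X_0)\ge\rho$ throughout. The plan is to show that, on a high-probability event, every update in the window $[0,T]$ sets the updated spin equal to $\xopt_i$, as long as the correlation stays above $\rho/2$. We then run a linear ODE comparison for $1-m(X_t)$, in the spirit of the proof of \cref{thm:strong-recovery-after-critical-correlation}.

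\textbf{Step 1 (per-coordinate domination).} Work on $\cE\doteq\{\max_x\norm{\dgrad H_0(x)}_\infty\le C_1\sqrt N\}$. This event has probability at least $1-\exp(-c_kN)$ by \cref{lem:kspin-sobolev-norm-concentration}, since $\norm{\dgrad H_0(x)}_\infty\le\norm{\dgrad H_0(x)}\le N^{-1/2}\norm{H_0}_{\cD^1}$. As in \cref{eq:strong-recovery-signal-field}, whenever $m(x)\ge\rho/2$ (and $N$ is large) the signal field satisfies $N\lambda\xopt_i\partial_i(m^k)(x)\ge k\lambda(\rho/2-1/N)^{k-1}$. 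Since $\lambda\ge C_{\beta,k,\rho}N^{(k-2)/4}\ge C_{\beta,k,\rho}\sqrt N$ for even $k\ge4$, this dominates $C_1\sqrt N$ by a factor of at least $2$. Hence $\xopt_i\partial_iH(x)\ge\tfrac12k\lambda(\rho/2)^{k-1}$ for all $i$, after adjusting constants. The $(\rho/2-1/N)$ versus $\rho/2$ discrepancy is absorbed by choosing $C_{\beta,k,\rho}$ large, or by using threshold $\rho/2$ with a slightly smaller constant. On such states, each update sets $Z_n\ne\xopt_{I_n}$ with probability at most $\tfrac12(1-\tanh(\cdot))\le\exp(-k\beta\lambda(\rho/2)^{k-1})$.

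\textbf{Step 2 (no bad update before $T$).} Define the bad event $\mathcal B$ that some update in $[0,T]$ occurs at a state with $m\ge\rho/2$ and sets the spin against $\xopt_i$. Its compensator is at most $NT\exp(-k\beta\lambda(\rho/2)^{k-1})$, so $\Prb[\mathcal B]$ is at most this quantity, matching the third error term. On $\cE\cap\mathcal B^c$, the correlation can only increase: updates either keep a spin or set it to $\xopt_i$. By induction over arrival times, $m(X_t)\ge m(X_0)\ge\rho$ for all $t\le T$, so the hypothesis $m\ge\rho/2$ is never violated. This monotonicity argument is the main step. The subtlety is that the bad-event bound must be applied along the trajectory, which is handled by stopping at the first time $m<\rho/2$ or the first bad update, whichever comes first. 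The compensator bound holds up to that stopping time, and on $\mathcal B^c$ the stopping time never occurs.

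\textbf{Step 3 (exponential convergence).} On $\mathcal B^c$, the number of misaligned coordinates $D_t=\tfrac N2(1-m(X_t))$ is nonincreasing. Each misaligned coordinate is corrected at its first update after time $0$. Coordinate $i$ is updated at rate $1$, independently across coordinates, so $D_t$ is stochastically at most a $\mathrm{Bin}(N,e^{-t})$ random variable. At $t=\log N$, its mean is at most $1$. A Chernoff bound then gives $D_t\le\tfrac12N^{1-\eta}$ except with probability $\exp(-cN^{1-\eta})$. Monotonicity carries this bound to all $t\in[T_{\mathrm{str}},T]$, which yields $m(X_t)\ge1-N^{-\eta}$ there. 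Rigorously, we couple with the Poisson clocks: every coordinate whose clock rang in $[0,T_{\mathrm{str}}]$ is aligned on $\mathcal B^c$, and the number of silent clocks is $\mathrm{Bin}(N,1/N)$. A union bound over $\cE^c$, $\mathcal B$, and this Chernoff event completes the proof.
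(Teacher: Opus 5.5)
Your proposal is correct and follows essentially the same route as the paper. The shared steps are: per-coordinate domination of the disorder by the signal field on $\cE$ (using $\lambda\gtrsim\sqrt N$ for even $k\ge4$); monotonicity of the correlation on the event of no bad updates, whose probability is at most $NT\exp(-k\beta\lambda(\rho/2)^{k-1})$; and the Poisson-clock coupling $d_H(X_t,\pm\xopt)\le U_t$ with $U_{\log N}\sim\mathrm{Bin}(N,1/N)$.

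The differences are cosmetic. The paper fixes the sign via $\widetilde x^\star=\sgn(m(X_0))\xopt$ rather than a global flip, and it bounds bad updates with an explicit coupling to uniform marks plus Poisson thinning rather than a compensator argument. Your opening mention of a linear ODE comparison is never actually used or needed.
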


\begin{proof}
Throughout the proof, we assume that $N$ is sufficiently large. Let $X_0$ be any initial state with $\abs{m(X_0)}\ge\rho$, possibly depending on $H$, and define
\begin{align*}
    \widetilde x^\star\doteq\sgn\bigl(m(X_0)\bigr)\xopt,
    \qquad
    \widetilde m(x)\doteq\frac1N\dotp{\widetilde x^\star}{x}.
\end{align*}
Observe that $\widetilde m(X_0)=\abs{m(X_0)}\ge\rho$ and $\abs{m(x)}\ge\widetilde m(x)$ for every $x\in\Cube$. Moreover, since $k$ is even, $H=N\lambda\widetilde m^k+H_0$. Consider the event
\begin{align*}
    \cE\doteq\left\{\max_{x\in\Cube}\norm{\dgrad H_0(x)}^2\le C_1N\right\}.
\end{align*}
By \cref{lem:kspin-sobolev-norm-concentration}, we can choose $C_1$ such that $\Prb[\cE]\ge1-\exp(-c_kN)$. Henceforth, we work on $\cE$.

We first show that each update from a state $x$ with $\widetilde m(x)\ge\rho$ sets the updated spin to its value in $\widetilde x^\star$ with high probability. For $i\in[N]$, define $\widetilde m_{-i}(x)\doteq N^{-1}\sum_{j\ne i}\widetilde x^\star_jx_j$. If $\widetilde m(x)\ge4/N$, then $\widetilde m_{-i}(x)\pm1/N\ge\widetilde m(x)-2/N\ge\widetilde m(x)/2$. Hence, repeating the steps in \cref{eq:strong-recovery-signal-field} in the proof of \cref{thm:strong-recovery-after-critical-correlation} with $\widetilde x^\star$ and $\widetilde m$, we obtain
\begin{align}
\label{eq:weak-to-strong-signal-field}
    N\lambda\widetilde x^\star_i\partial_i(\widetilde m^k)(x)
    =\frac{N\lambda}{2}\left[
        \left(\widetilde m_{-i}(x)+\frac1N\right)^k
        -\left(\widetilde m_{-i}(x)-\frac1N\right)^k
    \right]
    \ge k\lambda\left(\frac{\widetilde m(x)}{2}\right)^{k-1}.
\end{align}
Since $k\ge4$, we have $\lambda\ge C_{\beta,k,\rho}N^{\tfrac{k-2}{4}}\ge C_{\beta,k,\rho}\sqrt N$. Hence, the following holds on $\cE$ after setting $C_{\beta,k,\rho}$ to be a sufficiently large constant:
\begin{align}
\label{eq:weak-to-strong-disorder-bound}
    \abs{\partial_iH_0(x)}
    \le\norm{\dgrad H_0(x)}
    \le\sqrt{C_1N}
    \le\frac{k}{2}\lambda\left(\frac{\rho}{2}\right)^{k-1}
    \qquad\forall\,x\in\Cube,\ i\in[N].
\end{align}
Since $H=N\lambda\widetilde m^k+H_0$ and $\rho\ge4/N$, we infer from \cref{eq:weak-to-strong-signal-field,eq:weak-to-strong-disorder-bound} that the following holds for every $x\in\Cube$ with $\widetilde m(x)\ge\rho$ and every $i\in[N]$:
\begin{align}
\label{eq:weak-to-strong-derivative-bound}
    \widetilde x^\star_i\partial_iH(x)
    \ge k\lambda\left(\frac{\widetilde m(x)}{2}\right)^{k-1}-\abs{\partial_iH_0(x)}
    \ge k\lambda\left(\frac{\rho}{2}\right)^{k-1}-\sqrt{C_1N}
    \ge\frac{k}{2}\lambda\left(\frac{\rho}{2}\right)^{k-1}.
\end{align}
For $x\in\Cube$ and $i\in[N]$, let $q_i(x)$ denote the probability that an update of coordinate $i$ at state $x$ sets its spin to $-\widetilde x^\star_i$. Since the site update kernel of Glauber dynamics is $q(z\mid i,t,x)=\nicefrac{1}{2}(1+z\tanh(\beta\partial_iH(x)))$ by \cref{subsec:cts-glauber}, we infer from \cref{eq:weak-to-strong-derivative-bound} that the following holds for every $x\in\Cube$ with $\widetilde m(x)\ge\rho$ and every $i\in[N]$:
\begin{align}
\label{eq:weak-to-strong-bad-update}
    q_i(x)
    =\frac{1}{1+\exp\bigl(2\beta\widetilde x^\star_i\partial_iH(x)\bigr)}
    \le\exp\left(-k\beta\lambda\left(\frac{\rho}{2}\right)^{k-1}\right)
    \doteq p.
\end{align}

We now realize the trajectory of Glauber dynamics using auxiliary uniform random variables. Let $(T_n)_{n\ge1}$ denote the arrival times of a rate $N$ Poisson process, with marks $(I_n)_{n\ge1}\iidsim\Uniform([N])$ and $(\omega_n)_{n\ge1}\iidsim\Uniform([0,1])$. The arrival times and marks are mutually independent, and independent of $H$ and $X_0$. At time $T_n$, the process sets coordinate $I_n=i$ of $X_{T_n}$ to $-\widetilde x^\star_i$ if $\omega_n\le q_i(X_{T_n^-})$, and to $\widetilde x^\star_i$ otherwise. By \cref{subsec:cts-glauber}, the resulting trajectory is Glauber dynamics for $H$ at inverse temperature $\beta$, initialized at $X_0$.

We call the arrival $T_n$ bad if $\omega_n\le p$, and let $B_T$ denote the number of bad arrivals in $[0,T]$. Since $p$ is deterministic, Poisson thinning implies $B_T\sim\Poisson(NTp)$, and hence
\begin{align*}
    \Prb[B_T>0]
    =1-e^{-NTp}
    \le NTp.
\end{align*}
Moreover, let $U_t$ denote the number of coordinates that are not updated during $[0,t]$. Precisely, define
\begin{align*}
    N_i(t)\doteq\#\{n:T_n\le t,\ I_n=i\},
    \qquad
    U_t\doteq\sum_{i=1}^N\ind{N_i(t)=0}.
\end{align*}
By Poisson thinning, $(N_i(t))_{t\ge0}$ are independent rate one Poisson processes for $i\in[N]$, and hence $U_t\sim\bin(N,e^{-t})$. In particular, since $T_{\mathrm{str}}=\log N$, we have $U_{T_{\mathrm{str}}}\sim\bin(N,\nicefrac{1}{N})$, and a Chernoff bound implies
\begin{align*}
    \Prb\left[U_{T_{\mathrm{str}}}>\frac12N^{1-\eta}\right]
    \le\exp(-cN^{1-\eta}).
\end{align*}
Henceforth, we work on the event $\cE\cap\{B_T=0\}\cap\{U_{T_{\mathrm{str}}}\le\tfrac12N^{1-\eta}\}$.

We now show by induction that each update during $[0,T]$ sets the updated spin to its value in $\widetilde x^\star$. Consider any $n$ with $T_n\le T$, and suppose that the claim holds for all earlier updates. Since such updates can only increase $\widetilde m$, we have
\begin{align*}
    \widetilde m(X_{T_n^-})
    \ge\widetilde m(X_0)
    \ge\rho.
\end{align*}
Hence, by \cref{eq:weak-to-strong-bad-update} and since we work on $\{B_T=0\}$, we obtain $q_{I_n}(X_{T_n^-})\le p<\omega_n$. Therefore, coordinate $I_n$ is set to $\widetilde x^\star_{I_n}$, which completes the induction.

Consequently, for every $t\in[0,T]$, each coordinate that is updated during $[0,t]$ agrees with $\widetilde x^\star$ at time $t$, and hence $d_H(X_t,\widetilde x^\star)\le U_t$. Since $U_t$ is nonincreasing in $t$, we conclude that
\begin{align*}
    \abs{m(X_t)}
    \ge\widetilde m(X_t)
    =1-\frac{2}{N}d_H(X_t,\widetilde x^\star)
    \ge1-\frac{2}{N}U_{T_{\mathrm{str}}}
    \ge1-N^{-\eta}
    \qquad\forall\,t\in[T_{\mathrm{str}},T].
\end{align*}
The proof is concluded via a union bound over $\cE^c$, $\{B_T>0\}$ and $\{U_{T_{\mathrm{str}}}>\tfrac12N^{1-\eta}\}$.
\end{proof}

\subsection{Proof of \texorpdfstring{\cref{thm:persistent-strong-recovery-every-order}}{Theorem~\ref{thm:persistent-strong-recovery-every-order}}}
\label{subsec:proof-persistent-strong-recovery-every-order}

\begin{proof}
Throughout the proof, we assume that $N$ is sufficiently large. Our choices of $\beta_0$ and $T_{\mathrm{av}}$ satisfy the assumptions of \cref{thm:recovery-ergodic-second-moments}, since $\beta_0^{-2}\lambda^{-2}N^{k+1+2\vartheta}\asymp_kN^{2k+2+4\vartheta}$. By construction, $(X_t)_{t\ge0}$ is Glauber dynamics for $H$ at inverse temperature $\beta$, started from $X_0=\sgn(v)$.

We first analyze the case when $k$ is odd, wherein we prove the claim with $m(X_t)$ in place of $\abs{m(X_t)}$. Without loss of generality, let $\lambda=C_{\beta,k}N^\gamma$ with $\gamma\ge\tfrac{k-2}{4}$. We infer from \cref{thm:recovery-ergodic-second-moments} that the following holds with probability at least $1-\exp(-c_kN^{(1-\vartheta)/2})$:
\begin{align}
\label{eq:every-order-warm-start}
    X_0=\sgn(v)
    \in\cB_H\bigl(\sgn(u_H),\min\{N^{-\vartheta},\lambda^2\sigma_N^2N^{-\vartheta/2}\}\bigr).
\end{align}

We first consider the regime $\gamma\ge\tfrac{k-2}{4}+\tfrac{1}{4k}$. In this regime, we show that the correlation of $X_0$ is already large enough to apply \cref{thm:strong-recovery-after-critical-correlation}. Define $a\doteq\tfrac{k-1}{4k}$, and note that $\gamma-\tfrac{k-1}{4}\ge-a$, so that $\lambda\sigma_N\gtrsim_kC_{\beta,k}N^{-a}$. Combining \cref{eq:every-order-warm-start} with \cref{cor:avg-discrete-grad-correlation}, and using $\sigma_N\asymp_kN^{-\tfrac{k-1}{4}}$ and $\min\{N^{-\vartheta},\lambda^2\sigma_N^2N^{-\vartheta/2}\}\le N^{-\vartheta/2}\min\{1,\lambda\sigma_N\}$, we obtain the following with probability at least $1-\exp(-c_kN^{(1-\vartheta)/2})$, after choosing $C_{\beta,k}$ sufficiently large:
\begin{align}
\label{eq:every-order-initial-correlation}
    m(X_0)
    &\ge m(\sgn(u_H))-\frac{2}{N}d_H\bigl(X_0,\sgn(u_H)\bigr)\nonumber\\
    &\ge\bigl(c_k-2N^{-\vartheta/2}\bigr)\min\{1,\lambda\sigma_N\}
    \ge\frac{c_k}{2}\min\{1,\lambda\sigma_N\}
    \ge N^{-a}.
\end{align}
Let $\bar r_N\doteq N^{-a}$. Since $\gamma\ge(k-1)a$, we have $\lambda\bar r_N^{k-1}\ge C_{\beta,k}$. Hence, applying \cref{thm:strong-recovery-after-critical-correlation} on the event in \cref{eq:every-order-initial-correlation} with $C=1$ and $\tau_{\bar r_N}=0$, and using $T_{\mathrm{str}}\ge1+2\eta\log N$ for sufficiently large $N$, we obtain the following via a union bound:
\begin{align*}
    \Prb\left[\inf_{t\in[T_{\mathrm{str}},T]}m(X_t)<1-N^{-\eta}\right]
    \le\exp\bigl(-c_kN^{(1-\vartheta)/2}\bigr)+3\exp\bigl(-cN\min\{\bar r_N^2,N^{-2\eta}\}\bigr).
\end{align*}
Finally, since $1-2a=\tfrac{k+1}{2k}>\nicefrac{1}{2}>\tfrac{1-\vartheta}{2}$, the claim follows after adjusting constants appropriately.

We now consider the regime $\tfrac{k-2}{4}\le\gamma<\tfrac{k-2}{4}+\tfrac{1}{4k}$, and let $r_N\doteq\lambda^{-\tfrac{1}{k-1}}$, as in \cref{subsec:critical-recovery-warm-start}. In this regime, our argument closely mirrors the proof of \cref{thm:persistent-strong-recovery-algorithmic-threshold}. We first show that the second phase attains critical correlation in constant time. By \cref{eq:every-order-warm-start}, we have $X_0\in\cB_H(\sgn(u_H),\lambda^2\sigma_N^2N^{-\vartheta/2})$. Since \cref{thm:critical-recovery-warm-start} holds simultaneously for all initializations in this Hamming ball, we may apply it with $\vartheta/2$ in place of $\vartheta$, and we obtain the following with probability at least $1-\exp(-c_kN^{(1-\vartheta)/2})-\exp(-cN\lambda^2\sigma_N^2)$ for some sufficiently large constant $C$:
\begin{align}
\label{eq:every-order-critical-time}
    T_*(X_0)=\inf\bigl\{t\ge0:m(X_t)\ge Cr_N\bigr\}\le C_1N^{-(k-1)\left(\gamma-\tfrac{k-2}{4}\right)}.
\end{align}
Let $\bar r_N\doteq Cr_N$. Since $\lambda r_N^{k-1}=1$, we have $\lambda\bar r_N^{k-1}=C^{k-1}\ge C_{\beta,k}$. Hence, applying \cref{thm:strong-recovery-after-critical-correlation} on the event in \cref{eq:every-order-critical-time} with $\tau_{\bar r_N}=T_*(X_0)$, and using $T_{\mathrm{str}}\ge T_*(X_0)+1+2\eta\log N$ for sufficiently large $N$, we obtain the following via a union bound:
\begin{align*}
    \Prb\left[\inf_{t\in[T_{\mathrm{str}},T]}m(X_t)<1-N^{-\eta}\right]
    \le\exp\bigl(-c_kN^{(1-\vartheta)/2}\bigr)+\exp\bigl(-cN\lambda^2\sigma_N^2\bigr)+3\exp\bigl(-cN\min\{\bar r_N^2,N^{-2\eta}\}\bigr).
\end{align*}
Finally, since $\gamma<\tfrac{k-2}{4}+\tfrac{1}{4k}$, we have $N\bar r_N^2\gtrsim N^{\tfrac{k+1}{2k}}$. As $N\lambda^2\sigma_N^2\gtrsim\sqrt N$ and $\tfrac{k+1}{2k}>\nicefrac{1}{2}>\tfrac{1-\vartheta}{2}$, the claim follows after adjusting constants appropriately.

Finally, we analyze the case when $k$ is even. Since $C_{\beta,k}$ is sufficiently large, we infer from \cref{thm:recovery-ergodic-second-moments} that the following holds with probability at least $1-\exp(-c_kN)$:
\begin{align}
\label{eq:every-order-weak-recovery}
    \abs{m(X_0)}=\bigl|m(\sgn(v))\bigr|\ge\frac12.
\end{align}
Hence, applying \cref{thm:strong-recovery-after-weak-recovery} on the event in \cref{eq:every-order-weak-recovery} with $\rho=\nicefrac{1}{2}$ and $C_{\beta,k}$ sufficiently large, we obtain the following via a union bound:
\begin{align*}
    \Prb\left[\inf_{t\in[T_{\mathrm{str}},T]}\abs{m(X_t)}<1-N^{-\eta}\right]
    \le2\exp(-c_kN)+\exp(-cN^{1-\eta})+NT\exp\bigl(-k\beta\lambda4^{1-k}\bigr).
\end{align*}
Moreover, since $k\ge4$, we have $\lambda\ge C_{\beta,k}\sqrt N$, so $NT\exp(-k\beta\lambda4^{1-k})\le\exp(-c_{\beta,k}\sqrt N)$ for $T\le\poly(N)$. Finally, since $\nicefrac{1}{2}>\tfrac{1-\vartheta}{2}$ and ${1-\eta>1-2\eta}$, the claim follows after adjusting constants appropriately.
\end{proof}

%% file: appendix.tex
\section{Proof of Auxiliary Lemmas}
\label[appendix]{app:auxiliary-lemmas}

\subsection{Sobolev Norm Concentration for the Spin Glass Hamiltonian}
\label[appendix]{app:kspin-sobolev-proof}

\begin{proof}[Proof of \cref{lem:kspin-sobolev-norm-concentration}]
Since $G$ is the symmetrization of a standard Gaussian $k$-tensor $\widetilde G$, $\dotp{G}{x^{\otimes k}}=\dotp{\widetilde G}{x^{\otimes k}}$. Thus, \cref{eq:tensor-pca-overlap-noise} gives
\begin{align*}
    H_0(x)=N^{-\tfrac{k-1}{2}}\sum_{i_1,\ldots,i_k\in[N]}\widetilde G_{i_1,\ldots,i_k}\cdot\prod_{r=1}^k x_{i_r}.
\end{align*}

Fix $\ell\in\{1,\ldots,k\}$. By repeated applications of \cref{eq:partial-fourier-character} we obtain the following for any $S\subseteq[N]$,
\begin{align}
\label{eq:kspin-character-derivatives}
    \partial_{i_1}\cdots\partial_{i_\ell}\chi_S(x)
    =\begin{cases}
        \chi_{S\setminus\{i_1,\ldots,i_\ell\}}(x), & \text{if } i_1,\ldots,i_\ell\in S\text{ are distinct},\\
        0, & \text{otherwise}.
    \end{cases}
\end{align}

For any multi-index $I=(i_1,\ldots,i_k)\in[N]^k$, define
\begin{align*}
    A(I)=\left\{j\in[N]:\#\{r:i_r=j\}\text{ is odd}\right\}.
\end{align*}
Since $x_j^2=1$ for every $j\in[N]$, the factors with even multiplicity cancel, and hence
\begin{align*}
    \prod_{r=1}^k x_{i_r}=\prod_{j\in A(I)}x_j=\chi_{A(I)}(x).
\end{align*}
Using the above identity to write $H_0$ as a sum of Fourier characters:
\begin{align*}
    H_0(x)=N^{-\tfrac{k-1}{2}}\sum_{I\in[N]^k}\widetilde G_I \cdot \chi_{A(I)}(x).
\end{align*}
Applying \cref{eq:kspin-character-derivatives} term by term, we obtain the following for distinct $i_1,\ldots,i_\ell\in[N]$,
\begin{align}
    \partial_{i_1}\cdots\partial_{i_\ell}H_0(x)
    &=N^{-\tfrac{k-1}{2}}\sum_{I\in[N]^k}\widetilde G_I\cdot
    \partial_{i_1}\cdots\partial_{i_\ell}\chi_{A(I)}(x)\nonumber\\
    &=N^{-\tfrac{k-1}{2}}\sum_{I\in[N]^k}\widetilde G_I\cdot
    \chi_{A(I)\setminus\{i_1,\ldots,i_\ell\}}(x)\cdot
    \ind{\{i_1,\ldots,i_\ell\}\subseteq A(I)}.
\label{eq:kspin-partial-derivatives}
\end{align}

Consider any $a^{(1)},\ldots,a^{(\ell)}\in\bS^{N-1}$ and let $Z=\dgrad^\ell H_0(x)[a^{(1)},\ldots,a^{(\ell)}]$. From the above identity, we obtain
\begin{align*}
    Z
    &=\sum_{j_1,\ldots,j_\ell\in[N]}
    \partial_{j_1}\cdots\partial_{j_\ell}H_0(x)\cdot\prod_{r=1}^{\ell}a_{j_r}^{(r)}
    =N^{-\tfrac{k-1}{2}}\sum_{I\in[N]^k}\widetilde G_I\cdot C_I,
\end{align*}
where
\begin{align*}
    C_I
    =\sum_{\substack{j_1,\ldots,j_\ell\in A(I)\\ j_1,\ldots,j_\ell\ \mathrm{distinct}}}
    \chi_{A(I)\setminus\{j_1,\ldots,j_\ell\}}(x)\cdot\prod_{r=1}^{\ell}a_{j_r}^{(r)}.
\end{align*}
Since $|\chi_A(x)|=1$ and $|A(I)|\le k$, we obtain the following by Cauchy--Schwarz:
\begin{align*}
    |C_I|^2
    &\le\#\{(j_1,\ldots,j_\ell):j_1,\ldots,j_\ell\in A(I)\text{ distinct}\}
    \sum_{\substack{j_1,\ldots,j_\ell\in A(I)\\ j_1,\ldots,j_\ell\ \mathrm{distinct}}}
    \prod_{r=1}^{\ell}(a_{j_r}^{(r)})^2\\
    &\le k^\ell
    \sum_{\substack{j_1,\ldots,j_\ell\in A(I)\\ j_1,\ldots,j_\ell\ \mathrm{distinct}}}
    \prod_{r=1}^{\ell}(a_{j_r}^{(r)})^2.
\end{align*}

Since $\widetilde G_I\iidsim\cN(0,1)$ for $I\in[N]^k$,
\begin{align*}
    \Var(Z)
    &=N^{-(k-1)}\sum_{I\in[N]^k}|C_I|^2\\
    &\le k^\ell N^{-(k-1)}
    \sum_{\substack{j_1,\ldots,j_\ell\in[N]\\ j_1,\ldots,j_\ell\ \mathrm{distinct}}}
    \#\{I\in[N]^k:j_1,\ldots,j_\ell\in A(I)\}
    \prod_{r=1}^{\ell}(a_{j_r}^{(r)})^2\\
    &\le k^\ell N^{-(k-1)}\cdot N^{k-\ell}\cdot k(k-1)\cdots(k-\ell+1)
    \sum_{\substack{j_1,\ldots,j_\ell\in[N]\\ j_1,\ldots,j_\ell\ \mathrm{distinct}}}
    \prod_{r=1}^{\ell}(a_{j_r}^{(r)})^2\\
    &\le k^{2\ell}N^{1-\ell}.
\end{align*}
Therefore, by Gaussian concentration, for any fixed $x\in\Cube$ and $a^{(1)},\ldots,a^{(\ell)}\in\bS^{N-1}$,
\begin{align*}
    \Prb\left[
        \left|\dgrad^\ell H_0(x)[a^{(1)},\ldots,a^{(\ell)}]\right|\ge u
    \right]
    \le2\exp\left(-\frac{c_{k,\ell}u^2}{N^{1-\ell}}\right),
    \qquad u\ge0.
\end{align*}

Set $\eps=(4\ell)^{-1}$ and let $\cN$ be a fixed $\eps$-net of $\bS^{N-1}$ with $|\cN|\le(C\ell)^N$. By the standard $\eps$-net comparison for $\ell$-linear forms, we obtain
\begin{align*}
    \norm{\dgrad^\ell H_0(x)}_{\mathrm{op}}
    \le2\sup_{a^{(1)},\ldots,a^{(\ell)}\in\cN}
    \left|\dgrad^\ell H_0(x)[a^{(1)},\ldots,a^{(\ell)}]\right|.
\end{align*}
Taking a union bound over $x\in\Cube$ and over $\cN^\ell$, and choosing the constant $C_{k,\ell}$ sufficiently large, we conclude that, for every $t\ge0$,
\begin{align}
\label{eq:kspin-discrete-derivative-concentration}
    \Prb\left[
        \max_{x\in\Cube}\norm{\dgrad^\ell H_0(x)}_{\mathrm{op}}
        \ge C_{k,\ell}N^{1-\nicefrac{\ell}{2}}\left(1+\sqrt{\frac{t}{N}}\right)
    \right]
    \le\exp\left(-c_{k,\ell}(N+t)\right).
\end{align}
For $\ell=0$, we have $H_0(x)\sim\cN(0,N)$ for every $x\in\Cube$ by \cref{eq:kspin-covariance}. Hence, by a union bound, we obtain
\begin{align*}
    \Prb\left[\max_{x\in\Cube}|H_0(x)|\ge u\right]
    \le2^{N+1}\exp\left(-\frac{u^2}{2N}\right),
    \qquad u\ge0.
\end{align*}
Choosing the constants appropriately, we conclude that \cref{eq:kspin-discrete-derivative-concentration} holds for $\ell=0$ as well. Moreover, since $|A(I)|\le k$, we infer from repeated applications of \cref{eq:partial-fourier-character} that $\dgrad^\ell H_0=0$ for $\ell>k$.

Taking a union bound over $0\le\ell\le\min(n,k)$ in \cref{eq:kspin-discrete-derivative-concentration} and adjusting constants appropriately, we conclude from \cref{eq:discrete-sobolev-norm} that the following holds for every $t\ge0$,
\begin{align*}
    \Prb\left[
        \norm{H_0}_{\cD^n}
        \ge C_{k,n}N\left(1+\sqrt{\frac{t}{N}}\right)
    \right]
    \le\exp\left(-c_{k,n}(N+t)\right).
\end{align*}
\end{proof}

\subsection{Sobolev Norm Bounds for the Signal Term}
\label[appendix]{app:signal-sobolev-proof}

\begin{proof}[Proof of \cref{lem:overlap-sobolev-bounds}]
Since $|m(x)|\le1$, $\norm{\dgrad m(x)}=\norm{\nicefrac{\xopt}{N}}=N^{-\nicefrac{1}{2}}$, and $\dgrad^\ell m=0$ for $\ell\ge2$, the first bound follows from \cref{eq:discrete-sobolev-norm}.

For the second bound, let $g(t)=t^k$. For any $A\subseteq[N]$, define $m_{-A}(x)=\nicefrac{1}{N}\sum_{i\notin A}\xopt_i x_i$, and write $m_{-i}=m_{-\{i\}}$. Since $m(x)=m_{-i}(x)+\tfrac{\xopt_i x_i}{N}$, we have
\begin{align*}
    \partial_i(m^k)(x)
    &=\frac{1}{2}\left[g\left(m_{-i}(x)+\frac{\xopt_i}{N}\right)
    -g\left(m_{-i}(x)-\frac{\xopt_i}{N}\right)\right]\\
    &=\frac{\xopt_i}{2N}\int_{-1}^1g'\left(m_{-i}(x)+t\frac{\xopt_i}{N}\right)\,dt.
\end{align*}

More generally, fix $\ell\ge1$ and distinct $i_1,\ldots,i_\ell\in[N]$, and let $A=\{i_1,\ldots,i_\ell\}$. Applying the fundamental theorem of calculus successively in the remaining coordinates, we obtain
\begin{align*}
    \partial_{i_1}\cdots\partial_{i_\ell}(m^k)(x)
    =\frac{\prod_{r=1}^{\ell}\xopt_{i_r}}{(2N)^\ell}
    \int_{[-1,1]^\ell}
    g^{(\ell)}\left(m_{-A}(x)+\frac{1}{N}\sum_{r=1}^{\ell}t_r\xopt_{i_r}\right)
    \,dt_1\cdots dt_\ell.
\end{align*}
For any $t_1,\ldots,t_\ell\in[-1,1]$, the argument of $g^{(\ell)}$ lies in $[-1,1]$. Hence,
\begin{align*}
    |\partial_{i_1}\cdots\partial_{i_\ell}(m^k)(x)|
    \le\frac{1}{N^\ell}\max_{t\in[-1,1]}|g^{(\ell)}(t)|
    \le C_{k,\ell}N^{-\ell}.
\end{align*}
When two indices coincide, the derivative vanishes because the discrete partial derivatives commute and $\partial_i^2(m^k)=0$, so the same bound holds.

Consider any $a^{(1)},\ldots,a^{(\ell)}\in\bS^{N-1}$. By the triangle inequality and Cauchy--Schwarz, we obtain
\begin{align*}
    \left|\dgrad^\ell(m^k)(x)[a^{(1)},\ldots,a^{(\ell)}]\right|
    &\le\sum_{i_1,\ldots,i_\ell\in[N]}
    |\partial_{i_1}\cdots\partial_{i_\ell}(m^k)(x)|\cdot
    \prod_{r=1}^{\ell}|a_{i_r}^{(r)}|\\
    &\le C_{k,\ell}N^{-\ell}
    \prod_{r=1}^{\ell}\left(\sum_{i=1}^N|a_i^{(r)}|\right)\\
    &\le C_{k,\ell}N^{-\nicefrac{\ell}{2}}.
\end{align*}
Taking the supremum over $a^{(1)},\ldots,a^{(\ell)}\in\bS^{N-1}$, we conclude that
\begin{align}
\label{eq:overlap-power-discrete-derivative-bound}
    \max_{x\in\Cube}\norm{\dgrad^\ell(m^k)(x)}_{\mathrm{op}}
    \le C_{k,\ell}N^{-\nicefrac{\ell}{2}},
    \qquad \ell\ge1.
\end{align}
Since $|m(x)|\le1$, summing over the derivative orders in \cref{eq:discrete-sobolev-norm} and applying \cref{eq:overlap-power-discrete-derivative-bound}, we obtain the following after adjusting constants appropriately.
\begin{align*}
    \norm{m^k}_{\cD^n}=\max_{x\in\Cube}|m(x)|^k+\sum_{\ell=1}^n N^{\nicefrac{\ell}{2}}\max_{x\in\Cube}\norm{\dgrad^\ell(m^k)(x)}_{\mathrm{op}}\le C_{k,n}.
\end{align*}
\end{proof}

\subsection{Properties of the Hessian Observable}
\label[appendix]{app:hessian-observable-proof}

\begin{proof}[Proof of \cref{lem:hessian-observable-properties}]
The first claim follows directly from the hypercube Stein's lemma~\cref{lem:hypercube-stein} and \cref{eq:hessian-recovery-observable}. To prove the Sobolev norm bound, fix $w\in\bS^{N-1}$ and define
\begin{align}
\label{eq:hessian-observable-quadratic-form}
    f_w(x)\doteq w^\top Q_V(x)w
    =\dotp{w}{x}\dotp{w}{\dgrad V(x)}
      -\sum_{j=1}^N w_j^2x_j\partial_jV(x).
\end{align}
Since $\norm{V}_{\cD^2}\le BN$, we have $|f_w(x)|\le(\sqrt N+1)\norm{\dgrad V(x)}\le2BN$ for every $x\in\Cube$.
Write $w^{\odot2}=(w_1^2,\ldots,w_N^2)$ and $y=x\odot w^{\odot2}$. By the discrete product rule~\cref{lem:discrete-product-rule} and $\partial_i^2V=0$, we have
\begin{align}
\label{eq:hessian-observable-product-derivatives}
    \begin{aligned}
        \partial_i\bigl(\dotp{w}{x}\dotp{w}{\dgrad V(x)}\bigr)
        &=w_i\dotp{w}{\dgrad V(x)}
          +\dotp{w}{x}[\dgrad^2V(x)w]_i
          -2x_iw_i[\dgrad^2V(x)w]_i,\\
        \partial_i\left(\sum_{j=1}^N w_j^2x_j\partial_jV(x)\right)
        &=w_i^2\partial_iV(x)+[\dgrad^2V(x)y]_i.
    \end{aligned}
\end{align}
Combining \cref{eq:hessian-observable-quadratic-form,eq:hessian-observable-product-derivatives}, we obtain
\begin{align}
\label{eq:hessian-observable-gradient-decomposition}
    \dgrad f_w(x)
    &=\dotp{w}{\dgrad V(x)}w+\dotp{w}{x}\dgrad^2V(x)w\nonumber\\
    &\qquad-2\diag(x\odot w)\dgrad^2V(x)w
      -w^{\odot2}\odot\dgrad V(x)-\dgrad^2V(x)y.
\end{align}
Since $\norm{w}=1$, we have $\norm{w^{\odot2}}\le1$ and $\norm{y}\le1$. Hence, by \cref{eq:hessian-observable-gradient-decomposition} and $\norm{V}_{\cD^2}\le BN$,
\begin{align*}
    \norm{\dgrad f_w(x)}
    \le2\norm{\dgrad V(x)}+(\sqrt N+3)\norm{\dgrad^2V(x)}_{\mathrm{op}}
    \lesssim B\sqrt N.
\end{align*}
Combining the bounds on $|f_w|$ and $\norm{\dgrad f_w}$, we conclude that $\norm{f_w}_{\cD^1}\lesssim BN$.
\end{proof}

\subsection{Regularity Estimates from \texorpdfstring{\cref{subsec:critical-recovery-reverse-planting}}{Section \ref{subsec:critical-recovery-reverse-planting}}}
\label[appendix]{app:regularity-critical-recovery}

\subsubsection{Proof of \texorpdfstring{\cref{lem:warm-start-correlation}}{Lemma \ref{lem:warm-start-correlation}}}
\label[appendix]{app:proof-warm-start-correlation}

\begin{proof}
Since $\sigma_N^2\asymp_kN^{-\tfrac{k-1}{2}}$ and $\tfrac{k-2}{4}\le\gamma\le\tfrac{k-2}{4}+\tfrac{1}{4k}$, we have $N^{-\nicefrac{1}{4}}\lesssim\lambda\sigma_N\lesssim N^{-\tfrac14+\tfrac{1}{4k}}$. Hence,
\begin{align*}
    \min\bigl\{1,\lambda N^{-\tfrac{k-1}{4}}\bigr\}\asymp_k\lambda\sigma_N,
    \qquad
    \min\bigl\{N,\lambda^2N^{-\tfrac{k-3}{2}}\bigr\}\asymp_kN\lambda^2\sigma_N^2\to\infty.
\end{align*}
Moreover, since $\xopt\odot\phi=\sgn(u_H)$, we have $m_\phi(\mathbf{1})=m(\sgn(u_H))$ under the planted law. Therefore, adjusting constants appropriately, we infer from \cref{cor:avg-discrete-grad-correlation} that
\begin{align*}
    \Prb\bigl[m_\phi(\mathbf{1})\ge2c_0\lambda\sigma_N\bigr]
    \ge1-\exp\bigl(-cN\lambda^2\sigma_N^2\bigr).
\end{align*}
Finally, since $\abs{m_\phi(x)-m_\phi(y)}\le\frac{2}{N}d_H(x,y)$ for all $x,y\in\Cube$, every $e\in\cB_H(\mathbf{1},\lambda^2\sigma_N^2N^{-\vartheta})$ satisfies $m_\phi(e)\ge m_\phi(\mathbf{1})-2\lambda^2\sigma_N^2N^{-\vartheta}$. The proof is concluded by noting that $2\lambda^2\sigma_N^2N^{-\vartheta}\le c_0\lambda\sigma_N$ for all sufficiently large $N$.
\end{proof}

\subsubsection{Proof of \texorpdfstring{\cref{lem:null-hamiltonian-regularity}}{Lemma \ref{lem:null-hamiltonian-regularity}}}
\label[appendix]{app:proof-null-hamiltonian-regularity}

\begin{proof}
By \cref{lem:tpca-hamiltonian-under-gauge}, $\dgrad H_{\mathrm{null}}(y)=R+\dgrad W(y)$ for every $y\in\Cube$. By \cref{def:planted-law}, $R=\abs{U}$ for some $U\sim\cN(\lambda\sigma_N^2\mathbf{1},\sigma_N^2I)$, and $W$ has the same law as $H_0^\perp$, which equals $H_0-\dotp{\zeta}{\cdot}$ with $\zeta\sim\cN(0,\sigma_N^2I)$ by \cref{lem:kspin-gaussian-conditioning}. Consequently, $\max_{y\in\Cube}\norm{\dgrad H_{\mathrm{null}}(y)}$ is at most $\norm{U}+\max_{y\in\Cube}\norm{\dgrad W(y)}$, where the second term has the same law as $\max_{x\in\Cube}\norm{\dgrad H_0(x)-\zeta}\le\max_{x\in\Cube}\norm{\dgrad H_0(x)}+\norm{\zeta}$. Since $\lambda\sigma_N^2\le1$ and $\sigma_N\le1$, the claim follows from \cref{lem:kspin-sobolev-norm-concentration}, Gaussian concentration~\citep{vershynin2018high}, and a union bound.
\end{proof}

\subsubsection{Proof of \texorpdfstring{\cref{lem:signal-drift-bounds}}{Lemma \ref{lem:signal-drift-bounds}}}
\label[appendix]{app:proof-signal-drift-bounds}

\begin{proof}
Fix $y\in\Cube$ with $m\doteq m_\phi(y)\ge a_0/2$, fix $i\in[N]$, and let $m_{-i}\doteq m-\phi_iy_i/N$. Since $H_{\mathrm{sig}}=N\lambda m_\phi^k-N\lambda\sigma_N^2m_\phi$ and $\phi_i\partial_im_\phi=1/N$, we obtain the following as in \cref{eq:strong-recovery-signal-field}:
\begin{align*}
    \phi_i\partial_iH_{\mathrm{sig}}(y)
    &=\frac{N\lambda}{2}\left[\left(m_{-i}+\frac1N\right)^k-\left(m_{-i}-\frac1N\right)^k\right]-\lambda\sigma_N^2\\
    &=\frac{k\lambda}{2}\int_{-1}^{1}\left(m_{-i}+\frac{s}{N}\right)^{k-1}\,ds-\lambda\sigma_N^2.
\end{align*}
Since $a_0\asymp\lambda\sigma_N\gtrsim N^{-\nicefrac{1}{4}}$, we have $m\ge4/N$, and hence $m/2\le m_{-i}+s/N\le2m$ for every $s\in[-1,1]$. Moreover, since $\sigma_N^2\asymp_kN^{-\tfrac{k-1}{2}}$, we have $\sigma_N^2\le\frac{k}{2}(m/2)^{k-1}$. Consequently, \cref{eq:signal-gradient-bounds} holds with $c_1=k2^{-k}$ and $C_1=k2^{k-1}$.

We next prove the lower bound on the signal drift. By \cref{eq:glauber-generator}, the following holds for any $V:\Cube\to\R$:
\begin{align*}
    (\cL_{V,\beta}m_\phi)(y)
    =\frac{1}{N}\sum_{i=1}^N\phi_i\bigl[\tanh\bigl(\beta\partial_iV(y)\bigr)-y_i\bigr].
\end{align*}
Since $H_\phi=H_{\mathrm{null}}+H_{\mathrm{sig}}$, it follows that
\begin{align*}
    (\cL_{\mathrm{pl}}-\cL_{\mathrm{null}})m_\phi(y)
    &=\frac{1}{N}\sum_{i=1}^N\phi_i\bigl[\tanh\bigl(\beta\partial_iH_\phi(y)\bigr)-\tanh\bigl(\beta\partial_iH_{\mathrm{null}}(y)\bigr)\bigr]\\
    &=\frac{\beta}{N}\sum_{i=1}^N\phi_i\partial_iH_{\mathrm{sig}}(y)\int_0^1\operatorname{sech}^2\bigl(\beta\bigl(\partial_iH_{\mathrm{null}}(y)+u\,\partial_iH_{\mathrm{sig}}(y)\bigr)\bigr)\,du.
\end{align*}
By \cref{eq:signal-gradient-bounds}, every summand is non-negative. Now suppose that $m\le Cr_N$ and $\norm{\dgrad H_{\mathrm{null}}(y)}^2\le C_0N$, and let $\calJ\doteq\{i\in[N]:\abs{\partial_iH_{\mathrm{null}}(y)}\le2\sqrt{C_0}\}$. Since
\begin{align*}
    4C_0\bigl(N-\abs{\calJ}\bigr)
    \le\sum_{i\notin\calJ}\bigl(\partial_iH_{\mathrm{null}}(y)\bigr)^2
    \le\norm{\dgrad H_{\mathrm{null}}(y)}^2
    \le C_0N,
\end{align*}
we have $\abs{\calJ}\ge3N/4$. Moreover, since $m\le Cr_N$ and $\lambda r_N^{k-1}=1$, \cref{eq:signal-gradient-bounds} implies that $\abs{\partial_iH_{\mathrm{sig}}(y)}\le C_1C^{k-1}$. Hence, for every $i\in\calJ$ and $u\in[0,1]$,
\begin{align*}
    \operatorname{sech}^2\bigl(\beta\bigl(\partial_iH_{\mathrm{null}}(y)+u\,\partial_iH_{\mathrm{sig}}(y)\bigr)\bigr)
    \ge\operatorname{sech}^2\bigl(\beta\bigl(2\sqrt{C_0}+C_1C^{k-1}\bigr)\bigr).
\end{align*}
Discarding the summands with $i\notin\calJ$ and applying the lower bound in \cref{eq:signal-gradient-bounds}, we conclude that the claim holds with $c_2=\frac34\beta c_1\operatorname{sech}^2(\beta(2\sqrt{C_0}+C_1C^{k-1}))$.
\end{proof}

\subsubsection{Proof of \texorpdfstring{\cref{lem:power-law-gronwall}}{Lemma \ref{lem:power-law-gronwall}}}
\label[appendix]{app:proof-power-law-gronwall}

\begin{proof}
Since the proof of \cite[Lemma~5.1]{arous2020algorithmic} assumes continuity of $f$, we present a short proof for bounded measurable $f$ for completeness. Define
\begin{align*}
    g(t)\doteq a+c\int_0^tf(s)^{k-1}\,ds.
\end{align*}
Since $k$ is odd, we have $f(t)\ge g(t)\ge a>0$ for every $t\in[0,T]$. Moreover, since $f$ is bounded and measurable, $g$ is absolutely continuous, and for almost every $t\in[0,T]$,
\begin{align*}
    g'(t)=cf(t)^{k-1}\ge cg(t)^{k-1}.
\end{align*}
Since $g$ is bounded away from zero, $g^{-(k-2)}$ is also absolutely continuous, and the following holds for almost every $t\in[0,T]$:
\begin{align*}
    \frac{d}{dt}g(t)^{-(k-2)}
    =-(k-2)g(t)^{-(k-1)}g'(t)
    \le-(k-2)c.
\end{align*}
Integrating and using $g(0)=a$, we obtain
\begin{align*}
    g(t)^{-(k-2)}\le a^{-(k-2)}-(k-2)ct.
\end{align*}
If $(k-2)ca^{k-2}t<1$, the right-hand side is positive, and raising both sides to the power $-\frac{1}{k-2}$ yields
\begin{align*}
    g(t)\ge\bigl[a^{-(k-2)}-(k-2)ct\bigr]^{-\frac{1}{k-2}}.
\end{align*}
The proof is concluded by noting that $f(t)\ge g(t)$.
\end{proof}